\newcommand{\Prob}{\mathbb{P}}
\newcommand{\bbGamma}{{\mathpalette\makebbGamma\relax}}
\newcommand{\makebbGamma}[2]{%
  \raisebox{\depth}{\scalebox{1}[-1]{$\mathsurround=0pt#1\mathbb{L}$}}%
}
\newcommand{\stoiMatrix}{\mathbb{S}}
\newcommand{\stoiVector}{s}
\newcommand{\flux}{j}
\newcommand{\Vc}[1]{\boldsymbol{#1}}
\newcommand{\dd}{\mathrm{d}}
\newcommand{\dt}{\dd t}
\newcommand{\defeq}{:=}
\newcommand{\TJK}[1]{#1}
\newcommand{\kcoef}{k}
\newcommand{\Transpose}{T}
\newcommand{\X}{\mathcal{X}}
\newcommand{\Y}{\mathcal{Y}}
\newcommand{\Z}{\mathcal{Z}}
\newcommand{\Zeta}{\mathfrak{Z}}
\newcommand{\cmMatrix}{\bbGamma} 
\newcommand{\cmvector}{\gamma} 
\newcommand{\B}{B}
\newcommand{\Polytope}{\mathcal{P}}
\newcommand{\BD}{\mathcal{D}}
\newcommand{\tf}{f}
\newcommand{\frenecy}{\omega}
\newcommand{\Fspace}{\mathcal{F}}
\newcommand{\Jspace}{\mathcal{J}}
\newcommand{\EPR}{\dot{\Sigma}}
\newcommand{\pEPR}{\dot{\Sigma}^{p}}
\newcommand{\FIN}{\mathbb{I}_{F}}
\newcommand{\gLap}{\mathcal{L}}
\newcommand{\metric}{M}
\newcommand{\FEnergy}{\mathcal{F}}
\newcommand{\Grad}{\mathrm{grad}}
\newcommand{\Div}{\mathrm{div}}
\newcommand{\Curl}{\mathrm{curl}}
\newcommand{\FM}{G}
\newcommand{\diag}{\mathrm{diag}}
\newcommand{\identityM}{I}
\newcommand{\Real}{\mathbb{R}}
\newcommand{\Integer}{\mathbb{Z}}
\newcommand{\Tan}{\mathcal{T}}
\newcommand{\Dissp}{\Psi}
\newcommand{\Pfunc}{\Phi}
\newcommand{\pfunc}{\phi}
\newcommand{\dissip}{\psi}
\newcommand{\Variety}{\mathcal{M}}
\newcommand{\Manifold}{\mathcal{M}}
\newcommand{\cManifold}{\mathcal{C}}
\newcommand{\Img}{\mathrm{Im}}
\newcommand{\Ker}{\mathrm{Ker}}
\newcommand{\Dim}{\mathrm{dim}}
\newcommand{\IncMatrix}{\mathbb{B}}
\newcommand{\HIncMatrix}{\mathbb{S}}
\newcommand{\cycMatrix}{\mathbb{V}}
\newcommand{\consMatrix}{\mathbb{U}}
\newcommand{\eqnref}[1]{Eq. \ref{#1}}
\newcommand{\secref}[1]{Sec. \ref{#1}}
\newcommand{\defref}[1]{Def. \ref{#1}}
\newcommand{\fgref}[1]{Fig. \ref{#1}}
\newcommand{\propref}[1]{Prop. \ref{#1}}
\newcommand{\thmref}[1]{Thm. \ref{#1}}
\newcommand{\lmmref}[1]{Lemma \ref{#1}}
\newcommand{\corref}[1]{Cor. \ref{#1}}
\newcommand{\exref}[1]{Ex. \ref{#1}}
\newcommand{\com}[1]{#1}
\newcommand{\Graph}{\mathbb{G}}
\newcommand{\HGraph}{\mathbb{H}}
\newcommand{\node}{\mathbb{v}}
\newcommand{\molX}{\mathbb{X}}
\newcommand{\edge}{\mathbb{e}}
\newcommand{\cons}{\mathbb{l}}
\newcommand{\cycle}{\mathbb{z}}
\newcommand{\chain}{C}
\newcommand{\boundary}{\delta}
\theoremstyle{plain}
\newtheorem{thm}{Theorem}
\newtheorem*{thm*}{Theorem}
\newtheorem{lmm}{Lemma}
\newtheorem*{lmm*}{Lemma}
\newtheorem{prop}{Proposition}
\newtheorem*{prop*}{Proposition}
\newtheorem{cor}{Corollary}
\newtheorem*{cor*}{Corollary}
\theoremstyle{definition}
\newtheorem{dfn}{Definition}
\theoremstyle{remark}
\newtheorem{rmk}{Remark}
\theoremstyle{definition}
\newtheorem{ex}{Example}
\begin{document}

\preprint{APS/123-QED}

\title{Information Geometry of Dynamics on Graphs and Hypergraphs}
\author{Tetsuya J. Kobayashi}
\email[E-mail me at:]{tetsuya@mail.crmind.net}
\affiliation{Institute of Industrial Science, The University of Tokyo, 4-6-1, Komaba, Meguro-ku, Tokyo 153-8505 Japan}
\affiliation{Department of Mathematical Informatics, Graduate School of Information Science and Technology, The
University of Tokyo, Tokyo 113-8654, Japan}
\affiliation{Universal Biology Institute, The University of Tokyo, 7-3-1, Hongo, Bunkyo-ku, 113-8654, Japan.}
\author{Dimitri Loutchko}
\affiliation{Institute of Industrial Science, The University of Tokyo, 4-6-1, Komaba, Meguro-ku, Tokyo 153-8505 Japan}
\author{Atsushi Kamimura}
\affiliation{Institute of Industrial Science, The University of Tokyo, 4-6-1, Komaba, Meguro-ku, Tokyo 153-8505 Japan}
\author{Shuhei A. Horiguchi}
\affiliation{Department of Mathematical Informatics, Graduate School of Information Science and Technology, The
University of Tokyo, Tokyo 113-8654, Japan}
\author{Yuki Sughiyama}
\affiliation{Institute of Industrial Science, The University of Tokyo, 4-6-1, Komaba, Meguro-ku, Tokyo 153-8505 Japan}
\date{\today}

\
\date{\today}

\begin{abstract}
We introduce a new information-geometric structure associated with the dynamics on discrete objects such as graphs and hypergraphs. 
The presented setup consists of two dually flat structures built on the vertex and edge spaces, respectively.
The former is the conventional duality between density and potential, e.g., the probability density and its logarithmic form induced by a convex thermodynamic function. 
The latter is the duality between flux and force induced by a convex and symmetric dissipation function, which drives the dynamics of the density.
These two are connected topologically by the homological algebraic relation induced by the underlying discrete objects.
The generalized gradient flow in this doubly dual flat structure is an extension of the gradient flows on Riemannian manifolds, which include Markov jump processes and nonlinear chemical reaction dynamics as well as the natural gradient and mirror descent.  
The information-geometric projections on this doubly dual flat structure lead to information-geometric extensions of  the Helmholtz-Hodge decomposition and the Otto structure in $L^{2}$ Wasserstein geometry.
 The structure can be extended to non-gradient nonequilibrium flows, from which we also obtain the induced dually flat structure on cycle spaces. 
 This abstract but general framework can extend the applicability of information geometry to various problems of linear and nonlinear dynamics.
\end{abstract}

\maketitle

\onecolumngrid

\setcounter{tocdepth}{-1}
\tableofcontents
\newpage

\section{Introduction}\label{sec:Int_Int}
Information geometry is finding and establishing a firm position as a geometric language in various scientific disciplines\cite{amari2016,ay2018}.
Information geometry enables us to gain an intuitive understanding of the structures behind complicated problems of inference and estimation, for which Euclidean or Riemannian geometry is not sufficient. 
In addition, it can provide ways to devise new solutions and approaches for the problems\cite{amari2016}.
While information geometry was originally developed for statistics, its applicability now reaches far beyond statistical problems. 
Whenever the notions of probability, information, or positive density appear in a problem, it is natural to consider its information-geometric structure.

\subsection{Information geometry of dynamics}\label{sec:Int_IG_dynamics}
Dynamical systems and phenomena can be naturally analyzed with information geometric methods, as conventionally one considers the dynamics of probability distributions\cite{risken1996,horsthemke2006,gardiner2010}, e.g., via the Fokker-Planck equations (FPE) and the Master equation, or those of positive densities, e.g, via population dynamics, epidemic models, diffusion dynamics on networks, and chemical reaction dynamics\cite{murray2007,beard2008,feinberg2019}.
Although the application of information geometry to dynamical systems has been attempted almost since its birth, information geometry for dynamics is much less organized and principled compared with those for static problems in statistics, optimization, and others\cite{amari2016}.
In connection with statistical inference, information geometry was employed by Amari and others to investigate Gaussian time series and autoregressive moving average (ARMA) models by representing their power spectrum as parametric manifolds\cite{amari1987,ravishanker1990J.TimeSer.Anal.,tanaka2011SankhyaA}. 
This idea was also used to investigate linear systems\cite{amari1987Math.SystemsTheory}.
Markov jump processes on finite states\footnote{Also known as Markov chains} were investigated information-geometrically by considering the hierarchical structure of joint or conditional probabilities at different time points, e.g., $\Prob_{\Vc{\theta}}(x_{1},x_{2},\cdots, x_{t})$\cite{amari2001IEEETrans.Inf.Theory}. 
or by introducing exponential families of Markov kernels (transition matrices), $\mathbb{T}_{\Vc{\theta}}(x|x')$, via exponential tilting of the kernels\cite{nakagawa1993IEEETrans.Inf.Theory,takeuchi1998Proc.1998IEEEInt.Symp.Inf.TheoryCatNo98CH36252,nagaoka2017,takeuchi20072007IEEEInt.Symp.Inf.Theory,hayashi2016Ann.Stat.,wolfer2021Info.Geo.,pistone2013AnnInstStatMath}.
Furthermore, information geometry was applied to studies of random walks, nonlinear diffusion equations of porous media , and networks\cite{obata1992Phys.Rev.A,ohara2009Eur.Phys.J.B,ohara2021Geom.Sci.Inf.}.
In relation to mechanics, integrable systems were associated with the dualistic gradient flow of information geometry in the seminal works\cite{nakamura1993JapanJ.Indust.Appl.Math.,fujiwara1995PhysicaD:NonlinearPhenomena}, and other connections of information geometry with Lagrangian or Hamiltonian mechanics have been pursued\cite{felice2018,goto2018J.Phys.A:Math.Theor.,chirco2022Int.J.Geom.MethodsMod.Phys.}.

\subsection{Information measures for dynamics}\label{sec:Int_Inf_measure}
Concurrently with and almost independently of these attempts within the community of information geometry, information measures relevant to information geometry have been employed in various problems of dynamical systems and stochastic processes in information theory\cite{ihara1993}, filtering theory\cite{brigo1999Bernoulli,newton2018Inf.Geom.ItsAppl.}, control theory\cite{fleming1982Stochastics,todorov2009Proc.Natl.Acad.Sci.,theodorou20122012IEEE51stIEEEConf.Decis.ControlCDC}, and non-equilibrium physics and chemistry\cite{jaynes1957Phys.Rev.,frieden2004,sagawa2012}.
The Kullback-Leibler (KL) divergence\cite{kullback1951Ann.Math.Stat.} for probabilities and positive densities was shown to be a Lyapunov function of Markov jump processes (MJP)\cite{gardiner2010}, FPE\cite{lebowitz1957AnnalsofPhysics,risken1996}, deterministic chemical reaction networks (CRN)\cite{shear1967JournalofTheoreticalBiology,horn1972Arch.RationalMech.Anal.}, and other dynamical systems\cite{goh1977Am.Nat.,figueiredo2000PhysicsLettersA}, the origin of which can be dated back to Gibbs' H-theorem\cite{gibbs2010}.  
Among those topics, since the establishment of chemical thermodynamics by Gibbs\cite{gibbs2010} and chemical kinetics by Guldberg and Waage\cite{waage1986J.Chem.Educ.a},  CRN has played the role of a seedbed for cultivating the theory between dynamics and divergence owing to its close connection with thermodynamics\cite{ge2016ChemicalPhysics,rao2016Phys.Rev.X,sughiyama2021ArXiv211212403Cond-MatPhysicsphysics}.
More recently, it was also clarified that the divergences and information geometry are fundamental in stochastic thermodynamics \cite{seifert2012Rep.Prog.Phys.,ito2018Phys.Rev.Lett.,kolchinsky2021Phys.Rev.X,yoshimura2021Phys.Rev.Research,ohga2021ArXiv211211008Cond-Mat}.

In addition to the KL divergence, the Fisher-information-like quantity 
\begin{align}
    \FIN[p] \defeq \int p(\Vc{r})(\nabla_{\Vc{r}} \ln p(\Vc{r}))^{2}\mathrm{d}\Vc{r} \in \Real_{\ge 0} \label{eq:FisherInfoNum}
\end{align}
was also revealed to play an important role in characterizing dynamics for densities on a continuous space, e.g., Gaussian convolution, diffusion processes, and FPE\cite{stam1959InformationandControl,plastino1998PhysicsLettersA,wibisono20172017IEEEInt.Symp.Inf.TheoryISIT}. 
Various governing equations in physics were claimed to be derived in a unified way from this quantity\cite{frieden2004}.
The quantity $\FIN$ looks like the Fisher information\cite{fisher1922Philos.Trans.R.Soc.Lond.Ser.Contain.Pap.Math.Phys.Character} but is different from the conventional Fisher information matrix \cite{rao1958Scand.Actuar.J.,papaioannou2005Commun.Stat.-TheoryMethodsa,kharazmi2018Braz.J.Probab.Stat.} because the derivative $\nabla_{\Vc{r}} \ln p(\Vc{r})$ is not for the parameters but for the base space variable of $p(\Vc{r})$\footnote{The relation between the two forms of Fisher information has been explained in multiple ways. For example, they are related as the shift of the base space via parameters\cite{johnson2004,frieden2004}. The Fisher information number was introduced by Rao\cite{rao1958Scand.Actuar.J.}.}. 
Because $\FIN$ is a scalar, we follow \cite{papaioannou2005Commun.Stat.-TheoryMethodsa} and call it Fisher information number. 
The Fisher information number $\FIN$ is related to the KL divergence in additive Gaussian channels\cite{stam1959InformationandControl} and other systems\cite{yamano2013Eur.Phys.J.B,wibisono20172017IEEEInt.Symp.Inf.TheoryISIT}, which is known as the De Bruijn identity\cite{stam1959InformationandControl}. 
In addition, the logarithmic Sobolev inequality also provides a relation between the Fisher information number and the KL divergence (or Shannon information)\cite{gross1975Am.J.Math.,gross1975DukeMath.J.}.
These results have recently been associated with the formal Riemannian geometric structure induced by the $L^{2}$-Wasserstein geometry\cite{otto2001Commun.PartialDiffer.Equ.,villani2003}.

\subsection{Information geometry and dynamics in machine learning}\label{sec:Int_ML}
On top of these traditional trends, information geometry is now playing a pivotal role in machine learning for designing and evaluating online optimization algorithms (dynamics) in the space of model parameters such as natural gradient\cite{amari1998NeuralComputation} and mirror descent\cite{beck2003OperationsResearchLetters,raskutti2015IEEETrans.Inf.Theory} as well as evolutionary computation (information-geometric optimization) \cite{ollivier2017J.Mach.Learn.Res.}. 
Geometric interpretation allows to understand the behaviors and efficiency of algorithms and their dynamics more intuitively in a principled manner\cite{hino2022,raskutti2015IEEETrans.Inf.Theory,ollivier2017J.Mach.Learn.Res.}.

\subsection{Aim and contributions of this work}
Despite the wide applicability and the long history of information geometry, we still lack a solid theoretical framework to unify these outcomes that spread across different fields from the viewpoint of information geometry.
In this work, we introduce a new information geometric structure for the dynamics of probability and positive densities. 
In this structure, we consider not only the single dually flat structure built on the space of densities as in \cite{nakamura1993JapanJ.Indust.Appl.Math.,fujiwara1995PhysicaD:NonlinearPhenomena} but also another structure constructed on the space of fluxes.
These two structures are linked algebraically and topologically via the continuity equation and the gradient equation as illustrated in \fgref{fig:Intro}. 

Under this doubly dual flat structure, we can consider the dynamics of densities as a generalized flow, and various previous results can be unified in this framework. 
We exclusively consider dynamics of densities on finite-dimensional discrete manifolds, i.e., finite graphs or hypergraphs,  because the structure introduced here can be explicitly manifested in this setup and also because we do not need the mathematically elaborated setup for infinite-dimensional information geometry on a smooth manifold\cite{pistone2021ProgressinInformationGeometry:TheoryandApplications}.
For the case of FPE in a continuous state space, the dually flat structure built on the flux space can be reduced to the formal Riemannian geometric structure of $L^{2}$ Wasserstein geometry where the convex functions that induce the dually flat structure become quadratic. 
Our structure generalizes the linear inner product on the tangent and cotangent spaces with the nonlinear Legendre transform, thereby requiring information geometry.
By elucidating this information geometric structure, we can easily see that some quantities such as the bilinear product, convex thermodynamic potential functions, the Fisher information matrix, and the Fisher information number are consolidated into one quantity for FPE with the quadratic convex functions (see \secref{sec:dissp_CRN_LDG} and \secref{sec:FPE_functions}).
Therefore, our structure provides a way to unify the dualistic gradient flow mentioned in \secref{sec:Int_IG_dynamics} and also the information-number related topics in \secref{sec:Int_Inf_measure}.

\begin{figure}[h]
\includegraphics[bb=0 320 1024 768, width=\linewidth]{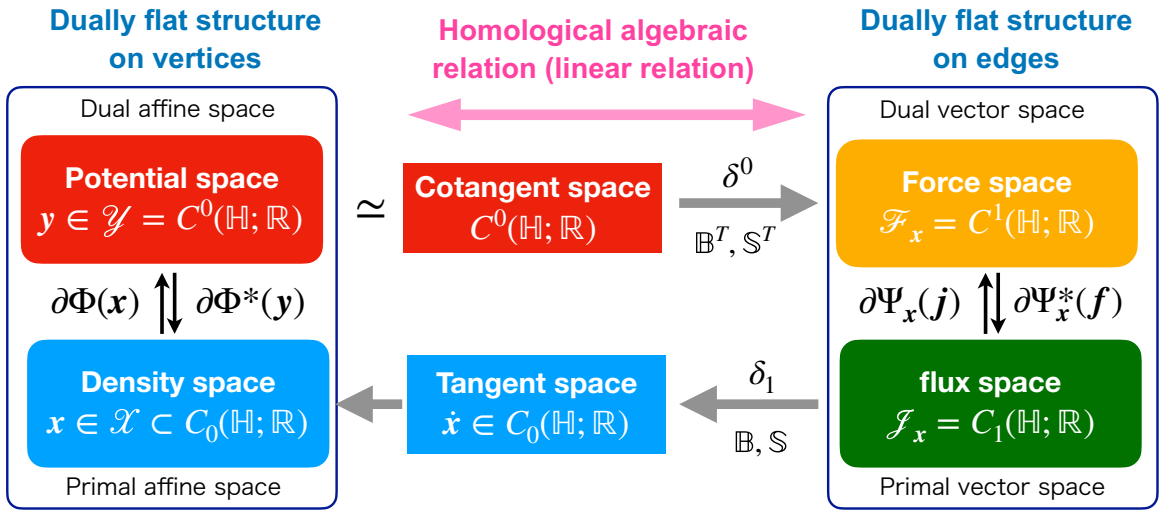}
\caption{Diagrammatic illustration of the doubly dual flat structure established on the vertex affine spaces (left) and the edge spaces (right), which are topologically related via the underlying graph or hypergraph.}
\label{fig:Intro}
\end{figure}

From the viewpoint of homological algebra, the structure we work on is a modification of the chain and cochain complexes of graphs or hypergraphs, which replace the usual inner product duality\cite{grady2010a} on each pair of chains and cochains with Legendre duality.
Moreover, the dually flat space built on the flux space is linked to a finite-dimensional version of Orlicz spaces\cite{musielak1983}, which have been employed for constructing infinite-dimensional information geometry\cite{pistone2021ProgressinInformationGeometry:TheoryandApplications}.
From the nice properties of the doubly dual flat structures, we can obtain information-geometric extensions of the Helmholtz-Hodge-Kodaira (HHK) decomposition (\thmref{thm:HHK}), the Otto calculus (\thmref{thm:inducedHessiangeometry1}), and its induction to cycle spaces(\thmref{thm:inducedHessiangeometry2}).

Our construction of an information geometry for dynamics is heavily based on the idea of using Legendre duality for the force and flux relation, proposed in the recent work of large deviations theory and the macroscopic fluctuation theorem for MJP and CRN led by A.Mieleke, R.I.A.Petterson, M.A.Peletier, D.R.M. Renger, J.Zimmer, and others\footnote{Ordered alphabetically.}\cite{mielke2014PotentialAnala,mielke2017SIAMJ.Appl.Math.,renger2018Entropy,kaiser2018JStatPhys,patterson2021ArXiv210314384Math-Ph,peletier2022Calc.Var.,renger2021DiscreteContin.Dyn.Syst.-S,peletier2022}.
We clarified its information-geometric aspects in the context of CRN and thermodynamics in our previous work\cite{kobayashi2022Phys.Rev.Researcha}.
We also concurrently elucidated the intimate link of equilibrium chemical thermodynamics and information geometry on the density state space \cite{kobayashi2022Phys.Rev.Research,sughiyama2022Phys.Rev.Research,sughiyama2022Phys.Rev.Researchb}.
In light of those, the contribution of this work is three-fold. 
First, we integrate these results in terms of information geometry, which clarifies the underlying geometric nature of the problem, provides transparent interpretations for known results, and leads to new information geometric results and insights (\thmref{thm:HHK}--\thmref{thm:inducedHessiangeometry2});
Second, this structure substantially extends the applicability of information geometry to a wide variety of dynamical problems; Lastly, the structure links information geometry to algebraic graph theory, discrete calculus, and homological algebra, which were not fully appreciated yet but provides a versatile way to consider the topology of the base manifold in information geometry.

\subsection{Organization of this paper}
This work is organized as follows: 
In \secref{sec:graph_hypergraph}, we introduce a range of models of dynamics on graphs and hypergraphs.
In \secref{sec:algebraic_homology}, we outline the homological algebra of graphs and hypergraphs.
In \secref{sec:doubly_dual_flat_spaces}, we abstractly introduce the doubly dual flat structures on the density and flux spaces and define the generalized flow associated with these structures.
In \secref{sec:explict_form}, we clarify that the introduced structures include a wide class of dynamics on graph and hypergraph.
In \secref{sec:orthogonality} and \secref{sec:central_affine}, we further define information-geometric objects and quantities, which naturally appear from this setup and play an integral role in the subsequent analysis of dynamics.
In \secref{sec:InfoGeoEquilibrium} and \secref{sec:InfoGeoNonEquilibrium}, we derive several results for equilibrium and nonequilibrium flows, respectively.
Finally, we provide a summary and prospects of our work in \secref{sec:summary}.
The notations and symbols are listed in the appendix.


\section{Classes of models for density dynamics on graph and hypergraph}\label{sec:graph_hypergraph}
In this work, we focus on linear and nonlinear dynamics defined on graphs\cite{godsil2013} and hypergraphs\cite{bretto2013}.

The linear dynamics of densities on graphs (LDG) includes Markov jump processes (MJP)\cite{meyn2009}, monomolecular chemical reaction networks\cite{schnakenberg1976Rev.Mod.Phys.},  and others\cite{godsil2013}.
We consider an extension of LDG to hypergraphs and nonlinear dynamics, common instances of which are chemical reaction networks (CRN) with the law of mass action (LMA) kinetics \cite{feinberg2019} and polynomial dynamical systems (PDS)\cite{craciun2019}. 
Because the extension we deal with in this work is a subclass of nonlinear dynamical systems on hypergraphs, we use CRN to designate this subclass.

In the following subsections, LDG and CRN are introduced using the language of algebraic graph theory\cite{biggs1997Bull.Lond.Math.Soc.,godsil2013}.
Then, we also give a brief and formal introduction of the Fokker-Planck equation (FPE)\cite{risken1996}, a linear dynamics of probability densities defined in Euclidean space.
We use the FPE throughout this paper only to contrast our results with the previous ones obtained for the FPE.

\subsection{Reversible Linear Dynamics of Densities on Graphs}
\begin{dfn}[Edge-weighted finite graph $\Graph_{\Vc{\kcoef}^{\pm}}$]\label{dfn:graph}
A finite graph $\Graph\defeq(\{\node_{i}\},\{\edge_{e}\},\IncMatrix)$ consists of $N_{\node} \in \Integer_{> 0}$ vertices, $\{\node_{i}\}_{i\in[1,N_{\node}]}$, and $N_{\edge} \in \Integer_{> 0}$ oriented edges, $\{\edge_{e}\}_{e\in[1,N_{\edge}]}$, each of which connects two different vertices\footnote{This means that we exclude self loops.} (\fgref{fig:LDG_CRN} (a)).  
The incidence relation is represented by the incidence matrix $\IncMatrix \in \{0,\pm 1\}^{N_{\node}\times N_{\edge}}$ where, for $\IncMatrix=(b_{i,e})$, 
\begin{align}
b_{i,e}&\defeq+1 && \mbox{if $\node_{i}$ is the head of edge $\edge_{e}$}, \notag \\
b_{i,e}&\defeq-1 && \mbox{if $\node_{i}$ is the tail of edge $\edge_{e}$}, \notag \\
b_{i,e}&\defeq0 && \mbox{otherwise}. \notag
\end{align}
An edge-weighted finite graph $\Graph_{\Vc{\kcoef}^{\pm}}\defeq(\{\node_{i}\},\{\edge_{e}\},\IncMatrix, \{\kcoef_{e}^{\pm}\})$ has two positive weighting parameters $\kcoef_{e}^{\pm}=(\kcoef_{e}^{+},\kcoef_{e}^{-})\in \Real_{> 0}$ for each edge $\edge_{e}$. 
Parameters $\kcoef_{e}^{+}$ and $\kcoef_{e}^{-}$ are denoted as forward and reverse rates or weights of edge $\edge_{e}$, respectively.
\end{dfn}
\begin{figure}[h]
\includegraphics[bb=0 400 1024 768, width=\linewidth]{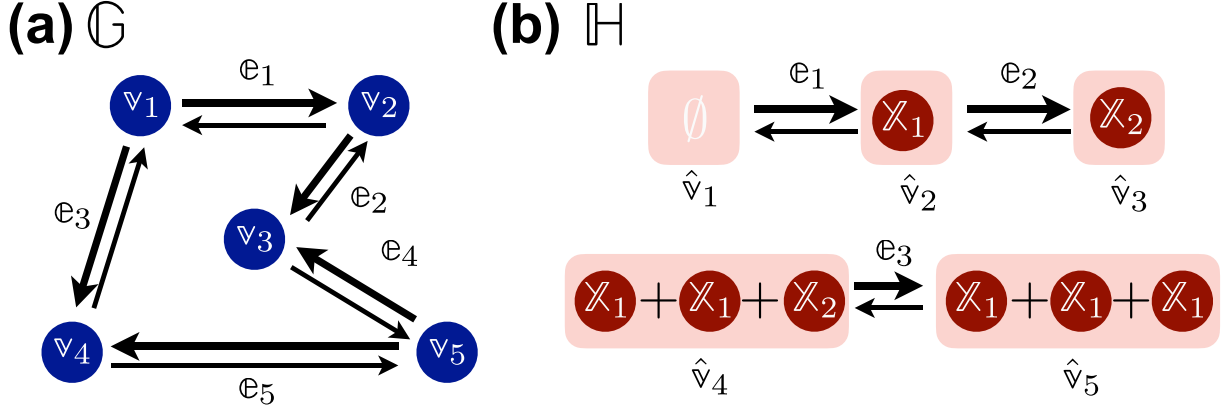}
\caption{Schematic diagrams of the reversible finite graph $\Graph$ (a) and CRN-hypergraph $\HGraph$ (b). Each pair of thick and thin arrows represents the pair of forward and reverse orientations of the corresponding edge. \TJK{The CRN-hypergraph $\HGraph$ in (b) corresponds to the simplified Brusselator reaction. The hypervertex $\hat{\node}_{1}$ contains no vertices.}}
\label{fig:LDG_CRN}
\end{figure}

A reversible linear dynamics (rLDG) on graphs is defined on the edge-weighted finite graph $\Graph_{\Vc{\kcoef}^{\pm}}$:
\begin{dfn}[Reversible linear dynamics of density on graph $\Graph_{\Vc{\kcoef}^{\pm}}$]\label{dfn:rLDG}
The reversible linear dynamics of the non-negative density $\Vc{x}(t)=(x_{1}(t), \cdots, x_{N_{\node}}(t))^{\Transpose}\in \Real_{\ge 0}^{N_{\node}}$ on $\Graph_{\Vc{\kcoef}^{\pm}}$ is defined by the continuity equation 
\begin{align}
    \dot{\Vc{x}}&=-\IncMatrix \Vc{\flux}(\Vc{x})=-\IncMatrix [\Vc{\flux}^{+}(\Vc{x})-\Vc{\flux}^{-}(\Vc{x})], \label{eq:LDM1}
\end{align}
and linear forward and reverse one-way fluxes $\Vc{\flux}^{\pm}(\Vc{x})=(\flux^{\pm}_{1}(\Vc{x}),\cdots, \flux^{\pm}_{N_{\edge}}(\Vc{x}))^{\Transpose}\in \Real^{N_{\edge}}_{\ge 0}$ with the following specific functional form\footnote{We may consider other functional forms for $\Vc{\flux}(\Vc{x})$, which can induce nonlinear dynamics on the graph. In this work, we focus mainly on the linear case.}:
\begin{align}
    \Vc{\flux}^{\pm}(\Vc{x})&=\Vc{\kcoef}^{\pm}\circ (\IncMatrix^{\pm})^{\Transpose}\Vc{x} ,\label{eq:LDM2}
\end{align}
where $\Vc{\flux}(\Vc{x})\defeq \Vc{\flux}^{+}(\Vc{x})-\Vc{\flux}^{-}(\Vc{x})$ is the total flux, the symbol $\circ$ denotes the component-wise product of two vectors\footnote{Also known as the Hadamard product or Schur product of vectors.}, and $\IncMatrix^{+}$ and $\IncMatrix^{-}$ are the head and tail incidence matrices defined respectively as $\IncMatrix^{+}\defeq\max[\IncMatrix,0]$ and $\IncMatrix^{-}\defeq\max[-\IncMatrix,0]$.
The incidence matrix $\IncMatrix$ in \eqnref{eq:LDM1} is often regarded as the discrete divergence operator on a graph \cite{grady2010a} and denoted also by $\Div_{\IncMatrix}=\IncMatrix$ to emphasize this interpretation in this work\footnote{This interpretation is because \eqnref{eq:LDM1} is associated with the continuity equation on a Euclidean space or on a Riemannian manifold where we have the divergence operator $\nabla \cdot$ instead of $\IncMatrix$. 
However, divergence on a Riemannian manifold implicitly includes the information of the metric via the Hodge operator.
On the contrary, $\IncMatrix$ does not.
From the viewpoint of homological algebra, $\IncMatrix$ should be regarded as the adjoint (transpose) of the discrete exterior derivative operator $\boundary^{0}\defeq\IncMatrix^{\Transpose}$, which is also often called a discrete gradient operator\cite{grady2010a}.
For a Euclidean space, they are the same.}.
\end{dfn}

Reversible Markov jump processes (rMJP) are a representative class of the rLDG describing random jumps of noninteracting particles on $\Graph_{\Vc{\kcoef}^{\pm}}$\footnote{We use the word 'reversible' in this work to mean that each edge allows both forward and reverse jumps while reversible Markov jump processes sometimes mean that the detailed balancing condition is satisfied. We introduce the notion of equilibrium later to designate the detailed balanced situation. }.
The weighting parameter $\kcoef_{e}^{+}$ is interpreted as the forward jump rate from the head of the oriented edge $\edge_{e}$ to its tail, whereas $\kcoef_{e}^{-}$ is the reverse jump rate from the tail to the head of $\edge_{e}$\footnote{If we allow $\kcoef_{e}^{\pm}$ to be $0$, we can include the irreversible MJP and also LDG in this formulation. We leave this extension for future work because it should require additional assumptions on the Legendre duality introduced in the subsequent sections.}.
For infinitely many such particles, we consider $p_{i}(t)\in[0,1]$, the fraction of particles on vertex $\node_{i}$ at time $t$, which is a non-negative density on vertices. 
Then, the forward and reverse one-way fluxes on the $e$th edge defined by \eqnref{eq:LDM2} are represented as
\begin{align}
\flux^{+}_{e}(\Vc{p})&=\kcoef_{e}^{+}p_{\node^{+}_{e}}\in \Real_{\ge 0}, & 
\flux^{-}_{e}(\Vc{p})&=\kcoef_{e}^{-}p_{\node^{-}_{e}}\in \Real_{\ge 0}, \label{MJP_each_flux}
\end{align}
where $\node^{+}_{e}$ and $\node^{-}_{e}$ are the head and tail vertices of edge $\edge_{e}$\footnote{Here, we have abused the notation $\node^{+}_{e}$ to indicate the index of the vertex $\node^{+}_{e}$. \eqnref{eq:LDM2} is reduced to \eqnref{MJP_each_flux} because $b^{\pm}_{i,e}=+1$ only when $i$ is the index of the head (tail) vertex $\node^{\pm}_{e}$ of $\edge_{e}$ and $0$ otherwise.}.
The linearity of $\flux^{\pm}_{e}(\Vc{p})$ with respect to $\Vc{p}$ comes from the independence of particles on the graph.
Then, the continuity equation (\eqnref{eq:LDM1}) with the state vector $\Vc{p}(t)\defeq (p_{1}(t), \cdots, p_{N_{\node}}(t))^{\Transpose}\in \Real^{N_{\node}}_{\ge 0}$ is reduced to the master equation: $\dot{\Vc{p}}=-\IncMatrix \Vc{\flux}(\Vc{p})$.

\begin{dfn}[Weighted asymmetric graph Laplacian\cite{biggs1997Bull.Lond.Math.Soc.,chung1997}]\label{dfn:wagLaplacian}
For $\Graph_{\Vc{\kcoef}^{\pm}}$, the corresponding weighted asymmetric graph Laplacian is defined by 
\begin{align}
\gLap_{\Vc{\theta}}\defeq \IncMatrix \left[\diag[\Vc{\kcoef}^{+}] (\IncMatrix^{+})^{\Transpose}-\diag[\Vc{\kcoef}^{-}] (\IncMatrix^{-})^{\Transpose} \right], \label{eq:graphLaplacian}
\end{align}
where $\Vc{\theta}\defeq (\Vc{\kcoef}^{+},\Vc{\kcoef}^{-})$ and $\diag[\Vc{\kcoef}^{+}]$ is the diagonal matrix whose diagonal elements are $\Vc{\kcoef}^{+}$.
Using $\gLap_{\Vc{\theta}}$, \eqnref{eq:LDM1} and \eqnref{eq:LDM2} are represented as 
\begin{align}
    \dot{\Vc{x}}&=-\IncMatrix \Vc{\flux}(\Vc{x})=-\IncMatrix \left[\diag[\Vc{\kcoef}^{+}] (\IncMatrix^{+})^{\Transpose}-\diag[\Vc{\kcoef}^{-}] (\IncMatrix^{-})^{\Transpose} \right]\Vc{x}=-\gLap_{\Vc{\theta}}\Vc{x}. \label{eq:LDMall}
\end{align}
\end{dfn}
The operator $\gLap_{\Vc{\theta}}$ is reduced to the weighted symmetric graph Laplacian if $\Vc{\kcoef}^{+}=\Vc{\kcoef}^{-}$ and also to the conventional graph Laplacian if $\Vc{\kcoef}^{+}=\Vc{\kcoef}^{-}=\Vc{1}$\cite{biggs1997Bull.Lond.Math.Soc.,chung1997}. 
\eqnref{eq:LDMall} can also cover linear transport on graphs, linear electric circuits \cite{dorfler2018Proc.IEEE}, consensus dynamics on graphs \cite{saber2003Proc.2003Am.ControlConf.2003}, and other linear dynamics on graphs \cite{godsil2013,veerman2020}\footnote{For some of these applications, the relevant state space is $\Real^{N_{\node}}$ instead of $\Real^{N_{\node}}_{\ge 0}$.}.

\subsection{Chemical reaction network and polynomial dynamical systems on hypergraphs}
Next, we introduce a class of nonlinear dynamics on hypergraphs, which includes the rLDG (\eqnref{eq:LDM1} and \eqnref{eq:LDM2}) as a special case.
The most common instance is deterministic chemical reaction networks (CRN) with the law of mass action (LMA) kinetics\cite{waage1986J.Chem.Educ.a,beard2008,qian2021a,feinberg2019}, and this class is sometimes referred to as polynomial dynamical systems (PDS). 
Because the major part of the PDS theory has been developed for CRN, we use CRN to introduce and specify this class in this work. 

\begin{dfn}[Reversible edge-weighted CRN hypergraph $\HGraph_{\Vc{\kcoef}^{\pm}}$]\label{dfn:rCRNhyperG}
The reversible CRN hypergraph $\HGraph\defeq (\{\molX_{i}\},\{\edge_{e}\}, \cmMatrix, \IncMatrix)$ consists of a finite number of vertices $\{\molX_{i}\}_{i\in[1,N_{\molX}]}$ and hyperedges $\{\edge_{e}\}_{e\in[1,N_{\edge}]}$ where $N_{\molX}, N_{\edge} \in \Integer_{>0}$ (\fgref{fig:LDG_CRN} (b)).
Each hyperedge $\edge_{e}$ connects two different hypervertices $\hat{\node}^{+}_{e}$ and $\hat{\node}^{-}_{e}$ where $\hat{\node}^{+}_{e} \neq \hat{\node}^{-}_{e}$\footnote{This means that we exclude self loop hyperedges. However, the head and tail hypervertices are allowed to contain the same vertices as long as $\hat{\node}^{+}_{e} \neq \hat{\node}^{-}_{e}$ holds.}.
The hypervertices are multisets of vertices $\{\molX_{i}\}_{i\in[1,N_{\molX}]}$, each of which is defined as $\hat{\node}_{\ell}=\sum_{i=1}^{N
_{\molX}}\cmvector_{i,\ell}\molX_{i}$ where $\cmvector_{i,\ell}\in \Integer_{\ge 0}$ is the number of the $i$th vertex included in the $\ell$th hypervertex\footnote{Our definition of CRN hypergraph differs in a couple of aspects from the conventional definition because of the additional information required to define CRN. For example, while the definition of edges is usually extended from those of graphs\cite{bretto2013}, our definition extends vertices instead. }. 
Thus, the nonnegative integer vector $\Vc{\cmvector}_{\ell}\defeq (\cmvector_{1,\ell}, \cdots, \cmvector_{N_{\molX},\ell})^{\Transpose}\in\Integer_{\ge 0}^{N_{\molX}}$ defines the $\ell$th hypervertex. 
Let $N_{\hat{\node}} \in \Integer_{>0}$ be the total number of the hypervertices and $\cmMatrix\defeq (\Vc{\cmvector}_{1}, \cdots, \Vc{\cmvector}_{N_{\hat{\node}}})\in\Integer_{\ge 0}^{N_{\molX}\times N_{\hat{\node}}}$ be the hypervertex matrix.
The matrix $\IncMatrix \in \{0,\pm 1\}^{N_{\hat{\node}}\times N_{\edge}}$ is the incidence matrix encoding the incidence relations among the hypervertices and the hyperedges.
The hypergraph incidence matrix $\HIncMatrix \in \Integer^{N_{\molX} \times N_{\edge}}$ is then defined as 
\begin{align}
    \HIncMatrix\defeq \cmMatrix \IncMatrix.\label{eq:CRN_stoiMatrix}
\end{align}
If $\cmMatrix=\identityM$ where $\identityM$ is the identity matrix, then $\HGraph=(\{\molX_{i}\}_{i\in[1,N_{\molX}]},\{\edge_{e}\}_{e\in[1,N_{\edge}]}, \cmMatrix, \IncMatrix)$ is reduced to $\Graph=(\{\node_{\ell}\}_{\ell\in [1,N_{\molX}]},\{\edge_{e}\}_{e\in[1,N_{\edge}]},\IncMatrix)$ where $\node_{\ell}=\molX_{\ell}$.
An edge-weighted CRN hypergraph $\HGraph_{\Vc{\kcoef}^{\pm}}\defeq (\{\molX_{i}\},\{\edge_{e}\}, \cmMatrix, \IncMatrix, \{\kcoef_{e}^{\pm}\})$ has forward and reverse rates $\kcoef_{e}^{\pm}> 0$ as weights  of edge $\edge_{e}$.
\end{dfn}

In the context of CRN theory, the vertices $\{\molX_{i}\}$ correspond to the molecular species involved in a CRN, and each hyperedge $\edge_{e}$ represents a pair of forward and reverse reactions: 
\begin{align}
\gamma_{1,e}^{+}\molX_{1}+\cdots+\gamma_{N_{\molX},e}^{+}\molX_{N_{\molX}}
\xrightleftharpoons[]{} \gamma^{-}_{1,e}\molX_{1}+\cdots+\gamma^{-}_{N_{\molX},e}\molX_{N_{\molX}},\label{eq:mthReaction}
\end{align}
where the forward and reverse reactions are from left to right and from right to left, respectively. 
Head and tail hypervertices $\hat{\node}_{e}^{+}\defeq (\gamma^{+}_{1,e}\molX_{1}+\cdots+\gamma^{+}_{N_{\molX},e}\molX_{N_{\molX}})$ and $\hat{\node}_{e}^{-}\defeq(\gamma^{-}_{1,e}\molX_{1}+\cdots+\gamma^{-}_{N_{\molX},e}\molX_{N_{\molX}})$ in \eqnref{eq:mthReaction} are the sets of reactants and products of the $e$th forward reaction, respectively. 
More specifically, $\gamma^{+}_{i,e}\in \Integer_{\ge 0}$ and $\gamma^{-}_{i,e}\in \Integer_{\ge 0}$ are the numbers of the molecule $\molX_{i}$ involved as the reactants and products of the $e$th forward reaction, respectively.
For the reverse reaction, $\hat{\node}_{e}^{-}$ and $\hat{\node}_{e}^{+}$ are the reactants and products.
Some head and tail hypervertices are overlapping among different reactions (hyperedges) as in \fgref{fig:LDG_CRN} (b).
As a result, $\{\hat{\node}_{\ell}\}_{\ell \in N_{\hat{\node}}}$ is the union of the head and tail hypervertices, $\{\hat{\node}_{\ell}\}_{\ell \in N_{\hat{\node}}}=\bigcup_{e \in N_{\edge}}\{\hat{\node}_{e}^{+}, \hat{\node}_{e}^{-}\}$.

The hypervertices are called complexes in CRN theory \cite{feinberg2019}\footnote{Because we use the term {\it complex} for the cell complex in homological algebra, we use hypervertices to indicate the complexes of the CRN theory.}.
From $\{\gamma^{+}_{i,e}\}$ and $\{\gamma^{-}_{i,e}\}$, we can define
\begin{align}
\Vc{\stoiVector}_{e}\defeq(\gamma^{+}_{1,e}-\gamma^{-}_{1,e},\cdots, \gamma^{+}_{N_{\molX},e}-\gamma^{-}_{N_{\molX},e})^{\Transpose}\in \Integer^{N_{\molX}}, \label{eq:HIncM}
\end{align}
where $\mp \Vc{\stoiVector}_{e}$ specify the change in the number of molecules induced when the $e$th forward or reverse reaction occurs just once, respectively.
The hypergraph incidence matrix $\HIncMatrix$ defined in \eqnref{eq:CRN_stoiMatrix} is represented as $\HIncMatrix= (\Vc{\stoiVector}_{1},\cdots, \Vc{\stoiVector}_{N_{\edge}})\in \Integer^{N_{\molX} \times N_{\edge}}$.
 In chemistry, the negative of $\Vc{\stoiVector}_{e}$ and $\HIncMatrix$, i.e., $-\Vc{\stoiVector}_{e}$ and $-\HIncMatrix$, are called the stoichiometric vector and matrix, respectively\cite{feinberg2019}.

\begin{rmk}[]
To define a reversible CRN hypergraph, the hypergraph matrix $\HIncMatrix$ is not sufficient. 
If the head and tail hypervertices of a hyperedge contain the same vertex (molecule), the corresponding element in $\HIncMatrix$ of such a shared vertex becomes $0$ by canceling out. 
Thus, the existence of shared vertices (molecules) is invisible in $\HIncMatrix$, and the pair $(\cmMatrix, \IncMatrix)$ is required to define $\HGraph$.
Such shared molecules are called catalysts in CRN. 
\end{rmk}
For a CRN hypergraph, the continuity equation for CRN is defined:
\begin{dfn}[CRN continuity equation]
Let a vector of nonnegative densities $\Vc{x}=(x_{1},\cdots,x_{N_{\molX}})^{\Transpose}\in \Real_{\ge 0}^{N_{\molX}}$ represents the concentration of molecules $\{\molX_{i}\}$. 
The CRN continuity equation is defined as 
\begin{align}
    \dot{\Vc{x}}=-\HIncMatrix \Vc{\flux}(\Vc{x})= -\Div_{\HIncMatrix} \Vc{\flux}(\Vc{x}),\label{CRN_rateEq}
\end{align}
where $\flux_{e}^{+}(\Vc{x})\in \Real_{\ge 0}$ and $\flux_{e}^{-}(\Vc{x})\in \Real_{\ge 0}$ are the one-way fluxes of the $e$th forward and reverse reactions, $\Vc{\flux}^{\pm}(\Vc{x})\defeq(\flux_{1}^{\pm}(\Vc{x}),\cdots,\flux_{N_{\edge}}^{\pm}(\Vc{x}))^{\Transpose}\in \Real_{\ge 0}^{N_{\edge}}$ are their vector representations, and $\Vc{\flux}(\Vc{x})\defeq \Vc{\flux}^{+}(\Vc{x})-\Vc{\flux}^{-}(\Vc{x})\in \Real^{N_{\edge}}$ is the total reaction flux\cite{beard2008,qian2021a,feinberg2019}.
The hypergraph divergence operator $\Div_{\HIncMatrix} \defeq \HIncMatrix$ is defined accordingly.
\end{dfn}
To define the dynamics of a CRN, the functional form of $\flux_{e}^{\pm}(\Vc{x})$ is required\footnote{Even though the functional form of $\flux_{e}^{\pm}(\Vc{x})$ is automatically determined in the case of LDG because of the linearity, we have multiple possibilities to define nonlinear $\flux_{e}^{\pm}(\Vc{x})$. }. 
Before introducing specific forms, we define two important properties of the fluxes and also other functions defined on edges: 
\begin{dfn}[Consistency of fluxes $\Vc{\flux}^{\pm}(\Vc{x})$ with hypergraph $\HGraph$]
One-way fluxes $\Vc{\flux}^{\pm}(\Vc{x})$ are consistent with the hypergraph $\HGraph$ if, for all $e\in[1,N_{\edge}]$, $\flux^{\pm}_{e}(\Vc{x})$ becomes $0$ when $x_{i}=0$ where $\molX_{i}$ is any reactant of $\flux^{\pm}_{e}(\Vc{x})$, respectively.
In other words, $\flux^{\pm}_{e}(\Vc{x})$ satisfies $\cmvector_{i,e}^{\pm}\flux_{e}^{\pm}(\Vc{x})=0$ if $x_{i}=0$ for any $i\in[1,N_{\molX}]$.
\end{dfn}
\begin{dfn}[Locality of function on edges over $\HGraph$]
A vector function $\Vc{g}(\Vc{x})\in \Real^{N_{\edge}}$ defined on edges is local on $\HGraph$ if, for all $e\in[1,N_{\edge}]$, $g_{e}(\Vc{x})$ is a function only of the elements of $\Vc{x}$ incident to the edge $\edge_{e}$ on $\HGraph$, i.e., $g_{e}(\Vc{x})=g_{e}(\bar{\gamma}_{1,e}^{+} x_{1},\cdots, \bar{\gamma}_{N_{\molX},e}^{+}x_{N_{\molX}},\bar{\gamma}_{1,e}^{-} x_{1},\cdots, \bar{\gamma}_{N_{\molX},e}^{-} x_{N_{\molX}})$ where $\bar{\gamma}_{i,e}^{\pm}\defeq \min[1, \gamma_{i,e}^{\pm}] \in \{0,1\}$.
\end{dfn}

The consistency condition is indispensable to prohibit a reaction that can decrease $x_{i}$ from occurring when $x_{i}=0$. 
For $\Vc{\flux}^{\pm}(\Vc{x})$, the locality means that the fluxes of the $e$th reaction depend only on the concentrations of their reactants and products.
The local flux is determined solely by the information stored on the vertices incident to the edge and plays a crucial role when we regard the structure introduced in this work as an extension of differential forms on continuous manifolds to graphs and hypergraphs.
When we work on specific forms of fluxes in this work, we consider only local fluxes consistent with the given hypergraph $\HGraph$.

In chemistry, we have a variety of candidates for the functional form of flux, e.g., the Michaelis-Menten function, Hill's function, and others\cite{beard2008,keener2008}. 
Among others, the LMA kinetics is the most basic and well-established one.

\begin{dfn}[Waage-Guldberg’s law of mass action kinetics (LMA kinetics)]
A CRN follows the LMA kinetics if, for all $e\in[1,N_{\edge}]$, the $e$th forward and reverse reaction fluxes are represented as
\begin{align}
    \flux^{\pm}_{e}(\Vc{x})=\kcoef^{\pm}_{e} \prod_{j=1}^{N_{\molX}}x_{j}^{\gamma^{\pm}_{j,e}}=\kcoef^{\pm}_{e} \sum_{\ell=1}^{N_{\hat{\node}}}b_{\ell,e}^{\pm}\prod_{j=1}^{N_{\molX}}x_{j}^{\gamma_{j,\ell}},\label{CNR_each_flux}
\end{align}
where $\kcoef_{e}^{+}\in \Real_{> 0}$ and $\kcoef_{e}^{-}\in \Real_{> 0}$ are the reaction rate constants of the $e$th forward and reverse reactions, respectively.
The fluxes under LMA kinetics can be compactly represented as 
\begin{align}
    \Vc{\flux}^{\pm}_{\mathrm{MA}}(\Vc{x})=\Vc{\kcoef}^{\pm}\circ (\IncMatrix^{\pm})^{\Transpose}\Vc{x}^{\cmMatrix^{\Transpose}},\label{eq:CRN_rate}
\end{align}
where $\Vc{x}^{\Vc{\cmvector}}\defeq \prod_{j=1}^{N_{\molX}} x_{j}^{\cmvector_{j}}\in \Real_{\ge 0}$ and $\Vc{x}^{\cmMatrix^{\Transpose}}\defeq (\Vc{x}^{\Vc{\cmvector}_{1}}, \cdots, \Vc{x}^{\Vc{\cmvector}_{N_{\hat{\node}}}})^{\Transpose}\in \Real_{\ge 0}^{N_{\hat{\node}}}$\footnote{We should note an important relation, $(\IncMatrix^{\pm})^{\Transpose}\Vc{x}^{\cmMatrix^{\Transpose}}=\Vc{x}^{(\cmMatrix\IncMatrix^{\pm})^{\Transpose}}$, which holds because every column vector of $\IncMatrix^{\pm}$ contains only one $+1$ and the others are $0$.}.
We use the subscript $\mathrm{MA}$ as in $\Vc{\flux}^{\pm}_{\mathrm{MA}}(\Vc{x})$ to discriminate this specific form of the fluxes from others.
We can easily observe that $\Vc{\flux}^{\pm}_{\mathrm{MA}}(\Vc{x})$ is consistent and local with respect to $\HGraph$. 
Furthermore, $\Vc{\flux}^{\pm}_{\mathrm{MA}}(\Vc{x})$ is specified by the edge-weighted CRN hypergraph 
$\HGraph_{\Vc{\kcoef}^{\pm}}\defeq (\{\molX_{i}\},\{\edge_{e}\}, \cmMatrix, \IncMatrix, \{\kcoef_{e}^{\pm}\})$.
\end{dfn}

\begin{rmk}[Algebraic aspect of LMA kinetics]
Because $\Vc{x}^{\cmMatrix^{\Transpose}}$ is a vector of monomials of $\Vc{x}$,  each one-way flux, $\flux^{\pm}_{e}(\Vc{x})$, is a monomial of $\Vc{x}$ under \eqnref{eq:CRN_rate} and thus the total flux $\flux_{e}(\Vc{x})=\flux^{+}_{e}(\Vc{x})-\flux^{-}_{e}(\Vc{x})$ is a binomial.
This fact links the real algebraic geometry of the toric varieties\cite{sottile2008arXiv:math/0212044,cox2011} to CRN\cite{craciun2009JournalofSymbolicComputation,kobayashi2021ArXiv211214910Phys.} as it does in algebraic statistics\cite{pachter2005,rapallo2007AISM}.
\end{rmk}
\begin{rmk}[Extended LMA kinetics]
While we mainly work on the normal LMA kinetics, we can extend it.
The extended LMA kinetics defined on $\HGraph$ is defined as 
\begin{align}
    \Vc{\flux}^{\pm}_{\mathrm{eMA}}(\Vc{x})=\Vc{\kcoef}^{\pm}\circ \Vc{g}(\Vc{x})\circ (\IncMatrix^{\pm})^{\Transpose}\Vc{x}^{\cmMatrix^{\Transpose}},\label{eq:CRN_rate_eLMA}
\end{align}
where $\Vc{g}(\Vc{x}) \in \Real_{> 0}^{N_{\edge}}$ and is local with respect to $\HGraph$\footnote{There exists another type of extension known as generalized LMA kinetics where the monomials $\Vc{x}^{\cmMatrix^{\Transpose}}$ are replaced with fractional monomials, i.e., powers of $\Vc{x}$ with nonnegative real-valued exponents\cite{muller2012SIAMJ.Appl.Math.}.}.
An example of the extended LMA kinetics is reversible Michaelis-Menten kinetics\cite{noor2013FEBSLetters}.
\end{rmk}

By combining the continuity equation (\eqnref{CRN_rateEq}) and the LMA kinetics (\eqnref{eq:CRN_rate}), we have the following chemical rate equation:
\begin{align}
    \dot{\Vc{x}}&=-\HIncMatrix \Vc{\flux}_{\mathrm{MA}}(\Vc{x})=-\cmMatrix\IncMatrix \left[\diag[\Vc{\kcoef}^{+}] (\IncMatrix^{+})^{\Transpose}-\diag[\Vc{\kcoef}^{-}] (\IncMatrix^{-})^{\Transpose} \right]\Vc{x}^{\cmMatrix^{\Transpose}}=-\cmMatrix \gLap_{\Vc{\theta}} \Vc{x}^{\cmMatrix^{\Transpose}}, \label{CRN_rateEqall}
\end{align}
where $\gLap_{\Vc{\theta}}$ is the weighted asymmetric graph Laplacian defined as in \eqnref{eq:graphLaplacian}.
Now, we can see that CRN contains rLDG (\eqnref{eq:LDMall}) as a special case if $\cmMatrix=\identityM$.
Owing to this inclusion relation, CRN with LMA kinetics is a mathematically sound generalization of rLDG.
Because LDG has been used in various fields of social science, network science, machine learning, and so on, CRN theory is potentially important for extending the results there.

\TJK{\begin{ex}[Simplified Brusselator CRN\cite{feinberg2019,yoshimura2022}]\label{ex:BrusselatorEq}
The Brusselator is a representative CRN, which can generate non-trivial dynamic behaviors such as oscillations. 
We use a reversible CRN version of the simplified Brusselator\cite{feinberg2019,yoshimura2022}, whose CRN-hypergraph depicted in  \fgref{fig:LDG_CRN} (b) has the following structural information: 
\begin{align}
\cmMatrix&=\bordermatrix{
&\hat{\node}_{1} & \hat{\node}_{2} & \hat{\node}_{3} & \hat{\node}_{4} & \hat{\node}_{5}\cr
\molX_{1} &0 & 1  & 0  & 2 & 3 \cr
\molX_{2} &0 & 0  & 1  & 1 & 0 \cr
},&
\IncMatrix&=\bordermatrix{
& \edge_{1} & \edge_{2} & \edge_{3} \cr
\hat{\node}_{1} & +1 & 0 & 0 \cr
\hat{\node}_{2} & -1 & +1 & 0 \cr
\hat{\node}_{3} & 0 & -1 & 0 \cr
\hat{\node}_{4} & 0 & 0 & +1 \cr
\hat{\node}_{5} & 0 & 0 & -1 \cr
},& 
\HIncMatrix&=\bordermatrix{
& \edge_{1} & \edge_{2} & \edge_{3} \cr
\molX_{1} & -1 & +1 & -1 \cr
\molX_{2} & 0 & -1 &+1 \cr
}
\end{align}, 
\end{ex}
The rate equation (\eqnref{CRN_rateEqall}) can be represented as
\begin{align}
\frac{\dd}{\dt}\begin{pmatrix}x_{1}\\x_{2}\end{pmatrix}
=- \overbrace{\begin{pmatrix}
 -1 & +1 & -1 \\
 0 & -1 &+1 
\end{pmatrix}}^{\HIncMatrix}
\left[
\overbrace{
\begin{pmatrix}
\kcoef_{1}^{+} \\ \kcoef_{2}^{+}x_{1} \\ \kcoef_{3}^{+} x_{1}^{2}x_{2}
\end{pmatrix}}^{\Vc{\flux}^{+}(\Vc{x})}
-
\overbrace{\begin{pmatrix}
\kcoef_{1}^{-}x_{1} \\ \kcoef_{2}^{-}x_{2} \\ \kcoef_{3}^{-} x_{1}^{3}
\end{pmatrix}}^{\Vc{\flux}^{-}(\Vc{x})}
 \right].
\end{align}
}

\subsection{Fokker Planck Equations}
While our main focus is the dynamics on graphs and hypergraphs, we use FPE as a representative class of density dynamics on a continuous Euclidean space.
Specifically, we use FPE only to demonstrate the relation of our results with previous ones obtained for FPE in various contexts.
Because FPE is infinite-dimensional, we treat it here only formally.

Let $\Vc{r}\in \Real^{d}$ be a vector in a $d$ dimensional Euclidean space.
We consider infinitely many noninteracting particles randomly walking in the space and describe the dynamics by a probability density $p_{t}(\Vc{r})\in \Real_{\ge 0}$ of the particles.
The continuity equation for $p_{t}(\Vc{r})$ is  
\begin{align}
    \partial_{t}p_{t}(\Vc{r})&=-\nabla\cdot \Vc{\flux}_{\mathrm{FP}}[p_{t}(\Vc{r})]
\end{align}
where $\Vc{j}_{\mathrm{FP}}[p_{t}(\Vc{r})]\in \Real^{d}$ is the probability flux, $\nabla \defeq (\partial/\partial r_{1}, \cdots, \partial/\partial r_{d})^{\Transpose}$ is the gradient operator on the Euclidean space, and $(\nabla\cdot): \nabla\cdot\Vc{F}(\Vc{r})\defeq \sum_{i=1}^{d}\partial F_{i}(\Vc{r})/\partial r_{i} \in \Real$ is the divergence. 
The flux of the FPE is defined as
\begin{align}
     \Vc{\flux}_{\mathrm{FP}}[p(\Vc{r})]&= \left[\Vc{F}(\Vc{r})p(\Vc{r}) - D_{0} \nabla  p(\Vc{r})\right]
     , \label{eq:flux_FPE}
\end{align}
where $\Vc{F}(\Vc{r})\in \Real^{d}$ is the drift force, and $D_{0} \in \Real_{>0}$ is the diffusion constant.

\section{Discrete Calculus and Homological Algebra of Graphs and Hypergraphs}\label{sec:algebraic_homology}
The algebraic and topological structure of the dynamics on graphs and hypergraphs can be explicitly and abstractly treated using the language of discrete calculus and homological algebra. 
The discrete version of the gradient and divergence mentioned in \secref{sec:graph_hypergraph} is also characterized.
In this section, we briefly introduce the chain and cochain complexes defined for a finite graph or a hypergraph and discrete calculus\cite{biggs1997Bull.Lond.Math.Soc.,grady2010a,lim2020SIAMRev.,sunada2012}.
We first introduce the complexes for a graph $\Graph$ and then extend them to a hypergraph $\HGraph$ algebraically\footnote{The complex used here should not be confused with complexes used in CRN theory\cite{feinberg2019}}.
It should be noted that the conventional discrete calculus (the discrete version of the theory for differential forms) presumes the Riemannian metric structure in the dual space of chains and cochains or that of cochains on primal and dual complexes \cite{desbrun2005,hirani2003}. 
However, we are going to introduce Legendre duality instead. 
For this purpose, our introduction of chain and cochain complexes depends only on the topological (algebraic) information of the underlying graph and hypergraph\cite{grady2010a} without specifying the metric information. 

\subsection{Chain and cochain complexes on graphs}\label{sec:chain_cochain}
The elements of a graph $\Graph$ are called cells in discrete calculus \footnote{We follow the terminology in \cite{grady2010a}. While we use ``cell'', we do not presume any $N$-dimensional topological manifold underlying the graph. The graph is just treated algebraically as in algebraic graph theory and homological algebra.}.
A vertex and an edge are, respectively, called $0$-cell and $1$-cell, and the graph $\Graph$ is denoted as a cell-complex\footnote{Depending on the choice of which elements of a graph are considered, the content of the complex changes. For example, vertices and edges are the major ingredients of the complex of a graph. The faces of a graph are often included in the complex. The definition of the higher-order elements than edges requires additional structural information to the incidence matrix of the graph, e.g., the edge-face incidence matrix. }. 
For each type of the cells, we consider vectors (chains and cochains) defined on the cells.
For $\Graph$, a $0$-chain with field $\Real$ is an $N_{\node}$-tuple of real scalars, each of which is assigned to a vertex, i.e., a $0$ cell.
Thus, a $0$-chain is a real vector defined on the vertices of $\Graph$ with the basis $\{\node_{i}\}$.
This basis is called the standard basis.
The vector space of real $0$-chains is called the vertex space here and denoted as $\chain_{0}(\Graph)=\Real^{N_{\node}}$\cite{biggs1997Bull.Lond.Math.Soc.}\footnote{In algebraic graph theory, the chain of a graph $\Graph$ is defined as an integer-valued vector space $\Integer^{N_{\node}}$ to represent the discrete and combinatorial nature of $\Graph$ and also to specify the domain of integration. Here, we use $\Real$ as the field of the vector space. }. 
The components of the vector $\Vc{x} \in \chain_{0}(\Graph)$ are given as $\Vc{x}(\node_{i})\defeq x_{i}$. 
Similarly, a real $1$-chain is a real vector defined on the edges of $\Graph$. 
The real vector space of $1$-chains is called the edge space and denoted as $\chain_{1}(\Graph)=\Real^{N_{\edge}}$.
The standard basis is introduced by using edges $\{\edge_{e}\}$, accordingly.
A flux $\Vc{\flux}$ is a $1$-chain: $\Vc{\flux}(\edge_{e})\defeq\flux_{e}$.
The graph incidence matrix $\IncMatrix$ induces the discrete differential $\boundary_{1}: \chain_{1}(\Graph) \to \chain_{0}(\Graph)$ as $\boundary_{1}\Vc{\flux}\defeq \IncMatrix \Vc{\flux}$\footnote{In algebraic graph theory, $\IncMatrix$ is also identical to the discrete boundary operator from $\chain_{1}(\Graph;\Integer)$ to $\chain_{0}(\Graph;\Integer)$.}.

To obtain an exact sequence, we algebraically define the $(-1)$ and $2$ chains and the corresponding differentials $\boundary_{0}$ and $\boundary_{2}$.
Let $\chain_{2}(\Graph)=\Real^{N_{\cycle}}$ where $N_{\cycle}=\Dim [\Ker \IncMatrix]$ and $\{\Vc{v}_{i}\}_{i\in[1,N_{\cycle}]}$ is a set of complete basis of $\Ker \IncMatrix$ where $\Vc{v}_{i} \in \{0,+1,-1\}^{N_{\edge}}$\footnote{$N_{\cycle}=0$ when $\Graph$ is a set of trees.}.
In algebraic graph theory, $\Ker \IncMatrix$ is called a cycle subspace\cite{biggs1997Bull.Lond.Math.Soc.,knauer2011,godsil2013}.
For a graph $\Graph$, we can construct $\{\Vc{v}_{i}\}_{i\in[1,N_{\cycle}]}$ by, for example, using the fundamental cycle basis of $\Graph$ obtained from a fixed spanning tree of $\Graph$\footnote{The spanning tree chosen specifies a fundamental cycle and cocycle bases.}\cite{godsil2013}. 
Thus, $\chain_{2}(\Graph)$ is the vector space defined on the cycles of $\Graph$ and isomorphic to the cycle subspace.
We define a matrix, $\cycMatrix \defeq (\Vc{v}_{1},\cdots, \Vc{v}_{N_{\cycle}})$\footnote{$\cycMatrix^{\Transpose}$ is called the fundamental tieset matrix in graph theory}, and the differential $\boundary_{2}: \chain_{2}(\Graph) \to \chain_{1}(\Graph)$ as $\boundary_{2}\defeq \cycMatrix$.
From the construction, $\IncMatrix \cycMatrix=\boundary_{1}\boundary_{2}=0$ and $\Img[\boundary_{2}]=\Ker[\boundary_{1}]$ hold.
Similarly, let $\chain_{-1}(\Graph)=\Real^{N_{\cons}}$ where $N_{\cons}=\Dim [\Ker \IncMatrix^{\Transpose}]$ and $\{\Vc{u}_{\ell}\}_{\ell\in[1,N_{\cons}]}$ is a set of complete basis of $\Ker \IncMatrix^{\Transpose}$ where $\Vc{u}_{\ell} \in \{0,+1,-1\}^{N_{\node}}$. 
The subspace $\Ker \IncMatrix^{\Transpose}$ is related to the connected components of $\Graph$ and $\Vc{u}_{i}$ can be chosen such that $u_{i,\ell}=+1$ if the $i$th vertex is included in the $\ell$th connected component and $u_{i,\ell}=0$, otherwise.
Thus, $\chain_{-1}(\Graph)$ is the vector space on the connected components.
From the matrix $\consMatrix \defeq (\Vc{u}_{1},\cdots, \Vc{u}_{N_{\cons}})^{\Transpose}$, the differential $\boundary_{0}: \chain_{0}(\Graph) \to \chain_{-1}(\Graph)$ is defined as $\boundary_{0}\defeq \consMatrix$.
From the construction, $\consMatrix\IncMatrix =\boundary_{0}\boundary_{1}=0$ and $\Img[\boundary_{1}]=\Ker[\boundary_{0}]$ hold. 
Then, we obtain the exact chain sequence\footnote{We should note that the sequence is not canonical because $\consMatrix$ and $\cycMatrix$ depend on the choice of bases.}\footnote{Upon necessity, we can consider the harmonic components by employing an under-complete basis for $\cycMatrix$.}:
\begin{align}
0 \xleftarrow{} \chain_{-1}(\Graph)\xleftarrow{\boundary_{0}=\consMatrix}\chain_{0}(\Graph)\xleftarrow{\boundary_{1}=\IncMatrix}\chain_{1}(\Graph)\xleftarrow{\boundary_{2}=\cycMatrix}\chain_{2}(\Graph)\xleftarrow{}0.
\end{align}

Because $\chain_{p}(\Graph)$ is a vector space for each $p\in\{-1,0,1,2\}$, we can consider its dual vector space $\chain^{p}(\Graph)\defeq \chain_{p}^{*}(\Graph)$ consisting of the linear functions on $\chain_{p}(\Graph)$.
An element of $\chain^{p}(\Graph)$ is called $p$-cochain.
Let $\langle\cdot , \cdot\rangle: \chain_{p}(\Graph) \times \chain^{p}(\Graph)\to \Real$ be the standard bilinear pairing of the $p$-chain and $p$-cochain defined with the standard basis.
The transposes of $\consMatrix$, $\IncMatrix$, and $\cycMatrix$ induce the differentials between cochains as $\boundary^{-1}\defeq \consMatrix^{\Transpose}: \chain^{-1}(\Graph)\to \chain^{0}(\Graph)$, 
$\boundary^{0}\defeq \IncMatrix^{\Transpose}: \chain^{0}(\Graph)\to \chain^{1}(\Graph)$, and $\boundary^{1}\defeq \cycMatrix^{\Transpose}: \chain^{1}(\Graph)\to \chain^{2}(\Graph)$. 
The differentials $\boundary^{p}$ on cochains are the adjoints of the differentials $\boundary_{p}$ on chains, which induce the exact cochain sequence:
\begin{align}
0 \xrightarrow{} \chain^{-1}(\Graph)\xrightarrow{\boundary^{-1}=\consMatrix^{\Transpose}}\chain^{0}(\Graph)\xrightarrow{\boundary^{0}=\IncMatrix^{\Transpose}}\chain^{1}(\Graph)\xrightarrow{\boundary^{1}=\cycMatrix^{\Transpose}}\chain^{2}(\Graph)\xrightarrow{}0.
\end{align}
Note that the definition of chains, cochains, and differential operators are topological in the sense that we do not include any metric information.

\subsection{Chain and cochain complexes on hypergraphs}\label{sec:chain_cochain_hyper}
The definitions of chain and cochain complexes introduced above are algebraically extended to hypergraphs $\HGraph$ simply by replacing the graph incidence matrix $\IncMatrix$ with the hypergraph incidence matrix $\HIncMatrix$.
\begin{dfn}[Exact chain and cochain sequences on a hypergraph]\label{dfn:exactchainhypergraph}
The chain and cochain complexes on a hypergraph are defined by the following diagram:
\begin{alignat}{10}
0 & \xrightarrow{} &\chain^{-1}(\HGraph) &\xrightarrow{\boundary^{-1}=\consMatrix^{\Transpose}}&\chain^{0}(\HGraph)&\xrightarrow{\boundary^{0}=\HIncMatrix^{\Transpose}}&\chain^{1}(\HGraph)&\xrightarrow{\boundary^{1}=\cycMatrix^{\Transpose}}&\chain^{2}(\HGraph)&\xrightarrow{}& 0&\notag\\
0 & \xleftarrow{} &\chain_{-1}(\HGraph)&\xleftarrow{\boundary_{0}=\consMatrix}&\chain_{0}(\HGraph)&\xleftarrow{\boundary_{1}=\HIncMatrix}&\chain_{1}(\HGraph)&\xleftarrow{\boundary_{2}=\cycMatrix}&\chain_{2}(\HGraph)&\xleftarrow{}&0&\notag.
\end{alignat}
where $\chain^{-1}(\HGraph)\simeq\chain_{-1}(\HGraph)\simeq\Real^{N_{\cons}}$, $\chain^{0}(\HGraph)\simeq\chain_{0}(\HGraph)\simeq\Real^{N_{\molX}}$, $\chain^{1}(\HGraph)\simeq\chain_{1}(\HGraph)\simeq\Real^{N_{\edge}}$, and $\chain^{2}(\HGraph)\simeq\chain_{2}(\HGraph)\simeq\Real^{N_{\cycle}}$.
\end{dfn}
The bases, $\cycMatrix$ and $\consMatrix$, are obtained as integral bases, i.e., the components of $\cycMatrix$ and $\consMatrix$ can be chosen from $\Integer$ because $\HIncMatrix$ is an integer-valued matrix\footnote{As far as we know, there is not a systematic and widely-appreciated way to define these bases because we have multiple ways to extend the notion of spanning tree of a graph to a hypergraph.}.
As we will explain in \secref{sec:orthogonality} and \secref{sec:InfoGeoNonEquilibrium}, the meaning of $\chain_{2}(\HGraph)$ can be retained as the space on generalized cycles.  The meaning of $\chain_{-1}(\HGraph)$ becomes the space of conserved quantities under the dynamics (\eqnref{CRN_rateEq}).

\subsection{Discrete calculus on graphs and hypergraphs}
The $p$-cochain and $p$-chain introduced above are an algebraic abstraction of the $p$-differential form and its Hodge dual on a differential manifold\cite{grady2010a}.
Accordingly, the discrete versions of gradient, divergence, and curl are associated with the differentials (exterior derivative). 
\begin{dfn}[Discrete gradients, divergences, and curls]\label{dfn:discreteOperators}
The discrete gradient is defined as $\Grad_{\IncMatrix}\defeq \boundary^{0}=\IncMatrix^{\Transpose}$ for $\Graph$ and also as $\Grad_{\HIncMatrix}\defeq \boundary^{0}=\HIncMatrix^{\Transpose}$ for $\HGraph$. 
The adjoints of the gradients are defined with the corresponding adjoint differentials: $\Grad^{*}_{\IncMatrix}\defeq \boundary_{1}=\IncMatrix$ and $\Grad^{*}_{\HIncMatrix}\defeq \boundary_{1}=\HIncMatrix$.
They are called discrete divergences and denoted also as  $\Div_{\IncMatrix}=\Grad^{*}_{\IncMatrix}$ and $\Div_{\HIncMatrix}=\Grad^{*}_{\HIncMatrix}$\footnote{These notations are consistent with those in \secref{sec:graph_hypergraph}}.
The discrete curl and its adjoint are defined as $\Curl_{\cycMatrix}\defeq \boundary^{1}=\cycMatrix^{\Transpose}$ and $\Curl^{*}_{\cycMatrix}\defeq \boundary_{2}=\cycMatrix$, respectively.
\end{dfn}

\subsection{Linear Graph Laplacian Dynamics and Metric structure in discrete calculus}
In the theory of graph Laplacian, a metric matrix $\metric_{p}$ and its associated inner product are typically endowed for each $p$. 
To contrast it with the Legendre duality introduced later, we briefly outline it here. 
For an edge-weighted graph $\Graph_{\Vc{\kcoef}^{\pm}}$ and for the case that $\Vc{\kcoef}^{+}=\Vc{\kcoef}^{-}=\Vc{\kcoef}\in \Real_{>0}^{N_{\edge}}$, $\metric_{0}=\identityM$ and $\metric_{1}=\diag[1/\Vc{\kcoef}]$ are conventionally employed. 
With these metric matrices, the graph Laplacian introduced in \eqnref{eq:graphLaplacian} can be described as 
\begin{align}
    \gLap_{\Vc{\kcoef}}=\Div_{\IncMatrix} \metric^{1}  \Grad_{\IncMatrix} \metric_{0} \label{eq:graphLaplacianLinear}
\end{align}
where $\metric^{p}\defeq \metric_{p}^{-1}$.
By including such metric information, the following pair of metric gradient and divergence is often used in graph theory and network theory: 
$\Grad_{\metric}\defeq  \sqrt{\metric^{1}}\IncMatrix^{\Transpose}$ and $\Div_{\metric} \defeq \IncMatrix \sqrt{\metric^{1}}$ where $\sqrt{\metric^{1}}\defeq \diag[\sqrt{\Vc{\kcoef}}]$.
This symmetric graph Laplacian $\gLap_{\Vc{\kcoef}}$ induces a linear dynamics of $\Vc{x}\in \Real^{N_{\node}}$ on graph via \eqnref{eq:LDMall}\footnote{Here $\Vc{x}$ is not density but a vector in $\Real^{N_{\node}}$.}:
\begin{align}
\dot{\Vc{x}}=-\gLap_{\Vc{\kcoef}}\Vc{x}. \label{eq:gLaplacian_dynamics}
\end{align}
The eigenvalues and eigenvectors of $\gLap_{\Vc{\kcoef}}$ enable us to obtain spectral information of the underlying graph\cite{chung1997}.  
Even for nonlinear dynamics on a hypergraph as in \eqnref{CRN_rateEqall}, the same symmetric Laplacian can provide some information when $\Vc{\kcoef}^{+}=\Vc{\kcoef}^{-}=\Vc{\kcoef}$. 
We can also include other information in the metric matrices such as the degree of vertices\cite{chen2012a}. Various normalizations of the graph Laplacian can be attributed to the choice of metrics.

However, such a choice of metric matrices ends up only with linear dynamics on $\Real^{N_{\node}}$ and is relevant only when the weighting is symmetric: $\Vc{\kcoef}^{+}=\Vc{\kcoef}^{-}=\Vc{\kcoef}$.
In addition, it may not always capture important aspects of the density dynamics such as gradient flow properties and information-theoretic properties, because nonlinear terms such as $\ln \Vc{p}$ appear in information-theoretic quantities.
To extend the class of dynamics being covered and to enable the information-geometric characterization of dynamics, we have to generalize the conventional inner product structure by replacing it with the Legendre dual structure induced by convex functions.


\section{Dually flat spaces on vertices and edges and generalized flow}\label{sec:doubly_dual_flat_spaces}
In this section, we introduce two pairs of dually flat spaces (\fgref{fig:Intro}): one is associated with the vertex spaces, i.e., the dual spaces of $0$-chains and $0$-cochains.
The other corresponds to the edge spaces, i.e., the dual spaces of $1$-chains and $1$-cochains. 
By combining them, the dynamics on graphs and hypergraphs are characterized as a generalized flow.

\subsection{Dually flat spaces on vertices and thermodynamic functions}
We work on the density $\Vc{x}$ and the vertex space for CRN because its reduction to rLDG is straightforward.
For a probability vector $\Vc{p}$, the introduction of dually flat spaces of $\Vc{p}$ and $\ln \Vc{p}$ is natural from the information-geometric viewpoint.
In CRN, $\Vc{x}$ is the vector of concentrations of molecular species. 
As we recently clarified\cite{sughiyama2021ArXiv211212403Cond-MatPhysicsphysics,kobayashi2021ArXiv211214910Phys.}, the dually flat spaces, in this case, result from the Legendre duality between extensive and intensive variables in thermodynamics, which is also natural from the physical viewpoint.

\begin{dfn}[Density space (primal vertex affine space)]\label{dfn:primalVspace}
The density space (also called primal affine vertex space) is the positive orthant $\X \defeq \Real_{>0}^{N_{\molX}}$ of a vector space $\Real^{N_{\molX}}$, which is isomorphic to $\chain_{0}(\HGraph)$: $\Real^{N_{\molX}} \simeq \chain_{0}(\HGraph)$ (\fgref{fig:Intro}, lower left). 
\end{dfn}

\begin{rmk}
The density space $\X$ is defined as the positive orthant rather than as $\Real_{\ge0}^{N_{\molX}}$.
This excludes the cases where some elements of $\Vc{x}$ become $0$. 
From the viewpoint of information geometry, this restriction is necessary to consider densities with the same support (all $\Vc{x}$ in $\X$ should be equivalent in terms of absolute continuity of measures). 
From the viewpoint of dynamical systems, depending on the specific functional form of the flux $\Vc{\flux}(\Vc{x})$, the trajectory $\Vc{x}(t)$ may not be restricted within $\X$. 
The property $\Vc{x}(t)$ in $\X$ for $t\in[0,\infty]$ is known as persistence\footnote{The persistence of a dynamical system is a hard problem, and the persistence for a subclass of CRN is an open problem\cite{craciun2013SIAMJ.Appl.Math.,craciun2016}, which goes by the name of Global Attractor Conjecture since 1974.}. 
Without going into this intricate problem, we simply assume that $\Vc{x}(t) \in \X$ for $t\in [0,\infty]$.
We call $\partial \X \defeq \Real^{N_{\molX}}_{\ge 0}\setminus\X$ the boundary of $\X$.
\end{rmk}
We define the dual of the density space by the Legendre transformation via the thermodynamic function:
\begin{dfn}[Primal thermodynamic function]\label{dfn:thermofunc}
A strictly convex differentiable function $\Pfunc: \X \to \Real$ is called the primal thermodynamic function\footnote{In information geometry, the convex function inducing duality is often called a potential function. We avoid using the word "potential" to discriminate it with an element of the dual vertex affine space, which is called a potential (field) or chemical potential in physics and chemistry. }\footnote{We may consider a convex function $\Pfunc(\Vc{x})$, which does not induce a bijection between $\X$ and $\Y$, e.g., the one which is not strictly convex. Such a situation can arise if a phase transition occurs. It would be an important direction to include this class of functions in this framework.} if the following two conditions are satisfied: 
(1) the associated Legendre transformation
\begin{align}
    \partial\Pfunc: \X & \to   \Real^{N_{\molX}}\\
     \Vc{x} & \longmapsto \Vc{y} \defeq \partial_{\Vc{x}}\Pfunc(\Vc{x}) = \left(\frac{\partial \Pfunc(\Vc{x})}{\partial x_{1}}, \cdots, \frac{\partial \Pfunc(\Vc{x})}{\partial x_{N_{\molX}}}\right)^{\Transpose}
\end{align}
has the image $\Y \defeq \left\{\Vc{y}|\Vc{y}=\partial\Pfunc(\Vc{x}),\Vc{x}\in \X\right\}$ being equal to $\Real^{N_{\molX}}$, i.e., $\Y=\Real^{N_{\molX}}$;
(2) for any $\Vc{x}_{in}\in \X$ and any point on the boundary $\Vc{x}_{bd}\in \partial\X$, 
\begin{align}
\lim_{\lambda \to +0}\frac{\dd \Pfunc(\Vc{x}_{\lambda})}{\dd \lambda} =- \infty \label{eq:thermofuncCond2}
\end{align}
holds where $\Vc{x}_{\lambda}\defeq\lambda \Vc{x}_{in} + (1-\lambda) \Vc{x}_{bd}$ for $\lambda \in [0,1]$.
\end{dfn}

\begin{dfn}[Potential space (dual affine vertex space) and dual thermodynamic function]\label{dfn:dualVspace}
The potential (field) space $\Y= \Real^{N_{\molX}}$ (also called the dual affine vertex space) is an affine space dual to $\X$ with the associated vector space $\chain^{0}(\HGraph)$((\fgref{fig:Intro}, upper left))\footnote{$\Y$ is not only associated with but also isomorphic to the $0$-cochain.  This condition is important when we consider information-geometric projections in the later sections. 
In the theory of differential forms, a $0$-form is often described as a potential field on a manifold. 
Our choice of the potential space is consistent with this convention.}.
The dual thermodynamic function $\Pfunc^{*}: \Y \to \Real$ is the Legendre-Fenchel conjugate of the primal thermodynamic function: 
\begin{align}
    \Pfunc^{*}: \Y \to \Real, \quad \Vc{y} \mapsto \Pfunc^{*}(\Vc{y}) \defeq \max_{\Vc{x}'\in \X}\left[\langle\Vc{x}',\Vc{y} \rangle - \Pfunc(\Vc{x}') \right],
\end{align}
where $\langle\cdot ,\cdot \rangle: \X \times \Y \to \Real$ is the bilinear pairing under the standard basis.
From the properties of the primal function, $\Pfunc^{*}(\Vc{y})$ is also a strictly convex differentiable function.
From $\Pfunc^{*}(\Vc{y})$, we have the inverse Legendre transformation $\partial\Pfunc^{*}: \Y\to\X,\,\Vc{y}\mapsto \Vc{x} =\partial_{\Vc{y}}\Pfunc^{*}(\Vc{y})$.
\end{dfn}
The Legendre transformations, $\partial \Pfunc$ and $\partial\Pfunc^{*}$, are continuous and establish a bijection between $\X$ and $\Y$, where $\partial\Pfunc^{*}=\partial \Pfunc^{-1}$.
In the following, we regard a pair $(\Vc{x},\Vc{y})$ with the same decoration as a Legendre dual pair satisfying $\Vc{y}=\partial \Pfunc(\Vc{x})$.
For a pair, the Legendre-Fenchel-Young identity holds:
\begin{align}
    \Pfunc(\Vc{x})+\Pfunc^{*}(\Vc{y})=\langle \Vc{x},\Vc{y}\rangle.\label{eq:Legendre_Fenchel_Identity}
\end{align}
Different pairs are discriminated with the difference of decorations as $(\Vc{x}', \Vc{y}')$ or $(\Vc{x}_{p}, \Vc{y}_{p})$.

Based on the thermodynamic function, the Bregman divergence can be defined:
\begin{dfn}[Bregman divergence\cite{bregman1967USSRComputationalMathematicsandMathematicalPhysics,amari2016}]\label{dfn:BregmanDiv}
The Bregman divergence on $\X$ with the generating thermodynamic function $\Pfunc(\Vc{x})$ is defined as
\begin{align}
    \BD^{\X}_{\Pfunc}[\Vc{x}\|\Vc{x}']\defeq \Pfunc(\Vc{x})-\Pfunc(\Vc{x}') - \langle\Vc{x}-\Vc{x}', \partial \Pfunc(\Vc{x}') \rangle \in \Real_{\ge 0}.
\end{align}
The non-negativity of the Bregman divergence follows from the Fenchel-Young inequality for products\cite{rockafellar1997,mitroi2011Abstr.Appl.Anal.}.
Furthermore, from the strict convexity of the thermodynamic function, $\BD^{\X}_{\Pfunc}[\Vc{x}\|\Vc{x}']$ is also strictly convex with respect to $\Vc{x}$ and $\BD^{\X}_{\Pfunc}[\Vc{x}\|\Vc{x}']=0$ if and only if $\Vc{x}=\Vc{x}'$.
Bregman divergences are defined for $(\Vc{y},\Vc{y}')$ and also for $(\Vc{x},\Vc{y}')$ as
\begin{align}
    \BD^{\Y}_{\Pfunc^{*}}[\Vc{y}'\|\Vc{y}] &\defeq \Pfunc^{*}(\Vc{y}')-\Pfunc^{*}(\Vc{y}) - \langle\partial \Pfunc^{*}(\Vc{y}), \Vc{y}'-\Vc{y}  \rangle,\\
    \BD^{\X,\Y}_{\Pfunc,\Pfunc^{*}}[\Vc{x}; \Vc{y}'] &\defeq \Pfunc(\Vc{x})+\Pfunc^{*}(\Vc{y}') - \langle\Vc{x}, \Vc{y}'  \rangle.
\end{align}
Because $(\Vc{x},\Vc{y})$ and $(\Vc{x}', \Vc{y}')$ are Legendre pairs, all the three representations are equivalent\footnote{$\BD^{\X,\Y}_{\Pfunc,\Pfunc^{*}}[\Vc{x}; \Vc{y}']$ is also called Fenchel-Young divergence\cite{nielsen2021ProgressinInformationGeometry:TheoryandApplications}.}: $\BD^{\X}_{\Pfunc}[\Vc{x}\|\Vc{x}']=\BD^{\Y}_{\Pfunc^{*}}[\Vc{y}'\|\Vc{y}]=\BD^{\X,\Y}_{\Pfunc,\Pfunc^{*}}[\Vc{x}; \Vc{y}']$\footnote{We here used  the Legendre-Fenchel-Young identity (\eqnref{eq:Legendre_Fenchel_Identity}).}.
$\BD^{\X}_{\Pfunc}[\Vc{x}\|\Vc{x}']$, $ \BD^{\Y}_{\Pfunc^{*}}[\Vc{y}'\|\Vc{y}]$, and $\BD^{\X,\Y}_{\Pfunc,\Pfunc^{*}}[\Vc{x}; \Vc{y}']$ are abbreviated as $\BD^{\X}[\Vc{x}\|\Vc{x}']$, $\BD^{\Y}[\Vc{y}'\|\Vc{y}]$, and $\BD^{\X,\Y}[\Vc{x}; \Vc{y}']$, respectively.
\end{dfn}

Finally, the Hessian matrices of the primal and dual thermodynamic functions
are defined when they are twice differentiable\footnote{When we work on Hessian matrices, we always suppose additionally that they are twice-differentiable.}:
\begin{dfn}[Hessian matrices]\label{dfn:HessianMatrices}
The primal and dual Hessian matrices, $\FM_{\Vc{x}}\in \Real^{N_{\node}\times N_{\node}}$ and $\FM_{\Vc{y}}^{*}\in \Real^{N_{\node}\times N_{\node}}$, of thermodynamic functions, $\Pfunc(\Vc{x})$ and $\Pfunc^{*}(\Vc{y})$, are defined as
\begin{align}
    (\FM_{\Vc{x}})_{i,j}&\defeq \frac{\partial^{2}\Pfunc(\Vc{x})}{\partial x_{i} \partial x_{j}}, & (\FM_{\Vc{y}}^{*})_{i,j}&\defeq \frac{\partial^{2}\Pfunc^{*}(\Vc{y})}{\partial y_{i} \partial y_{j}}. \label{eq:Hessian_X_Y}
\end{align}
In addition, they are positive definite and $\FM_{\Vc{x}}^{-1}=\FM_{\Vc{y}}^{*}$ holds for a Legendre dual pair $\Vc{x}$ and $\Vc{y}$.
\end{dfn}
The Hessian matrices induce a Riemannian metric over $\mathcal{X}$.
The tangent and cotangent spaces $\Tan_{\Vc{x}}\X$ and $\Tan^{*}_{\Vc{x}}\X$ are isomorphic to the corresponding tangent and cotangent spaces $\Tan^{*}_{\Vc{y}}\Y$ and  $\Tan_{\Vc{y}}\Y$ over $\Y$ and also to  $\chain_{0}(\HGraph)$ and $\chain^{0}(\HGraph)$: $\Tan_{\Vc{x}}\X \cong \Tan^{*}_{\Vc{y}}\Y \cong \chain_{0}(\HGraph)$ and $\Tan^{*}_{\Vc{x}}\X \cong\Tan_{\Vc{y}}\Y \cong  \chain^{0}(\HGraph)$.

The typical example of the duality between $\Vc{x}$ and $\Vc{y}$ in statistics is. 
Other than this typical one, depending on the purpose, we adopt different forms of thermodynamic functions $(\Pfunc(\Vc{x}), \Pfunc^{*}(\Vc{y}))$, associated dual variables, and Bregman divergence to endow different properties to inference or estimation methods that we are designing\cite{amari2016}.
In the case of CRN, the thermodynamic functions and Legendre duality are associated with the equilibrium thermodynamics\cite{callen1985}.
Specifically, as we recently demonstrated\cite{sughiyama2021ArXiv211212403Cond-MatPhysicsphysicsa}, $\X$ and $\Y$ are the conjugate spaces of the extensive and intensive thermodynamic variables (density of molecules and their chemical potential), $\Pfunc(\Vc{x})$ is the thermodynamic potential function of the system, and the Bregman divergence becomes the difference of the total entropy.
These correspondences are derived directly from the axiomatic formulation of thermodynamics\cite{callen1985,sughiyama2021ArXiv211212403Cond-MatPhysicsphysicsa}.
The explicit functional form of $\Pfunc(\Vc{x})$ is then determined by the physical details of the thermodynamic system that we work on.

Before closing this subsection, we introduce the notion of separability, which will be linked to the locality of the flux.
\begin{dfn}[Separability of a thermodynamic function]
A thermodynamic function $\Pfunc(\Vc{x})$ is separable if it can be represented as 
\begin{align}
\Pfunc(\Vc{x})=\sum_{i=1}^{N_{\node}}c_{i}\pfunc(x_{i}/x_{i}^{o}),
\end{align}
where $c_{i}>0$, $x_{i}^{o}>0$, and $\pfunc(x):\Real_{>0} \to \Real$ is a scalar primal thermodynamic function.  
\end{dfn}

If $\Pfunc(\Vc{x})$ is separable, then its conjugate $\Pfunc^{*}(\Vc{y})$ is also separable as $\Pfunc^{*}(\Vc{y})=\sum_{i=1}^{N_{\node}}c_{i}\pfunc^{*}(\frac{x_{i}^{o}}{c_{i}}y_{i})$ where $\pfunc^{*}(y): \Real \to \Real$ is the Legendre conjugate of $\pfunc(x)$\footnote{One may further generalize the separability so that $\phi(x)$ depends on $i$ as $\phi_{i}(x)$. }. If a thermodynamic function is separable, then the corresponding Bregman divergence is separable.
The Hessian matrices become diagonal for a separable thermodynamic function.
Most of our results can hold without the separability, but common thermodynamic functions and related quantities are typically separable. 
For example, the Kullback-Leibler divergence is an example of separable Bregman divergences.

\subsection{Dually flat spaces on edges and dissipation functions}
Next, we introduce another dually flat structure onto the edge space of graphs and hypergraphs based on the flux-force relation.

\begin{dfn}[Flux and force spaces (primal and dual edge spaces)]\label{dfn:PrimalDualEdgeSpaces}
The flux and force spaces on the edges, $\Jspace_{\Vc{x}}=\Real^{N_{\edge}}$ and $\Fspace_{\Vc{x}}=\Real^{N_{\edge}}$, are a pair of the primal and dual vector spaces defined for each $\Vc{x}\in \X$, which are isomorphic to $\chain_{1}(\HGraph)$ and $\chain^{1}(\HGraph)$, respectively (\fgref{fig:Intro}, right). 
The bilinear pairing under the standard basis $\langle \cdot, \cdot \rangle: \chain_{1}(\HGraph) \times \chain^{1}(\HGraph) \to \Real$ is inherited to $(\Jspace_{\Vc{x}}, \Fspace_{\Vc{x}})$.
\end{dfn}
To introduce Legendre duality on $(\Jspace_{\Vc{x}}, \Fspace_{\Vc{x}})$, we use the dissipation functions:
\begin{dfn}[Dissipation function\footnote{The definition of dissipation functions is more strict than those used in the previous works, e.g., \cite{mielke2014PotentialAnala}. This is because we define extended projections in this space as in \cite{kobayashi2022Phys.Rev.Researcha}. }]\label{dfn:DissipationFunc}
A dissipation function on $\Fspace_{\Vc{x}}$, $\Dissp^{*}_{\Vc{x}}:\Fspace_{\Vc{x}} \to \Real, \Vc{\tf} \mapsto \Dissp^{*}_{\Vc{x}}(\Vc{\tf})$, is a strictly convex and continuously differentiable function with respect to $\Vc{\tf}$ for all $\Vc{x} \in \X$ that also satisfies the following additional conditions:
\begin{align}
    \mbox{1-coercive:} & & \frac{\Dissp^{*}_{\Vc{x}}(\Vc{\tf})}{\|\Vc{\tf}\|}& \to \infty  \quad \mbox{as $\|\Vc{\tf}\| \to \infty$ }, \label{eq:DissipCond3}\\
    \mbox{Symmetric:} & &  \Dissp^{*}_{\Vc{x}}(\Vc{\tf})&=\Dissp^{*}_{\Vc{x}}(-\Vc{\tf})\label{eq:DissipCond1}\\
    \mbox{Bounded below by $0$:} & &  \Dissp^{*}_{\Vc{x}}(\Vc{0})&=0\label{eq:DissipCond2},
\end{align}
\end{dfn}

\begin{prop}[Duality of dissipation functions]
The Legendre-Fenchel conjugate of $\Dissp_{\Vc{x}}^{*}(\Vc{\tf})$, i.e., $\Dissp_{\Vc{x}}(\Vc{\flux}) \defeq \max_{\Vc{\tf}}\left[\langle\Vc{\flux},\Vc{\tf} \rangle - \Dissp^{*}_{\Vc{x}}(\Vc{\tf}) \right]$, is also the dissipation function on $\Jspace_{\Vc{x}}$.  
$\Dissp_{\Vc{x}}(\Vc{\flux})$ and $\Dissp^{*}_{\Vc{x}}(\Vc{\tf})$ are called primal and dual dissipation functions.
\end{prop}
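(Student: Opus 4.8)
The plan is to verify, for each fixed parameter $\Vc{x}\in\X$, the six defining properties in \defref{dfn:DissipationFunc} for $\Dissp_{\Vc{x}}$ — strict convexity, continuous differentiability, $1$-coercivity, symmetry, $\Dissp_{\Vc{x}}(\Vc{0})=0$, and non-negativity — by running the standard Legendre--Fenchel duality dictionary of finite-dimensional convex analysis, using that $\Dissp^{*}_{\Vc{x}}$ is by hypothesis finite (hence its domain is all of $\Fspace_{\Vc{x}}=\Real^{N_{\edge}}$), strictly convex, continuously differentiable, $1$-coercive, symmetric, and normalized at $\Vc{0}$.

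First I would show $\Dissp_{\Vc{x}}$ is a finite (hence continuous) convex function on $\Jspace_{\Vc{x}}=\Real^{N_{\edge}}$: by \eqnref{eq:DissipCond3} the continuous map $\Vc{\tf}\mapsto\langle\Vc{\flux},\Vc{\tf}\rangle-\Dissp^{*}_{\Vc{x}}(\Vc{\tf})$ tends to $-\infty$ as $\|\Vc{\tf}\|\to\infty$ for every $\Vc{\flux}$, so the supremum is attained on a compact set and is finite, and a pointwise supremum of affine maps is convex. Conversely, because $\Dom\Dissp^{*}_{\Vc{x}}=\Real^{N_{\edge}}$, $1$-coercivity of $\Dissp_{\Vc{x}}$ follows by testing the supremum at $\Vc{\tf}=M\Vc{\flux}/\|\Vc{\flux}\|$: this gives $\Dissp_{\Vc{x}}(\Vc{\flux})\ge M\|\Vc{\flux}\|-\sup_{\|\Vc{\tf}\|=M}\Dissp^{*}_{\Vc{x}}(\Vc{\tf})$, and the last term is a finite constant (continuity of $\Dissp^{*}_{\Vc{x}}$ on a sphere), so $\Dissp_{\Vc{x}}(\Vc{\flux})/\|\Vc{\flux}\|\to\infty$ since $M$ is arbitrary. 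Differentiability and strict convexity I would read off the Legendre-type correspondence (Rockafellar, \emph{Convex Analysis}, Sec.~26): a closed proper convex function is essentially smooth iff its conjugate is essentially strictly convex, and essentially strictly convex iff the conjugate is essentially smooth. Since $\Dissp^{*}_{\Vc{x}}$ is both smooth and strictly convex on the open set $\Real^{N_{\edge}}$, it is a convex function of Legendre type; hence so is $\Dissp_{\Vc{x}}$, i.e.\ differentiable everywhere and strictly convex on $\Real^{N_{\edge}}$, and continuity of $\nabla\Dissp_{\Vc{x}}$ ($C^{1}$) is automatic for a finite convex function differentiable on an open set. The remaining two algebraic conditions are immediate from the definition: substituting $\Vc{\tf}\to-\Vc{\tf}$ in the supremum and using \eqnref{eq:DissipCond1} gives $\Dissp_{\Vc{x}}(-\Vc{\flux})=\Dissp_{\Vc{x}}(\Vc{\flux})$; and convexity together with symmetry forces $\Dissp^{*}_{\Vc{x}}(\Vc{\tf})\ge\Dissp^{*}_{\Vc{x}}(\Vc{0})=0$ for all $\Vc{\tf}$, so $\Vc{0}$ minimizes $\Dissp^{*}_{\Vc{x}}$ and $\Dissp_{\Vc{x}}(\Vc{0})=-\min_{\Vc{\tf}}\Dissp^{*}_{\Vc{x}}(\Vc{\tf})=0$, while taking $\Vc{\tf}=\Vc{0}$ in the supremum yields $\Dissp_{\Vc{x}}(\Vc{\flux})\ge-\Dissp^{*}_{\Vc{x}}(\Vc{0})=0$ for all $\Vc{\flux}$.

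The only genuinely delicate point is the simultaneous transfer of strict convexity and differentiability through conjugation; the naive fact that conjugation turns differentiability into strict convexity is not enough, and one must invoke the full "Legendre type" equivalence and observe that here the usual "essential" qualifiers are vacuous because the hypotheses in \defref{dfn:DissipationFunc} hold on the entire open domain $\Real^{N_{\edge}}$. Everything else — finiteness, $1$-coercivity, symmetry, normalization — is a routine unwinding of the supremum defining the conjugate.
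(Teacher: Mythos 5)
Your proposal is correct and follows essentially the same route as the paper's proof: defer the transfer of strict convexity, continuous differentiability, $1$-coercivity, and finiteness through conjugation to a standard convex-analysis duality result (the paper cites Corollary 4.1.4 of Hiriart-Urruty--Lemar\'echal where you cite the Legendre-type correspondence in Rockafellar), then establish symmetry by substituting $\Vc{\tf}\to-\Vc{\tf}$ in the supremum and the normalization $\Dissp_{\Vc{x}}(\Vc{0})=-\min_{\Vc{\tf}}\Dissp^{*}_{\Vc{x}}(\Vc{\tf})=0$ from convexity and symmetry. Your version is slightly more self-contained in proving finiteness and $1$-coercivity by hand, but this is a matter of presentation rather than a different argument.
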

\begin{proof}
For each $\Vc{x}\in\X$, the function $\Dissp_{\Vc{x}}(\Vc{\flux})$ is strictly convex, continuously differentiable, $1$-coercive, and $\Dissp_{\Vc{x}}(\Vc{\flux})<+\infty$ for all $\Vc{j}\in \Jspace_{\Vc{x}}$ because $\Dissp_{\Vc{x}}^{*}(\Vc{\tf})$ is ( see Corollary 4.1.4 in \cite{hiriart-urruty1996}).
For $\Vc{\flux}\in \Jspace_{\Vc{x}}$, the symmetry holds as $\Dissp_{\Vc{x}}(-\Vc{\flux}) = \max_{\Vc{\tf}}\left[\langle\Vc{-\flux},\Vc{\tf} \rangle - \Dissp^{*}_{\Vc{x}}(\Vc{\tf}) \right]= \max_{\Vc{\tf}}\left[\langle\Vc{\flux},-\Vc{\tf} \rangle - \Dissp^{*}_{\Vc{x}}(\Vc{\tf}) \right]= \max_{\Vc{\tf}}\left[\langle\Vc{\flux},\Vc{\tf} \rangle - \Dissp^{*}_{\Vc{x}}(-\Vc{\tf}) \right]= \max_{\Vc{\tf}}\left[\langle\Vc{\flux},\Vc{\tf} \rangle - \Dissp^{*}_{\Vc{x}}(\Vc{\tf}) \right]=\Dissp_{\Vc{x}}(\Vc{\flux})$.
From the convexity and symmetry, the minimum of $\Dissp_{\Vc{x}}(\Vc{\flux})$ is attained at $\Vc{\flux}=\Vc{0}$ and $\min_{\Vc{\flux}}\Dissp_{\Vc{x}}(\Vc{\flux}) = \Dissp_{\Vc{x}}(\Vc{0}) =\max_{\Vc{\tf}}\left[\langle\Vc{0},\Vc{\tf} \rangle - \Dissp^{*}_{\Vc{x}}(\Vc{\tf}) \right]=-\min_{\Vc{\tf}}\Dissp^{*}_{\Vc{x}}(\Vc{\tf})=0$.
\end{proof}

From these properties, for each $\Vc{x}\in X$, the one-to-one Legendre duality via Legendre transformations is established for all over $(\Jspace_{\Vc{x}}, \Fspace_{\Vc{x}})$:
\begin{align}
    \Vc{\flux}&=\partial_{\Vc{\tf}} \Dissp^{*}_{\Vc{x}}(\Vc{\tf}), & \Vc{\tf}&=\partial_{\Vc{\flux}} \Dissp_{\Vc{x}}(\Vc{\flux}). \label{eq:LegendreTF}
\end{align}
In the following, we abbreviate the Legendre transformations as $\partial_{\Vc{\tf}} \Dissp^{*}_{\Vc{x}}(\Vc{\tf})=\partial \Dissp^{*}_{\Vc{x}}(\Vc{\tf})$ and $\partial_{\Vc{\flux}} \Dissp_{\Vc{x}}(\Vc{\flux})=\partial \Dissp_{\Vc{x}}(\Vc{\flux})$ \footnote{We do not use differentiation of $\Dissp^{*}_{\Vc{x}}(\Vc{\tf})$ and $\Dissp_{\Vc{x}}(\Vc{\flux})$ with respect to $\Vc{x}$ in this work.}.
Similarly to the Legendre dual pair $(\Vc{x},\Vc{y})$ in $\X$ and $\Y$, a pair of flux and force with the same decoration, e.g., $(\Vc{\flux},\Vc{\tf})_{\Vc{x}}$ or $(\Vc{\flux}_{0},\Vc{\tf}_{0})_{\Vc{x}}$, represents a Legendre dual pair linked by \eqnref{eq:LegendreTF} at $\Vc{x}$.
We omit the $\Vc{x}$-dependency for simplicity.
The Legendre dual pair $(\Vc{\flux},\Vc{\tf})$ satisfies the Legendre-Fenchel-Young identity for each $\Vc{x}\in \X$:
\begin{align}
\Dissp^{*}_{\Vc{x}}(\Vc{\tf})+ \Dissp_{\Vc{x}}(\Vc{\flux})-\langle \Vc{\flux},\Vc{\tf}\rangle=0. \label{eq:LegendreIdentity}
\end{align}

Furthermore, the additional conditions of dissipation functions enable the Legendre duality to work as an extension of a Riemannian metric structure:
\begin{prop}[\cite{mielke2014PotentialAnala}]
The Legendre transformations satisfy the following properties: 
\begin{align}
 \mbox{Pairing of  $\Vc{0}\in \Jspace$ and $\Vc{0}\in \Fspace$:} & & \Vc{0}&=\partial\Dissp^{*}_{\Vc{x}}(\Vc{0}),& \Vc{0}&=\partial\Dissp_{\Vc{x}}(\Vc{0})\\
\mbox{Symmetry:} & &-\Vc{\tf}&=\partial \Dissp_{\Vc{x}}(-\Vc{\flux}),& -\Vc{\flux}&=\partial \Dissp^{*}_{\Vc{x}}(-\Vc{\tf}).\\
  \mbox{Nonnegativity of bilinear pairing:} & &   \langle \Vc{\flux},\Vc{\tf}\rangle &= \Dissp^{*}_{\Vc{x}}(\Vc{\tf})+ \Dissp_{\Vc{x}}(\Vc{\flux})\ge 0. \label{eq:j_f_positive}
\end{align}
\end{prop}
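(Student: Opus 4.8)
The plan is to verify each of the three displayed properties directly from the defining conditions of the dissipation functions (Definition~\ref{dfn:DissipationFunc}) and the Legendre duality relations in \eqnref{eq:LegendreTF} and \eqnref{eq:LegendreIdentity}, using the fact that for a strictly convex, continuously differentiable, $1$-coercive function the Legendre transformation $\partial\Dissp^{*}_{\Vc{x}}$ is a bijection with inverse $\partial\Dissp_{\Vc{x}}$, and that $\Vc{\tf}=\partial\Dissp_{\Vc{x}}(\Vc{\flux}) \iff \Vc{\flux}=\partial\Dissp^{*}_{\Vc{x}}(\Vc{\tf})$. Throughout, $\Vc{x}\in\X$ is fixed.

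For the first property (pairing of the origins), I would argue that by the symmetry condition \eqnref{eq:DissipCond1} and \eqnref{eq:DissipCond2}, $\Vc{0}$ is the global minimizer of $\Dissp^{*}_{\Vc{x}}$; since $\Dissp^{*}_{\Vc{x}}$ is differentiable, the first-order optimality condition gives $\partial\Dissp^{*}_{\Vc{x}}(\Vc{0})=\Vc{0}$. The dual statement $\partial\Dissp_{\Vc{x}}(\Vc{0})=\Vc{0}$ then follows either by the same argument applied to $\Dissp_{\Vc{x}}$ (which is a dissipation function by the preceding proposition, hence also symmetric with minimum at $\Vc{0}$) or, equivalently, by inverting the Legendre map. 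For the second property (symmetry of the Legendre map), I would start from a dual pair $\Vc{\tf}=\partial\Dissp_{\Vc{x}}(\Vc{\flux})$, differentiate the identity $\Dissp^{*}_{\Vc{x}}(\Vc{\tf})=\Dissp^{*}_{\Vc{x}}(-\Vc{\tf})$ to get $\partial\Dissp^{*}_{\Vc{x}}(-\Vc{\tf})=-\partial\Dissp^{*}_{\Vc{x}}(\Vc{\tf})=-\Vc{\flux}$, which is exactly the claimed relation $-\Vc{\flux}=\partial\Dissp^{*}_{\Vc{x}}(-\Vc{\tf})$; applying the inverse Legendre map (or the analogous differentiation of $\Dissp_{\Vc{x}}(\Vc{\flux})=\Dissp_{\Vc{x}}(-\Vc{\flux})$) yields $-\Vc{\tf}=\partial\Dissp_{\Vc{x}}(-\Vc{\flux})$.

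For the third property (nonnegativity of the bilinear pairing), the equality $\langle\Vc{\flux},\Vc{\tf}\rangle=\Dissp^{*}_{\Vc{x}}(\Vc{\tf})+\Dissp_{\Vc{x}}(\Vc{\flux})$ is just a rearrangement of the Legendre--Fenchel--Young identity \eqnref{eq:LegendreIdentity}, so it holds with no further work. Nonnegativity then follows because each summand is nonnegative: $\Dissp^{*}_{\Vc{x}}(\Vc{\tf})\ge\Dissp^{*}_{\Vc{x}}(\Vc{0})=0$ by \eqnref{eq:DissipCond2} together with the fact that $\Vc{0}$ minimizes $\Dissp^{*}_{\Vc{x}}$, and similarly $\Dissp_{\Vc{x}}(\Vc{\flux})\ge0$ since $\Dissp_{\Vc{x}}$ is a dissipation function with $\min\Dissp_{\Vc{x}}=\Dissp_{\Vc{x}}(\Vc{0})=0$ (established in the proof of the preceding proposition).

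I do not anticipate a serious obstacle here; the statement is essentially a bookkeeping consequence of convex analysis. The only point requiring a little care is the differentiation step in the symmetry argument: one should note that differentiating $\Dissp^{*}_{\Vc{x}}(\Vc{\tf})=\Dissp^{*}_{\Vc{x}}(-\Vc{\tf})$ with respect to $\Vc{\tf}$ uses the chain rule and produces $\partial\Dissp^{*}_{\Vc{x}}(\Vc{\tf})=-\partial\Dissp^{*}_{\Vc{x}}(-\Vc{\tf})$, which is legitimate since $\Dissp^{*}_{\Vc{x}}$ is assumed continuously differentiable on all of $\Fspace_{\Vc{x}}$. Alternatively, one can avoid differentiation entirely and deduce the symmetry of the Legendre map from the uniqueness of the maximizer in the definition of the conjugate, which is the route already used in the proof of the duality proposition above; I would present whichever is shorter.
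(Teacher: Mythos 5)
Your argument is correct, but note that the paper itself supplies no proof of this proposition: it is stated with a citation to Mielke--Peletier--Renger and followed only by an interpretive remark, so there is no in-paper derivation to compare against. Your verification is a legitimate, self-contained one, and it relies on exactly the ingredients the paper has already set up: the global-minimizer-at-the-origin fact and the nonnegativity $\min\Dissp_{\Vc{x}}=\Dissp_{\Vc{x}}(\Vc{0})=0$ are established in the proof of the preceding duality proposition (via convexity plus symmetry and the identity $\Dissp_{\Vc{x}}(\Vc{0})=-\min_{\Vc{\tf}}\Dissp^{*}_{\Vc{x}}(\Vc{\tf})=0$), and the equality in \eqnref{eq:j_f_positive} is indeed just the Legendre--Fenchel--Young identity \eqnref{eq:LegendreIdentity}. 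One small point of precision: when you say that the symmetry condition \eqnref{eq:DissipCond1} together with \eqnref{eq:DissipCond2} makes $\Vc{0}$ the global minimizer, the minimizer claim actually comes from convexity combined with symmetry (the midpoint inequality $\Dissp^{*}_{\Vc{x}}(\Vc{0})\le\tfrac{1}{2}\Dissp^{*}_{\Vc{x}}(\Vc{\tf})+\tfrac{1}{2}\Dissp^{*}_{\Vc{x}}(-\Vc{\tf})=\Dissp^{*}_{\Vc{x}}(\Vc{\tf})$); condition \eqnref{eq:DissipCond2} only pins the minimal value at $0$. With that phrasing tightened, all three properties follow exactly as you describe, and your alternative route for the symmetry of the Legendre map via uniqueness of the maximizer in the conjugate is equally valid.
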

The first property means that zero force $\Vc{\tf}=\Vc{0}$ and zero flux $\Vc{\flux}=\Vc{0}$ are always Legendre dual regardless of $\Vc{x}$, and the second one indicates that if $(\Vc{\flux},\Vc{\tf})$ is a Legendre dual pair, then $(-\Vc{\flux},-\Vc{\tf})$ is as well\footnote{From the physical point of view, these conditions are consistent with the thermodynamic requirement that, if the force is zero, the corresponding flux becomes zero, and vice versa and that a sign-reversed force induced the sign-reversed flux.}.
The third property, as well as the nonnegativity of the dissipation functions, enables them to play the similar roles to the metric-induced  norm in Riemannian geometry\footnote{In the context of thermodynamics, the nonnegativity of $\langle \Vc{\flux},\Vc{\tf}\rangle$ is linked to the nonnegativity of the entropy production rate and thus the 2nd law of thermodynamics.}.

With the dissipation functions, $\Dissp_{\Vc{x}}(\Vc{\flux})$ and $\Dissp^{*}_{\Vc{x}}(\Vc{\tf})$, we now have the second dually flat structure on the edge spaces $(\Jspace_{\Vc{x}},\Fspace_{\Vc{x}})$. 
In these dually flat spaces, we define the Bregman divergence and Hessian matrices:
\begin{dfn}[Bregman divergence and Hessian matrices on the edge spaces]
 For each $\Vc{x}\in \X$, the Bregman divergence between $\Vc{\flux}\in \Jspace_{\Vc{x}}$ and $\Vc{\tf}'\in \Fspace_{\Vc{x}}$ is defined as 
\begin{align}
    \BD_{\Vc{x}}^{\Jspace,\Fspace}[\Vc{\flux};\Vc{\tf}']:=\Dissp_{\Vc{x}}(\Vc{\flux})+\Dissp^{*}_{\Vc{x}}(\Vc{\tf}')-\langle\Vc{\flux}, \Vc{\tf}'\rangle. \label{eq:BD}
\end{align}
$\BD_{\Vc{x}}^{\Jspace}[\Vc{\flux}\|\Vc{\flux}']$ and $\BD_{\Vc{x}}^{\Fspace}[\Vc{\tf}\|\Vc{\tf}']$ are also defined analogously to the Bregman divergence on the vertex space $(\X, \Y)$.
For a Legendre conjugate pair of twice differentiable dissipation functions, the Hessian matrices, $\FM_{\Vc{x},\Vc{\flux}}$ and $\FM_{\Vc{x},\Vc{\tf}}^{*}$, are defined as
\begin{align}
    (\FM_{\Vc{x},\Vc{\flux}})_{e,e'}&\defeq \frac{\partial^{2}\Dissp_{\Vc{x}}(\Vc{\flux})}{\partial \flux_{e} \partial \flux_{e'}}, & (\FM_{\Vc{x},\Vc{\tf}}^{*})_{e,e'}&\defeq \frac{\partial^{2}\Dissp_{\Vc{x}}^{*}(\Vc{\tf})}{\partial \tf_{e} \partial \tf_{e'}}. \label{eq:HessianMatrixEdge}
\end{align}
These matrices are positive-definite.
\end{dfn}

The Legendre dual structure via the dissipation functions provides an extension of a Riemannian metric structure in the following sense.
If the dissipation function is a quadratic function, i.e., a positive definite quadratic form as 
\begin{align}
\Dissp^{q,*}_{\Vc{x}}(\Vc{\tf})\defeq \frac{1}{2}\langle \Vc{\tf},\metric^{*}_{\Vc{x}} \Vc{\tf} \rangle, \label{eq:quadratic_dissp}
\end{align}
where $\metric^{*}_{\Vc{x}}$ is a positive definite $N^{\edge}\times N^{\edge}$ matrix, the Legendre transformation is reduced to the linear mapping $\Vc{\flux}=\partial \Dissp^{q,*}_{\Vc{x}}(\Vc{\tf})=\metric^{*}_{\Vc{x}}\Vc{\tf}$\footnote{This correspondence illustrates that the dependency of $\Dissp^{q,*}_{\Vc{x}}(\Vc{\tf})$ on $\Vc{x}$ is a formal generalization of the Riemannian metric. But for this case, the relevant state space for $\Vc{x}$ is not the positive orthant but the vector spece $\Real^{N_{\node}}$. }.
Then, the bilinear paring, $\langle\Vc{\flux},\Vc{\tf}'\rangle = \langle\Vc{\flux},\metric_{\Vc{x}}\Vc{\flux}'\rangle=\langle\metric^{*}_{\Vc{x}}\Vc{\tf},\Vc{\tf}'\rangle$ , becomes the inner product under the metric matrix $\metric_{\Vc{x}}$ where $\metric_{\Vc{x}}=(\metric_{\Vc{x}}^{*})^{-1}$.
The dissipation functions are associated with the induced norms: $\Dissp^{*}_{\Vc{x}}(\Vc{\tf})=\frac{1}{2}\|\Vc{\tf}\|_{\metric^{*}_{\Vc{x}}}^{2}$, $\Dissp_{\Vc{x}}(\Vc{\flux})=\frac{1}{2}\|\Vc{\flux}\|_{\metric_{\Vc{x}}}^{2}$.
The Bregman divergence is reduced to the norm-induced squared distance: $\BD_{\Vc{x}}^{\Jspace,\Fspace}[\Vc{\flux};\Vc{\tf}']=\frac{1}{2}\|\Vc{\flux}-\Vc{\flux}'\|_{\metric_{\Vc{x}}}^{2}=\frac{1}{2}\|\Vc{\tf}-\Vc{\tf}'\|_{\metric^{*}_{\Vc{x}}}^{2}$.

Finally, we also introduce the notion of separability to the dissipation functions:
\begin{dfn}[Separability and locality of dissipation functions]
A dissipation function $\Dissp^{*}_{\Vc{x}}(\Vc{\tf})$ is separable if it can be represented as 
\begin{align}
\Dissp^{*}_{\Vc{x}}(\Vc{\tf})=\sum_{e=1}^{N_{\edge}}\frenecy_{e}(\Vc{x})\dissip^{*}(\tf_{e}/\tf^{o}_{e}(\Vc{x})),\label{eq:separable_dissip}
\end{align}
where $\frenecy_{e}(\Vc{x})>0$ and $\tf^{o}_{e}(\Vc{x})>0$ for $\Vc{x}\in \X$ are positive weights and $\dissip^{*}(\tf):\Real \to \Real$ is a scalar dissipation function, i.e., a strictly convex differentiable scalar function satisfying \eqnref{eq:DissipCond1}, \eqnref{eq:DissipCond2}, and \eqnref{eq:DissipCond3}.
If $\frenecy_{e}(\Vc{x})$ and $\tf^{o}_{e}(\Vc{x})$ are additionally local, then the dissipation function is separable and local.
If $\Dissp^{*}_{\Vc{x}}(\Vc{\tf})$ is separable, then its dual $\Dissp_{\Vc{x}}(\Vc{\flux})$ is also separable.
The same is true for the locality.
\end{dfn}

\begin{rmk}[Young functions and N functions]
The scalar dissipation function is a N function, which appears in the theory of Orlicz spaces.
A function $\tilde{\dissip}(\flux): [0,\infty) \to [0,\infty]$ represented as $\tilde{\dissip}(\flux)=\int_{0}^{\flux}\varsigma(\flux')\dd \flux'$ is called Young function where $\varsigma(\flux):[0,\infty)\to [0,\infty]$ is a non-decreasing function satisfying $\varsigma(0)=0$ and being left-continuous on $(0,\infty)$. If $\varsigma(\flux)$ additionally satisfies $0<\varsigma(\flux)<+\infty (0<\flux<\infty)$, $\lim_{\flux \to +0} \varsigma(\flux)=0$, and $\lim_{\flux \to \infty} \varsigma(\flux)=+\infty$, then $\tilde{\dissip}(\flux)$ is called an N-function.
If we define a function $\dissip(\flux)$ with a N-function $\tilde{\dissip}(\flux)$ as $\dissip(\flux)=\tilde{\dissip}(|\flux|)$, this becomes a scalar dissipation function\cite{krasnoselskij1962}.
A separable dissipation function (\eqnref{eq:separable_dissip}) is often called a weighted N-function\cite{lods2015Entropy,pistone2018Inf.Geom.ItsAppl.}. The dissipation function and induced Legendre duality are, therefore, related to Birnbaum-Orlicz spaces, which are an extension of $L^{p}$ spaces.
\end{rmk}

\subsection{Generalized flow on graphs and hypergraphs and its steady state}
Because of the one-to-one Legendre duality between $(\Vc{\flux},\Vc{\tf})_{\Vc{x}}$, the continuity equation (\eqnref{CRN_rateEq}) can be represented as a generalized flow driven by the force $\Vc{\tf}(\Vc{x})$ dual to $\Vc{\flux}(\Vc{x})$\cite{renger2018Entropy}:
\begin{dfn}[Generalized flow]
A curve $\Vc{x}(t)$ is a generalized flow on $\HGraph$ driven by force $\Vc{\tf}(\Vc{x})$ under the dissipation function $\Dissp^{*}_{\Vc{x}}$ if it can be represented as
\begin{align}
    \dot{\Vc{x}}=-\Div_{\HIncMatrix} \Vc{\flux}(\Vc{x})=-\Div_{\HIncMatrix} \partial \Dissp^{*}_{\Vc{x}}[\Vc{\tf}(\Vc{x})].\label{eq:gFlow}
\end{align}
\end{dfn}
This representation is not dependent on the specific functional form of $\Vc{\tf}(\Vc{x})$ and $\Dissp^{*}_{\Vc{x}}(\Vc{\tf})$ and also on the definition of  $\Div_{\HIncMatrix}$ as long as the generated $\Vc{\flux}(\Vc{x})$ is consistent with $\HGraph$\footnote{The consistency is required because of our choice of $\Real_{\ge 0}^{N_{\molX}}$ as the density space. }\footnote{The consistency with $\HGraph$ is assumed to hold. }.
Thus, we can potentially apply this framework to various systems by choosing these functions appropriately depending on the system or the problem we work on. 

The generalized flow naturally encompasses three types of steady states:
\begin{dfn}[Steady state, complex-balanced state, and detailed-balanced state]\label{eq:st_cb_eq_state}
We define the manifolds of steady state $\Manifold^{\mathrm{ST}}$, complex-balanced (CB) state $\Manifold^{\mathrm{CB}}$, and detailed-balanced (DB) state $\Manifold^{\mathrm{DB}}$, respectively, as follows:
\begin{align}
\Manifold^{\mathrm{ST}} &\defeq \{\Vc{x} \in \X| \HIncMatrix \Vc{\flux}(\Vc{x})=0\},\\
\Manifold^{\mathrm{CB}} &\defeq \{\Vc{x} \in \X| \IncMatrix \Vc{\flux}(\Vc{x})=0\},\label{eq:DBC}\\
\Manifold^{\mathrm{DB}} &\defeq \{\Vc{x} \in \X| \Vc{\flux}(\Vc{x})=0\}=\{\Vc{x} \in \X| \Vc{\tf}(\Vc{x})=0\},
\end{align}
where we used $\Vc{\flux}(\Vc{x})=0$ iff $\Vc{\tf}(\Vc{x})=0$ from the properties of the dissipation functions.
The relations $\Vc{\flux}(\Vc{x})=0$ and $\IncMatrix \Vc{\flux}(\Vc{x})=0$ are called the detail-balanced (DB) condition and the complex-balanced (CB) condition, respectively.
From the decomposition $\HIncMatrix=\cmMatrix \IncMatrix$, an inclusion relation holds: $\Manifold^{\mathrm{DB}} \subseteq \Manifold^{\mathrm{CB}} \subseteq \Manifold^{\mathrm{ST}}$.
It should be noted that, depending on the details of $\Vc{\flux}(\Vc{x})$, these manifolds can be empty.
\end{dfn}
A steady state is a state at which $\dot{\Vc{x}}=0$ holds.
The DB condition $\Vc{\flux}(\Vc{x})=\Vc{0}$ means that all the fluxes are zero at $\Vc{x}$. 
In other words, all the forward and reverse fluxes are balanced at $\Vc{x}$, i.e., $\flux_{e}^{+}(\Vc{x})=\flux_{e}^{-}(\Vc{x})$.
The CB condition is equivalent to the balance of all influx and outflux at each hypervertex of $\HGraph$.
As we will see later, DB states are tightly linked to the equilibrium state and equilibrium flow.
The CB state is relevant as an extension of the equilibrium state to nonequilibrium flows.

\subsection{Generalized gradient flow and De Giorgi's formulation}
When $\Vc{\tf}(\Vc{x})$ can be represented as a gradient, i.e., $\Vc{\tf}(\Vc{x})=\Grad_{\HIncMatrix}\partial \FEnergy(\Vc{x})$ of a function $\FEnergy(\Vc{x})\in \Real$ on the density space, \eqnref{eq:gFlow} is reduced to the generalized gradient flow of $\FEnergy(\Vc{x})$.
\begin{dfn}[Generalized gradient flow]\label{dfn:gGradFlow}
$\Vc{x}(t)$ is a generalized gradient flow when it is a generalized flow driven by a gradient force of $\FEnergy(\Vc{x})$, i.e., $\Vc{\tf}(\Vc{x})=\Grad_{\HIncMatrix}\partial \FEnergy(\Vc{x})$ and  
\begin{align}
    \dot{\Vc{x}}=-\Div_{\HIncMatrix} \Vc{\flux}(\Vc{x})=-\Div_{\HIncMatrix} \partial \Dissp^{*}_{\Vc{x}}[\Grad_{\HIncMatrix}\partial \FEnergy(\Vc{x})].\label{eq:gGradFlow}
\end{align}
\end{dfn}
The following proposition ensures that the generalized gradient flow behaves like the conventional gradient flow:
\begin{prop}[$\FEnergy(\Vc{x})$ is non-increasing along the trajectory of generalized gradient flow ]\label{prop:gradflow}
For a trajectory $\{\Vc{x}_{t}\}_{t \in [0,\tau]}$ of the generalized gradient flow of $\FEnergy(\Vc{x})$, $\FEnergy(\Vc{x}_{t})$ is always decreasing except at the DB states $\Manifold^{\mathrm{DB}}$.
In addition, all the steady states of the generalized gradient flow are the DB states, i.e., $\Manifold^{\mathrm{ST}}=\Manifold^{DB}$\footnote{$\Manifold^{\mathrm{ST}}=\Manifold^{DB}=\emptyset$ can hold, e.g., when $\FEnergy(\Vc{x})$ is a strictly monotonous function.}. 
\end{prop}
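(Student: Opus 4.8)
The plan is to compute the time derivative $\frac{\dd}{\dt}\FEnergy(\Vc{x}_t)$ along the trajectory and show it is nonpositive, vanishing exactly on $\Manifold^{\mathrm{DB}}$. First I would apply the chain rule to obtain
\begin{align}
\frac{\dd}{\dt}\FEnergy(\Vc{x}_t) = \langle \dot{\Vc{x}}_t, \partial \FEnergy(\Vc{x}_t)\rangle,
\end{align}
using that $\partial\FEnergy(\Vc{x})\in\chain^{0}(\HGraph)$ is the cochain dual to the tangent vector $\dot{\Vc{x}}_t\in\chain_0(\HGraph)$. Then I would substitute the generalized gradient flow equation $\dot{\Vc{x}}_t = -\Div_{\HIncMatrix}\Vc{\flux}(\Vc{x}_t) = -\HIncMatrix\,\Vc{\flux}(\Vc{x}_t)$ and use the adjointness $\Div_{\HIncMatrix}^{*} = \Grad_{\HIncMatrix}$ (i.e. $\HIncMatrix$ and $\HIncMatrix^{\Transpose}$ are transposes, as set up in \defref{dfn:discreteOperators}) to move the operator across the pairing:
\begin{align}
\frac{\dd}{\dt}\FEnergy(\Vc{x}_t) = -\langle \HIncMatrix\,\Vc{\flux}(\Vc{x}_t), \partial\FEnergy(\Vc{x}_t)\rangle = -\langle \Vc{\flux}(\Vc{x}_t), \HIncMatrix^{\Transpose}\partial\FEnergy(\Vc{x}_t)\rangle = -\langle \Vc{\flux}(\Vc{x}_t), \Vc{\tf}(\Vc{x}_t)\rangle,
\end{align}
where the last equality is the defining relation $\Vc{\tf}(\Vc{x}) = \Grad_{\HIncMatrix}\partial\FEnergy(\Vc{x}) = \HIncMatrix^{\Transpose}\partial\FEnergy(\Vc{x})$ of a gradient force.

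The key step is then to invoke the nonnegativity of the bilinear pairing, \eqnref{eq:j_f_positive}: since $(\Vc{\flux}(\Vc{x}),\Vc{\tf}(\Vc{x}))$ is a Legendre dual pair at $\Vc{x}$ (by construction of the flux via $\Vc{\flux}(\Vc{x}) = \partial\Dissp^{*}_{\Vc{x}}[\Vc{\tf}(\Vc{x})]$), we have $\langle\Vc{\flux}(\Vc{x}),\Vc{\tf}(\Vc{x})\rangle = \Dissp^{*}_{\Vc{x}}(\Vc{\tf}(\Vc{x})) + \Dissp_{\Vc{x}}(\Vc{\flux}(\Vc{x})) \ge 0$. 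Hence $\frac{\dd}{\dt}\FEnergy(\Vc{x}_t)\le 0$. For the strict decrease claim, I would argue that the pairing vanishes iff both dissipation functions vanish at their arguments; by the strict convexity of $\Dissp^{*}_{\Vc{x}}$ together with the normalization $\Dissp^{*}_{\Vc{x}}(\Vc{0})=0$ (\eqnref{eq:DissipCond2}) and the fact that $\Vc{0}$ is the unique minimizer (established in the proof of the duality proposition), $\Dissp^{*}_{\Vc{x}}(\Vc{\tf})=0\iff\Vc{\tf}=\Vc{0}$, and likewise $\Dissp_{\Vc{x}}(\Vc{\flux})=0\iff\Vc{\flux}=\Vc{0}$; moreover $\Vc{\flux}=\Vc{0}\iff\Vc{\tf}=\Vc{0}$ by the $\Vc{0}\leftrightarrow\Vc{0}$ pairing property. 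So $\frac{\dd}{\dt}\FEnergy(\Vc{x}_t)=0$ precisely when $\Vc{\flux}(\Vc{x}_t)=\Vc{0}$, i.e. $\Vc{x}_t\in\Manifold^{\mathrm{DB}}$ by \defref{eq:st_cb_eq_state}.

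For the second assertion, $\Manifold^{\mathrm{ST}}=\Manifold^{\mathrm{DB}}$: the inclusion $\Manifold^{\mathrm{DB}}\subseteq\Manifold^{\mathrm{ST}}$ is already noted in \defref{eq:st_cb_eq_state}. Conversely, if $\Vc{x}\in\Manifold^{\mathrm{ST}}$ then $\HIncMatrix\Vc{\flux}(\Vc{x})=\Vc{0}$, so $\langle\Vc{\flux}(\Vc{x}),\Vc{\tf}(\Vc{x})\rangle = \langle\Vc{\flux}(\Vc{x}),\HIncMatrix^{\Transpose}\partial\FEnergy(\Vc{x})\rangle = \langle\HIncMatrix\Vc{\flux}(\Vc{x}),\partial\FEnergy(\Vc{x})\rangle = 0$; by the argument above this forces $\Vc{\flux}(\Vc{x})=\Vc{0}$, i.e. $\Vc{x}\in\Manifold^{\mathrm{DB}}$. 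The main obstacle — and the only point requiring care rather than bookkeeping — is the strict-decrease claim, i.e. pinning down that $\langle\Vc{\flux},\Vc{\tf}\rangle=0$ implies $\Vc{\flux}=\Vc{\tf}=\Vc{0}$ rather than merely implying the pair lies on some degenerate locus; this is exactly where the extra conditions imposed on dissipation functions in \defref{dfn:DissipationFunc} (symmetry, $1$-coercivity, vanishing at $\Vc{0}$, strict convexity) are needed, so I would take care to cite them explicitly. Everything else is a direct chain-rule computation plus transposition across the pairing.
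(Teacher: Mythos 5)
Your proposal is correct and follows essentially the same route as the paper's proof: chain rule, adjointness of $\Div_{\HIncMatrix}$ and $\Grad_{\HIncMatrix}$ across the pairing, nonnegativity of $\langle\Vc{\flux},\Vc{\tf}\rangle=\Dissp_{\Vc{x}}+\Dissp^{*}_{\Vc{x}}$, and the equality condition $\Vc{\flux}=\Vc{\tf}=\Vc{0}$ to identify the vanishing locus with $\Manifold^{\mathrm{DB}}$. Your derivation of $\Manifold^{\mathrm{ST}}\subseteq\Manifold^{\mathrm{DB}}$ just spells out explicitly the paper's one-line remark that $\dot{\Vc{x}}_{t}=0\Rightarrow\dot{\FEnergy}(\Vc{x}_{t})=0$.
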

\begin{proof}
$\FEnergy(\Vc{x}_{t})$ is non-increasing over time as follows:
\begin{align}
    \dot{\FEnergy}(\Vc{x}_{t})&=\langle\dot{\Vc{x}} , \partial_{\Vc{x}} \FEnergy(\Vc{x})\rangle = -\langle\Div_{\HIncMatrix} \partial \Dissp^{*}_{\Vc{x}}[\Vc{\tf}(\Vc{x})] , \partial_{\Vc{x}} \FEnergy(\Vc{x})\rangle = -\langle \partial \Dissp^{*}_{\Vc{x}}[\Vc{\tf}(\Vc{x})] , \Grad_{\HIncMatrix}\partial_{\Vc{x}} \FEnergy(\Vc{x})\rangle \notag \\
    &=-\langle \Vc{\flux}(\Vc{x}),\Vc{\tf}(\Vc{x})\rangle= -\left( \Dissp_{\Vc{x}}[\Vc{\flux}(\Vc{x})] + \Dissp^{*}_{\Vc{x}}[\Vc{\tf}(\Vc{x})]\right)\le 0, \label{eq:generalized_gradient_flow}
\end{align}
where \eqnref{eq:j_f_positive} is used. The equality holds iff $\Vc{\tf}(\Vc{x})=\Grad_{\HIncMatrix}\partial \FEnergy(\Vc{x})=0$ because $\Dissp^{*}_{\Vc{x}}[\Vc{\tf}(\Vc{x})]=\Dissp_{\Vc{x}}[\Vc{\flux}(\Vc{x})]=0$ iff $\Vc{\tf}(\Vc{x})=\Vc{\flux}(\Vc{x})=0$.
Thus, $\dot{\FEnergy}(\Vc{x}_{t})=0$ iff $\Vc{x}_{t}\in \Manifold^{\mathrm{DB}}$. Because $\dot{\Vc{x}}_{t}=0 \Rightarrow \dot{\FEnergy}(\Vc{x}_{t})=0$,  $\Manifold^{\mathrm{ST}}=\Manifold^{\mathrm{DB}}$.
\end{proof}
It should be noted that, even if $\FEnergy(\Vc{x})$ has a single minimum, the steady state $\Vc{x}_{st}\defeq \lim_{t \to \infty}\Vc{x}(t)$ may not be the minimum, because $\dot{\FEnergy}(\Vc{x}_{t})=0$ holds for any $\Vc{x}\in \Manifold^{\mathrm{DB}}$\footnote{In addition, there exists the possibility that $\Vc{x}(t)$ converges to the boundary of $\X$.}.

The generalized gradient flow of this form (\eqnref{eq:gGradFlow}) was devised in the process to extend the conventional gradient flow to metric spaces\cite{ambrosio2006a,liero2016SIAMJ.Math.Anal.}\footnote{The metric here means a general metric, which is not restricted to one associated with the inner product.}.
Furthermore, dissipation functions have been recognized since the seminal work of Onsager\cite{onsager1931Phys.Rev.,onsager1931Phys.Rev.a,machlup1953Phys.Rev.}.
However, only quadratic dissipation functions have been investigated until very recently\cite{mielke2014PotentialAnala,mielke2017SIAMJ.Appl.Math.,renger2018Entropy,kaiser2018JStatPhys,patterson2021ArXiv210314384Math-Ph,peletier2022Calc.Var.,renger2021DiscreteContin.Dyn.Syst.-S,peletier2022}. 
This may be partly because we lack an adequate geometric language to handle the non-quadratic cases, i.e., information geometry.
Actually, if the dissipation function is quadratic $\Dissp^{q,*}_{\Vc{x}}[\Vc{\tf}]$ as in \eqnref{eq:quadratic_dissp}, then the generalized flow (\eqnref{eq:gFlow}) formally reduces to the flow on a Riemannian manifold with the metric $(\HIncMatrix\metric^{*}_{\Vc{x}}\HIncMatrix^{\Transpose})^{-1}$\footnote{Here, $\Vc{x}$ should be regarded not as density but as $\Vc{x} \in \Real^{N_{\node}}$.}.

The non-negativity of $\dot{\FEnergy}(\Vc{x}_{t})$ is essentially attributed to the fact that $\dot{\FEnergy}(\Vc{x}_{t})=-\langle \Vc{\flux}(\Vc{x}),\Vc{\tf}(\Vc{x})\rangle$ holds in \eqnref{eq:generalized_gradient_flow} for the generalized gradient flow. 
The converse also holds.
\begin{prop}[De Giorgi's formulation of generalized gradient flow\cite{mielke2014PotentialAnala,patterson2021ArXiv210314384Math-Ph}]
Let $\Vc{x}_{t}$ be a generalized flow induced by a force $\Vc{\tf}(\Vc{x})$. $\Vc{x}_{t}$ is the generalized gradient flow of $\FEnergy(\Vc{x})$ iff
\begin{align}
    \dot{\FEnergy}(\Vc{x}_{t})&=-\langle \Vc{\flux}(\Vc{x}),\Vc{\tf}(\Vc{x})\rangle = -\left(\Dissp_{\Vc{x}}[\Vc{\flux}(\Vc{x})] + \Dissp^{*}_{\Vc{x}}[\Vc{\tf}(\Vc{x})]\right). \label{eq:Feq_j_f}
\end{align}
holds. The integral form of \eqnref{eq:Feq_j_f} 
\begin{align}
    \FEnergy(\Vc{x}_{0})-\FEnergy(\Vc{x}_{t})=\int_{0}^{t}\left[\Dissp^{*}_{\Vc{x}_{t'}}(\Vc{\tf}(\Vc{x}_{t'}))+ \Dissp_{\Vc{x}_{t'}}(\Vc{\flux}(\Vc{x}_{t'}))\right]\dt', \label{eq:DeGiorgi}
\end{align}
is called De Giorgi's $(\Dissp,\Dissp^{*})$-formulation of generalized gradient flow. 
\end{prop}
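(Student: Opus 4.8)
The plan is to prove the two implications separately. The ``only if'' part is just the identity version of the dissipation computation already performed in the proof of \propref{prop:gradflow}; the ``if'' part is a De~Giorgi--type argument that runs the same chain of (in)equalities in reverse, the rigidity coming from the equality case of the Fenchel--Young inequality on the edge space.

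For the ``only if'' part, suppose $\Vc{x}_{t}$ is the generalized gradient flow of $\FEnergy$, so that $\Vc{\tf}(\Vc{x})=\Grad_{\HIncMatrix}\partial\FEnergy(\Vc{x})$ and $\Vc{\flux}(\Vc{x})=\partial\Dissp^{*}_{\Vc{x}}[\Vc{\tf}(\Vc{x})]$ with $\dot{\Vc{x}}=-\Div_{\HIncMatrix}\Vc{\flux}(\Vc{x})$. Differentiating $\FEnergy$ along the curve and substituting the continuity equation gives $\dot{\FEnergy}(\Vc{x}_{t})=\langle\dot{\Vc{x}}_{t},\partial_{\Vc{x}}\FEnergy(\Vc{x}_{t})\rangle=-\langle\Div_{\HIncMatrix}\Vc{\flux}(\Vc{x}),\partial_{\Vc{x}}\FEnergy(\Vc{x})\rangle$; the adjointness $\Div_{\HIncMatrix}^{*}=\Grad_{\HIncMatrix}$ of \defref{dfn:discreteOperators} moves the operator onto the cotangent factor, yielding $\dot{\FEnergy}(\Vc{x}_{t})=-\langle\Vc{\flux}(\Vc{x}),\Grad_{\HIncMatrix}\partial_{\Vc{x}}\FEnergy(\Vc{x})\rangle=-\langle\Vc{\flux}(\Vc{x}),\Vc{\tf}(\Vc{x})\rangle$. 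Finally, the Legendre--Fenchel--Young identity \eqnref{eq:LegendreIdentity} for the dual pair $(\Vc{\flux},\Vc{\tf})$ rewrites the bilinear pairing as $\Dissp_{\Vc{x}}[\Vc{\flux}]+\Dissp^{*}_{\Vc{x}}[\Vc{\tf}]$, giving \eqnref{eq:Feq_j_f}; integrating in time along a solution (using absolute continuity of $t\mapsto\FEnergy(\Vc{x}_{t})$) yields the integral form \eqnref{eq:DeGiorgi}.

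For the ``if'' part, the key point is that the first two equalities above do not use the gradient hypothesis: for \emph{any} generalized flow driven by $\Vc{\tf}(\Vc{x})$ one has $\dot{\FEnergy}(\Vc{x}_{t})=-\langle\Vc{\flux}(\Vc{x}),\Grad_{\HIncMatrix}\partial\FEnergy(\Vc{x})\rangle$ by the chain rule and adjointness. Applying the Fenchel--Young inequality in the dually flat edge structure to the pair $\bigl(\Vc{\flux}(\Vc{x}),\Grad_{\HIncMatrix}\partial\FEnergy(\Vc{x})\bigr)$ gives the a~priori bound $\dot{\FEnergy}(\Vc{x}_{t})\ge-\bigl(\Dissp_{\Vc{x}}[\Vc{\flux}(\Vc{x})]+\Dissp^{*}_{\Vc{x}}[\Grad_{\HIncMatrix}\partial\FEnergy(\Vc{x})]\bigr)$, which integrates to an energy--dissipation inequality and is saturated if and only if $\Vc{\flux}(\Vc{x})=\partial\Dissp^{*}_{\Vc{x}}[\Grad_{\HIncMatrix}\partial\FEnergy(\Vc{x})]$. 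The hypothesis, namely that \eqnref{eq:Feq_j_f} (equivalently \eqnref{eq:DeGiorgi}) holds with the driving force of the flow, together with the Legendre identity $\langle\Vc{\flux},\Vc{\tf}\rangle=\Dissp_{\Vc{x}}[\Vc{\flux}]+\Dissp^{*}_{\Vc{x}}[\Vc{\tf}]$ that is automatic for the generating pair, forces this saturation for a.e.\ $t$, hence, by continuity of the fields, along the whole trajectory. Since $\Dissp^{*}_{\Vc{x}}$ is strictly convex (\defref{dfn:DissipationFunc}), the Legendre map $\partial\Dissp^{*}_{\Vc{x}}$ is a bijection, so $\Vc{\flux}(\Vc{x})=\partial\Dissp^{*}_{\Vc{x}}[\Vc{\tf}(\Vc{x})]=\partial\Dissp^{*}_{\Vc{x}}[\Grad_{\HIncMatrix}\partial\FEnergy(\Vc{x})]$ upgrades to $\Vc{\tf}(\Vc{x})=\Grad_{\HIncMatrix}\partial\FEnergy(\Vc{x})$, which is exactly the defining relation \eqnref{eq:gGradFlow} of the generalized gradient flow.

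I expect the ``if'' direction to be the main obstacle, and within it the step that turns the \emph{scalar} energy--balance identity back into the \emph{vector} identity $\Vc{\tf}=\Grad_{\HIncMatrix}\partial\FEnergy$. A priori the balance only constrains the pairing $\langle\Vc{\flux},\Grad_{\HIncMatrix}\partial\FEnergy\rangle$, and what promotes this to the full gradient relation is precisely the equality case of the Fenchel--Young inequality, which in turn depends on the strict convexity (and $1$-coercivity) of the dissipation functions built into \defref{dfn:DissipationFunc}; making that step airtight requires care about which force enters the dissipation functional in \eqnref{eq:DeGiorgi} (it must be read self-consistently as the gradient force, which for a gradient flow coincides with $\Vc{\tf}$). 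A secondary, purely technical issue is the time-regularity needed to move between the differential form \eqnref{eq:Feq_j_f} and the integral form \eqnref{eq:DeGiorgi}: one needs $t\mapsto\FEnergy(\Vc{x}_{t})$ absolutely continuous along solutions and the Lebesgue differentiation theorem to recover the pointwise Fenchel--Young equality from the integral balance.
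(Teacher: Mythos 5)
Your argument is correct and follows the same route as the paper's proof: for an arbitrary generalized flow you compute $\dot{\FEnergy}(\Vc{x}_{t})=-\langle\Vc{\flux},\Grad_{\HIncMatrix}\partial\FEnergy\rangle$, bound it below via the Fenchel--Young inequality (equivalently, the non-negativity of $\BD^{\Jspace,\Fspace}_{\Vc{x}}[\Vc{\flux};\Grad_{\HIncMatrix}\partial\FEnergy]$), and use the equality case of that inequality to force $\Vc{\flux}=\partial\Dissp^{*}_{\Vc{x}}[\Grad_{\HIncMatrix}\partial\FEnergy]$, which is the defining relation of the generalized gradient flow. The subtlety you flag about which force enters the dissipation functional is real and is handled the same way (implicitly) in the paper, so nothing further is needed.
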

\begin{proof}
For a generalized flow $\Vc{x}_{t}$ driven by force $\Vc{\tf}(\Vc{x})$ as in \eqnref{eq:gFlow} and for any $\FEnergy(\Vc{x})$, the following inequality holds:
\begin{align}
    \dot{\FEnergy}(\Vc{x}_{t})=\left\langle\dot{\Vc{x}},\frac{\partial \FEnergy(\Vc{x}_{t})}{\partial \Vc{x}} \right\rangle
    &=\left\langle-\Vc{\flux}(\Vc{x}_{t}),\Grad_{\HIncMatrix}\frac{\partial \FEnergy(\Vc{x}_{t})}{\partial \Vc{x}} \right\rangle\\
    &=-\left[\Dissp_{\Vc{x}_{t}}(\Vc{\flux}(\Vc{x}_{t})) + \Dissp^{*}_{\Vc{x}_{t}}(\Vc{\tf}'(\Vc{x}_{t}))\right] + \BD^{\Jspace,\Fspace}_{\Vc{x}_{t}}[\Vc{\flux}(\Vc{x}_{t});\Vc{\tf}'(\Vc{x}_{t})]\\
    &\ge -\left[\Dissp_{\Vc{x}_{t}}(\Vc{\flux}(\Vc{x}_{t})) + \Dissp^{*}_{\Vc{x}_{t}}(\Vc{\tf}'(\Vc{x}_{t}))\right], 
\end{align}
where we define $\Vc{\tf}'(\Vc{x})\defeq \Grad_{\HIncMatrix}\partial \FEnergy(\Vc{x})$.
The last inequality becomes an equality if and only if $\Vc{\tf}'(\Vc{x}_{t})$ is the Legendre dual of $\Vc{\flux}(\Vc{x}_{t})$\footnote{These inequality and equality conditions are usually derived by using Cauchy-Schwarz inequality\cite{lisini2009ESAIM:COCV}. From the information-geometric framework, they are trivially attributed to the non-negativity of Bregman divergence. }, i.e., 
\begin{align}
    \BD^{\Jspace,\Fspace}_{\Vc{x}_{t}}[\Vc{\flux}(\Vc{x}_{t});\Vc{\tf}'(\Vc{x}_{t})]=0 \Longleftrightarrow \Vc{\flux}(\Vc{x}_{t})=\partial \Dissp^{*}_{\Vc{x}_{t}}[\Vc{\tf}'(\Vc{x}_{t})] \label{eq:DeGiorgi2}
\end{align}
Thus, \eqnref{eq:Feq_j_f} holds only when $\Vc{x}_{t}$ is the generalized gradient flow of $\FEnergy(\Vc{x})$.
\end{proof}
De Giorgi's formulation is a well-established approach for defining gradient flow in metric spaces\cite{ambrosio2006a}.
\subsection{Equilibrium and nonequilibrium flow}
In this work, we mainly focus on the case that $\FEnergy(\Vc{x})= \BD^{\X}_{\Pfunc}[\Vc{x}\|\tilde{\Vc{x}}]$ where $\BD^{\X}_{\Pfunc}$ is the Bregman divergence associated with a thermodynamic function $\Pfunc$.

\begin{figure}[h]
\includegraphics[bb=0 420 1024 768, width=\linewidth]{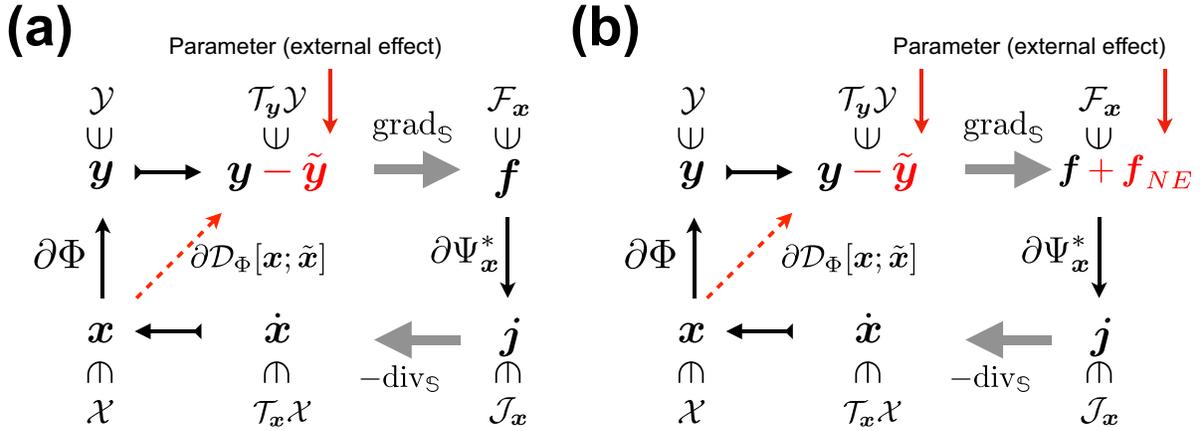}
\caption{Schematic representation of the equilibrium (a) and nonequilibrium flow (b).}
\label{fig:Equ_Noneq}
\end{figure}

\begin{dfn}[Equilibrium force, equilibrium flux, and equilibrium flow]
The force generated by the gradient of Bregman divergence associated with a thermodynamic function $\Pfunc$ is called the (thermodynamic) equilibrium force, and the following equation is denoted as the thermodynamic gradient equation:
\begin{align}
    \Vc{\tf}(\Vc{x})=\Grad_{\HIncMatrix}\partial \BD^{\X}_{\Pfunc}[\Vc{x}\|\tilde{\Vc{x}}],\label{eq:gradEq}
\end{align}
where $\tilde{\Vc{x}}\in \X$ is a parameter. The dual of $\Vc{\tf}(\Vc{x})$, i.e., $\Vc{\flux}(\Vc{x})=\partial \Dissp^{*}_{\Vc{x}}[\Vc{\tf}(\Vc{x})]$, is called the equilibrium flux: 
A generalized flow $\Vc{x}(t)$ is an equilibrium flow if it is driven by the equilibrium force :
\begin{align}
    \dot{\Vc{x}}=-\Div_{\HIncMatrix} \partial \Dissp^{*}_{\Vc{x}}[\Grad_{\HIncMatrix}\partial \BD^{\X}_{\Pfunc}[\Vc{x}\|\tilde{\Vc{x}}]].\label{eq:gGF}
\end{align}
Using the relation $\partial \BD^{\X}_{\Pfunc}[\Vc{x}\|\tilde{\Vc{x}}]=\partial \Pfunc(\Vc{x}) - \tilde{\Vc{y}}$ where $\tilde{\Vc{y}}=\partial \Pfunc(\tilde{\Vc{x}})$, 
\eqnref{eq:gGF} can be rewritten as
\begin{align}
    \dot{\Vc{x}}=-\Div_{\HIncMatrix}\left[ \partial \Dissp^{*}_{\Vc{x}}\left[\Grad_{\HIncMatrix}\left\{\partial \Pfunc(\Vc{x}) - \tilde{\Vc{y}}\right\}\right]\right], \label{eq:gGF2}
\end{align}
which explicitly shows the contribution of both the thermodynamic function and the dissipation function to the dynamics (\fgref{fig:Equ_Noneq} (a)). 
\end{dfn}
Various properties of the equilibrium flow (\eqnref{eq:gGF}) can be obtained from the doubly dual flat structure as we will see in the following sections.
In addition, the equilibrium flow captures the properties that the dynamics of thermodynamic equilibrium systems should hold. In this sense, the equilibrium flow is the mathematical representation of the dynamics of equilibrium systems.

Beyond the gradient equilibrium flow, we also consider the non-gradient nonequilibrium flow of the following type: 
\begin{dfn}[Nonequilibrium force and nonequilibrium flow]
The force generated by a shift of the equilibrium force 
\begin{align}
    \Vc{\tf}(\Vc{x})=\Grad_{\HIncMatrix}\partial \BD^{\X}_{\Pfunc}[\Vc{x}\|\tilde{\Vc{x}}] + \Vc{\tf}_{NE},\label{eq:nonequF}
\end{align}
is called nonequilibrium force if $\Vc{\tf}_{NE}\not\in \Img[\HIncMatrix^{\Transpose}]$\footnote{In physics, such $\Vc{\tf}_{NE}$ can be identified with a nonequilibrium force applied externally to the system.}.
If the shift $\Vc{\tf}_{NE}$ satisfies $\Vc{\tf}_{NE}\in \Img[\HIncMatrix^{\Transpose}]$, then $\Vc{\tf}(\Vc{x})$ is reduced to the equilibrium force $\Vc{\tf}_{NE}=\Vc{0}$ by appropriately changing $\tilde{\Vc{x}}$.
The nonequilibrium flow is the flow induced by the nonequilibrium force (\fgref{fig:Equ_Noneq} (b)):
\begin{align}
    \dot{\Vc{x}}=-\Div_{\HIncMatrix} \partial \Dissp^{*}_{\Vc{x}}\left[\left[\Grad_{\HIncMatrix}\partial_{\Vc{x}} \BD^{\X}_{\Pfunc}[\Vc{x}\|\tilde{\Vc{x}}]\right]+\Vc{\tf}_{NE}\right].\label{eq:gNGF}
\end{align}
\end{dfn}
In the next section, we show that this equation can cover a sufficiently wide class of models, e.g., all types of rLDG and CRN with extended LMA kinetics. 
Equation \ref{eq:gNGF} can also be associated with nonequilibrium dynamics with a constant environmental force. 
The techniques in information geometry, Hessian geometry, and convex analysis enable us to investigate such non-gradient dynamics.
\begin{rmk}[Variational modeling\cite{peletier2014}]
We introduced and characterized dynamics based on the thermodynamic functions and dissipation functions. While we employed a restricted definition in order to link dynamics to information geometry, we may further generalize this approach by appropriately choosing the state space, $\Vc{\tf}(\Vc{x})$,  $\Dissp^{*}_{\Vc{x}}(\Vc{\tf})$, and $\Div_{\HIncMatrix}$. 
For example, we may consider a $\Vc{x}$-dependent and noninteger-valued for the matrix $\HIncMatrix(\Vc{x})$. The equilibrium flow may not be restricted to $\FEnergy(\Vc{x})= \BD^{\X}_{\Pfunc}[\Vc{x}\|\tilde{\Vc{x}}]$, and the nonequilibrium flow may be defined for $\Vc{x}$-dependent $\Vc{\tf}_{NE}(\Vc{x})$.
This type of approach for modeling dissipative dynamics has been known as variational modeling.
\end{rmk}
Before closing this section, we mention that the existence of DB states, i.e., $\Manifold^{\mathrm{DB}}\neq \emptyset$, is necessary and sufficient for a nonequilibrium flow to be an equilibrium flow.
\begin{prop}[Detailed balance condition and  equilibrium flow]
Consider a flow given by \eqnref{eq:gNGF}. 
If $\Manifold^{\mathrm{DB}}\neq \emptyset$, then the flow is equilibrium, i.e.,  $\Vc{\tf}_{NE}\in \Img \HIncMatrix^{\Transpose}$.
\end{prop}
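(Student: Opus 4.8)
The plan is to extract a single scalar-free equation at one detailed-balanced state and simply read off the membership $\Vc{\tf}_{NE}\in\Img\HIncMatrix^{\Transpose}$ from it; no use of the dynamics (uniqueness, persistence, or convergence) is needed, only the existence of one point in $\Manifold^{\mathrm{DB}}$.

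First I would pick a state $\Vc{x}^{*}\in\Manifold^{\mathrm{DB}}$, which exists by hypothesis. By \defref{eq:st_cb_eq_state} we have $\Manifold^{\mathrm{DB}}=\{\Vc{x}\in\X\,|\,\Vc{\flux}(\Vc{x})=\Vc{0}\}=\{\Vc{x}\in\X\,|\,\Vc{\tf}(\Vc{x})=\Vc{0}\}$, so in particular $\Vc{\tf}(\Vc{x}^{*})=\Vc{0}$. Next I would write the nonequilibrium force \eqnref{eq:nonequF} explicitly at $\Vc{x}^{*}$: using $\Grad_{\HIncMatrix}=\boundary^{0}=\HIncMatrix^{\Transpose}$ together with the identity $\partial_{\Vc{x}}\BD^{\X}_{\Pfunc}[\Vc{x}\|\tilde{\Vc{x}}]=\partial\Pfunc(\Vc{x})-\tilde{\Vc{y}}$ with $\tilde{\Vc{y}}=\partial\Pfunc(\tilde{\Vc{x}})$, the force reads $\Vc{\tf}(\Vc{x}^{*})=\HIncMatrix^{\Transpose}\!\left(\partial\Pfunc(\Vc{x}^{*})-\tilde{\Vc{y}}\right)+\Vc{\tf}_{NE}$. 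Combining the two statements yields $\Vc{0}=\HIncMatrix^{\Transpose}\!\left(\partial\Pfunc(\Vc{x}^{*})-\tilde{\Vc{y}}\right)+\Vc{\tf}_{NE}$, hence $\Vc{\tf}_{NE}=\HIncMatrix^{\Transpose}\!\left(\tilde{\Vc{y}}-\partial\Pfunc(\Vc{x}^{*})\right)\in\Img\HIncMatrix^{\Transpose}$, which is the claim. (If one wants the stronger reading, the same line shows that redefining $\tilde{\Vc{x}}$ so that $\tilde{\Vc{y}}':=\tilde{\Vc{y}}-\Vc{\tf}_{NE}'$ absorbs the shift turns \eqnref{eq:gNGF} into the equilibrium flow \eqnref{eq:gGF}.)

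The one step that deserves care — and which I would flag as the main, though mild, obstacle — is the equivalence $\Vc{\flux}(\Vc{x})=\Vc{0}\Leftrightarrow\Vc{\tf}(\Vc{x})=\Vc{0}$ used to pass from $\Vc{\flux}(\Vc{x}^{*})=\Vc{0}$ to $\Vc{\tf}(\Vc{x}^{*})=\Vc{0}$. This rests on the structural assumptions on the dissipation function (strict convexity, $\Dissp^{*}_{\Vc{x}}(\Vc{0})=0$, symmetry), which make $\partial\Dissp^{*}_{\Vc{x}}$ a bijection carrying $\Vc{0}$ to $\Vc{0}$, so that the flux in \eqnref{eq:gNGF} vanishes exactly where its driving force does; this is precisely the content already recorded after \defref{eq:st_cb_eq_state}, so it may be invoked directly. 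Everything else is a one-line substitution.
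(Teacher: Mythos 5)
Your proof is correct and is essentially the paper's argument stated in its direct (contrapositive-free) form: the paper assumes $\Vc{\tf}_{NE}\not\in\Img\HIncMatrix^{\Transpose}$ and concludes $\Vc{\tf}(\Vc{x})\neq\Vc{0}$ everywhere, whereas you evaluate the force at a detailed-balanced state and solve for $\Vc{\tf}_{NE}$ — both hinge on the same two facts, namely $\Vc{\flux}(\Vc{x})=\Vc{0}\Leftrightarrow\Vc{\tf}(\Vc{x})=\Vc{0}$ from the dissipation-function properties and the explicit form $\Vc{\tf}(\Vc{x})=\HIncMatrix^{\Transpose}(\partial\Pfunc(\Vc{x})-\tilde{\Vc{y}})+\Vc{\tf}_{NE}$. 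Your flagging of the flux-force equivalence as the only nontrivial ingredient is exactly right and matches the paper's use of the remark recorded after the definition of the detailed-balanced manifold.
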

\begin{proof}
$\Manifold^{\mathrm{DB}}\neq \emptyset$ means that there exists $\Vc{x}_{DB}$ satisfying $\Vc{\flux}(\Vc{x}_{DB})=\Vc{0}$. 
Then we have $\Vc{\flux}(\Vc{x}_{DB})=\Vc{0}\Leftrightarrow \Vc{\tf}(\Vc{x}_{DB})=\Vc{0}$. 
If $\Vc{\tf}_{NE}\not\in \Img[\HIncMatrix^{\Transpose}]$, $\Vc{\tf}_{NE}\neq \Vc{0}$ and thus $\Vc{\tf}(\Vc{x}) \neq \Vc{0}$ for all $\Vc{x}\in \X$. Thus,  $\Vc{\tf}_{NE} \in \Img[\HIncMatrix^{\Transpose}]$ if $\Manifold^{\mathrm{DB}}\neq \emptyset$.
\end{proof}
The necessity follows basically from \propref{prop:gradflow}, but we have to show $\Manifold^{\mathrm{ST}} \neq \emptyset$. This will be shown in the following section (\lmmref{lmm:DualFoliationVertex}).

\section{Explicit form of thermodynamic and dissipation functions}\label{sec:explict_form}
Before investigating the dynamics of the equilibrium (\eqnref{eq:gGF}) and nonequilibrium (\eqnref{eq:gNGF}) flow, we show how the flow can be associated with the dynamics on graphs and hypergraphs via specific forms of the thermodynamic and dissipation functions.
The forms of functions depend on the functional form of the flux that we assume: \eqnref{eq:LDM2} for rLDG, \eqnref{eq:CRN_rate} for CRN with LMA kinetics, and \eqnref{eq:flux_FPE} for FPE. 
It should be noted that the choice of the thermodynamic function and the dissipation function is not unique for a given dynamics in general. 
Depending on the purpose, we should choose or find an appropriate set of functions.

\subsection{Explicit form of thermodynamic functions for rLDG and CRN}
For rLDG (\eqnref{eq:LDM2}) and CRN with LMA kinetics (\eqnref{eq:CRN_rate}), the following pair of thermodynamic functions is particularly relevant\footnote{For CRN, these forms of the thermodynamic functions are derived from the conventional thermodynamics of ideal gas or dilute solution with non-reactive solvent\cite{sughiyama2021ArXiv211212403Cond-MatPhysicsphysicsa}. Mathematically, we may employ other functions as we introduce different information geometric structures onto a family of probabilities depending on the purpose. Such exploitation is an interesting open problem.}:
\begin{align}
    \Pfunc(\Vc{x}) & \defeq \left[\ln \Vc{x} - \ln\Vc{x}^{o} - \Vc{1}\right]^{\Transpose}\Vc{x} = \sum_{i=1}^{N_{\molX}}\left[\ln \frac{x_{i}}{x_{i}^{o}}-1\right]x_{i}, & 
    \Pfunc^{*}(\Vc{y}) &\defeq  (\Vc{x}^{o})^{T}e^{\Vc{y}}=\sum_{i=1}^{N_{\molX}}x_{i}^{o}e^{y_{i}}, \label{eq:potentialfunction_X_MAK}
\end{align}
which induce the following Legendre transformation: 
\begin{align}
    \Vc{y}&=\partial\Pfunc(\Vc{x}) = \ln \Vc{x}-\ln\Vc{x}^{o}, &
    \Vc{x}&=\partial\Pfunc^{*}(\Vc{y}) = \Vc{x}^{o}\circ e^{\Vc{y}}. 
\end{align}
Here, $\Y=\Real^{N_{\molX}}$, and $\Vc{x}^{o}\in \X$ is a parameter determining the point in $\X$ that is associated with the origin of $\Y$ via the Legendre transformation.
For these thermodynamic functions, the Bregman divergence is reduced to the generalized Kullback-Leibler divergence.
\begin{align}
    \BD^{\X}[\Vc{x}\|\Vc{x}']=\left(\ln\frac{\Vc{x}}{\Vc{x}'}\right)^{\Transpose}\Vc{x}-\Vc{1}^{\Transpose}(\Vc{x}-\Vc{x}'). \label{eq:KL}
\end{align}
These thermodynamic functions and the Bregman divergence are separable.

If we choose $\Vc{x}^{o}=\Vc{1}$, then the conventional dual representation for the probability density $\Vc{p}$ on a discrete space is recovered: 
\begin{align}
    \Pfunc(\Vc{p}) & = \left[\ln \Vc{p} - \Vc{1}\right]^{\Transpose}\Vc{p}, & 
    \Pfunc^{*}(\Vc{y}) &=  \Vc{1}^{T}e^{\Vc{y}}, &
    \Vc{y}&=\partial_{\Vc{p}}\Pfunc(\Vc{p}) = \ln \Vc{p}, & \Vc{p}&=\partial_{\Vc{y}}\Pfunc^{*}(\Vc{y}) = e^{\Vc{y}}.\label{eq:potential_p}
\end{align}
In this case, $\Y$ is the space of the logarithm of $\Vc{p}$.
These representations hold even if $\Vc{p}$ is not a probability density.
If $\Vc{p}$ satisfies $\Vc{1}^{\Transpose}\Vc{p}=1$, the generalized KL divergence becomes the normal KL divergence $\BD^{\X}[\Vc{p}\|\Vc{p}']=\left(\ln\frac{\Vc{p}}{\Vc{p}'}\right)^{\Transpose}\Vc{p}$\footnote{As we will see later, the condition $\Vc{1}^{\Transpose}\Vc{p}(t)=1$ need not be assumed but is automatically satisfied due to the topological constraint of the graph and the initial condition $\Vc{1}^{\Transpose}\Vc{p}(0)=1$ when we work on rMJP.}.

\subsection{Explicit form of dissipation functions for rLDG and CRN}
To determine the dissipation functions, we need the definition of force, which may depend on the phenomena and purpose\footnote{This is parallel to the problem of how to define the dual of $\Vc{x}$. The choice of logarithm is contingent on the domain and knowledge of physics and statistics.}. 
In physics, the flux-force relations, which are also called constitutive equations\cite{truesdell2004}, are central because they determine what kind of change is induced by an incurred force \footnote{Actual forms of the relations depend on the respective phenomena. 
Some relations were obtained empirically through experiments and others were computed theoretically from microscopic models.}. 
For rMJP and CRNs, the flux and force are conventionally defined using the one-way fluxes, $\Vc{\flux}^{+}(\Vc{x})$ and $\Vc{\flux}^{-}(\Vc{x})$ as
\begin{align}
    \Vc{\flux}&=\Vc{\flux}^{+}-\Vc{\flux}^{-}, &    \Vc{\tf}&=\ln \Vc{\flux}^{+} - \ln \Vc{\flux}^{-}, \label{eq:flux_force_relation}
\end{align}
where the dependency of $\Vc{\flux}^{\pm}(\Vc{x})$ on $\Vc{x}$ is abbreviated for notational simplicity.
In physics, assuming this form of force-flux relation goes by the name of the local detailed balance (LDB) assumption\footnote{LDB assumption is different from the DB condition in \defref{eq:st_cb_eq_state}.}, or the generalized detailed balance assumption\footnote{The validity of LDB was shown for rMJP and CRN with LMA kinetics via large deviation theory for the corresponding microscopic Markovian models or via its consistency with the macroscopic chemical thermodynamics\cite{bergmann1955Phys.Rev.,maes2021SciPostPhys.Lect.Notes}. }.
By defining the frenetic activity \cite{maes2017}: 
\begin{align}
    \Vc{\frenecy}\defeq 2 \sqrt{\Vc{\flux}^{+}\circ\Vc{\flux}^{-}}\in \mathbb{R}_{\com{\ge} 0}^{N_{\edge}}, \label{eq:frenecy},
\end{align}
we have a relation $\Vc{\flux}=\Vc{\frenecy}\circ \left[\frac{\exp(\Vc{\tf}/\com{2})-\exp(-\Vc{\tf}/\com{2})}{2}\right]$. 
For a fixed $\Vc{\frenecy}$, this relation between the pair $(\Vc{\flux},\Vc{\tf})$ is a one-to-one Legendre duality induced by the following specific form of dissipation functions: 
\begin{align}
\begin{split}
    \Dissp^{*}_{\Vc{\frenecy}}(\Vc{\tf})&\defeq \com{2} \Vc{\frenecy}^{\Transpose} \left[\cosh(\Vc{\tf}/\com{2})-\Vc{1}\right], \\
    \Dissp_{\Vc{\frenecy}}(\Vc{\flux})&\defeq \com{2}\Vc{\frenecy}^{\Transpose}\left(\diag\left[\frac{\Vc{\flux}}{\Vc{\frenecy}}\right] \sinh^{-1}\left(\frac{\Vc{\flux}}{\Vc{\frenecy}}\right) -  \left[\sqrt{\Vc{1}+\left(\frac{\Vc{\flux}}{\Vc{\frenecy}}\right)^{2}}-\Vc{1}\right]\right),
    \end{split}
    \label{eq:CRN_dissip}
\end{align}
which lead to the Legendre transformation: 
\begin{align}
    \Vc{\flux}&=\partial \Dissp^{*}_{\Vc{\frenecy}}(\Vc{\tf})=\Vc{\frenecy}\circ \sinh(\Vc{\tf}/\com{2}),&     \Vc{\tf}&=\partial \Dissp_{\Vc{\frenecy}}(\Vc{\flux})=\com{2} \sinh^{-1}\left(\frac{\Vc{\flux}}{\Vc{\frenecy}}\right).\label{eq:LT_flux_force}
\end{align}
We can easily verify that these functions satisfy the conditions for dissipation functions, i.e., \eqnref{eq:DissipCond1}, \eqnref{eq:DissipCond2}, and \eqnref{eq:DissipCond3}.

For the flux $\Vc{\flux}_{\mathrm{MA}}(\Vc{x})$ of LMA kinetics (\eqnref{eq:CRN_rate})\footnote{The dissipation functions in \eqnref{eq:CRN_dissip} and the induced Legendre transformation in \eqnref{eq:LT_flux_force} are not necessarily restricted to these particular types of force and activity. Actually, the extended LMA kinetics (\eqnref{eq:CRN_rate_eLMA}) can also be represented by replacing $\Vc{\frenecy}_{\mathrm{MA}}(\Vc{x})$ with $\Vc{\frenecy}_{\mathrm{eMA}}(\Vc{x})=\Vc{g}(\Vc{x})\circ \Vc{\frenecy}_{\mathrm{MA}}(\Vc{x})$.
 Thus, \eqnref{eq:CRN_dissip} could be applied to a wider class of kinetics than \eqnref{eq:CRN_rate}.}, the force and activity become
\begin{align}
\Vc{\tf}_{\mathrm{MA}}(\Vc{x};\Vc{K})&=\left[\ln \Vc{K} + \HIncMatrix^{\Transpose}\ln\Vc{x}\right], &
\Vc{\frenecy}_{\mathrm{MA}}(\Vc{x};\Vc{\kappa})&=2\Vc{\kappa}\circ\Vc{x}^{[\cmMatrix(\B^{+}+\B^{-})]^{\Transpose}/2}.
\label{eq:jffMAK}
\end{align}
where we introduced a transformation of the kinetic parameters $(\Vc{\kcoef}^{+}, \Vc{\kcoef}^{-})$ into the force part $\Vc{K}$ and activity part $\Vc{\kappa}$ as $\Vc{\kappa}\defeq \sqrt{\Vc{\kcoef}^{+}\circ\Vc{\kcoef}^{-}}$ and $\Vc{K}\defeq  \Vc{\kcoef}^{+}/\Vc{\kcoef}^{-}$\footnote{For CRN, $\Vc{K}$ is referred as the equilibrium constant in chemistry.}. 
Because $\Vc{\kcoef}^{\pm}=\Vc{\kappa}\circ \Vc{K}^{\pm 1/2}$ holds, 
$(\Vc{\kappa},\Vc{K})$ has the same information as $(\Vc{\kcoef}^{+}, \Vc{\kcoef}^{-})$.
Moreover, we can verify that the force and activity are dependent only on $\Vc{K}$ and $\Vc{\kappa}$, respectively.
The dissipation functions of the forms above and their relations to rLDG and CRN were derived from the large deviation function of the corresponding microscopic stochastic models\cite{mielke2014PotentialAnala,patterson2019MathPhysAnalGeom}.
Actually, the Bregman divergence $\BD_{\Vc{x}}^{\Jspace}[\Vc{\flux};\Vc{\flux}_{\mathrm{MA}}(\Vc{x})]$ of the dissipation functions is identical to the rate function of the flux for rMJP and CRN.
Thus, these dissipation functions are keystones connecting macroscopic and microscopic dynamics.

If there exists $\tilde{\Vc{y}}$ satisfying $-\HIncMatrix^{\Transpose}\tilde{\Vc{y}}=\ln \Vc{K}$, i.e., $\ln \Vc{K}\in \Img \HIncMatrix^{\Transpose}$, the force in \eqnref{eq:jffMAK} is represented as
\begin{align}
    \Vc{\tf}_{\mathrm{MA}}(\Vc{x};\Vc{K})&=\Grad_{\HIncMatrix}\left(\ln\frac{\Vc{x}}{\tilde{\Vc{x}}}\right)=\Grad_{\HIncMatrix} \partial_{\Vc{x}} \BD^{\X}[\Vc{x}\|\tilde{\Vc{x}}] \in \Img \HIncMatrix^{\Transpose}, \label{eq:equilibrium_force_CRN}
\end{align}
where $\tilde{\Vc{x}}$ is the Legendre conjugate of $\tilde{\Vc{y}}$\footnote{It should be noted that, while $\tilde{\Vc{y}}$ is not uniquely determined by $\Vc{K}$ in general, it does not cause problems. This is clarified in the following section (\secref{sec:InfoGeoEquilibrium}) by introducing appropriate affine subspaces.}. 
Thus, CRN (and rMJP) is an equilibrium flow of the generalized KL divergence $\BD^{\X}[\Vc{x}\|\tilde{\Vc{x}}]$ when the parameter $\Vc{K}$ satisfies $\ln \Vc{K}\in \Img \HIncMatrix^{\Transpose}$. 
In chemistry, the condition $\ln \Vc{K}\in \Img \HIncMatrix^{\Transpose}$ is called Wegscheider's equilibrium condition\cite{wegscheider1902Z.FuerPhys.Chem.,rao2016Phys.Rev.X}, and the CRN satisfying this parametric condition is called equilibrium CRN
\footnote{Historically, the equilibrium chemical systems were characterized by macroscopic thermodynamics. The equilibrium condition was derived as the necessary and sufficient condition that the flux of the LMA kinetics (\eqnref{eq:CRN_rate}) should satisfy to have consistent properties with the thermodynamic equilibrium systems. It was found only recently that the equilibrium properties are mathematically attributed to the generalized gradient flow structure.}.
Even if $\ln \Vc{K}\in \Img \HIncMatrix^{\Transpose}$ is not satisfied, we can represent $\ln \Vc{K}= -\HIncMatrix^{\Transpose}\tilde{\Vc{y}} + \Vc{\tf}_{NE}$ with $\Vc{\tf}_{NE}\not\in \Img \HIncMatrix^{\Transpose}$. 
The force in \eqnref{eq:jffMAK} is always represented as
\begin{align}
    \Vc{\tf}_{\mathrm{MA}}(\Vc{x};\Vc{K})&=\Grad_{\HIncMatrix}\left(\ln\frac{\Vc{x}}{\tilde{\Vc{x}}}\right) + \Vc{\tf}_{NE}=\left[\Grad_{\HIncMatrix} \left[\partial_{\Vc{x}} \BD^{\X}[\Vc{x}\|\tilde{\Vc{x}}]\right] + \Vc{\tf}_{NE}\right] , \label{eq:nonequilibrium_force_CRN}
\end{align}
which leads to the nonequilibrium flow (\eqnref{eq:gNGF}).
Thus, CRN with LMA kinetics as well as rLDG are generally within the class of \eqnref{eq:gNGF}. 

\TJK{\begin{ex}[Simplified Brusselator CRN\cite{feinberg2019,yoshimura2022} (continued)]\label{ex:BrusselatorFlow}
For the Brusselator CRN introduced in \exref{ex:BrusselatorEq}, the force and activity defined in \eqnref{eq:jffMAK} can be explicitly represented as
\begin{align}
\Vc{\tf}(\Vc{x})&=\overbrace{\begin{pmatrix}
\ln (\kcoef_{1}^{+}/\kcoef_{1}^{-})\\
\ln (\kcoef_{2}^{+}/\kcoef_{2}^{-})\\
\ln (\kcoef_{3}^{+}/\kcoef_{3}^{-})\\
\end{pmatrix}}^{\ln \Vc{K}}+
\overbrace{\begin{pmatrix}
-\ln x_{1} \\
\ln x_{1} - \ln x_{2}\\
\ln x_{2}-\ln x_{1}
\end{pmatrix}}^{\HIncMatrix^{\Transpose}\ln\Vc{x}}
, &
\Vc{\frenecy}(\Vc{x})&=2\overbrace{\begin{pmatrix}
\sqrt{\kcoef_{1}^{+}\kcoef_{1}^{-}}\\
\sqrt{\kcoef_{2}^{+}\kcoef_{2}^{-}}\\
\sqrt{\kcoef_{3}^{+}\kcoef_{3}^{-}}\\
\end{pmatrix}}^{\Vc{\kappa}}
\circ
\overbrace{\begin{pmatrix}
\sqrt{x_{1}}\\
\sqrt{x_{1}x_{2}}\\
\sqrt{x_{1}^{5}x_{2}}
\end{pmatrix}}^{\Vc{x}^{[\cmMatrix(\B^{+}+\B^{-})]^{\Transpose}/2}}.
\end{align}
\end{ex}
}

\begin{rmk}[Wegscheider's equilibrium condition and Detailed balance condition]
While we defined equilibrium flow by the specific functional form of force and obtained Wegscheider's equilibrium condition as the necessary and sufficient condition to have the equilibrium force under LMA kinetics, the equilibrium dynamics is often defined by the existence of the steady state satisfying the DB condition (\eqnref{eq:DBC}) in CRN theory.    
In addition, the DB condition is also often assumed in statistics when we design or analyze a random walk in parameter spaces, e.g., in the Markov Chain Monte Carlo (MCMC) simulations or in other random-walk-based optimization schemes\footnote{The DB condition is conventionally adopted because, for example, it makes it easy to obtain an MCMC with a desirable stationary distribution. This nice property comes from the gradient-flow property of the equilibrium flow.}.
These two are equivalent for (extended) LMA kinetics. Actually, $\Manifold^{\mathrm{DB}} \neq \emptyset$ means that there exists $\Vc{x}_{DB} \in \X$ such that $\Vc{\flux}_{MA}(\Vc{x}_{DB})=\Vc{0} \Leftrightarrow - \HIncMatrix^{\Transpose}\ln \Vc{x}_{DB}=\ln\Vc{K}$.
From the Fredholm alternative, we obtain the Wegscheider's equilibrium condition $\ln \Vc{K}\in \Img \HIncMatrix^{\Transpose}$ for the existence of $\Vc{x}_{DB}$.
\end{rmk}

\begin{rmk}[Linear graph Laplacian dynamics]
The linear graph Laplacian dynamics defined by \eqnref{eq:gLaplacian_dynamics} can be formally regarded as a generalized flow.
From the form of the graph Laplacian (\eqnref{eq:graphLaplacianLinear})\footnote{It should be noted that this representation holds only when $\Vc{\kcoef}^{+}=\Vc{\kcoef}^{-}$ holds.}, it is easy to see that \eqnref{eq:gLaplacian_dynamics} coincides with \eqnref{eq:gGF} if
\begin{align}
    \Pfunc(\Vc{x})&=\frac{1}{2}\langle\Vc{x},M_{0}\Vc{x} \rangle, & \Dissp^{*,q}_{\Vc{x}}(\Vc{\tf})&=\frac{1}{2}\langle\Vc{\tf},M^{1}\Vc{\tf}\rangle,
\end{align}
where $M_{0}=\identityM$, $M^{1}=\diag[\Vc{\kcoef}]$, and $\tilde{\Vc{x}}=\Vc{0}$. 
In contrast to rLDG, the natural state space and the corresponding dual is $\X=\Y=\Real^{N_{\node}}$\footnote{Even if we restrict the dynamics to $\X=\Y=\Real^{N_{\node}}_{>0}$, no problem arises for defining the generalized flow as long as we do not consider projections that we are going to introduce.}.
In \cite{ohara2021Geom.Sci.Inf.}, non-quadratic general $\Pfunc(\Vc{x})$ is considered as a class of nonlinear diffusion on a network from information geometric viewpoint.
\end{rmk}

\subsection{Some remarks on the dissipation functions for rLDG and CRN}\label{sec:dissp_CRN_LDG}
The dissipation functions in \eqnref{eq:CRN_dissip} have several notable properties.
First, they are separable:
\begin{align}
    \Dissp^{*}_{\Vc{\frenecy}(\Vc{x})}(\Vc{\tf})&= \sum_{e=1}^{N_{\edge}} \frenecy_{e}(\Vc{x}) \psi^{*}(\tf_{e}), &    \Dissp_{\Vc{\frenecy}(\Vc{x})}(\Vc{\flux})&= \sum_{e=1}^{N_{\edge}} \frenecy_{e}(\Vc{x}) \psi(\flux_{e}/\frenecy_{e}(\Vc{x})),
\end{align}
where 
\begin{align}
    \psi^{*}(\tf)&\defeq \com{2} \left[\cosh(\tf/\com{2})-1\right]\in [0,\infty), \\
    \psi(\flux)&\defeq \com{2}\left(\flux \sinh^{-1}\left(\flux\right) -  \left[\sqrt{\Vc{1}+\bar{\flux}^{2}}-1\right]\right)\in [0,\infty).
\end{align}
and $\Vc{\frenecy}(\Vc{x})$ is local: $\frenecy_{e}(\Vc{x})=2 \kappa_{e}\prod_{i=1}^{N_{\molX}}x_{i}^{(\gamma^{+}_{i,e}+\gamma^{-}_{i,e})/2}$.
The thermodynamic functions in \eqnref{eq:potentialfunction_X_MAK} are also separable\footnote{The locality and separability may sound natural. However, from the physical viewpoint, the Onsager matrix can have nondiagonal components, which implies nonseparable dissipation functions. In addition, equilibrium thermodynamics does not preclude thermodynamic functions from being nonseparable.}.  

Second, the scalar function $\psi^{*}(\tf)$ is the N-function.
The N-function of the $(\cosh(\tf)-1)$-type and the associated Orlicz space have been employed for establishing the infinite-dimensional information geometry by Pistone\cite{pistone1995Ann.Stat.,pistone2013Entropy,pistone2021ProgressinInformationGeometry:TheoryandApplications}.
In functional analysis, the Orlicz space is a generalization of the $L^{p}$ spaces, which arise naturally when we work on the $L \log^{+} L$ space for the divergences and large deviation functions.
Hence, the dissipation functions in \eqnref{eq:CRN_dissip} are tightly related to such topics.

Third, various information geometric measures and quantities are related to the dissipation functions in \eqnref{eq:CRN_dissip} and also to the associated quantities as follows:
\begin{align}
\frac{1}{4} \Dissp^{*}_{\Vc{\frenecy}}(\Vc{\tf})&=\frac{1}{2} \sum_{e=1}^{N_{\edge}}\left[\sqrt{\flux_{e}^{+}}- \sqrt{\flux_{e}^{-}}\right]^{2} =: \BD_{Hel}[\Vc{\flux}^{+};\Vc{\flux}^{-}]^{2}\\
\frac{1}{2}\Vc{1}^{\Transpose}\Vc{\frenecy} & = \sum_{e=1}^{N_{\edge}}\sqrt{\flux_{e}^{+}\flux_{e}^{+}}=:\mathrm{BC}[\Vc{\flux}^{+};\Vc{\flux}^{-}]\\
\langle \Vc{\flux},\Vc{\tf}\rangle&= \sum_{e=1}^{N_{\edge}}(\flux^{+}_{e}-\flux^{-}_{e})\ln \frac{\flux^{+}_{e}}{\flux^{-}_{e}}=: \BD_{Jef}[\Vc{\flux}^{+};\Vc{\flux}^{-}],
\end{align}
where $\BD_{Hel}[\Vc{\flux}^{+};\Vc{\flux}^{-}]$, $\mathrm{BC}[\Vc{\flux}^{+};\Vc{\flux}^{-}]$, and $\BD_{Jef}[\Vc{\flux}^{+};\Vc{\flux}^{-}]$ are the Hellinger--Kakutani distance, the Bhattacharyya coefficient, and the Jeffreys divergence (symmetrized KL divergence) for $\Vc{\flux}^{+}$ and $\Vc{\flux}^{-}$, respectively.
In addition, in physics, the bilinear pairing $\langle\Vc{\flux},\Vc{\tf}\rangle$ of a Legendre dual pair and its approximation using the Hessian matrix are often referred to as the entropy production rate (EPR) $\EPR$ and pseudo-entropy production rate (pEPR) $\pEPR$, respectively\cite{shiraishi2021JStatPhys,kobayashi2022Phys.Rev.Researcha}:
\begin{align}
\EPR&\defeq \langle \Vc{\flux},\Vc{\tf}\rangle=\sum_{e=1}^{N_{\edge}}(\flux^{+}_{e}-\flux^{-}_{e})\ln \frac{\flux^{+}_{e}}{\flux^{-}_{e}}.\label{eq:EPR}\\
\pEPR&\defeq \langle\Vc{\flux}, \FM_{\Vc{\frenecy},\Vc{\flux}}\Vc{\flux}\rangle = 2\sum_{e=1}^{N_{\edge}}\frac{(\flux^{+}_{e}-\flux^{-}_{e})^{2}}{\flux^{+}_{e}+\flux^{-}_{e}}\label{eq:pEPR},
\end{align}
where we treat $\Vc{\flux}\in \Jspace_{\Vc{x}}$ as a member of $\Tan_{\Vc{\flux}}\Jspace_{\Vc{x}}$ by the isomorphism: $\Jspace_{\Vc{x}} \cong \Tan_{\Vc{\flux}}\Jspace_{\Vc{x}} \cong \chain_{1}(\HGraph)$.
The pEPR $\pEPR$ is an approximation of EPR by replacing $\Vc{\tf}=\partial \Dissp_{\Vc{\frenecy}}(\Vc{\flux})$ with $\FM_{\Vc{\frenecy},\Vc{\flux}}\Vc{\flux}$ and works as a lower bound of $\EPR$: $\EPR \ge \pEPR$\cite{shiraishi2021JStatPhys}\footnote{This inequality is obtained directly from the inequality $2(a-b)^{2}/(a+b)\le (a-b)\ln a/b$ for $a,b>0$.}.

Finally, the dissipation functions in \eqnref{eq:CRN_dissip} are not the unique choice to reproduce the force-flux relation in \eqnref{eq:flux_force_relation}.
The quadratic dissipation functions $\Dissp^{q,*}_{\Vc{x}}(\Vc{\tf})\defeq \frac{1}{2}\langle \Vc{\tf},\metric^{*}_{\Vc{x}} \Vc{\tf} \rangle$ in \eqnref{eq:quadratic_dissp} with the following diagonal metric tensor can reproduce the relation in \eqnref{eq:flux_force_relation}:
\begin{align}
    \metric^{*}_{\Vc{x}}=\diag\left[\frac{\Vc{\flux}^{+}(\Vc{x})-\Vc{\flux}^{-}(\Vc{x})}{\ln \Vc{\flux}^{+}(\Vc{x})- \ln \Vc{\flux}^{-}(\Vc{x})} \right]=\diag\left[\left(\frac{\flux^{+}_{e}(\Vc{x})-\flux^{-}_{e}(\Vc{x})}{\ln \flux^{+}_{e}(\Vc{x})- \ln \flux^{-}_{e}(\Vc{x})}\right)_{e} \right].
\end{align}
This type of quadratic dissipation function was proposed even earlier than the non-quadratic ones\cite{chow2012ArchRationalMechAnal,maas2011JournalofFunctionalAnalysis,mielke2013Calc.Var.} and has been investigated\cite{liero2013Philos.Trans.R.Soc.Math.Phys.Eng.Sci.,yoshimura2022,vanvu2022}. 
Its advantage is that the induced geometry is Riemannian, and thus the information geometric argument is not necessarily required.
In addition, this Riemannian geometric structure is analogous to the formal Riemannian geometric structure of FPE and other diffusion processes on continuous manifolds induced via the $L^{2}$-Wasserstein geometry\cite{otto2001Commun.PartialDiffer.Equ.,villani2003} (\fgref{fig:IG_Wasser}).
Thus, this quadratic dissipation function provides a consistent extension of these results for FPE and diffusion processes to graphs and hypergraphs.
Nevertheless, the doubly dual flat structure with the non-quadratic dissipation functions that we introduce is also another sound generalization of the formal Riemannian geometry of FPE, as we see in the next subsection. 

As long as we focus only on the trajectory of the generalized flow (\eqnref{eq:gFlow}), the difference does not matter because both induce the same dynamics. 
However, the Bregman divergence of the quadratic dissipation functions is not directly related to the rate function of the microscopic stochastic models, while that of nonquadratic ones in \eqnref{eq:CRN_dissip} is\cite{patterson2019MathPhysAnalGeom}.
Thus, if we consider projections of fluxes and forces in the edge spaces, different choices of dissipation functions lead to different projections. \TJK{In addition, for non-quadratic dissipation functions, the contributions of the kinetic parameters $\Vc{\kcoef}^{\pm}$ can be clearly separated into the force part $\Vc{K}$ and the activity part $\Vc{\kappa}$ in the case of CRN with the LMA kinetics (\eqnref{eq:jffMAK}). This separation enables a physical realization of the projected flux as we derive in the following section. }

\begin{figure}[h]
\includegraphics[bb=0 500 1024 768, width=\linewidth]{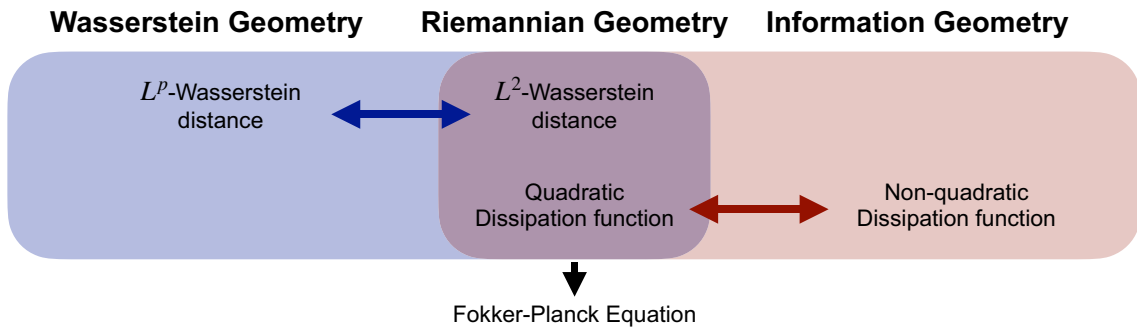}
\caption{A relationship between Wasserstein geometry and information geometry. The formal Riemannian geometric structure appears at their intersection. It should be noted that, while the regions of $L^{p}$-Wasserstein distance for $p\neq 2$ and nonquadratic dissipation function are not overlapping in this figure, this does not mean that they are unrelated. There may be undiscovered relations between these two regions.}
\label{fig:IG_Wasser}
\end{figure}

\subsection{Explicit forms of thermodynamic and dissipation functions for FPE}\label{sec:FPE_functions}
For FPE, the dualistic representation of the density $p(\Vc{r})$ and its logarithm $y(\Vc{r})=\ln p(\Vc{r})$ is also relevant. 
This duality is induced formally by the following thermodynamic functions\footnote{The base measure is omitted because this is just a formal one.}:
\begin{align}
    \Pfunc[p] & = \int [\ln p(\Vc{r}) - \Vc{1}]p(\Vc{r})\dd \Vc{r}, & 
    \Pfunc^{*}[y] &=  \int e^{y(\Vc{r})}\dd \Vc{r}, \label{eq:potential_FPE}
\end{align}
the Legendre transformations of which are
\begin{align}
    y(\Vc{r})&=\frac{\delta \Pfunc[p]}{\delta p} = \ln p(\Vc{r}), & p(\Vc{r})&=\frac{\delta \Pfunc^{*}[y]}{\delta y} = e^{y(\Vc{r})}.
\end{align}
The Bregman divergence becomes the KL divergence $\BD_{\X}[p\|p']=\int \dd \Vc{r}p(\Vc{r})\ln\frac{p(\Vc{r})}{p'(\Vc{r})}$.
In physics, the flux and force for FPE are defined conventionally as
\begin{align}
    \Vc{\flux}_{\mathrm{FP}}[p(\Vc{r})]&=D_{0}p(\Vc{r})\left\{\Vc{F}(\Vc{r})/D_{0}-\nabla \ln p(\Vc{r}) \right\}, \\
    \Vc{\tf}_{\mathrm{FP}}[p(\Vc{r})]&=D^{-1}_{0}\Vc{F}(\Vc{r})- \nabla \ln p(\Vc{r}) .
\end{align}
The dissipation functions associated with the force-flux relation above are 
\begin{align}
    \Dissp^{\mathrm{FP},*}_{\Vc{\frenecy}[p]}[\Vc{\tf}]&= \frac{1}{2}\int \Vc{\tf}(p(\Vc{r}))^{\Transpose}\metric^{*}_{p(\Vc{r})}\Vc{\tf}(p(\Vc{r}))\dd \Vc{r}, & 
    \Dissp^{\mathrm{FP}}_{\Vc{\frenecy}[p]}[\Vc{\flux}]&= \frac{1}{2}\int \Vc{\flux}(p(\Vc{r}))^{\Transpose}\metric_{p(\Vc{r})} \Vc{\flux}(p(\Vc{r}))\dd \Vc{r},
\end{align}
where $\metric^{*}_{p(\Vc{r})}\defeq \diag[ \Vc{\frenecy}[p(\Vc{r})] ] $, $\metric_{p(\Vc{r})}\defeq (\metric^{*}_{p(\Vc{r})})^{-1}$, and $\frenecy_{i}[p(\Vc{r})]=D_{0} p(\Vc{r})$.
Thus, the dissipation functions are formally quadratic and positive definite. 
If $\Vc{F}(\Vc{r})$ is a gradient of $U(\Vc{r})$ as $\Vc{F}(\Vc{r})=D_{0}\nabla U(\Vc{r})$, $\Vc{\tf}_{\mathrm{FP}}[p(\Vc{r})]=- \nabla \ln \frac{p(\Vc{r})}{\tilde{p}(\Vc{r})}$ holds where $\tilde{p}(\Vc{r})\defeq \exp[U(\Vc{r})]$.
Then, the dissipation functions, the bilinear pairing $\langle \Vc{\flux}_{\mathrm{FP}},\Vc{\tf}_{\mathrm{FP}}\rangle$, the EPR $\EPR_{\mathrm{FP}}$ in \eqnref{eq:EPR}, and the pEPR $\pEPR_{\mathrm{FP}}$ in \eqnref{eq:pEPR} formally consolidate into the same quantity:  
\begin{align}
     2\Dissp^{\mathrm{FP},*}_{\Vc{\frenecy}[p]}[\Vc{\tf}_{\mathrm{FP}}]=2\Dissp^{\mathrm{FP}}_{\Vc{\frenecy}[p]}[\Vc{\flux}_{\mathrm{FP}}]=\langle \Vc{\flux}_{\mathrm{FP}},\Vc{\tf}_{\mathrm{FP}}\rangle=\EPR_{\mathrm{FP}}=\pEPR_{\mathrm{FP}}= D_{0}\int p(\Vc{r})\left(\nabla_{\Vc{r}} \ln \frac{p(\Vc{r})}{\tilde{p}(\Vc{r})}\right)^{2}\mathrm{d}\Vc{r}.
\end{align}
The last quantity without $D_{0}$ is known as relative Fisher information\cite{villani2003,yamano2013J.Math.Phys.} and Hyv\"arinen divergence\cite{hyvarinen2005J.Mach.Learn.Res.,lods2015Entropy} between $p$ and $\tilde{p}$.
For $U(\Vc{f})=0$, it reduces to the Fisher information number $\FIN[p]$ in \eqnref{eq:FisherInfoNum}.
This consolidation is a source of confusion, because the same quantity for FPE or linear diffusion processes has different names in different contexts and in different disciplines.
However, they actually have different definitions, roles, and meanings, which becomes explicit in the information-geometric formulation.

\section{Orthogonal subspaces, dual foliation, and Pythagorean relation}\label{sec:orthogonality}
To investigate the behaviors and properties of the equilibrium (\eqnref{eq:gGF}) and nonequilibrium (\eqnref{eq:gNGF}) flow, especially its topological and algebraic constraints from the graph or hypergraph structure,  information geometry provides the ideal tools.
In particular, the four affine subspaces associated with the cycle and cocycle subspaces of the chain and cochain complexes (\fgref{fig:Subspaces}) form dual foliations via the Legendre transformation, whose geometric properties are captured by information geometry\cite{shima2007,amari2016}. 
It should be noted that the results of this section do not assume the specific forms of the thermodynamic and dissipation functions introduced in \secref{sec:explict_form}.

\begin{figure}[h]
\includegraphics[bb=0 200 1024 768, width=\linewidth]{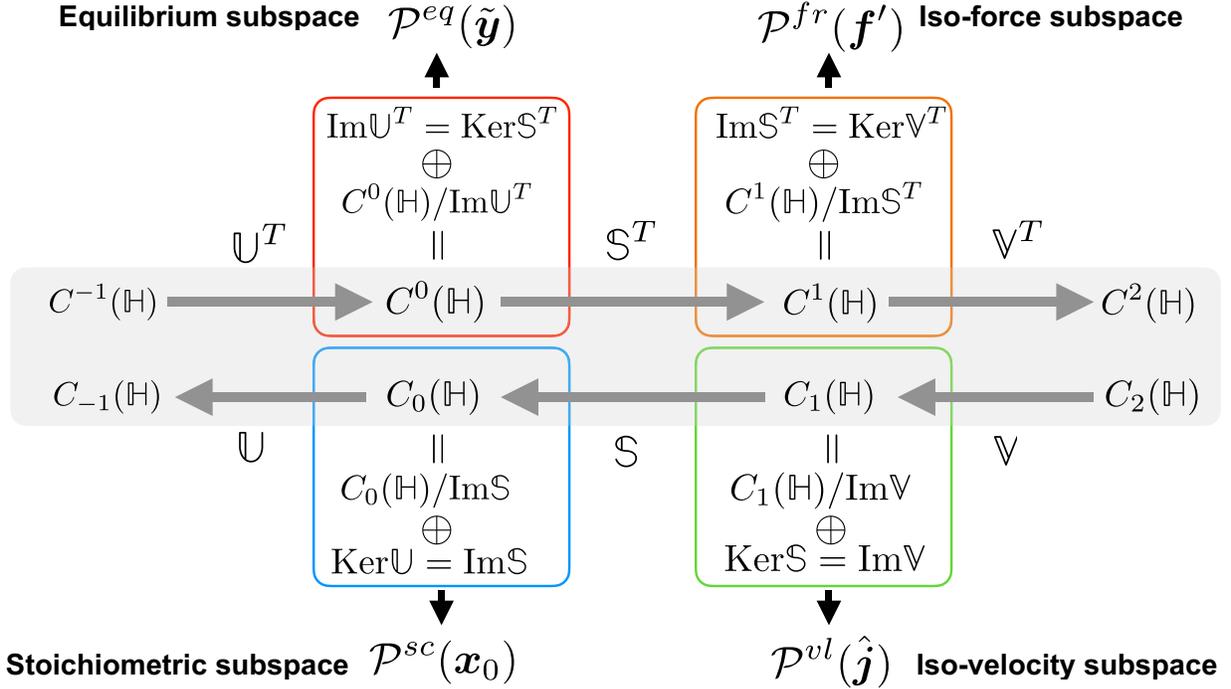}
\caption{Diagrammatic representation of the four subspaces and their relationship with the chain and cochain complexes of $\HGraph$.}
\label{fig:Subspaces}
\end{figure}

\subsection{Four affine subspaces}
Two families of orthogonally complement affine subspaces are naturally introduced on $\X$ and $\Y$, respectively, from the topological structure of graph and hypergraph, i.e., $\IncMatrix$ and $\HIncMatrix$. 

\begin{dfn}[Stoichiometric subspaces in $\X$]\label{dfn:stoiSubspace}
The stoichiometric subspaces are defined as\footnote{In CRN theory, a stoichiometric subspace is called stoichiometric compatibility class\cite{feinberg2019}.} 
\begin{align}
    \Polytope^{sc}(\Vc{x}_{0})&\defeq \{\Vc{x}\in \X | \Vc{x}-\Vc{x}_{0} \in \Img\HIncMatrix\}, \quad \Vc{x}_{0}\in \X \label{eq:stoichiometric_subspace}
\end{align}
where $\Vc{x}_{0}$ is a parameter to specify the position of the subspace (\fgref{fig:Subspaces}, lower left)\footnote{Because $\Polytope^{sc}(\Vc{x}_{0})$ is restricted within the positive orthant $\X$, $\Polytope^{sc}(\Vc{x}_{0})$ is a polyhedron. If bounded, it is called a polytope in discrete geometry and also in combinatorial optimization\cite{wolsey1999}. 
However, we abuse the word (affine) subspace for $\Polytope^{sc}(\Vc{x}_{0})$, and use polyhedron or polytope when we care about the boundary.}.
\end{dfn}

\begin{dfn}[Equilibrium subspaces in $\Y$]\label{dfn:EquiSubspace}
The equilibrium subspaces (\fgref{fig:Subspaces}, upper left) are defined as 
\begin{align}
    \Polytope^{eq}(\tilde{\Vc{y}})&\defeq\left\{\Vc{y}\in \Y| \Vc{y}-\tilde{\Vc{y}} \in \Ker \HIncMatrix^{\Transpose}\right\}, \quad \tilde{\Vc{y}}\in \Y.\label{eq:riumstoichiometric_subspace}
\end{align}
\end{dfn}
$\Polytope^{sc}(\Vc{x}_{0})$ and $\Polytope^{eq}(\tilde{\Vc{y}})$ are of orthogonal complement to each other: $\langle\Vc{x}-\Vc{x}_{0},\Vc{y}'-\tilde{\Vc{y}}\rangle=0$ for $\Vc{x}\in \Polytope^{sc}(\Vc{x}_{0})$ and $\Vc{y}' \in \Polytope^{eq}(\tilde{\Vc{y}})$\footnote{In this work, orthogonality always means the orthogonal complement in dual vector spaces except otherwise stated.}.
Because $\HIncMatrix$ and $\HIncMatrix^{\Transpose}$ are the discrete differentials, $\delta_{1}$ and $\delta^{0}$,  $\Polytope^{sc}(\Vc{x}_{0})$ and $\Polytope^{eq}(\tilde{\Vc{y}})$ are associated with the $0$-cycle and $0$-cocycle spaces, respectively.

Two other families of orthogonal-complement subspaces are introduced on $\Jspace_{\Vc{x}}$ and $\Fspace_{\Vc{x}}$. 
\begin{dfn}[Iso-velocity subspaces in $\Jspace_{\Vc{x}}$]\label{dfn:IsovelSubspace}
The iso-velocity subspaces (\fgref{fig:Subspaces}, lower right) are defined as
\begin{align}
\Polytope^{vl}(\hat{\Vc{\flux}})=\left\{\Vc{\flux}\in \Jspace_{\Vc{x}}|\Vc{\flux}-\hat{\Vc{\flux}} \in \Ker\stoiMatrix\right\}, \qquad \hat{\Vc{\flux}}\in \Jspace_{\Vc{x}}.
\end{align}
\end{dfn}

\begin{dfn}[Iso-force subspaces in $\Fspace_{\Vc{x}}$]\label{dfn:IsoforceSubspace}
The iso-external-force subspaces, iso-force subspaces in short, (\fgref{fig:Subspaces}, upper right)  are defined as  
\begin{align}
    \Polytope^{fr}(\Vc{\tf}')\defeq \left\{\Vc{\tf}\in \Fspace_{\Vc{x}}|\Vc{\tf}-\Vc{\tf}'\in\Img\stoiMatrix^{\Transpose} \right\}, \qquad \Vc{\tf}'\in \Fspace_{\Vc{x}}.
\end{align}
\end{dfn}

Again, from the correspondence of $\delta_{1}=\HIncMatrix$ and $\delta^{0}=\HIncMatrix^{\Transpose}$,  $\Polytope^{vl}(\hat{\Vc{\flux}})$ and $\Polytope^{fr}(\Vc{\tf}')$ are associated with the $1$-cycle and $1$-cocycle spaces, respectively.
We specifically call $\Polytope^{vl}(\Vc{0})$ and $\Polytope^{fr}(\Vc{0})$ zero-velocity subspace and equilibrium force subspace, respectively.

\subsection{Meaning of the subspaces}
All four subspaces are natural constituents in the theory of algebraic graph theory and homological algebra.
Here, we provide their meaning in terms of the dynamics on graphs and hypergraphs.

The stoichiometric and iso-velocity subspaces, $\Polytope^{sc}(\Vc{x}_{0})$ and $\Polytope^{vl}(\hat{\Vc{\flux}})$, are related by the continuity equation (\eqnref{CRN_rateEq}).
From the continuity equation, $\Polytope^{vl}(\hat{\Vc{\flux}})$ is the set of fluxes that induce the same velocity as a reference $\hat{\Vc{\flux}}$ does: $\Vc{\flux}\in \Polytope^{vl}(\hat{\Vc{\flux}}) \Longleftrightarrow \dot{\Vc{x}}=-\HIncMatrix\hat{\Vc{\flux}}=-\HIncMatrix\Vc{\flux} $.
Thereby, $\Polytope^{vl}(\hat{\Vc{\flux}})$ is parametrized as follows:
\begin{align}
\Polytope^{vl}(\dot{\Vc{x}})&= \{\Vc{\flux}\in \Jspace_{\Vc{x}}| - \HIncMatrix\Vc{\flux} =\dot{\Vc{x}}\},\quad \dot{\Vc{x}}\in \Img[\HIncMatrix]=\Ker[\consMatrix],
\end{align}
This subspace is crucial to characterize fluxes that can realize the same dynamics as the reference one.

The stoichiometric subspace $\Polytope^{sc}(\Vc{x}_{0})$ determines the subspace in which the dynamics are algebraically constrained via the topology of the underlying graph or hypergraph.
Because $\dot{\Vc{x}}=-\HIncMatrix\Vc{\flux}(\Vc{x}(t))$, for an initial state $\Vc{x}(0)=\Vc{x}_{0}$, $\Vc{x}(t)-\Vc{x}_{0} \in \Img[\HIncMatrix]$ should hold, meaning that $\Vc{x}(t)\in \Polytope^{sc}(\Vc{x}_{0})$.
Thus, $\Polytope^{sc}(\Vc{x}_{0})$ is the subspace in which the dynamics are restricted by the initial condition $\Vc{x}_{0}$.
$\Polytope^{sc}(\Vc{x}_{0})$ can also be represented parametrically by the quantities which are conserved by the dynamics. 
For any vector $\Vc{u} \in \Ker \HIncMatrix^{\Transpose}$, $\eta(t)\defeq \Vc{u}^{\Transpose}\Vc{x}(t)$ is constant over time:
\begin{align}
    \dot{\eta}(t)=\frac{\dd \Vc{u}^{\Transpose}\Vc{x}(t)}{\dt}=\Vc{u}^{\Transpose}\frac{\dd \Vc{x}(t)}{\dt}=-\Vc{u}^{\Transpose}\HIncMatrix\Vc{\flux}(\Vc{x})=0.
\end{align}
In \secref{sec:chain_cochain}, we defined a matrix $\consMatrix$ by a complete basis of $\Ker \HIncMatrix^{\Transpose}$ so that $\Img \consMatrix^{\Transpose}=\Ker \HIncMatrix^{\Transpose}$.
Using $\consMatrix$, the conserved quantities for a given initial condition $\Vc{x}_{0}$ are obtained as $\Vc{\eta}=\consMatrix\Vc{x}_{0}=\consMatrix\Vc{x}(t)$.
Because $\Img \consMatrix$ is isomorphic to $\chain_{-1}(\HGraph)$, the stoichiometric subspace is explicitly parametrized by the conserved quantities (an element of $\chain_{-1}(\HGraph)$):
\begin{align}
    \Polytope^{sc}(\Vc{\eta})&= \{\Vc{x}\in \X| \consMatrix\Vc{x} =\Vc{\eta}\},\quad \Vc{\eta}\in \chain_{-1}(\HGraph).
\end{align}
For rMJP, the conserved quantity is reduced to the conservation of probability $\Vc{1}\Vc{p}(t)=1$ and $\Polytope^{sc}(\Vc{p}_{0})$ becomes the probability simplex. 
Because $\Ker\IncMatrix^{\Transpose}$ determines the connected components of the graph $\Graph$ and we conventionally assume that the underlying graph is connected in rMJP, we only have the one-dimensional cokernel space and one conserved quantity, which is $\eta =1$.
Thus, the conservation of probability or, equivalently, the restriction of $\Vc{p}$ in the probability simplex is automatically guaranteed from the topological constraint of the dynamics if we start from the initial state satisfying $\Vc{1}^{\Transpose}\Vc{p}_{0}=1$.

The iso-force subspace $\Polytope^{fr}(\Vc{\tf}')$ and the equilibrium subspace $\Polytope^{eq}(\tilde{\Vc{y}})$ are related to the equilibrium and nonequilibrium force equations, \eqnref{eq:gradEq} and \eqnref{eq:nonequF}.
The equilibrium force defined in \eqnref{eq:gradEq} satisfies $\Vc{\tf}(\Vc{x})\in \Img \HIncMatrix^{\Transpose}=\Polytope^{fr}(\Vc{0})$.
Thus, the equilibrium-force subspace $\Polytope^{fr}(\Vc{0})$ is literally the set of equilibrium forces. 
$\Polytope^{fr}(\Vc{\tf}')$ is its shift by $\Vc{\tf}'\in \Fspace_{\Vc{x}}$.
Using $\cycMatrix$ defined in \secref{sec:chain_cochain}, we can represent $\Polytope^{fr}$ parametrically as
\begin{align}
    \Polytope^{fr}(\Vc{\zeta})&= \{\Vc{\tf}\in \Fspace_{\Vc{x}}| \cycMatrix^{\Transpose}\Vc{\tf} =\Vc{\zeta}\},\quad \Vc{\zeta}\in \chain^{2}(\HGraph)
\end{align}
because $\Fspace_{\Vc{x}}/\Img \HIncMatrix^{\Transpose}\cong \Fspace_{\Vc{x}}/\Ker \cycMatrix^{\Transpose}\cong\Img \cycMatrix^{\Transpose}\cong\chain^{2}(\HGraph)$.
Thus, $\Vc{\zeta}$ characterizes the type of nonequilibrium forces quotient by the equilibrium forces.

Finally, the equilibrium subspace $\Polytope^{eq}(\tilde{\Vc{y}})$ can also be regarded as the set of potentials $\Vc{y}$ that generate the same equilibrium force because any $\Vc{y}\in \Polytope^{eq}(\tilde{\Vc{y}})$ satisfies $\Vc{\tf}'=\HIncMatrix^{\Transpose}\Vc{y}=\HIncMatrix^{\Transpose}\tilde{\Vc{y}} \in \Polytope^{fr}(\Vc{0})$.
Due to this, the equilibrium subspace $\Polytope^{eq}$ is parameterized as
\begin{align}
    \Polytope^{eq}(\Vc{\tf}')&= \{\Vc{y}\in \Y| \HIncMatrix^{\Transpose}\Vc{y} =\Vc{\tf}'\},\quad \Vc{\tf}'\in \Img[\HIncMatrix^{\Transpose}]. \label{eq:Peq_param}
\end{align}
The parametric forms of the subspaces are summarized as follows:
\begin{align}
    \Polytope^{vl}(\dot{\Vc{x}})&= \{\Vc{\flux}\in \Jspace_{\Vc{x}}| - \HIncMatrix\Vc{\flux} =\dot{\Vc{x}}\}, & \dot{\Vc{x}} &\in \Img[\HIncMatrix]=\Ker[\consMatrix],\\
    \Polytope^{sc}(\Vc{\eta})&= \{\Vc{x}\in \X| \consMatrix\Vc{x} =\Vc{\eta}\},& \Vc{\eta} &\in \Img[\consMatrix]=\chain_{-1}(\HGraph),\\
    \Polytope^{fr}(\Vc{\zeta})&= \{\Vc{\tf}\in \Fspace_{\Vc{x}}| \cycMatrix^{\Transpose}\Vc{\tf} =\Vc{\zeta}\}, &  \Vc{\zeta}&\in \Img[\cycMatrix^{\Transpose}]=\chain^{2}(\HGraph),\\
    \Polytope^{eq}(\Vc{\tf}')&= \{\Vc{y}\in \Y| \HIncMatrix^{\Transpose}\Vc{y} =\Vc{\tf}'\}, &  \Vc{\tf}' &\in \Img[\HIncMatrix^{\Transpose}]=\Ker[\cycMatrix^{\Transpose}].
\end{align}
From these subspaces, we can obtain dual foliations on the vertex and edge spaces.

\subsection{Dual Manifold, Dual Foliation, and Pythagorean relation in vertex spaces}
For the subspaces $\Polytope^{sc}$ and $\Polytope^{eq}$ in the density and potential spaces, we introduce their Legendre transformation via the thermodynamic functions, $\Pfunc(\Vc{x})$ and $\Pfunc^{*}(\Vc{y})$, which form the dual foliation with the subspaces of orthogonal complement (\fgref{fig:Foliations}, left).
\begin{figure}[h]
\includegraphics[bb=0 30 1024 768, width=\linewidth]{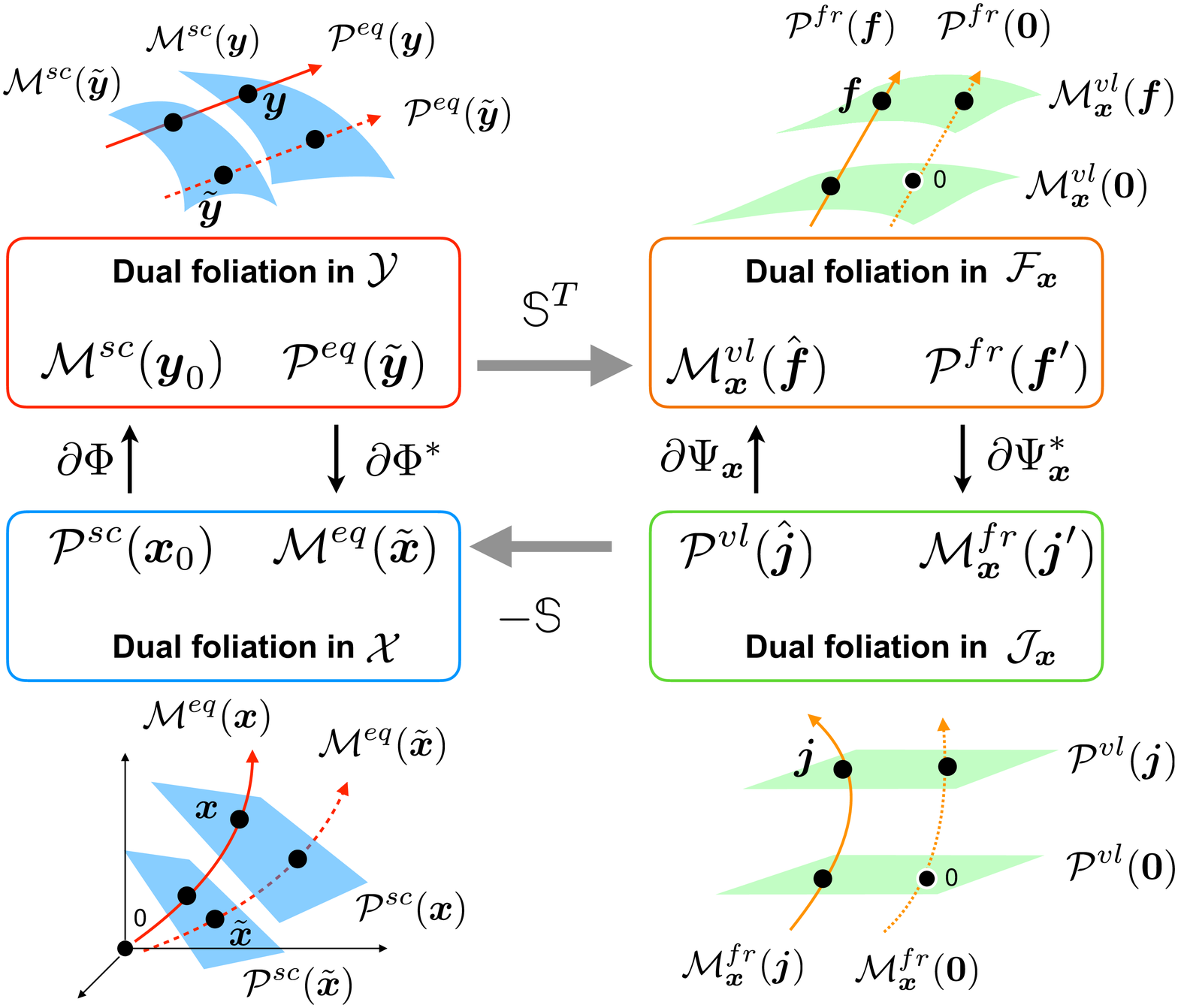}
\caption{Diagrammatic representation of the dual foliations in $\X$, $\Y$, $\Jspace_{\Vc{x}}$, and $\Fspace_{\Vc{x}}$ spaces.}
\label{fig:Foliations}
\end{figure}

\begin{dfn}[Stoichiometric manifold in $\Y$ and equilibrium manifold in $\X$]\label{dfn:StoiEquManifolds}
The stoichiometric and equilibrium manifolds (\fgref{fig:Foliations}, left) are defined respectively as 
\begin{align}
    \Variety^{sc}(\Vc{y}_{0})&\defeq \partial \Pfunc[\Polytope^{sc}(\Vc{x}_{0})]\subset \Y, \quad \Vc{y}_{0}=\partial \Pfunc(\Vc{x}_{0}),\\
    \Variety^{eq}(\tilde{\Vc{x}}) & \defeq \partial \Pfunc^{*}[\Polytope^{eq}(\tilde{\Vc{y}})] \subset \X, \quad \tilde{\Vc{x}}=\partial \Pfunc^{*}(\tilde{\Vc{y}}). \label{eq:Manifold_eq}
\end{align}
\end{dfn}
\begin{lmm}[Dual foliations in density and potential spaces\cite{sughiyama2022Phys.Rev.Research}]\label{lmm:DualFoliationVertex}
$\Polytope^{sc}$ and $\Variety^{eq}$ are foliations of $\X$, and $\Variety^{sc}$ and $\Polytope^{eq}$ are foliations of $\Y$.
For each pair of $(x_{0},\tilde{x})$, the intersection of $\Polytope^{sc}(\Vc{x}_{0})$ and $\Variety^{eq}(\tilde{\Vc{x}})$ is unique and transversal. The same applies to $\Variety^{sc}(\Vc{y}_{0})$ and $\Polytope^{eq}(\tilde{\Vc{y}})$.
Then, $(\Polytope^{sc}, \Variety^{eq})$ and $(\Variety^{sc}, \Polytope^{eq})$ form dual foliations (nonlinear coordinate systems) in $\X$ and $\Y$ spaces, respectively.
\end{lmm}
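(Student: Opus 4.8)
The plan is to reduce the entire statement to a single variational principle --- the Bregman projection onto an affine subspace --- and to extract uniqueness, transversality, and the dual-foliation structure from the strict convexity and essential smoothness built into \defref{dfn:thermofunc}. \textbf{First, the linear-algebraic skeleton.} With respect to the standard bilinear pairing on $\Real^{N_{\molX}}$ one has $\Real^{N_{\molX}}=\Img\HIncMatrix\oplus\Ker\HIncMatrix^{\Transpose}$, so these two subspaces are mutual orthogonal complements and $\Dim\Img\HIncMatrix+\Dim\Ker\HIncMatrix^{\Transpose}=N_{\molX}$. Hence $\{\Polytope^{sc}(\Vc{x}_{0})\}_{\Vc{x}_{0}}$ is a partition of $\X$ into affine slices with common direction $\Img\HIncMatrix$ (each leaf is a coset of $\Img\HIncMatrix$ intersected with the convex set $\X$, hence connected), and $\{\Polytope^{eq}(\tilde{\Vc{y}})\}_{\tilde{\Vc{y}}}$ is a partition of $\Y=\Real^{N_{\molX}}$ into affine cosets of direction $\Ker\HIncMatrix^{\Transpose}$. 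Because $\partial\Pfunc:\X\to\Y$ is a diffeomorphism with $\partial\Pfunc^{*}=\partial\Pfunc^{-1}$ (guaranteed by the two conditions of \defref{dfn:thermofunc}), pushing these partitions forward through $\partial\Pfunc^{*}$ and $\partial\Pfunc$ produces the smooth foliations $\{\Variety^{eq}\}$ of $\X$ and $\{\Variety^{sc}\}$ of $\Y$; so only the pairwise intersections remain to be analysed.

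\textbf{Existence and uniqueness of the intersection point.} Fix $\Vc{x}_{0},\tilde{\Vc{x}}\in\X$ and put $\tilde{\Vc{y}}=\partial\Pfunc(\tilde{\Vc{x}})$. I claim $\Polytope^{sc}(\Vc{x}_{0})\cap\Variety^{eq}(\tilde{\Vc{x}})$ is exactly the singleton $\{\Vc{x}^{*}\}$, where $\Vc{x}^{*}$ minimises $g(\Vc{x})\defeq\BD^{\X}_{\Pfunc}[\Vc{x}\|\tilde{\Vc{x}}]$ over $\Vc{x}\in\Polytope^{sc}(\Vc{x}_{0})$. Strict convexity of $g$ (\defref{dfn:BregmanDiv}) on the affine set $\Polytope^{sc}(\Vc{x}_{0})$ gives uniqueness of a minimiser once one exists; existence is the only non-formal point and is where \defref{dfn:thermofunc} is used in full. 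Writing $g(\Vc{x})=\Pfunc(\Vc{x})+\Pfunc^{*}(\tilde{\Vc{y}})-\langle\Vc{x},\tilde{\Vc{y}}\rangle$, condition (1) ($\Y=\Real^{N_{\molX}}$, equivalently $\Pfunc$ superlinear) makes $g$ coercive along $\Polytope^{sc}(\Vc{x}_{0})$ as $\|\Vc{x}\|\to\infty$, so its sublevel sets there are bounded, while the steepness condition (2), \eqnref{eq:thermofuncCond2}, forces the one-sided derivative of $g$ along any segment running from $\partial\X$ into the interior to equal $-\infty$, so no minimising sequence can limit onto $\partial\X$; hence $\Vc{x}^{*}$ exists in the relative interior. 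The stationarity condition at $\Vc{x}^{*}$, obtained by differentiating $g(\Vc{x}_{0}+\HIncMatrix\Vc{\flux})$ with respect to $\Vc{\flux}$, is $\HIncMatrix^{\Transpose}\bigl(\partial\Pfunc(\Vc{x}^{*})-\tilde{\Vc{y}}\bigr)=\Vc{0}$, i.e.\ $\Vc{y}^{*}-\tilde{\Vc{y}}\in\Ker\HIncMatrix^{\Transpose}$ with $\Vc{y}^{*}=\partial\Pfunc(\Vc{x}^{*})$, which is precisely $\Vc{x}^{*}\in\partial\Pfunc^{*}[\Polytope^{eq}(\tilde{\Vc{y}})]=\Variety^{eq}(\tilde{\Vc{x}})$. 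Conversely any point of $\Polytope^{sc}(\Vc{x}_{0})\cap\Variety^{eq}(\tilde{\Vc{x}})$ satisfies this first-order condition and therefore, by convexity, equals $\Vc{x}^{*}$; in particular the intersection is nonempty, which for the equilibrium flow is the existence of a detailed-balanced state in each $\Polytope^{sc}(\Vc{x}_{0})$ and settles the $\Manifold^{\mathrm{ST}}\neq\emptyset$ point left open after \propref{prop:gradflow}.

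\textbf{Transversality and the dual statement.} At $\Vc{x}^{*}$ the tangent space of the leaf of $\Polytope^{sc}$ is $\Img\HIncMatrix$, and that of the leaf of $\Variety^{eq}$ is the pushforward $\FM^{*}_{\Vc{y}^{*}}\bigl(\Ker\HIncMatrix^{\Transpose}\bigr)=\FM_{\Vc{x}^{*}}^{-1}\bigl(\Ker\HIncMatrix^{\Transpose}\bigr)$, $\FM_{\Vc{x}^{*}}$ being the positive-definite primal Hessian (\defref{dfn:HessianMatrices}). By the dimension count of the first step it suffices to see these meet only in $\Vc{0}$: if $\Vc{w}=\HIncMatrix\Vc{a}$ and $\FM_{\Vc{x}^{*}}\Vc{w}\in\Ker\HIncMatrix^{\Transpose}$, then $\langle\Vc{w},\FM_{\Vc{x}^{*}}\Vc{w}\rangle=\langle\Vc{a},\HIncMatrix^{\Transpose}\FM_{\Vc{x}^{*}}\Vc{w}\rangle=0$, forcing $\Vc{w}=\Vc{0}$; hence $\Img\HIncMatrix\oplus\FM_{\Vc{x}^{*}}^{-1}(\Ker\HIncMatrix^{\Transpose})=\Real^{N_{\molX}}$ and the intersection is transversal. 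Thus each pair of leaves $(\Polytope^{sc},\Variety^{eq})$ meets in exactly one transversal point, i.e.\ they form dual (mixed-coordinate) foliations of $\X$ in the Hessian-geometry sense~\cite{amari2016,shima2007}; the assertion for $(\Variety^{sc},\Polytope^{eq})$ on $\Y$ then follows either by repeating the argument with $(\Pfunc,\X)$ and $(\Pfunc^{*},\Y)$ interchanged, or by transporting the $\X$-side picture through $\partial\Pfunc$, under which $\Polytope^{sc}(\Vc{x}_{0})\mapsto\Variety^{sc}(\Vc{y}_{0})$ and $\Variety^{eq}(\tilde{\Vc{x}})\mapsto\Polytope^{eq}(\tilde{\Vc{y}})$. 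The one delicate ingredient throughout is the existence step, which genuinely needs both the full-range and the steepness conditions of \defref{dfn:thermofunc}; everything else is strict convexity and positive-definiteness.
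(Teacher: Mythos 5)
Your proof is correct and follows essentially the same route as the paper's: both reduce the intersection to the Bregman projection $\arg\min_{\Vc{x}\in\Polytope^{sc}(\Vc{x}_{0})}\BD^{\X}_{\Pfunc}[\Vc{x}\|\tilde{\Vc{x}}]$, use strict convexity for uniqueness, the steepness condition \eqnref{eq:thermofuncCond2} to exclude boundary minimisers, stationarity to identify the minimiser with $\Variety^{eq}(\tilde{\Vc{x}})$, and the Legendre bijection for the $\Y$-side statement. Your two refinements --- making the coercivity/existence of the minimiser explicit via condition (1) of \defref{dfn:thermofunc} (which the paper's proof leaves implicit, and which matters since $\Polytope^{sc}(\Vc{x}_{0})$ need not be bounded), and proving transversality by the tangent-space computation $\Img\HIncMatrix\cap\FM_{\Vc{x}^{*}}^{-1}(\Ker\HIncMatrix^{\Transpose})=\{\Vc{0}\}$ rather than the paper's Pythagorean orthogonality plus dimension count --- are welcome but do not change the underlying argument.
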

\begin{proof}
The polyhedron $\Polytope^{sc}(\Vc{x}_{0})$ and the affine subspace $\Polytope^{eq}(\tilde{\Vc{y}})$ can cover the whole $\X$ and $\Y$ by changing $\Vc{x}_{0}$ and $\tilde{\Vc{y}}$, respectively.
Similarly,  $\Variety^{eq}(\tilde{\Vc{x}})$ and $\Variety^{sc}(\Vc{y}_{0})$ can cover the whole $\X$ and $\Y$ because Legendre transformations by the thermodynamic functions are one-to-one between $\X$ and $\Y$.
Consider the intersection of $\Polytope^{sc}(\Vc{x}_{0})$ and  $\Variety^{eq}(\tilde{\Vc{x}})$ in $\X$ space. 
The condition that $\Polytope^{sc}(\Vc{x}_{0}) \cap \Variety^{eq}(\tilde{\Vc{x}}) \neq \emptyset$ is related to the existence of $\Vc{x}^{\dagger}$ defined by the following convex optimization problem:
\begin{align}
\Vc{x}^{\dagger}\defeq \arg\min_{\Vc{x}\in \overline{\Polytope^{sc}(\Vc{x}_{0})}} \BD^{\X}_{\Pfunc}[\Vc{x}\|\tilde{\Vc{x}}]. \label{eq:opt_prob_x}
\end{align}
Because of the properties of $\Pfunc(\Vc{x})$, $\BD^{\X}_{\Pfunc}[\Vc{x}\|\tilde{\Vc{x}}]$ and its restriction to $\overline{\Polytope^{sc}(\Vc{x}_{0})}$ are strictly convex with respect to $\Vc{x}$.
Thus, $\Vc{x}^{\dagger}$ is unique and either satisfies the stationarity condition $\Vc{x}^{\dagger} \in \Polytope^{sc}(\Vc{x}_{0}) \cap \Variety^{eq}(\tilde{\Vc{x}})$ if $\Vc{x}^{\dagger}\in \Polytope^{sc}(\Vc{x}_{0})$  or locates on the boundary $\partial \X$ if $\Vc{x}^{\dagger}\not\in \Polytope^{sc}(\Vc{x}_{0})$, where we used $\HIncMatrix^{\Transpose}\frac{\partial \BD^{\X}[\Vc{x}\|\tilde{\Vc{x}}]}{\partial \Vc{x}}  =\Vc{0} \Leftrightarrow \HIncMatrix^{\Transpose}(\Vc{y}-\tilde{\Vc{y}}) =\Vc{0} \Leftrightarrow \Vc{y}\in \Polytope^{eq}(\tilde{\Vc{y}})\Leftrightarrow \Vc{x}\in \Manifold^{eq}(\tilde{\Vc{x}})$.
Let $\Vc{x}_{bd}\in \partial \X$ and $\Vc{x}_{in}\in \X$ be arbitrary points on the boundary and interior of $\X$.
From the condition \eqnref{eq:thermofuncCond2} of the thermodynamic function, 
for $\Vc{x}_{\lambda}\defeq \lambda \Vc{x}_{in} + (1-\lambda)\Vc{x}_{bd}$ where $\lambda \in [0,1]$, 
\begin{align}
\lim_{\lambda \to +0}\frac{\dd \BD^{\X}[\Vc{x}_{\lambda}\|\tilde{\Vc{x}}]}{\dd \lambda}=\lim_{\lambda \to +0}\left[\frac{\dd \Pfunc(\Vc{x}_{\lambda})}{\dd \lambda} - \left\langle\tilde{\Vc{y}}, \frac{\dd \Vc{x}_{\lambda}}{\dd \lambda} \right\rangle \right] = -\infty.
\end{align}
Thus,  $\Vc{x}^{\dagger}\not\in \X$ is excluded, and the intersection exists, i.e., $\Vc{x}^{\dagger}\in \Polytope^{sc}(\Vc{x}_{0}) \cap \Variety^{eq}(\tilde{\Vc{x}})$.
The intersection $\Vc{x}^{\dagger}$ is unique and transversal because $\langle\Vc{x}_{sc}-\Vc{x}^{\dagger}, \Vc{y}_{eq}-\Vc{y}^{\dagger}\rangle=0$ holds for any $\Vc{x}_{sc}\in \Polytope^{sc}(\Vc{x}_{0})$ and $\Vc{x}_{eq} \in \Variety^{eq}(\tilde{\Vc{x}})$ and the dimensions of $\Polytope^{sc}(\Vc{x}_{0})$ and $\Variety^{eq}(\tilde{\Vc{x}})$ are complementary because $\Polytope^{sc}(\Vc{x}_{0})$ and $\Polytope^{eq}(\tilde{\Vc{y}})$ are of orthogonal complement (see also the proof in \cite{kobayashi2022Phys.Rev.Researcha}).
As a result, $\Vc{x}^{\dagger} \in \Polytope^{sc}(\Vc{x}_{0}) \cap \Variety^{eq}(\tilde{\Vc{x}})$ always exists, and $(\Polytope^{sc}, \Variety^{eq})$ forms a dual foliation in $\X$.
Also $(\Variety^{sc}, \Polytope^{eq})$ does in $\Y$ because they are bijective Legendre duals of $(\Polytope^{sc}, \Variety^{eq})$.
\end{proof}
This result is reduced to Birch's theorem\cite{craciun2009JournalofSymbolicComputation,pachter2005} and the seminal result by Horn and Jackson\cite{horn1972Arch.RationalMech.Anal.} when the thermodynamic function is the generalized KL divergence.

With the dual foliation, we can consider the generalized Pythagorean relations and orthogonal decomposition.
For any three points satisfying $\Vc{x} \in \Polytope^{sc}(\Vc{x}_{0})$, $\Vc{x}_{q} \in \Variety^{eq}(\tilde{\Vc{x}})$, and $\Vc{x}^{\dagger} = \Polytope^{sc}(\Vc{x}_{0}) \cap \Variety^{eq}(\tilde{\Vc{x}})$\footnote{We abuse the notation $\Vc{x}^{\dagger} = \Polytope^{sc}(\Vc{x}_{0}) \cap \Variety^{eq}(\tilde{\Vc{x}})$ because the intersection $\Polytope^{sc}(\Vc{x}_{0}) \cap \Variety^{eq}(\tilde{\Vc{x}})$ is a unique point.}, we have the generalized Pythagorean relation:
\begin{align}
\BD^{\X}[\Vc{x}\|\Vc{x}_{q}]=\BD^{\X}[\Vc{x}\|\Vc{x}^{\dagger}] + \BD^{\X}[\Vc{x}^{\dagger}\|\Vc{x}_{q}]. \label{eq:gPR_X}
\end{align}

In $\Y$ space, we also have the dual version of the relations as 
\begin{align}
\BD^{\Y}[\Vc{y}_{q}\|\Vc{y}]=\BD^{\Y}[\Vc{y}_{q}\|\Vc{y}^{\dagger}] + \BD^{\Y}[\Vc{y}^{\dagger}\|\Vc{y}].
\end{align}
These relations are used to characterize the steady state of equilibrium and nonequilibrium flow geometrically and also variationally.

\begin{rmk}[Interpretation in terms of statistical inference]
The meaning of the equilibrium manifold in statistics can be clarified more explicitly by considering the specific form of thermodynamic function (\eqnref{eq:potential_p}).
For this thermodynamic function, the equilibrium manifold $\Variety^{eq}(\tilde{\Vc{p}})$ is represented as 
\begin{align}
    \Variety^{eq}(\tilde{\Vc{p}})&=\left\{\Vc{p}\in \X| \ln\Vc{p}-\ln\tilde{\Vc{p}} \in \Ker \HIncMatrix^{\Transpose}\right\}
    =\left\{\Vc{p}\in \X| \Vc{p}=\tilde{\Vc{p}}\circ \exp\left[\consMatrix^{\Transpose}\Vc{\eta}^{*}\right], \Vc{\eta}^{*}\in \chain^{-1}(\HGraph) \right\} \label{eq:exp_family}
\end{align}
where we use the fact $\HIncMatrix^{\Transpose}\consMatrix^{\Transpose}=0$. 
Thus, $\Variety^{eq}(\tilde{\Vc{p}})$ is an exponential family with algebraic constraints via $\consMatrix^{\Transpose}$.
In contrast, $\Polytope^{sc}(\Vc{\eta})$ can be regarded as the data manifold, which constrains $\Vc{p}$ by $\Vc{\eta}=\consMatrix\Vc{p}$, because $\consMatrix\Vc{p}$ can be interpreted as expectation of observables $\{\Vc{u}_{\ell}\}_{\ell\in [1,N_{\cons}]}$. 
Thus, the intersection $\Vc{p}^{\dagger}=\Polytope^{sc}(\Vc{\eta})\cap \Variety^{eq}(\tilde{\Vc{p}})$ is the maximum likelihood estimator. 
The exponential family with linear algebraic constraints as in \eqnref{eq:exp_family} appears in algebraic statistics where $\consMatrix$ is sometimes called the design matrix\cite{pachter2005}\footnote{In algebraic statistics, $\consMatrix$ is explicitly given as constraints of a statistical model. In the dynamics on graphs and hypergraphs, $\HIncMatrix$ is explicitly given, and $\consMatrix$ is implicitly defined as a complete basis of $\Ker \HIncMatrix^{\Transpose}$. As a result, their connection is not apparently obvious.}.
\end{rmk}

\subsection{Dual manifold, dual foliation in edge spaces and generalized Helmholtz-Hodge-Kodaira decomposition}
For the edge spaces, we similarly introduce the iso-velocity and iso-force manifolds, which are the duals of $\Polytope^{fr}(\Vc{\tf}')$ and $\Polytope^{vl}(\hat{\Vc{\flux}})$, respectively, via the Legendre transformations by the dissipation functions, $\Dissp_{\Vc{x}}(\Vc{\flux})$ and $\Dissp^{*}_{\Vc{x}}(\Vc{\tf})$ (\fgref{fig:Foliations}, right):
\begin{dfn}[Iso-velocity manifold in $\Fspace_{\Vc{x}}$ and iso-force manifold in $\Jspace_{\Vc{x}}$]\label{dfn:IsoVelForceManifolds}
The iso-velocity and iso-force manifolds (\fgref{fig:Foliations}, right) are defined as follows:
\begin{align}
    \Variety^{vl}_{\Vc{x}}(\hat{\Vc{\tf}})&\defeq \partial \Dissp_{\Vc{x}}[\Polytope^{vl}(\hat{\Vc{\flux}})]\subset \Fspace_{\Vc{x}}, & \hat{\Vc{\tf}}&=\partial \Dissp_{\Vc{x}}(\hat{\Vc{\flux}}),\\
    \Variety^{fr}_{\Vc{x}}(\Vc{\flux}') & \defeq \partial \Dissp^{*}_{\Vc{x}}[\Polytope^{fr}(\Vc{\tf}')] \subset \Jspace_{\Vc{x}}, & \Vc{\flux}' &=\partial \Dissp^{*}_{\Vc{x}}(\Vc{\tf}').
\end{align}
\end{dfn}
It should be noted that $\Variety^{vl}_{\Vc{x}}(\hat{\Vc{\tf}})$ and $\Variety^{fr}_{\Vc{x}}(\Vc{\flux}')$ are dependent on $\Vc{x}$ via the $\Vc{x}$ dependence of the dissipation functions.
We obtain the intersections in $\Jspace_{\Vc{x}}$ and $\Fspace_{\Vc{x}}$:
\begin{align}
    \Vc{\flux}^{\dagger} &\defeq \Polytope^{vl}(\hat{\Vc{\flux}}) \cap \Variety^{fr}_{\Vc{x}}(\Vc{\flux}'), & 
    \Vc{\tf}^{\dagger} &\defeq \Variety^{vl}_{\Vc{x}}(\hat{\Vc{\tf}}) \cap \Polytope^{fr}(\Vc{\tf}'), \label{eq:intersections_edge}
\end{align}
which are also unique and transversal for each $\Vc{x}\in \X$ because $\Jspace_{\Vc{x}}$ and $\Fspace_{\Vc{x}}$ are whole vector spaces and the Legendre transformations are one-to-one.
Thus, similarly to the case of vertex space, we have the dual foliation:
\begin{lmm}[Dual foliations in edge space\cite{kobayashi2022Phys.Rev.Researcha}]
For each $\Vc{x}\in X$, $(\Polytope^{vl}, \Variety^{fr}_{\Vc{x}})$ and $(\Variety^{vl}_{\Vc{x}}, \Polytope^{fr})$ form dual foliations in $\Jspace_{\Vc{x}}$ and $\Fspace_{\Vc{x}}$ spaces, respectively.
\end{lmm}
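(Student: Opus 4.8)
\emph{Proof plan.} The argument runs in parallel to that of \lmmref{lmm:DualFoliationVertex}, and is in fact slightly simpler because $\Jspace_{\Vc{x}}$ and $\Fspace_{\Vc{x}}$ are the full vector spaces $\Real^{N_{\edge}}$ with no boundary that needs to be controlled. First I would record the underlying linear algebra: $\Polytope^{vl}(\hat{\Vc{\flux}})$ is a translate of $\Ker\HIncMatrix$ and $\Polytope^{fr}(\Vc{\tf}')$ a translate of $\Img\HIncMatrix^{\Transpose}$, and since $(\Ker\HIncMatrix)^{\perp}=\Img\HIncMatrix^{\Transpose}$ in $\Real^{N_{\edge}}$, their linear directions are mutually orthogonal complements with complementary dimensions $\Dim\Ker\HIncMatrix$ and $\Rank\HIncMatrix$. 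Letting $\hat{\Vc{\flux}}$ and $\Vc{\tf}'$ range over $\Jspace_{\Vc{x}}$ and $\Fspace_{\Vc{x}}$, these affine families cover the whole spaces; and since the Legendre transformations $\partial\Dissp_{\Vc{x}}$ and $\partial\Dissp^{*}_{\Vc{x}}$ of \eqnref{eq:LegendreTF} are mutually inverse bijections of $\Real^{N_{\edge}}$ onto itself (strict convexity together with $1$-coercivity of the dissipation functions), the image families $\Variety^{fr}_{\Vc{x}}$ and $\Variety^{vl}_{\Vc{x}}$ likewise cover $\Jspace_{\Vc{x}}$ and $\Fspace_{\Vc{x}}$.

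The core step is existence and uniqueness of the intersection $\Vc{\flux}^{\dagger}\defeq\Polytope^{vl}(\hat{\Vc{\flux}})\cap\Variety^{fr}_{\Vc{x}}(\Vc{\flux}')$ for each fixed $\Vc{x}\in\X$. I would introduce the convex problem
\begin{align}
\Vc{\flux}^{\dagger}\defeq \arg\min_{\Vc{\flux}\in\Polytope^{vl}(\hat{\Vc{\flux}})}\BD^{\Jspace}_{\Vc{x}}[\Vc{\flux}\|\Vc{\flux}'], \qquad \Vc{\flux}'=\partial\Dissp^{*}_{\Vc{x}}(\Vc{\tf}').
\end{align}
The objective is strictly convex in $\Vc{\flux}$, being a Bregman divergence generated by the strictly convex $\Dissp_{\Vc{x}}$; and $1$-coercivity of $\Dissp_{\Vc{x}}$ makes $\BD^{\Jspace}_{\Vc{x}}[\Vc{\flux}\|\Vc{\flux}']\to+\infty$ as $\|\Vc{\flux}\|\to\infty$, so over the closed affine subspace $\Polytope^{vl}(\hat{\Vc{\flux}})$ the minimizer exists and is unique. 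Writing the first-order stationarity condition on $\Polytope^{vl}(\hat{\Vc{\flux}})$ gives $\partial\Dissp_{\Vc{x}}(\Vc{\flux}^{\dagger})-\Vc{\tf}'\perp\Ker\HIncMatrix$, i.e.\ $\partial\Dissp_{\Vc{x}}(\Vc{\flux}^{\dagger})\in\Vc{\tf}'+\Img\HIncMatrix^{\Transpose}=\Polytope^{fr}(\Vc{\tf}')$, hence $\Vc{\flux}^{\dagger}=\partial\Dissp^{*}_{\Vc{x}}[\partial\Dissp_{\Vc{x}}(\Vc{\flux}^{\dagger})]\in\partial\Dissp^{*}_{\Vc{x}}[\Polytope^{fr}(\Vc{\tf}')]=\Variety^{fr}_{\Vc{x}}(\Vc{\flux}')$; so the intersection is nonempty. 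Conversely, every point of $\Polytope^{vl}(\hat{\Vc{\flux}})\cap\Variety^{fr}_{\Vc{x}}(\Vc{\flux}')$ satisfies precisely this stationarity condition and therefore coincides with the unique minimizer $\Vc{\flux}^{\dagger}$, so the intersection is a single point. Transversality at $\Vc{\flux}^{\dagger}$ then follows from the complementary-dimension count above combined with $\langle\Vc{\flux}_{vl}-\Vc{\flux}^{\dagger},\Vc{\tf}_{fr}-\Vc{\tf}^{\dagger}\rangle=0$ for all $\Vc{\flux}_{vl}\in\Polytope^{vl}(\hat{\Vc{\flux}})$ and $\Vc{\tf}_{fr}\in\Polytope^{fr}(\Vc{\tf}')$, exactly as in the vertex case. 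This shows $(\Polytope^{vl},\Variety^{fr}_{\Vc{x}})$ is a dual foliation of $\Jspace_{\Vc{x}}$; the dual statement for $(\Variety^{vl}_{\Vc{x}},\Polytope^{fr})$ in $\Fspace_{\Vc{x}}$ is obtained either by transporting everything through the bijection $\partial\Dissp_{\Vc{x}}$ or by running the same minimization of $\BD^{\Fspace}_{\Vc{x}}[\cdot\|\hat{\Vc{\tf}}]$ over $\Polytope^{fr}(\Vc{\tf}')$.

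The main obstacle I expect is the existence part --- guaranteeing that the infimum of $\BD^{\Jspace}_{\Vc{x}}$ over the affine subspace $\Polytope^{vl}(\hat{\Vc{\flux}})$ is attained rather than escaping to infinity. This is exactly where the $1$-coercivity condition \eqnref{eq:DissipCond3} (and its inheritance by $\Dissp_{\Vc{x}}$) is indispensable; it replaces the boundary-blowup argument \eqnref{eq:thermofuncCond2} used in the vertex case. The remaining ingredients --- orthogonality and dimension counts, uniqueness from strict convexity, and the stationarity computation --- are routine.
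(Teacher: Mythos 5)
Your proposal is correct and follows essentially the same route as the paper, which justifies the lemma by appeal to the vertex-space argument (\lmmref{lmm:DualFoliationVertex}), noting only that uniqueness and transversality of the intersections hold because $\Jspace_{\Vc{x}}$ and $\Fspace_{\Vc{x}}$ are whole vector spaces and the Legendre transformations are bijections. Your version is in fact more explicit than the paper's, correctly identifying $1$-coercivity of the dissipation functions as the ingredient that replaces the boundary blow-up condition \eqnref{eq:thermofuncCond2} used in the vertex case.
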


For $\hat{\Vc{\flux}}$ and $\Vc{\tf}'$, and their intersections $\Vc{\flux}^{\dagger}$ and $\Vc{\tf}^{\dagger}$ defined in \eqnref{eq:intersections_edge}, $\langle\hat{\Vc{\flux}}-\Vc{\flux}^{\dagger},\Vc{\tf}^{\dagger}-\Vc{\tf}' \rangle=0$ holds.
Thus, we have the generalized Pythagorean relations:
\begin{align}
    \BD^{\Jspace}_{\Vc{x}}[\hat{\Vc{\flux}}\|\Vc{\flux}'] &=\BD^{\Jspace}_{\Vc{x}}[\hat{\Vc{\flux}}\|\Vc{\flux}^{\dagger}]+\BD^{\Jspace}_{\Vc{x}}[\Vc{\flux}^{\dagger}\|\Vc{\flux}'], &
    \BD^{\Fspace}_{\Vc{x}}[\Vc{\tf}'\|\hat{\Vc{\tf}}] &=\BD^{\Fspace}_{\Vc{x}}[\Vc{\tf}'\|\Vc{\tf}^{\dagger}]+\BD^{\Fspace}_{\Vc{x}}[\Vc{\tf}^{\dagger}\|\hat{\Vc{\tf}}].
\end{align}

In contrast to the thermodynamic functions $(\Pfunc, \Pfunc^{*})$, the dissipation functions have symmetry, which makes the origins $\Vc{0}$ in $\Jspace_{\Vc{x}}$ and $\Fspace_{\Vc{x}}$ special and leads to an extension of Helmholtz-Hodge-Kodaira decomposition.
\begin{thm}[Information-geometric extension of Helmholtz-Hodge-Kodaira  (HHK) decomposition\cite{kobayashi2022Phys.Rev.Researcha}]\label{thm:HHK}
For a given flux-force Legendre pair $(\Vc{\flux},\Vc{\tf})\in (\Jspace_{\Vc{x}},\Fspace_{\Vc{x}})$, we have their unique $\Vc{x}$-dependent decompositions:
\begin{align}
\Vc{\flux}&=\Vc{\flux}_{eq}(\Vc{x})+(\Vc{\flux}-\Vc{\flux}_{eq}(\Vc{x})), &\Vc{\tf}&=\Vc{\tf}_{st}(\Vc{x})+(\Vc{\tf}-\Vc{\tf}_{st}(\Vc{x})),
\end{align}
such that $\Vc{\tf}_{eq}(\Vc{x}) \in \Polytope^{fr}(\Vc{0})$, $\Vc{\flux}-\Vc{\flux}_{eq}(\Vc{x}) \in \Polytope^{vl}(\Vc{0})$, $\Vc{\tf}-\Vc{\tf}_{st}(\Vc{x}), \in \Polytope^{fr}(\Vc{0})$, and $\Vc{\flux}_{st}(\Vc{x}) \in \Polytope^{vl}(\Vc{0})$ hold.
In addition, $\Vc{\flux}_{eq}(\Vc{x})$ and $\Vc{\tf}_{st}(\Vc{x})$ are characterized geometrically as
\begin{align}
    \Vc{\flux}_{eq}(\Vc{x}) &\defeq\Polytope^{vl}(\Vc{\flux})\cap \Variety^{fr}_{\Vc{x}}(\Vc{0}), &
    \Vc{\tf}_{st}(\Vc{x})   &\defeq\Variety^{vl}_{\Vc{x}}(\Vc{0})\cap \Polytope^{fr}(\Vc{\tf}). \label{eq:j_eq_f_st_intersection}
\end{align}
Furthermore, $\Vc{\flux}_{eq}$ and $\Vc{\tf}_{st}$ are also characterized variationally as the minimizers of dissipation functions:  
\begin{align}
    \Vc{\flux}_{eq}(\Vc{x}) &= \arg \min_{\Vc{\flux}' \in \Polytope^{vl}(\Vc{\flux})} \Dissp_{\Vc{x}}(\Vc{\flux}'),&
    \Vc{\tf}_{st}(\Vc{x}) &= \arg \min_{\Vc{\tf}'' \in \Polytope^{fr}(\Vc{\tf})} \Dissp_{\Vc{x}}^{*}(\Vc{\tf}''). \label{eq:j_eq_f_st_variational}
\end{align}
\end{thm}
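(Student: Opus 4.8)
The plan is to obtain both decompositions as information-geometric projections onto leaves of the two dual foliations on the edge spaces, with the origins $\Vc{0}\in\Jspace_{\Vc{x}}$ and $\Vc{0}\in\Fspace_{\Vc{x}}$ serving as the distinguished base points. First I would invoke the preceding dual-foliation lemma for the edge spaces: for fixed $\Vc{x}\in\X$, $(\Polytope^{vl},\Variety^{fr}_{\Vc{x}})$ is a dual foliation of $\Jspace_{\Vc{x}}$, so the primal leaf $\Polytope^{vl}(\Vc{\flux})$ through the given flux and the dual leaf $\Variety^{fr}_{\Vc{x}}(\Vc{0})=\partial\Dissp^{*}_{\Vc{x}}[\Polytope^{fr}(\Vc{0})]$ meet in a unique, transversal point, which I take as the definition of $\Vc{\flux}_{eq}(\Vc{x})$; this is \eqnref{eq:j_eq_f_st_intersection}. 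That $\Variety^{fr}_{\Vc{x}}(\Vc{0})$ genuinely passes through the origin uses the pairing-of-zeros property $\Vc{0}=\partial\Dissp^{*}_{\Vc{x}}(\Vc{0})$ together with $\Vc{0}\in\Polytope^{fr}(\Vc{0})=\Img\HIncMatrix^{\Transpose}$. The force side is the mirror image: $\Vc{\tf}_{st}(\Vc{x})\defeq\Variety^{vl}_{\Vc{x}}(\Vc{0})\cap\Polytope^{fr}(\Vc{\tf})$, with $\Vc{0}\in\Variety^{vl}_{\Vc{x}}(\Vc{0})$ by the same property.

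Next I would translate the leaf memberships into the linear-algebraic statements of the theorem. From $\Vc{\flux}_{eq}(\Vc{x})\in\Polytope^{vl}(\Vc{\flux})$ and the parametrization $\Polytope^{vl}(\Vc{\flux})=\{\Vc{\flux}'\mid\Vc{\flux}'-\Vc{\flux}\in\Ker\HIncMatrix\}$ we get $\Vc{\flux}-\Vc{\flux}_{eq}(\Vc{x})\in\Polytope^{vl}(\Vc{0})$; from $\Vc{\flux}_{eq}(\Vc{x})\in\Variety^{fr}_{\Vc{x}}(\Vc{0})$ and Legendre inversion, its dual $\Vc{\tf}_{eq}(\Vc{x})\defeq\partial\Dissp_{\Vc{x}}(\Vc{\flux}_{eq}(\Vc{x}))$ lies in $\Polytope^{fr}(\Vc{0})=\Img\HIncMatrix^{\Transpose}$; the force-side memberships follow symmetrically. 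Uniqueness of the decomposition is then immediate: any splitting $\Vc{\flux}=\Vc{a}+\Vc{b}$ with $\Vc{b}\in\Polytope^{vl}(\Vc{0})$ and $\partial\Dissp_{\Vc{x}}(\Vc{a})\in\Polytope^{fr}(\Vc{0})$ forces $\Vc{a}\in\Polytope^{vl}(\Vc{\flux})\cap\Variety^{fr}_{\Vc{x}}(\Vc{0})=\{\Vc{\flux}_{eq}(\Vc{x})\}$, and likewise for $\Vc{\tf}$.

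For the variational characterization I would use that $\Dissp_{\Vc{x}}(\Vc{0})=0$ and $\partial\Dissp_{\Vc{x}}(\Vc{0})=\Vc{0}$ imply $\Dissp_{\Vc{x}}(\Vc{\flux}')=\BD^{\Jspace}_{\Vc{x}}[\Vc{\flux}'\|\Vc{0}]$ for every $\Vc{\flux}'$. Applying the generalized Pythagorean relation for the dual foliation $(\Polytope^{vl},\Variety^{fr}_{\Vc{x}})$ to the triple $\Vc{\flux}'\in\Polytope^{vl}(\Vc{\flux})$, $\Vc{0}\in\Variety^{fr}_{\Vc{x}}(\Vc{0})$ and $\Vc{\flux}_{eq}(\Vc{x})=\Polytope^{vl}(\Vc{\flux})\cap\Variety^{fr}_{\Vc{x}}(\Vc{0})$ gives $\Dissp_{\Vc{x}}(\Vc{\flux}')=\BD^{\Jspace}_{\Vc{x}}[\Vc{\flux}'\|\Vc{\flux}_{eq}(\Vc{x})]+\Dissp_{\Vc{x}}(\Vc{\flux}_{eq}(\Vc{x}))$; since the Bregman divergence is nonnegative and vanishes only at $\Vc{\flux}'=\Vc{\flux}_{eq}(\Vc{x})$, this is \eqnref{eq:j_eq_f_st_variational}, and the identical argument with $\Dissp^{*}_{\Vc{x}}(\Vc{\tf}'')=\BD^{\Fspace}_{\Vc{x}}[\Vc{\tf}''\|\Vc{0}]$ and the foliation $(\Variety^{vl}_{\Vc{x}},\Polytope^{fr})$ settles $\Vc{\tf}_{st}(\Vc{x})$. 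The step demanding the most care is not any single computation but keeping the flux and force sides (and the several representations of each Bregman divergence) consistent; conceptually, the point to highlight is that it is precisely the symmetry and the bounded-below-by-zero conditions on the dissipation functions that single out $\Vc{0}$ as the projection base point, so that these Legendre-geometric projections reduce to the harmonic/gradient/cyclic splitting of the classical Helmholtz--Hodge--Kodaira decomposition when the dissipation functions are quadratic.
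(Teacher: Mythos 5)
Your proposal is correct and follows essentially the same route as the paper's proof: define $\Vc{\flux}_{eq}$ and $\Vc{\tf}_{st}$ as the unique transversal intersections supplied by the edge-space dual foliation lemma, use the generalized Pythagorean relation with the origin as base point, and identify $\BD^{\Jspace}_{\Vc{x}}[\cdot\|\Vc{0}]=\Dissp_{\Vc{x}}(\cdot)$ (and dually) via the symmetry and zero-at-zero properties of the dissipation functions to get the variational characterization from the nonnegativity of the Bregman divergence. The extra steps you spell out (that $\Vc{0}$ lies on the relevant leaves, and the uniqueness of the splitting) are implicit in the paper and correctly handled.
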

\begin{proof}
The uniqueness of $\Vc{\flux}_{eq}(\Vc{x})$ and $\Vc{\tf}_{st}(\Vc{x})$ as intersections in \eqnref{eq:j_eq_f_st_intersection} follows immediately from the property of the dual foliations. Because, for any $\Vc{\flux}' \in \Polytope^{vl}(\Vc{\flux})$ and $\Vc{\tf}'' \in \Polytope^{fr}(\Vc{\tf})$, $\langle\Vc{\flux}'-\Vc{\flux}_{eq},\Vc{\tf}_{eq} \rangle=0$ and $\langle\Vc{\flux}_{st}, \Vc{\tf}''-\Vc{\tf}_{st} \rangle=0$ hold, the generalized Pythagorean relations lead to
\begin{align}
    \begin{aligned}
    \BD^{\Jspace}_{\Vc{x}}[\Vc{\flux}'\| \Vc{0}] &=\BD^{\Jspace}_{\Vc{x}}[\Vc{\flux}'\|\Vc{\flux}_{eq}]+\BD^{\Jspace}_{\Vc{x}}[\Vc{\flux}_{eq}\|\Vc{0}],&
    \BD^{\Fspace}_{\Vc{x}}[\Vc{\tf}''\| \Vc{0}] &=\BD^{\Fspace}_{\Vc{x}}[\Vc{\tf}''\|\Vc{\tf}_{st}]+\BD^{\Fspace}_{\Vc{x}}[\Vc{\tf}_{st}\|\Vc{0}].
    \end{aligned}\notag
\end{align}
Because $\BD^{\Jspace}_{\Vc{x}}[\Vc{\flux}'\|\Vc{0}]=\Dissp_{\Vc{x}}(\Vc{\flux}')$ and $\BD^{\Fspace}_{\Vc{x}}[\Vc{\tf}''\|\Vc{0}]=\Dissp^{*}_{\Vc{x}}(\Vc{\tf}'')$ hold,  the relations are reduced to
\begin{align}
    \Dissp_{\Vc{x}}(\Vc{\flux}')&=\BD_{\Vc{x}}^{\Jspace}[\Vc{\flux}'\|\Vc{\flux}_{eq}]+\Dissp_{\Vc{x}}(\Vc{\flux}_{eq}),&
    \Dissp^{*}_{\Vc{x}}(\Vc{\tf}'')&=\BD_{\Vc{x}}^{\Fspace}[\Vc{\tf}''\|\Vc{\tf}_{st}]+\Dissp^{*}_{\Vc{x}}(\Vc{\tf}_{st}).
 \label{eq:PGR}
\end{align}
Then \eqnref{eq:j_eq_f_st_variational} follows.
\end{proof}
The decomposed flux $\Vc{\flux}_{eq}$ and force $\Vc{\tf}_{st}$ play a particularly important role in dynamics. 
From the definition, $\Vc{\flux}_{eq}$ is the equilibrium flux, which induces the same instantaneous velocity $\dot{\Vc{x}}$ as $\Vc{\flux}$ does, i.e., $\dot{\Vc{x}}=-\Div_{\HIncMatrix}\Vc{\flux}=-\Div_{\HIncMatrix}\Vc{\flux}_{eq}$. 
Thus, $\Vc{\flux}_{eq}$ is the equilibrium flux mimicking the instantaneous dynamics induced by the nonequilibrium flux $\Vc{\flux}$.
This equilibrium flux is uniquely determined owing to the information-geometric orthogonality of $\Polytope^{vl}(\Vc{\flux})$ and $\Variety^{fr}_{\Vc{x}}(\Vc{0})$.
Moreover, the decomposition $\Vc{\flux}=\Vc{\flux}_{eq}+(\Vc{\flux}-\Vc{\flux}_{eq})$ can be regarded as an information-geometric extension of the Helmholtz-Hodge-Kodaira decomposition in vector calculus and differential form, because $(\Vc{\flux}-\Vc{\flux}_{eq})$ is divergence free, i.e., $\Div_{\HIncMatrix}(\Vc{\flux}-\Vc{\flux}_{eq})=0$, and $\Vc{\tf}_{eq}$ is a curl-free equilibrium force, i.e., $\Vc{\tf}_{eq}\in \Polytope^{fr}(\Vc{0})=\Img[\HIncMatrix^{\Transpose}]=\Ker[\cycMatrix^{\Transpose}]$. 

On the contrary, by definition, $\Vc{\flux}_{st}\in \Polytope^{v}(\Vc{0})$ is the flux that makes the state $\Vc{x}$ a steady state, i.e., $\dot{\Vc{x}}=0$, and is also induced by the force in the same quotient set of force $\Polytope^{fr}(\Vc{\tf})$ as $\Vc{\tf}$.
The decomposition $\Vc{\tf}=\Vc{\tf}_{st}+(\Vc{\tf}-\Vc{\tf}_{st})$ is also a HHK decomposition because $\Vc{\flux}_{st}\in \Polytope^{vl}(\Vc{0})$ is divergence free, i.e., $\Div_{\HIncMatrix}\Vc{\flux}_{st}=0$, and $\Vc{\tf}-\Vc{\tf}_{st}$ is a curl-free equilibrium force, i.e., $\Vc{\tf}-\Vc{\tf}_{st}\in \Polytope^{fr}(\Vc{0})=\Img[\HIncMatrix^{\Transpose}]=\Ker[\cycMatrix^{\Transpose}]$.
\footnote{It should be noted that, in general,  $\Vc{\flux}_{st}\neq \Vc{\flux}-\Vc{\flux}_{eq}$ and $\Vc{\tf}_{eq}\neq\Vc{\tf}-\Vc{\tf}_{st}$ and as a result $\Vc{\flux} \neq \Vc{\flux}_{eq}+\Vc{\flux}_{st}$ and $\Vc{\tf} \neq \Vc{\tf}_{eq}+\Vc{\tf}_{st}$ due to the nonlinearity of Legendre transformation, except when the dissipation functions are quadratic under which the Legendre dual relation is reduced to the linear inner-product relation.}
These decompositions are used in the subsequent sections (\secref{sec:InfoGeoEquilibrium} and \secref{sec:InfoGeoEquilibrium}).

\section{Central affine manifold and Hilbert orthogonality}\label{sec:central_affine}
The dual foliation is an essential geometric object in information geometry.
While less common than the dual foliation, the central affine manifold defined by a convex function also plays an integral role in information geometry\cite{shima2007}.
\begin{dfn}[Central affine manifolds in $\Jspace_{\Vc{x}}$ and $\Fspace_{\Vc{x}}$]\label{dfn:cAffineManifolds}
The central affine manifolds in $\Jspace_{\Vc{x}}$ and $\Fspace_{\Vc{x}}$ are defined as the level sets of $\Dissp_{\Vc{x}}(\Vc{\flux})$ and $\Dissp^{*}_{\Vc{x}}(\Vc{\tf})$, respectively\footnote{While we introduce the central affine manifolds only on the edge space, they can be defined on the vertex space as well. The central affine manifolds on the vertex space of CRN become fundamental when we work on the isobaric processes in which the volume changes in conjunction with the reactions\cite{sughiyama2022Phys.Rev.Researchb}. In this case, the volume is a global variable affecting all the reactions simultaneously.}:
\begin{align}
    \cManifold_{\Vc{x}}^{\Dissp}(c)&\defeq \{\Vc{\flux}|\Dissp_{\Vc{x}}(\Vc{\flux})=c\} \subset \Jspace_{\Vc{x}}, & 
    \cManifold_{\Vc{x}}^{\Dissp^{*}}(c)&\defeq \{\Vc{\tf}|\Dissp^{*}_{\Vc{x}}(\Vc{\tf})=c\}\subset \Fspace_{\Vc{x}}, 
\end{align}
where $c\in \Real_{\ge 0}$.
For a given $\Vc{\flux}'\in \Jspace_{\Vc{x}}$ or $\Vc{\tf}'\in \Fspace_{\Vc{x}}$, the manifolds are also denoted as
\begin{align}
    \cManifold_{\Vc{x}}^{\Dissp}(\Vc{\flux}')&\defeq \{\Vc{\flux}|\Dissp_{\Vc{x}}(\Vc{\flux})=\Dissp_{\Vc{x}}(\Vc{\flux}')\}, & 
    \cManifold_{\Vc{x}}^{\Dissp^{*}}(\Vc{\tf}')&\defeq \{\Vc{\tf}|\Dissp^{*}_{\Vc{x}}(\Vc{\tf})=\Dissp^{*}_{\Vc{x}}(\Vc{\tf}')\}.
\end{align}
Their Legendre transformations are also called (dual) central affine manifolds:
\begin{align}
    \Manifold_{\Vc{x}}^{\Dissp}(c)&\defeq \partial \Dissp_{\Vc{x}}[\cManifold_{\Vc{x}}^{\Dissp}(c)]\subset \Fspace_{\Vc{x}}, & 
    \Manifold_{\Vc{x}}^{\Dissp^{*}}(c)&\defeq \partial \Dissp^{*}_{\Vc{x}}[\cManifold_{\Vc{x}}^{\Dissp^{*}}(c)]\subset \Jspace_{\Vc{x}}. \label{eq:dual_central_affine_manifolds}
\end{align}
\end{dfn}
\subsection{Pseudo-Hilbert-isosceles orthogonality and decomposition}
By employing the central affine manifold, we can introduce another type of generalized orthogonality:
\begin{dfn}[Pseudo-Hilbert-isosceles orthogonality\cite{kaiser2018JStatPhys,renger2021DiscreteContin.Dyn.Syst.-S,patterson2021ArXiv210314384Math-Ph}]\label{dfn:PHorthogonality}
Pseudo-Hilbert-isosceles orthogonality between $\Vc{\flux}_{S}, \Vc{\flux}_{A} \in \Jspace_{\Vc{x}}$ and between $\Vc{\tf}_{S}, \Vc{\tf}_{A} \in \Fspace_{\Vc{x}}$ are defined as follows:
\begin{align}
    \Vc{\flux}_{S}\perp_{H} \Vc{\flux}_{A} &\Longleftrightarrow \Dissp_{\Vc{x}}(\Vc{\flux}_{S}+\Vc{\flux}_{A})=\Dissp_{\Vc{x}}(\Vc{\flux}_{S}-\Vc{\flux}_{A})\\
    \Vc{\tf}_{S}\perp_{H} \Vc{\tf}_{A} &\Longleftrightarrow \Dissp_{\Vc{x}}^{*}(\Vc{\tf}_{S}+\Vc{\tf}_{A})=\Dissp_{\Vc{x}}^{*}(\Vc{\tf}_{S}-\Vc{\tf}_{A}).
\end{align}

\end{dfn}

This orthogonality is motivated by the relation $\|\Vc{\flux}_{S}+\Vc{\flux}_{A}\|^{2}=\|\Vc{\flux}_{S}-\Vc{\flux}_{A}\|^{2}$ satisfied by an orthogonal pair $\Vc{\flux}_{S}\perp \Vc{\flux}_{A}$ under a usual inner product structure and its induced norm $\|\cdot\|^{2}$\footnote{If $\|\cdot\|^{2}$ is a squared norm that is not necessarily induced from an inner product, the orthogonality is called isosceles or James orthogonality\cite{davis1959Math.Mag.}. Here, $\|\cdot\|^{2}$ is further replaced with the dissipation function, which does not satisfy some conditions required to be a norm. }.
By employing this orthogonality, we obtain pseudo-Hilbert isosceles decompositions of  $\Vc{\flux}$ and $\Vc{\tf}$ as follows:
\begin{lmm}[Positive decompositions of the bilinear pairing via pseudo-Hilbert-isosceles orthogonality\cite{kaiser2018JStatPhys,renger2021DiscreteContin.Dyn.Syst.-S,patterson2021ArXiv210314384Math-Ph,kobayashi2022Phys.Rev.Researcha}]
For a given $\Vc{\flux}\in \Jspace_{\Vc{x}}$ and any $\Vc{\flux}'$ on the same central affine manifold as $\Vc{\flux}$, i.e., $\Vc{\flux}' \in \cManifold_{\Vc{x}}^{\Dissp}(\Vc{\flux})$, we obtain the pseudo-Hilbert-isosceles orthogonal decomposition $\Vc{\flux}=\Vc{\flux}_{S}+\Vc{\flux}_{A}$:
\begin{align}
    \Vc{\flux}_{S}&\defeq \frac{1}{2}(\Vc{\flux}+\Vc{\flux}'), & \Vc{\flux}_{A}&\defeq \frac{1}{2}(\Vc{\flux}-\Vc{\flux}'),
\end{align}
where $\Vc{\flux}_{S} \perp_{H}\Vc{\flux}_{A}$ and $\Vc{\flux}'=\Vc{\flux}_{S}-\Vc{\flux}_{A}$ hold.
In addition, this decomposition induces a positive decomposition of the bilinear product $\langle\Vc{\flux},\Vc{\tf}\rangle=\langle\Vc{\flux}_{S},\Vc{\tf}\rangle+\langle\Vc{\flux}_{A},\Vc{\tf}\rangle$ where 
\begin{align}
    \langle\Vc{\flux}_{S}, \Vc{\tf}\rangle&=\frac{1}{2}\BD^{\Jspace,\Fspace}_{\Vc{x}}[\Vc{\flux}'; -\Vc{\tf}]\ge 0, &
    \langle\Vc{\flux}_{A}, \Vc{\tf}\rangle&=\frac{1}{2}\BD^{\Jspace,\Fspace}_{\Vc{x}}[\Vc{\flux}'; \Vc{\tf}]\ge 0,  
\end{align}
hold. Similarly, for $\Vc{\tf} \in \Fspace_{\Vc{x}}$ and $\Vc{\tf}''\in \cManifold_{\Vc{x}}^{\Dissp^{*}}(\Vc{\tf})$, a positive orthogonal decomposition $\Vc{\tf}=\Vc{\tf}_{S}+\Vc{\tf}_{A}$ is obtained by $\Vc{\tf}_{S}\defeq \frac{1}{2}(\Vc{\tf}+\Vc{\tf}'')$ and $\Vc{\tf}_{A}\defeq \frac{1}{2}(\Vc{\tf}-\Vc{\tf}'')$, which satisfy the associated relations:
\begin{align}
    \langle\Vc{\flux}, \Vc{\tf}_{A}\rangle&=\frac{1}{2}\BD^{\Jspace,\Fspace}_{\Vc{x}}[\Vc{\flux}; \Vc{\tf}'']\ge 0, &
    \langle\Vc{\flux}, \Vc{\tf}_{S}\rangle &=\frac{1}{2}\BD^{\Jspace,\Fspace}_{\Vc{x}}[\Vc{\flux}; -\Vc{\tf}'']\ge 0. \label{eq:pHilbert2}
\end{align}
\end{lmm}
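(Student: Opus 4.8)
The plan is to prove the lemma by direct computation, using three facts already in hand: the Legendre-Fenchel-Young identity $\Dissp^{*}_{\Vc{x}}(\Vc{\tf})+\Dissp_{\Vc{x}}(\Vc{\flux})=\langle\Vc{\flux},\Vc{\tf}\rangle$ for a Legendre dual pair (\eqnref{eq:LegendreIdentity}), the symmetry $\Dissp^{*}_{\Vc{x}}(-\Vc{\tf})=\Dissp^{*}_{\Vc{x}}(\Vc{\tf})$ of the dissipation functions (\eqnref{eq:DissipCond1}), and the non-negativity of the edge Bregman divergence $\BD^{\Jspace,\Fspace}_{\Vc{x}}[\cdot;\cdot]\ge 0$, which holds for the same Fenchel-Young reason as the vertex-space divergences in \defref{dfn:BregmanDiv}. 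First I would record the purely algebraic identities $\Vc{\flux}_{S}+\Vc{\flux}_{A}=\Vc{\flux}$ and $\Vc{\flux}_{S}-\Vc{\flux}_{A}=\Vc{\flux}'$, which immediately give $\Vc{\flux}'=\Vc{\flux}_{S}-\Vc{\flux}_{A}$ and, by bilinearity, the additive splitting $\langle\Vc{\flux},\Vc{\tf}\rangle=\langle\Vc{\flux}_{S},\Vc{\tf}\rangle+\langle\Vc{\flux}_{A},\Vc{\tf}\rangle$.

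Next, the pseudo-Hilbert-isosceles orthogonality $\Vc{\flux}_{S}\perp_{H}\Vc{\flux}_{A}$ follows straight from \defref{dfn:PHorthogonality}: it asks for $\Dissp_{\Vc{x}}(\Vc{\flux}_{S}+\Vc{\flux}_{A})=\Dissp_{\Vc{x}}(\Vc{\flux}_{S}-\Vc{\flux}_{A})$, i.e. $\Dissp_{\Vc{x}}(\Vc{\flux})=\Dissp_{\Vc{x}}(\Vc{\flux}')$, which is precisely the hypothesis that $\Vc{\flux}'$ lies on the same central affine manifold $\cManifold_{\Vc{x}}^{\Dissp}(\Vc{\flux})$ as $\Vc{\flux}$.

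The one computational step is identifying each half of the pairing with a Bregman divergence. For the antisymmetric half I would expand $\BD^{\Jspace,\Fspace}_{\Vc{x}}[\Vc{\flux}';\Vc{\tf}]=\Dissp_{\Vc{x}}(\Vc{\flux}')+\Dissp^{*}_{\Vc{x}}(\Vc{\tf})-\langle\Vc{\flux}',\Vc{\tf}\rangle$, replace $\Dissp_{\Vc{x}}(\Vc{\flux}')$ by $\Dissp_{\Vc{x}}(\Vc{\flux})$, and then use $\Dissp_{\Vc{x}}(\Vc{\flux})+\Dissp^{*}_{\Vc{x}}(\Vc{\tf})=\langle\Vc{\flux},\Vc{\tf}\rangle$ to obtain $\langle\Vc{\flux}-\Vc{\flux}',\Vc{\tf}\rangle=2\langle\Vc{\flux}_{A},\Vc{\tf}\rangle$, hence $\langle\Vc{\flux}_{A},\Vc{\tf}\rangle=\frac{1}{2}\BD^{\Jspace,\Fspace}_{\Vc{x}}[\Vc{\flux}';\Vc{\tf}]\ge 0$. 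For the symmetric half I repeat the calculation with $\BD^{\Jspace,\Fspace}_{\Vc{x}}[\Vc{\flux}';-\Vc{\tf}]=\Dissp_{\Vc{x}}(\Vc{\flux}')+\Dissp^{*}_{\Vc{x}}(-\Vc{\tf})+\langle\Vc{\flux}',\Vc{\tf}\rangle$, now using $\Dissp^{*}_{\Vc{x}}(-\Vc{\tf})=\Dissp^{*}_{\Vc{x}}(\Vc{\tf})$ so that the cross term enters with the opposite sign and produces $\langle\Vc{\flux}+\Vc{\flux}',\Vc{\tf}\rangle=2\langle\Vc{\flux}_{S},\Vc{\tf}\rangle$, giving $\langle\Vc{\flux}_{S},\Vc{\tf}\rangle=\frac{1}{2}\BD^{\Jspace,\Fspace}_{\Vc{x}}[\Vc{\flux}';-\Vc{\tf}]\ge 0$. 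The force-side decomposition $\Vc{\tf}=\Vc{\tf}_{S}+\Vc{\tf}_{A}$ with the relations in \eqnref{eq:pHilbert2} is obtained by the mirror-image argument, exchanging $\Dissp_{\Vc{x}}$ with $\Dissp^{*}_{\Vc{x}}$ and $\Vc{\flux}$ with $\Vc{\tf}$ and using the symmetry of $\Dissp_{\Vc{x}}$ in place of that of $\Dissp^{*}_{\Vc{x}}$.

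Since every line is an identity or a one-line non-negativity, there is no genuine obstacle; the only thing to watch is sign bookkeeping --- in particular that it is the symmetry of $\Dissp^{*}_{\Vc{x}}$, not any property of $\Vc{\flux}'$, that turns $-\langle\Vc{\flux}',-\Vc{\tf}\rangle$ into $+\langle\Vc{\flux}',\Vc{\tf}\rangle$ and thereby singles out $\Vc{\flux}_{S}$ rather than $\Vc{\flux}_{A}$, and correspondingly that $\Vc{\tf}''$ versus $-\Vc{\tf}''$ pairs with $\Vc{\flux}_{A}$ versus $\Vc{\flux}_{S}$ in the force case.
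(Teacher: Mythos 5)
Your proof is correct, and it is essentially the argument the paper relies on: the paper states this lemma without proof, deferring to the cited references, and the standard derivation there is exactly your computation --- expand $\BD^{\Jspace,\Fspace}_{\Vc{x}}[\Vc{\flux}';\pm\Vc{\tf}]$, replace $\Dissp_{\Vc{x}}(\Vc{\flux}')$ by $\Dissp_{\Vc{x}}(\Vc{\flux})$ using the central-affine hypothesis, apply the Legendre--Fenchel--Young identity for the pair $(\Vc{\flux},\Vc{\tf})$, and use the symmetry of the dissipation function to handle the sign of the cross term. One small correction to your closing remark: in the force-side case the lemma's formulas are still written as $\BD^{\Jspace,\Fspace}_{\Vc{x}}[\Vc{\flux};\pm\Vc{\tf}'']$, whose expansion contains $\Dissp^{*}_{\Vc{x}}(-\Vc{\tf}'')$, so the symmetry invoked is again that of $\Dissp^{*}_{\Vc{x}}$ (one would only need the symmetry of $\Dissp_{\Vc{x}}$ if the identities were recast with $-\Vc{\flux}$ in the first slot); this does not affect the validity of your argument.
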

These decompositions were introduced in \cite{kaiser2018JStatPhys} for rMJP and extended to CRN in \cite{renger2021DiscreteContin.Dyn.Syst.-S,patterson2021ArXiv210314384Math-Ph}, whereas we pointed out its information-geometric aspect in \cite{kobayashi2022Phys.Rev.Researcha}.
The decomposition plays a role of characterizing the gradient-flow-like property of non-gradient flows.


\section{Information-geometric Properties of Equilibrium Flow}\label{sec:InfoGeoEquilibrium}
In this section, we describe several properties of the equilibrium flow (\eqnref{eq:gGF}) from the viewpoint of information geometry by employing the objects introduced in the previous sections. 
Such properties include the existence and uniqueness of the steady state (static property), convergence to the state (kinetic property), and the balance between information-geometric quantities associated with the steady state and convergence along the trajectory (the connection between static and kinetic properties).
These properties are consistent with those that thermodynamic equilibrium systems should have.
In addition, several results are extensions of the results obtained for FPE in the context of functional analysis, partial differential equations, and optimal transport.

\subsection{Properties of equilibrium flow}
The following property of the equilibrium state characterizes the static aspect of the equilibrium flow and is fundamentally ascribed to the dually flat structure of density and potential spaces\footnote{Actually, this result is independent of the detail of the dissipation functions.}:
\begin{prop}[Equilibrium state and its geometric and variational characterizations]\label{prop:Eqstate}
The steady state of the equilibrium flow $\Vc{x}_{t}$ (\eqnref{eq:gGF}) starting from $\Vc{x}(0)=\Vc{x}_{0}$ is called the equilibrium state $\Vc{x}_{eq}$.  For each $\Vc{x}_{0}$, the equilibrium state is identical to the intersection $\Vc{x}^{\dagger}=\Polytope^{sc}(\Vc{x}_{0}) \cap \Variety^{eq}(\tilde{\Vc{x}})$, i.e., $\Vc{x}_{eq}=\Vc{x}^{\dagger}$, and thus uniquely exists for a given pair of the initial state $\Vc{x}_{0}$ and the parameter of equilibrium flow $\tilde{\Vc{x}}$. The equilibrium state $\Vc{x}_{eq}$ is also characterized variationally as 
\begin{align}
    \Vc{x}_{eq}&= \arg \min_{\Vc{x}\in \Polytope^{sc}(\Vc{x}_{0})}\BD^{\X}_{\Pfunc}[\Vc{x}\|\tilde{\Vc{x}}]
    =\arg \min_{\Vc{x}_{q}\in \Variety^{eq}(\tilde{\Vc{x}})}\BD^{\X}_{\Pfunc}[\Vc{x}_{0}\|\Vc{x}_{q}]\label{eq:PGTx}. 
\end{align}
Moreover, $\Variety^{eq}(\tilde{\Vc{x}})=\Variety^{\mathrm{DB}}$ holds.
\end{prop}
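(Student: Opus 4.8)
\emph{Proof plan.} The plan is to reduce everything to three ingredients already available: the Lyapunov property of the generalized gradient flow (\propref{prop:gradflow}), the dual-foliation structure of the density and potential spaces (\lmmref{lmm:DualFoliationVertex}), and the generalized Pythagorean relation (\eqnref{eq:gPR_X}). First I would observe that the equilibrium flow (\eqnref{eq:gGF}) is precisely the generalized gradient flow of $\FEnergy(\Vc{x})\defeq\BD^{\X}_{\Pfunc}[\Vc{x}\|\tilde{\Vc{x}}]$ in the sense of \defref{dfn:gGradFlow}, so that: (i) since $\dot{\Vc{x}}=-\Div_{\HIncMatrix}\Vc{\flux}(\Vc{x}_{t})\in\Img\HIncMatrix$, the trajectory started at $\Vc{x}_{0}$ stays in $\Polytope^{sc}(\Vc{x}_{0})$ for all $t$; and (ii) by \propref{prop:gradflow}, $\Manifold^{\mathrm{ST}}=\Manifold^{\mathrm{DB}}$ and $\FEnergy$ is strictly decreasing except on $\Manifold^{\mathrm{DB}}$.

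The second step is to identify $\Manifold^{\mathrm{DB}}$ with $\Variety^{eq}(\tilde{\Vc{x}})$, which is the last assertion. Here the equilibrium force is $\Vc{\tf}(\Vc{x})=\Grad_{\HIncMatrix}\partial\BD^{\X}_{\Pfunc}[\Vc{x}\|\tilde{\Vc{x}}]=\HIncMatrix^{\Transpose}\bigl(\partial\Pfunc(\Vc{x})-\tilde{\Vc{y}}\bigr)=\HIncMatrix^{\Transpose}(\Vc{y}-\tilde{\Vc{y}})$ with $\Vc{y}=\partial\Pfunc(\Vc{x})$ and $\tilde{\Vc{y}}=\partial\Pfunc(\tilde{\Vc{x}})$. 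Using $\Vc{\flux}(\Vc{x})=\Vc{0}\Leftrightarrow\Vc{\tf}(\Vc{x})=\Vc{0}$ (the dissipation-function properties, \defref{eq:st_cb_eq_state}), one gets $\Vc{x}\in\Manifold^{\mathrm{DB}}\Leftrightarrow\Vc{y}-\tilde{\Vc{y}}\in\Ker\HIncMatrix^{\Transpose}\Leftrightarrow\Vc{y}\in\Polytope^{eq}(\tilde{\Vc{y}})\Leftrightarrow\Vc{x}\in\Variety^{eq}(\tilde{\Vc{x}})$, the last equivalence by injectivity of the Legendre map and \defref{dfn:StoiEquManifolds}; hence $\Manifold^{\mathrm{DB}}=\Variety^{eq}(\tilde{\Vc{x}})$. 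Combining with step one, every steady state reachable from $\Vc{x}_{0}$ lies in $\Manifold^{\mathrm{ST}}\cap\Polytope^{sc}(\Vc{x}_{0})=\Variety^{eq}(\tilde{\Vc{x}})\cap\Polytope^{sc}(\Vc{x}_{0})$, which by \lmmref{lmm:DualFoliationVertex} is the single interior point $\Vc{x}^{\dagger}$; conversely $\Vc{x}^{\dagger}\in\Variety^{eq}(\tilde{\Vc{x}})=\Manifold^{\mathrm{DB}}$ forces $\Vc{\flux}(\Vc{x}^{\dagger})=\Vc{0}$, so $\Vc{x}^{\dagger}$ is indeed a steady state. To upgrade "the only candidate" to "the limit", I would invoke the usual LaSalle-type argument: $\FEnergy$ is a strict Lyapunov function on $\Polytope^{sc}(\Vc{x}_{0})$, its sublevel sets there are compact because $\BD^{\X}_{\Pfunc}[\cdot\|\tilde{\Vc{x}}]$ is proper (superlinear growth coming from condition~(1) of \defref{dfn:thermofunc}, and blow-up towards $\partial\X$ coming from \eqnref{eq:thermofuncCond2}), so the $\omega$-limit set is the nonempty invariant subset of $\{\dot{\FEnergy}=0\}\cap\Polytope^{sc}(\Vc{x}_{0})=\{\Vc{x}^{\dagger}\}$. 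This convergence/compactness step is the only genuinely analytic point; the rest is bookkeeping with the foliations, and I would keep it short since all the needed tools are in place.

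Finally, the variational characterizations. The identity $\Vc{x}_{eq}=\arg\min_{\Vc{x}\in\Polytope^{sc}(\Vc{x}_{0})}\BD^{\X}_{\Pfunc}[\Vc{x}\|\tilde{\Vc{x}}]$ is just \eqnref{eq:opt_prob_x} from the proof of \lmmref{lmm:DualFoliationVertex}, once one notes via \eqnref{eq:thermofuncCond2} that the minimizer is interior, so the closure may be dropped. For the dual form, apply the generalized Pythagorean relation (\eqnref{eq:gPR_X}) with $\Vc{x}=\Vc{x}_{0}\in\Polytope^{sc}(\Vc{x}_{0})$, $\Vc{x}^{\dagger}=\Polytope^{sc}(\Vc{x}_{0})\cap\Variety^{eq}(\tilde{\Vc{x}})$, and arbitrary $\Vc{x}_{q}\in\Variety^{eq}(\tilde{\Vc{x}})$: then $\BD^{\X}[\Vc{x}_{0}\|\Vc{x}_{q}]=\BD^{\X}[\Vc{x}_{0}\|\Vc{x}^{\dagger}]+\BD^{\X}[\Vc{x}^{\dagger}\|\Vc{x}_{q}]\ge\BD^{\X}[\Vc{x}_{0}\|\Vc{x}^{\dagger}]$, with equality iff $\BD^{\X}[\Vc{x}^{\dagger}\|\Vc{x}_{q}]=0$, i.e.\ iff $\Vc{x}_{q}=\Vc{x}^{\dagger}$; hence $\Vc{x}_{eq}=\Vc{x}^{\dagger}=\arg\min_{\Vc{x}_{q}\in\Variety^{eq}(\tilde{\Vc{x}})}\BD^{\X}[\Vc{x}_{0}\|\Vc{x}_{q}]$, which completes the plan.
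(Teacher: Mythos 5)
Your proposal is correct and follows essentially the same route as the paper: identify $\Manifold^{\mathrm{DB}}=\Variety^{eq}(\tilde{\Vc{x}})$ via the form of the equilibrium force, combine with \propref{prop:gradflow} and the confinement to $\Polytope^{sc}(\Vc{x}_{0})$ to land on the unique intersection $\Vc{x}^{\dagger}$, and then read off the two variational characterizations from \lmmref{lmm:DualFoliationVertex} and the generalized Pythagorean relation. The only difference is that you spell out the LaSalle/compactness argument for actual convergence to $\Vc{x}^{\dagger}$, which the paper leaves implicit; that is a welcome extra but not a change of approach.
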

\begin{proof}
From \propref{prop:gradflow}, $\Vc{x}_{eq} \in \Manifold^{\mathrm{DB}}$, from which $\Vc{\tf}(\Vc{x}_{eq})=0$ follows.
For the equilibrium force (\eqnref{eq:gradEq}), $ \Manifold^{\mathrm{DB}}=\{\Vc{x}|\Vc{\tf}(\Vc{x})=0\}=\{\Vc{x}|\HIncMatrix^{\Transpose}(\partial\Pfunc[\Vc{x}]-\partial\Pfunc[\tilde{\Vc{x}}])=0\}=\Manifold^{eq}(\tilde{\Vc{x}})$ holds. Thus, $\Vc{x}_{eq} \in \Manifold^{eq}(\tilde{\Vc{x}})$. Because the initial state is $\Vc{x}_{0}$, $\Vc{x}_{eq}\in \Polytope^{sc}(\Vc{x}_{0})$. 
Thus, $\Vc{x}_{eq} = \Vc{x}^{\dagger} \in \Polytope^{sc}(\Vc{x}_{0}) \cap \Manifold^{eq}(\tilde{\Vc{x}})$.
The first equality of \eqnref{eq:PGTx} is obvious from the proof of the dual foliation (\lmmref{lmm:DualFoliationVertex}). The second equality is from the generalized Pythagorean relation (\eqnref{eq:gPR_X}).
\end{proof}

The second property of the equilibrium flow is kinetic in nature and characterizes the Bregman divergence as the generalized driving potential, which ensures the convergence of $\Vc{x}_{t}$ to the equilibrium state.
This property is attributed to the dually flat structure on the edge spaces.
\begin{prop}[Bregman divergence and Gibbs' H-Theorem]
For the trajectory of the equilibrium flow $\Vc{x}_{t}$ (\eqnref{eq:gGF}) starting from $\Vc{x}(0)=\Vc{x}_{0}$, the thermodynamic function $\BD^{\X}_{\Pfunc}[\Vc{x}_{t}\|\tilde{\Vc{x}}]$ decreases, that is, $\dd \BD^{\X}_{\Pfunc}[\Vc{x}_{t}\|\tilde{\Vc{x}}]/\dt < 0$ except at $\Vc{x}_{t} \in \Manifold^{eq}(\tilde{\Vc{x}})$ where $\dd \BD^{\X}_{\Pfunc}[\Vc{x}_{t}\|\tilde{\Vc{x}}]/\dt=0$ holds. 
Thus, the equilibrium state $\Vc{x}_{eq} \in  \Polytope^{sc}(\Vc{x}_{0}) \cap \Manifold^{eq}(\tilde{\Vc{x}})$ is locally and asymptotically stable.
\end{prop}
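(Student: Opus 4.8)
The plan is to apply Proposition \ref{prop:gradflow} directly, specializing the generalized free energy $\FEnergy(\Vc{x})$ to $\FEnergy(\Vc{x}) = \BD^{\X}_{\Pfunc}[\Vc{x}\|\tilde{\Vc{x}}]$. Since the equilibrium flow (\eqnref{eq:gGF}) is by definition the generalized gradient flow of this $\FEnergy$, Proposition \ref{prop:gradflow} immediately gives $\dd \BD^{\X}_{\Pfunc}[\Vc{x}_{t}\|\tilde{\Vc{x}}]/\dt \le 0$ with equality iff $\Vc{x}_{t}\in \Manifold^{\mathrm{DB}}$, and by \propref{prop:Eqstate} we have $\Manifold^{\mathrm{DB}} = \Manifold^{eq}(\tilde{\Vc{x}})$. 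Concretely, I would first reproduce the one-line computation $\dd \BD^{\X}_{\Pfunc}[\Vc{x}_{t}\|\tilde{\Vc{x}}]/\dt = \langle \dot{\Vc{x}}_t, \partial_{\Vc{x}}\BD^{\X}_{\Pfunc}[\Vc{x}_t\|\tilde{\Vc{x}}]\rangle = -\langle \Vc{\flux}(\Vc{x}_t), \Vc{\tf}(\Vc{x}_t)\rangle = -(\Dissp_{\Vc{x}_t}[\Vc{\flux}(\Vc{x}_t)] + \Dissp^{*}_{\Vc{x}_t}[\Vc{\tf}(\Vc{x}_t)]) \le 0$, using the adjointness $\langle \Div_{\HIncMatrix}(\cdot), \partial_{\Vc{x}}\FEnergy\rangle = \langle \cdot, \Grad_{\HIncMatrix}\partial_{\Vc{x}}\FEnergy\rangle$ and the Legendre--Fenchel--Young identity together with the nonnegativity \eqnref{eq:j_f_positive}. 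The strictness of the inequality away from $\Manifold^{eq}(\tilde{\Vc{x}})$ follows because the dissipation functions vanish only at the origin, and $\Vc{\tf}(\Vc{x}_t)=\Vc{0}$ is equivalent to $\HIncMatrix^{\Transpose}(\partial\Pfunc[\Vc{x}_t]-\tilde{\Vc{y}})=\Vc{0}$, i.e.\ $\Vc{x}_t\in\Manifold^{eq}(\tilde{\Vc{x}})$.

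Second, I would establish the local asymptotic stability of $\Vc{x}_{eq}$. The trajectory is confined to the stoichiometric subspace $\Polytope^{sc}(\Vc{x}_0)$ (shown in \secref{sec:orthogonality}), and on $\overline{\Polytope^{sc}(\Vc{x}_0)}$ the function $\Vc{x}\mapsto \BD^{\X}_{\Pfunc}[\Vc{x}\|\tilde{\Vc{x}}]$ is strictly convex with a unique minimizer $\Vc{x}_{eq}=\Vc{x}^{\dagger}=\Polytope^{sc}(\Vc{x}_0)\cap\Manifold^{eq}(\tilde{\Vc{x}})$ by \propref{prop:Eqstate}. Hence $V(\Vc{x}) \defeq \BD^{\X}_{\Pfunc}[\Vc{x}\|\tilde{\Vc{x}}] - \BD^{\X}_{\Pfunc}[\Vc{x}_{eq}\|\tilde{\Vc{x}}]$ is a strict Lyapunov function for the flow restricted to $\Polytope^{sc}(\Vc{x}_0)$: it is positive definite about $\Vc{x}_{eq}$ (strict convexity plus the minimizer property) and strictly decreasing along non-constant trajectories by the first part. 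Standard Lyapunov theory then yields local asymptotic stability of $\Vc{x}_{eq}$ within $\Polytope^{sc}(\Vc{x}_0)$; combined with the foliation structure (\lmmref{lmm:DualFoliationVertex}) this gives local asymptotic stability in $\X$ relative to the conserved quantities.

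The main obstacle is the last point: ensuring genuine convergence $\Vc{x}_t \to \Vc{x}_{eq}$ rather than merely $\dd\BD^{\X}_{\Pfunc}/\dt \le 0$. A LaSalle-type argument shows the $\omega$-limit set lies in $\Manifold^{\mathrm{DB}}\cap\overline{\Polytope^{sc}(\Vc{x}_0)} = \{\Vc{x}_{eq}\}$, but this requires the trajectory to remain in a compact subset of $\X$, which is precisely the persistence issue flagged in the remark after \defref{dfn:primalVspace} (the Global Attractor Conjecture). I would handle this by invoking the standing assumption $\Vc{x}_t\in\X$ for all $t\in[0,\infty]$, and noting that the strict Lyapunov function together with properness of $\BD^{\X}_{\Pfunc}[\cdot\|\tilde{\Vc{x}}]$ on sublevel sets within $\overline{\Polytope^{sc}(\Vc{x}_0)}$ — guaranteed by condition \eqnref{eq:thermofuncCond2} forcing the divergence to $+\infty$ at $\partial\X$ — confines the forward orbit to a compact set in $\X$; then LaSalle's invariance principle applies cleanly. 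I expect the proof as written in the paper to simply assert local asymptotic stability as an immediate corollary of the strict decrease plus \propref{prop:Eqstate}, deferring the global persistence subtlety.
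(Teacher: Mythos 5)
Your proposal is correct and follows essentially the same route as the paper: the paper's proof is exactly the one-line specialization of \propref{prop:gradflow} to $\FEnergy(\Vc{x})=\BD^{\X}_{\Pfunc}[\Vc{x}\|\tilde{\Vc{x}}]$, with the equality condition $\Vc{\tf}(\Vc{x}_t)=\Vc{0}\Leftrightarrow \Vc{x}_t\in\Manifold^{eq}(\tilde{\Vc{x}})$, and the stability claim left as an immediate consequence. Your additional Lyapunov/LaSalle discussion of compactness and persistence is a careful (and accurate) elaboration of a point the paper indeed does not spell out, but it does not change the underlying argument.
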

\begin{proof}
By replacing $\FEnergy(\Vc{x})$ in \propref{prop:gradflow} with $\BD^{\X}_{\Pfunc}[\Vc{x}\|\tilde{\Vc{x}}]$, 
we obtain $\dd \BD^{\X}_{\Pfunc}[\Vc{x}_{t}\|\tilde{\Vc{x}}]/\dt=-\left[\Dissp^{*}_{\Vc{x}_{t}}(\Vc{\tf}(\Vc{x}_{t}))+ \Dissp_{\Vc{x}_{t}}(\Vc{\flux}(\Vc{x}_{t}))\right] \le 0$ and the equality holds if and only if $\Vc{\tf}(\Vc{x}_{t})=0 (\Leftrightarrow \Vc{x}_{t} \in \Manifold^{eq}(\tilde{\Vc{x}})$. 
\end{proof}
Because $\BD^{\X}_{\Pfunc}[\Vc{x}\|\tilde{\Vc{x}}]$ can be identified with the difference of total entropy between $\Vc{x}$ and $\tilde{\Vc{x}}$ for thermodynamic systems such as CRN\cite{sughiyama2021ArXiv211212403Cond-MatPhysicsphysics}, $\dd \BD^{\X}_{\Pfunc}[\Vc{x}_{t}\|\tilde{\Vc{x}}_{eq}]/\dt \le 0$ corresponds to the nondecreasing property of thermodynamic entropy, which is also referred as Gibbs' H-theorem\footnote{We have multiple types of H-theorems. The most famous one is Boltzmann's H theorem, in which the H function is derived from the microscopic dynamics of a system. In Gibb's H-theorem, H function is obtained by coarse-graining the microscopic system\cite{gibbs2010}}.

The third property provides a connection between the thermodynamic function and the dissipation function, which is immediately obtained from the De Giorgi's formulation of the generalized gradient flow (\eqnref{eq:DeGiorgi}):
\begin{prop}[Balancing of thermodynamic function and dissipation function]
For the trajectory of the equilibrium flow $\Vc{x}_{t}$ (\eqnref{eq:gGF}) starting from $\Vc{x}(0)=\Vc{x}_{0}$, the following relation holds for the thermodynamic function and the dissipation function:
\begin{align}
    \BD^{\X}_{\Pfunc}[\Vc{x}_{0}\|\tilde{\Vc{x}}]-\BD^{\X}_{\Pfunc}[\Vc{x}_{t}\|\tilde{\Vc{x}}]= \int_{t'=0}^{t}\left[\Dissp^{*}_{\Vc{x}_{t'}}(\Vc{\tf}(\Vc{x}_{t'}))+ \Dissp_{\Vc{x}_{t'}}(\Vc{\flux}(\Vc{x}_{t'}))\right]\dt' = \int_{t'=0}^{t}\EPR_{t'}\dt', \label{eq:H-theorem}
\end{align}
\end{prop}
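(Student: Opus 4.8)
The plan is to read \eqnref{eq:H-theorem} as the integral form of De Giorgi's $(\Dissp,\Dissp^{*})$-formulation (\eqnref{eq:DeGiorgi}) specialized to the equilibrium flow, combined with the Legendre--Fenchel--Young identity on the edge spaces. First I would observe that the equilibrium flow (\eqnref{eq:gGF}) is, by construction, a generalized gradient flow in the sense of \defref{dfn:gGradFlow} with driving functional $\FEnergy(\Vc{x})=\BD^{\X}_{\Pfunc}[\Vc{x}\|\tilde{\Vc{x}}]$, since its force is exactly $\Vc{\tf}(\Vc{x})=\Grad_{\HIncMatrix}\partial\BD^{\X}_{\Pfunc}[\Vc{x}\|\tilde{\Vc{x}}]$ (this is \eqnref{eq:gradEq}). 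Consequently De Giorgi's formulation applies verbatim, and \eqnref{eq:DeGiorgi} with this choice of $\FEnergy$ gives the first equality of \eqnref{eq:H-theorem}.

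Equivalently, and this is the route I would actually write out in full, I would differentiate $\BD^{\X}_{\Pfunc}[\Vc{x}_{t}\|\tilde{\Vc{x}}]$ along the trajectory: using the chain rule, the continuity equation $\dot{\Vc{x}}_{t}=-\Div_{\HIncMatrix}\Vc{\flux}(\Vc{x}_{t})$, the adjointness $\Div_{\HIncMatrix}=\Grad^{*}_{\HIncMatrix}$, and $\partial_{\Vc{x}}\BD^{\X}_{\Pfunc}[\Vc{x}\|\tilde{\Vc{x}}]=\partial\Pfunc(\Vc{x})-\tilde{\Vc{y}}$, one reproduces the computation in \eqnref{eq:generalized_gradient_flow} to get $\frac{\dd}{\dt}\BD^{\X}_{\Pfunc}[\Vc{x}_{t}\|\tilde{\Vc{x}}]=-\langle\Vc{\flux}(\Vc{x}_{t}),\Vc{\tf}(\Vc{x}_{t})\rangle$. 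Applying the Legendre--Fenchel--Young identity \eqnref{eq:LegendreIdentity} at each $\Vc{x}_{t}$, namely $\langle\Vc{\flux},\Vc{\tf}\rangle=\Dissp^{*}_{\Vc{x}}(\Vc{\tf})+\Dissp_{\Vc{x}}(\Vc{\flux})$, and integrating from $0$ to $t$ yields the first equality. The second equality is then immediate: by the definition \eqnref{eq:EPR} of the entropy production rate, $\EPR_{t'}=\langle\Vc{\flux}(\Vc{x}_{t'}),\Vc{\tf}(\Vc{x}_{t'})\rangle$, which is exactly the integrand by the same identity.

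There is no genuine obstacle here; the statement is a direct corollary of the already-established machinery. The only points requiring mild care are the regularity conditions needed to pass from the pointwise identity to its integral form: one invokes the standing persistence assumption $\Vc{x}(t)\in\X$ for all $t$, so that $\Pfunc$ and $\Dissp^{*}_{\Vc{x}}$ stay differentiable along the trajectory; continuity of $t\mapsto\Vc{x}_{t}$, so that the integrand is continuous and the fundamental theorem of calculus applies; and finiteness of the dissipation functions, which holds by \defref{dfn:DissipationFunc} and its dual. No ingredients beyond the De Giorgi formulation and the edge-space Legendre duality enter.
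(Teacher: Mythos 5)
Your proposal is correct and follows the paper's own route: the paper likewise obtains \eqnref{eq:H-theorem} as an immediate consequence of De Giorgi's formulation (\eqnref{eq:DeGiorgi}) with $\FEnergy(\Vc{x})=\BD^{\X}_{\Pfunc}[\Vc{x}\|\tilde{\Vc{x}}]$, the explicit derivative computation you spell out being exactly \eqnref{eq:generalized_gradient_flow}. The identification of the integrand with $\EPR_{t'}$ via the Legendre--Fenchel--Young identity and \eqnref{eq:EPR} is also as in the paper.
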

In physics and chemistry, this relation means that the difference in the thermodynamic (potential) function between $\Vc{x}_{t}$ and $\Vc{x}_{0}$ (the left-hand side), i.e., the change in total entropy, is equal to the integral of dissipation along $\Vc{x}_{t}$(the right-hand side), i.e., the entropy production, for equilibrium systems.

All these results indicate that the equilibrium flow and its properties mathematically abstract the properties of physical equilibrium systems.
The equilibrium state $\Vc{x}_{eq}$ is characterized algebraically by the unique intersection of $\Polytope^{sc}(\Vc{x}_{0})$ and $\Variety^{eq}(\tilde{\Vc{x}})$ and also variationally by \eqnref{eq:PGTx}.
The convergence to $\Vc{x}_{eq}$ is guaranteed by $\dd \BD^{\X}_{\Pfunc}[\Vc{x}_{t}\|\tilde{\Vc{x}}]/\dt \le 0$.  
Furthermore, the entropy-dissipation balance relation (\eqnref{eq:H-theorem}) itself defines the equilibrium system abstractly as the De Giorgi's formulation (\eqnref{eq:DeGiorgi}) does.

\subsection{Induced dually flat structure on tangent-cotangent spaces}
The equilibrium state is characterized geometrically and variationally via the information-geometric structure on the vertex spaces ($\X$, $\Y$)  as in \propref{prop:Eqstate}. 
Similarly, the flux (kinetic law) of equilibrium systems (gradient systems) can be obtained variationally as the flux minimizing the dissipation function under the restriction of the continuity equation. 
\begin{figure}[h]
\includegraphics[bb=0 0 1024 468, width=\linewidth]{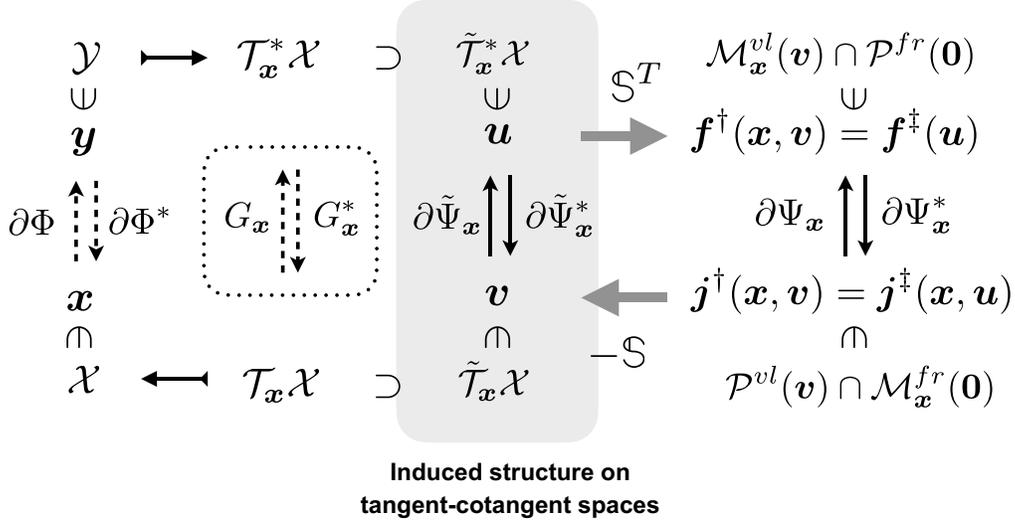}
\caption{\TJK{The induced dually flat structure on the restricted tangent and cotangent spaces from the dissipation functions on the edge spaces (gray region). The relationship is compared with the Riemannian metric on tangent and cotangent spaces via Fisher information matrices induced by the thermodynamic functions (dotted box). }}
\label{fig:Induced_Duality}
\end{figure}

\begin{lmm}[Equilibrium force as the minimizer of primal dissipation function]
For a given trajectory $\{\Vc{x}_{t}\}$, we define the trajectory of the flux $\{\Vc{\flux}^{\dagger}_{t}\}$ minimizing the primal dissipation:
\begin{align}
    \{\Vc{\flux}^{\dagger}_{t}\}\defeq \arg \min_{\{\Vc{\flux}_{t}\}}\int_{0}^{t}\Dissp_{\Vc{x}_{t'}}[\Vc{\flux}_{t'}]\dd t',\quad \mbox{s.t.   $\dot{\Vc{x}}_{t'}+\Div_{\HIncMatrix} \Vc{\flux}_{t'}=0$ for all $t'\in[0,t]$}. \label{eq:min_Dissip_J}
\end{align}
Then, $\Vc{\flux}^{\dagger}_{t}$ is generated by the equilibrium force, $\Vc{\tf}_{t}^{\dagger}=\partial \Dissp_{\Vc{x}}[\Vc{\flux}^{\dagger}_{t}] \in \Polytope^{fr}(\Vc{0})$. Thus, the minimum primal dissipation flux that generates the given $\{\Vc{x}_{t}\}$ is the equilibrium flux.
\end{lmm}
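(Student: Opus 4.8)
The plan is to exploit the fact that both the objective and the constraint in \eqnref{eq:min_Dissip_J} are \emph{local in time}: the functional $\int_{0}^{t}\Dissp_{\Vc{x}_{t'}}[\Vc{\flux}_{t'}]\dd t'$ contains no coupling between the values $\Vc{\flux}_{t'}$ at different times (no term involving a time derivative of $\Vc{\flux}$), and the constraint $\dot{\Vc{x}}_{t'}+\Div_{\HIncMatrix}\Vc{\flux}_{t'}=0$ is imposed separately at each $t'$. Since $\{\Vc{x}_{t'}\}$ is a generalized flow, $\dot{\Vc{x}}_{t'}=-\HIncMatrix\Vc{\flux}(\Vc{x}_{t'})\in\Img[\HIncMatrix]$, so at each $t'$ the feasible set is the nonempty iso-velocity subspace $\Polytope^{vl}(\dot{\Vc{x}}_{t'})=\{\Vc{\flux}\in\Jspace_{\Vc{x}_{t'}}\mid-\HIncMatrix\Vc{\flux}=\dot{\Vc{x}}_{t'}\}$. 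First I would therefore reduce \eqnref{eq:min_Dissip_J} to the one-parameter family of static strictly convex problems $\min_{\Vc{\flux}\in\Polytope^{vl}(\dot{\Vc{x}}_{t'})}\Dissp_{\Vc{x}_{t'}}(\Vc{\flux})$, one for each $t'\in[0,t]$.

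Second, I would solve each static problem by invoking \thmref{thm:HHK}. Taking as reference flux the actual driving flux $\Vc{\flux}(\Vc{x}_{t'})$, for which $\Polytope^{vl}(\Vc{\flux}(\Vc{x}_{t'}))=\Polytope^{vl}(\dot{\Vc{x}}_{t'})$, the variational characterization \eqnref{eq:j_eq_f_st_variational} gives that the minimizer is the equilibrium flux $\Vc{\flux}_{eq}(\Vc{x}_{t'})=\Polytope^{vl}(\dot{\Vc{x}}_{t'})\cap\Variety^{fr}_{\Vc{x}_{t'}}(\Vc{0})$, which is unique by transversality of the dual foliation $(\Polytope^{vl},\Variety^{fr}_{\Vc{x}_{t'}})$ in $\Jspace_{\Vc{x}_{t'}}$. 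By the definition $\Variety^{fr}_{\Vc{x}_{t'}}(\Vc{0})=\partial\Dissp^{*}_{\Vc{x}_{t'}}[\Polytope^{fr}(\Vc{0})]$, its Legendre dual force obeys $\partial\Dissp_{\Vc{x}_{t'}}[\Vc{\flux}_{eq}(\Vc{x}_{t'})]\in\Polytope^{fr}(\Vc{0})=\Img[\HIncMatrix^{\Transpose}]$, i.e.\ it is a curl-free equilibrium force. An equivalent self-contained route is a Lagrange-multiplier argument: introducing a multiplier $\Vc{\mu}_{t'}\in\chain^{0}(\HGraph)$ for the constraint, stationarity of $\Dissp_{\Vc{x}_{t'}}(\Vc{\flux})+\langle\Vc{\mu}_{t'},\HIncMatrix\Vc{\flux}+\dot{\Vc{x}}_{t'}\rangle$ yields $\partial\Dissp_{\Vc{x}_{t'}}(\Vc{\flux})=-\HIncMatrix^{\Transpose}\Vc{\mu}_{t'}\in\Img[\HIncMatrix^{\Transpose}]$, and strict convexity together with $1$-coercivity of $\Dissp_{\Vc{x}_{t'}}$ and closedness of the affine feasible set upgrade this stationary point to the unique global minimizer.

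Finally I would reassemble the trajectory. Setting $\hat{\Vc{\flux}}_{t'}\defeq\Vc{\flux}_{eq}(\Vc{x}_{t'})$, any admissible competitor $\{\Vc{\flux}_{t'}\}$ satisfies $\Dissp_{\Vc{x}_{t'}}[\Vc{\flux}_{t'}]\ge\Dissp_{\Vc{x}_{t'}}[\hat{\Vc{\flux}}_{t'}]$ pointwise in $t'$, so integration over $[0,t]$ shows $\{\hat{\Vc{\flux}}_{t'}\}$ attains the minimum in \eqnref{eq:min_Dissip_J}; hence $\Vc{\flux}^{\dagger}_{t'}=\Vc{\flux}_{eq}(\Vc{x}_{t'})$ and $\Vc{\tf}^{\dagger}_{t'}=\partial\Dissp_{\Vc{x}_{t'}}[\Vc{\flux}^{\dagger}_{t'}]\in\Polytope^{fr}(\Vc{0})$, which is the assertion. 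The main obstacle is the only non-formal point in this argument: one must check that $t'\mapsto\hat{\Vc{\flux}}_{t'}$ is itself an admissible trajectory (measurable and integrable, with $\int_{0}^{t}\Dissp_{\Vc{x}_{t'}}[\hat{\Vc{\flux}}_{t'}]\dd t'<\infty$) so that the pointwise minimizer is a legitimate competitor for the trajectory problem. Finiteness holds because the actual driving flux $\{\Vc{\flux}(\Vc{x}_{t'})\}$ is admissible with finite dissipation on the compact interval $[0,t]$, and regularity of $t'\mapsto\hat{\Vc{\flux}}_{t'}$ follows from uniqueness of the pointwise minimizer combined with continuity of the Legendre transformations $\partial\Dissp^{*}_{\Vc{x}},\partial\Dissp_{\Vc{x}}$ and of $t'\mapsto(\Vc{x}_{t'},\dot{\Vc{x}}_{t'})$; I would either cite a standard measurable-selection result or, assuming $\{\Vc{x}_{t'}\}$ is $C^{1}$, argue this continuity directly.
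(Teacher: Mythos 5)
Your proposal is correct and follows essentially the same route as the paper: reduce the time-integrated problem to a pointwise minimization over $\Polytope^{vl}(\dot{\Vc{x}}_{t'})$ and then invoke the variational/intersection characterization of \thmref{thm:HHK} to identify the minimizer as $\Polytope^{vl}(\dot{\Vc{x}}_{t'})\cap\Variety^{fr}_{\Vc{x}_{t'}}(\Vc{0})$, whose dual force lies in $\Polytope^{fr}(\Vc{0})$. The additional Lagrange-multiplier derivation and the measurability/admissibility check are refinements the paper leaves implicit, but they do not change the argument.
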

\begin{proof}
Because the minimization of \eqnref{eq:min_Dissip_J} can be conducted pointwise-manner for each $t'\in[0,t]$ and $\dot{\Vc{x}}_{t'}+\Div_{\HIncMatrix} \Vc{\flux}_{t'}=0\Longleftrightarrow \Vc{\flux}_{t'}\in\Polytope^{vl}(\dot{\Vc{x}}_{t'})$, we have 
\begin{align}
    \Vc{\flux}_{t'}^{\dagger}=\Vc{\flux}^{\dagger}(\Vc{x}_{t'},\dot{\Vc{x}}_{t'})= \arg \min_{\Vc{\flux}\in\Polytope^{vl}(\dot{\Vc{x}}_{t'}) }\Dissp_{\Vc{x}_{t'}}[\Vc{\flux}]=\Polytope^{vl}(\dot{\Vc{x}}_{t'}) \cap \Variety^{fr}_{\Vc{x}_{t'}}(\Vc{0}), \label{eq:jdagger}
\end{align}
where we used \eqnref{eq:j_eq_f_st_intersection} and \eqnref{eq:j_eq_f_st_variational}.
Thus, from $\Vc{\flux}_{t'}^{\dagger} \in \Variety^{fr}_{\Vc{x}_{t'}}(\Vc{0}) \Longleftrightarrow \Vc{\tf}_{t'}^{\dagger} \in \Polytope^{fr}(\Vc{0})$, the minimum dissipation flux $\{\Vc{\flux}^{\dagger}_{t}\}$ is generated by the equilibrium force, $\{\Vc{\tf}_{t}^{\dagger}\}=\{\Vc{\tf}^{\dagger}(\Vc{x}_{t'},\dot{\Vc{x}}_{t'})\}\in \Polytope^{fr}(\Vc{0})$ where $\Vc{\tf}^{\dagger}(\Vc{x},\dot{\Vc{x}})\defeq \partial \Dissp_{\Vc{x}}[\Vc{\flux}^{\dagger}(\Vc{x},\dot{\Vc{x}})]$. 
\end{proof}

By exploiting this unique pairing between $\dot{\Vc{x}}_{t'}$ and $\Vc{\flux}_{t'}^{\dagger}$ or $\dot{\Vc{x}}_{t'}$ and $\Vc{\tf}_{t'}^{\dagger}$, we can obtain an induced dually flat structure on the restricted tangent and cotangent spaces of $\X$ and $\Y$ (\fgref{fig:Induced_Duality}), which can be regarded as an information-geometric extension of the Otto structure.
\begin{thm}[Induced dually flat structure on tangent and cotangent spaces]\label{thm:inducedHessiangeometry1}
Let $\tilde{\Tan}_{\Vc{x}} \X\defeq \Img \HIncMatrix\cong \Polytope^{sc}(\Vc{0})\subset \Tan_{\Vc{x}}\X$ and $\tilde{\Tan}_{\Vc{x}}^{*} \X\defeq \Tan_{\Vc{x}}^{*}\X/\Ker \HIncMatrix^{\Transpose}$ be tangent and cotangent spaces on $\X$ restricted by $\HIncMatrix$. 
On $\tilde{\Tan}_{\Vc{x}} \X$ and $\tilde{\Tan}_{\Vc{x}}^{*} \X$, we have the Legendre conjugate dissipation functions $\tilde{\Dissp}_{\Vc{x}}: \tilde{\Tan}_{\Vc{x}} \X \to \Real$ and $\tilde{\Dissp}_{\Vc{x}}^{*}: \tilde{\Tan}_{\Vc{x}}^{*} \X \to \Real$ induced by the dissipation functions on the edge spaces (\fgref{fig:Induced_Duality}).
\end{thm}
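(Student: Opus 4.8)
The plan is to build $\tilde{\Dissp}_{\Vc{x}}$ as a fibrewise minimum of $\Dissp_{\Vc{x}}$ over $\Div_{\HIncMatrix}$, to build $\tilde{\Dissp}_{\Vc{x}}^{*}$ as the pullback of $\Dissp^{*}_{\Vc{x}}$ along $\Grad_{\HIncMatrix}$, and then to verify that the resulting pair is Legendre conjugate with respect to the pairing that $\tilde{\Tan}_{\Vc{x}} \X$ and $\tilde{\Tan}_{\Vc{x}}^{*} \X$ inherit from $(\X,\Y)$, and that each member is a dissipation function in the sense of \defref{dfn:DissipationFunc}. Concretely, for $\dot{\Vc{x}} \in \tilde{\Tan}_{\Vc{x}} \X = \Img \HIncMatrix \cong \Polytope^{sc}(\Vc{0})$ I would set
\begin{equation*}
\tilde{\Dissp}_{\Vc{x}}(\dot{\Vc{x}}) \defeq \min_{\Vc{\flux} \in \Polytope^{vl}(\dot{\Vc{x}})} \Dissp_{\Vc{x}}(\Vc{\flux}) = \Dissp_{\Vc{x}}\!\left[\Vc{\flux}^{\dagger}(\Vc{x},\dot{\Vc{x}})\right],
\end{equation*}
with $\Vc{\flux}^{\dagger}(\Vc{x},\dot{\Vc{x}}) = \Polytope^{vl}(\dot{\Vc{x}}) \cap \Variety^{fr}_{\Vc{x}}(\Vc{0})$ the equilibrium flux identified in the preceding lemma; its existence, uniqueness, and finiteness follow from the $1$-coercivity (hence coercivity on the nonempty affine fibre $\Polytope^{vl}(\dot{\Vc{x}})$) and strict convexity of $\Dissp_{\Vc{x}}$ together with the dual-foliation property of the edge spaces. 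Dually, for $[\Vc{y}] \in \tilde{\Tan}_{\Vc{x}}^{*} \X = \Tan_{\Vc{x}}^{*}\X / \Ker \HIncMatrix^{\Transpose}$ I would set $\tilde{\Dissp}_{\Vc{x}}^{*}([\Vc{y}]) \defeq \Dissp^{*}_{\Vc{x}}(\Grad_{\HIncMatrix}\Vc{y})$, which is well defined because $\Ker \Grad_{\HIncMatrix} = \Ker \HIncMatrix^{\Transpose}$.

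Next I would note that the standard pairing $\langle\cdot,\cdot\rangle$ on $\Real^{N_{\molX}}$ descends to a nondegenerate pairing $\tilde{\Tan}_{\Vc{x}} \X \times \tilde{\Tan}_{\Vc{x}}^{*} \X \to \Real$ because $\Img \HIncMatrix = (\Ker \HIncMatrix^{\Transpose})^{\perp}$, and then compute
\begin{align*}
\sup_{\dot{\Vc{x}} \in \Img \HIncMatrix}\!\left[\langle \dot{\Vc{x}}, \Vc{y}\rangle - \tilde{\Dissp}_{\Vc{x}}(\dot{\Vc{x}})\right]
&= \sup_{\Vc{\flux} \in \Jspace_{\Vc{x}}}\!\left[\langle -\HIncMatrix \Vc{\flux}, \Vc{y}\rangle - \Dissp_{\Vc{x}}(\Vc{\flux})\right] \\
&= \sup_{\Vc{\flux} \in \Jspace_{\Vc{x}}}\!\left[\langle \Vc{\flux}, -\Grad_{\HIncMatrix}\Vc{y}\rangle - \Dissp_{\Vc{x}}(\Vc{\flux})\right] = \Dissp^{*}_{\Vc{x}}(-\Grad_{\HIncMatrix}\Vc{y}) = \Dissp^{*}_{\Vc{x}}(\Grad_{\HIncMatrix}\Vc{y}),
\end{align*}
where the first equality unfolds $\tilde{\Dissp}_{\Vc{x}}$ as a fibrewise minimum and merges the nested suprema over $(\dot{\Vc{x}}, \Vc{\flux})$ into one over $\Vc{\flux}$, and the last equality uses the symmetry $\Dissp^{*}_{\Vc{x}}(-\Vc{\tf}) = \Dissp^{*}_{\Vc{x}}(\Vc{\tf})$ (this is also where the sign of the continuity equation is absorbed). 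Hence $\tilde{\Dissp}_{\Vc{x}}^{*} = (\tilde{\Dissp}_{\Vc{x}})^{*}$; since both functions are proper, convex, and lower semicontinuous, Fenchel--Moreau gives the converse $\tilde{\Dissp}_{\Vc{x}} = (\tilde{\Dissp}_{\Vc{x}}^{*})^{*}$, so $(\tilde{\Dissp}_{\Vc{x}}, \tilde{\Dissp}_{\Vc{x}}^{*})$ is a Legendre conjugate pair. I would record as a corollary that the induced Legendre transform $[\Vc{y}] \mapsto \partial \tilde{\Dissp}_{\Vc{x}}^{*}([\Vc{y}])$ reproduces exactly $\dot{\Vc{x}} = -\Div_{\HIncMatrix}\partial \Dissp^{*}_{\Vc{x}}[\Grad_{\HIncMatrix}\Vc{y}]$, i.e. the equilibrium flow, and that when the edge dissipation is the quadratic form \eqnref{eq:quadratic_dissp} the pair reduces to the mutually inverse quadratic forms with matrices $\HIncMatrix \metric^{*}_{\Vc{x}} \HIncMatrix^{\Transpose}$ and $(\HIncMatrix \metric^{*}_{\Vc{x}} \HIncMatrix^{\Transpose})^{-1}$ on the quotient, i.e. the Otto metric.

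It remains to verify \defref{dfn:DissipationFunc} for both functions. For $\tilde{\Dissp}_{\Vc{x}}^{*}$ this is immediate by transport through the linear map $\Grad_{\HIncMatrix}$: symmetry and vanishing at $[\Vc{0}]$ are inherited from $\Dissp^{*}_{\Vc{x}}$, strict convexity holds because $\Dissp^{*}_{\Vc{x}}$ is strictly convex and $\Grad_{\HIncMatrix}$ is injective modulo $\Ker \HIncMatrix^{\Transpose}$, continuous differentiability follows by the chain rule, and $1$-coercivity follows because the quotient norm on $\tilde{\Tan}_{\Vc{x}}^{*} \X$ is equivalent to $\|\Grad_{\HIncMatrix}\,\cdot\,\|$. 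The same four conditions for $\tilde{\Dissp}_{\Vc{x}}$ then follow from the Legendre conjugacy just established, by exactly the argument of the ``Duality of dissipation functions'' proposition (Corollary 4.1.4 in \cite{hiriart-urruty1996}). I expect the main obstacle to be precisely the well-definedness of $\tilde{\Dissp}_{\Vc{x}}$ as a genuine dissipation function of $\dot{\Vc{x}}$: attainment and finiteness of the fibrewise infimum are handled by coercivity on each fibre, but its strict convexity and smoothness are awkward to extract directly from the infimal-convolution form, so the crucial point is to obtain these from the identity $\tilde{\Dissp}_{\Vc{x}} = (\tilde{\Dissp}_{\Vc{x}}^{*})^{*}$ once $\tilde{\Dissp}_{\Vc{x}}^{*}$ is known to be a dissipation function, rather than the other way around.
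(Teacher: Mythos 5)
Your proposal is correct and shares the paper's skeleton: both define $\tilde{\Dissp}_{\Vc{x}}(\Vc{v})=\Dissp_{\Vc{x}}(\Vc{\flux}^{\dagger}(\Vc{x},\Vc{v}))$ through the minimum-dissipation flux on the fibre $\Polytope^{vl}(\Vc{v})$, define $\tilde{\Dissp}^{*}_{\Vc{x}}$ as the pullback of $\Dissp^{*}_{\Vc{x}}$ along $\pm\HIncMatrix^{\Transpose}$ (the sign immaterial by the symmetry of the dissipation functions), and then establish Legendre conjugacy and the dissipation-function axioms. Where you genuinely diverge is in how the conjugacy and the regularity of $\tilde{\Dissp}_{\Vc{x}}$ are obtained. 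The paper computes \emph{both} Legendre transforms explicitly; the direction $(\tilde{\Dissp}^{*}_{\Vc{x}})^{*}=\tilde{\Dissp}_{\Vc{x}}$ requires inserting $\Vc{\flux}^{\dagger}$ and invoking the orthogonality $\langle\Vc{\flux}-\Vc{\flux}^{\dagger},\Vc{\tf}'\rangle=0$ for $\Vc{\tf}'\in\Polytope^{fr}(\Vc{0})$ and $\Vc{\flux}-\Vc{\flux}^{\dagger}\in\Ker\HIncMatrix$. You compute only the elementary direction $(\tilde{\Dissp}_{\Vc{x}})^{*}=\tilde{\Dissp}^{*}_{\Vc{x}}$ — valid because the linear term is constant on each fibre, so the outer supremum over $\Img\HIncMatrix$ and the inner minimum merge into a single supremum over $\Jspace_{\Vc{x}}$ — and recover the converse from Fenchel--Moreau, which needs only properness, convexity (automatic for an infimal projection), and lower semicontinuity (from coercivity in finite dimension). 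Your treatment of the axioms is likewise a real improvement on the one terse point in the paper's proof: the paper asserts that strict convexity and $1$-coercivity of $\tilde{\Dissp}_{\Vc{x}}$ ``follow from those of the original dissipation functions,'' which is not immediate for a marginal function, whereas you verify all four conditions for the pullback $\tilde{\Dissp}^{*}_{\Vc{x}}$, where they are straightforward, and transfer them to $\tilde{\Dissp}_{\Vc{x}}$ via the duality proposition for dissipation functions. The only slip is in your side remark: with the unsigned pullback $\tilde{\Dissp}^{*}_{\Vc{x}}([\Vc{y}])=\Dissp^{*}_{\Vc{x}}(\HIncMatrix^{\Transpose}\Vc{y})$, the induced Legendre map sends $[\Vc{y}]$ to $+\HIncMatrix\,\partial\Dissp^{*}_{\Vc{x}}(\HIncMatrix^{\Transpose}\Vc{y})$, i.e.\ the \emph{negative} of the velocity $\dot{\Vc{x}}=-\Div_{\HIncMatrix}\partial\Dissp^{*}_{\Vc{x}}[\Grad_{\HIncMatrix}\Vc{y}]$; this is why the paper pairs $\Vc{u}$ with $\Vc{\tf}^{\ddagger}(\Vc{u})=-\HIncMatrix^{\Transpose}\Vc{u}$. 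It affects nothing structural, but the identification with the equilibrium flow should carry that sign.
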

\begin{proof}
By employing \eqnref{eq:jdagger}, for each $\Vc{v}\in  \tilde{\Tan}_{\Vc{x}}\X$, we can uniquely determine $\Vc{\flux}^{\dagger}(\Vc{x},\Vc{v})$, $\Vc{\tf}^{\dagger}(\Vc{x},\Vc{v})\in \Polytope^{fr}(\Vc{0})$, and $\Vc{u}^{\dagger}(\Vc{x},\Vc{v}) \in \tilde{\Tan}_{\Vc{x}}^{*}\X$\footnote{Because $\Vc{\tf}^{\dagger}(\Vc{x},\Vc{v}) \in \Polytope^{fr}(\Vc{0})=\Img[\HIncMatrix^{\Transpose}]$ and $\tilde{\Tan}_{\Vc{x}}^{*} \X\defeq \Y/\Ker \HIncMatrix^{\Transpose}$, $\Vc{u}^{\dagger}(\Vc{x},\Vc{v})$ is uniquely determined.}. 
They satisfy
\begin{align}
    \Vc{v} & = - \HIncMatrix \Vc{\flux}^{\dagger}(\Vc{x},\Vc{v}), & \Vc{\flux}^{\dagger}(\Vc{x},\Vc{v}) & = \partial \Dissp^{*}_{\Vc{x}}[\Vc{\tf}^{\dagger}(\Vc{x},\Vc{v})], & \Vc{\tf}^{\dagger}(\Vc{x},\Vc{v}) & = - \HIncMatrix^{\Transpose} \Vc{u}^{\dagger}(\Vc{x},\Vc{v}). \label{eq:v_pairing}
\end{align}
Conversely, for a given $\Vc{u}\in \tilde{\Tan}_{\Vc{x}}^{*}\X$, we have $\Vc{\tf}^{\ddagger}(\Vc{u})$, $\Vc{\flux}^{\ddagger}(\Vc{x},\Vc{u})$, and $\Vc{v}^{\ddagger}(\Vc{x},\Vc{u})$ as follows:
\begin{align}
    \Vc{v}^{\ddagger}(\Vc{x},\Vc{u}) & = - \HIncMatrix \Vc{\flux}^{\ddagger}(\Vc{x},\Vc{u}), & \Vc{\flux}^{\ddagger}(\Vc{x},\Vc{u}) & = \partial \Dissp^{*}_{\Vc{x}}[\Vc{\tf}^{\ddagger}(\Vc{u})], & \Vc{\tf}^{\ddagger}(\Vc{u}) & = - \HIncMatrix^{\Transpose} \Vc{u}.\label{eq:u_pairing}
\end{align}
Thus, for a pair of $(\Vc{v},\Vc{u})_{\Vc{x}}$ satisfying $\Vc{u}=\Vc{u}^{\dagger}(\Vc{x},\Vc{v})$, we have $\Vc{v}=\Vc{v}^{\ddagger}(\Vc{x},\Vc{u})$, $\Vc{\flux}^{\dagger}(\Vc{x},\Vc{v})=\Vc{\flux}^{\ddagger}(\Vc{x},\Vc{u})$, and $\Vc{\tf}^{\dagger}(\Vc{x},\Vc{v})=\Vc{\tf}^{\ddagger}(\Vc{x},\Vc{u})$.
This pairing establishes a bijection between $\tilde{\Tan}_{\Vc{x}}\X$ and $\tilde{\Tan}_{\Vc{x}}^{*}\X$.
Moreover, this bijection is realized by the Legendre transformations of the following induced dissipation functions on $ \tilde{\Tan}_{\Vc{x}}\X$ and $\tilde{\Tan}_{\Vc{x}}^{*}\X$:
\begin{align}
\tilde{\Dissp}_{\Vc{x}}(\Vc{v}) & \defeq \Dissp_{\Vc{x}}(\Vc{\flux}^{\dagger}(\Vc{x},\Vc{v})), & \tilde{\Dissp}_{\Vc{x}}^{*}(\Vc{u}) & \defeq \Dissp_{\Vc{x}}^{*}(\Vc{\tf}^{\ddagger}(\Vc{u})). \label{eq:induced_dissp_func_tan}
\end{align}
These functions are Legendre conjugate as follows:
\begin{align*}
 \max_{\Vc{u}'\in\tilde{\Tan}^{*}_{\Vc{x}}\X}&\left[\langle\Vc{v},\Vc{u}'\rangle-\tilde{\Dissp}_{\Vc{x}}^{*}(\Vc{u}') \right]=\max_{\substack{\Vc{u}'\in\tilde{\Tan}^{*}_{\Vc{x}}\X\\ \Vc{\flux}\in \Polytope^{vl}(\Vc{v})}}\left[\langle-\HIncMatrix\Vc{\flux},\Vc{u}'\rangle-\tilde{\Dissp}_{\Vc{x}}^{*}(\Vc{u}') \right]=\max_{\substack{\Vc{u}'\in\tilde{\Tan}^{*}_{\Vc{x}}\X\\ \Vc{\flux}\in \Polytope^{vl}(\Vc{v})}}\left[\langle\Vc{\flux},-\HIncMatrix^{\Transpose}\Vc{u}'\rangle-\Dissp_{\Vc{x}}^{*}(-\HIncMatrix^{\Transpose}\Vc{u}') \right]\\
&=\max_{\substack{\Vc{\tf}'\in \Polytope^{fr}(\Vc{0})\\ \Vc{\flux}\in \Polytope^{vl}(\Vc{v})}}\left[\langle\Vc{\flux},\Vc{\tf}'\rangle-\Dissp_{\Vc{x}}^{*}(\Vc{\tf}') \right]
=\max_{\substack{\Vc{\tf}'\in \Polytope^{fr}(\Vc{0})\\ \Vc{\flux}\in \Polytope^{vl}(\Vc{v})}}\left[\langle\Vc{\flux}^{\dagger}(\Vc{x},\Vc{v}),\Vc{\tf}'\rangle-\Dissp_{\Vc{x}}^{*}(\Vc{\tf}')  + \langle(\Vc{\flux}-\Vc{\flux}^{\dagger}(\Vc{x},\Vc{v})),\Vc{\tf}'\rangle\right]\\
&=\max_{\substack{\Vc{\tf}'\in \Polytope^{fr}(\Vc{0})}}\left[\langle\Vc{\flux}^{\dagger}(\Vc{x},\Vc{v}),\Vc{\tf}'\rangle-\Dissp_{\Vc{x}}^{*}(\Vc{\tf}')  \right]=\Dissp_{\Vc{x}}(\Vc{\flux}^{\dagger}(\Vc{x},\Vc{v}))=\tilde{\Dissp}_{\Vc{x}}(\Vc{v}),
\end{align*}
where we used $\langle\Vc{\flux}-\Vc{\flux}^{\dagger}(\Vc{x},\Vc{v}),\Vc{\tf}'\rangle=0$ because $\Vc{\tf}' \in \Polytope^{fr}(\Vc{0})=\Img \HIncMatrix^{\Transpose}$ and $(\Vc{\flux}-\Vc{\flux}^{\dagger}(\Vc{x},\Vc{v})) \in \Ker \HIncMatrix$.
The inverse is also shown:
\begin{align*}
 \max_{\Vc{v}'\in\tilde{\Tan}_{\Vc{x}}\X}&\left[\langle\Vc{v}',\Vc{u}\rangle-\tilde{\Dissp}_{\Vc{x}}(\Vc{v}') \right]
 =\max_{\substack{\Vc{v}'\in\tilde{\Tan}_{\Vc{x}}\X}}\left[\langle-\HIncMatrix\Vc{\flux}^{\dagger}(\Vc{x},\Vc{v}'),\Vc{u}\rangle-\Dissp_{\Vc{x}}(\Vc{\flux}^{\dagger}(\Vc{x},\Vc{v}')) \right]\\
 &=\max_{\substack{\Vc{\flux}^{\dagger}\in  \Variety^{fr}_{\Vc{x}}(\Vc{0})}}\left[\langle\Vc{\flux}^{\dagger},-\HIncMatrix^{\Transpose}\Vc{u}\rangle-\Dissp_{\Vc{x}}(\Vc{\flux}^{\dagger}) \right]
 =\max_{ \Vc{\flux}^{\dagger}\in  \Variety^{fr}_{\Vc{x}}(\Vc{0})}\left[\langle\Vc{\flux}^{\dagger},\Vc{\tf}^{\ddagger}(\Vc{u})\rangle-\Dissp_{\Vc{x}}(\Vc{\flux}^{\dagger}) \right]\\
 &=\Dissp_{\Vc{x}}^{*}(\Vc{\tf}^{\ddagger}(\Vc{u}))=\tilde{\Dissp}_{\Vc{x}}^{*}(\Vc{u}),
\end{align*}
where we used the fact that $\{\Vc{\flux}^{\dagger}(\Vc{x},\Vc{v}')\}_{\Vc{v}'\in\tilde{\Tan}_{\Vc{x}}\X}=\Variety^{fr}_{\Vc{x}}(\Vc{0})$ in the second line.
The pair $(\Vc{v}$, $\Vc{u})_{\Vc{x}}$ are Legendre dual of these functions:
\begin{align}
    \partial_{\Vc{v}} \tilde{\Dissp}_{\Vc{x}}(\Vc{v})&= \left[\frac{\partial \Vc{\flux}^{\dagger}(\Vc{x},\Vc{v})}{\partial \Vc{v}}\right]^{\Transpose}\left.\frac{\partial \Dissp_{\Vc{x}}(\Vc{\flux})}{\partial \Vc{\flux}}\right|_{\Vc{\flux}=\Vc{\flux}^{\dagger}(\Vc{x},\Vc{v})}
    =\left[\frac{\partial \Vc{\flux}^{\dagger}(\Vc{x},\Vc{v})}{\partial \Vc{v}}\right]^{\Transpose}\Vc{\tf}^{\dagger}(\Vc{x},\Vc{v})\\
    &=-\left[\frac{\partial \Vc{\flux}^{\dagger}(\Vc{x},\Vc{v})}{\partial \Vc{v}}\right]^{\Transpose}\HIncMatrix^{\Transpose}\Vc{u}^{\dagger}(\Vc{x},\Vc{v})=\Vc{u},\\
    \partial_{\Vc{u}} \tilde{\Dissp}^{*}_{\Vc{x}}(\Vc{u}) &= \left[\frac{\partial \Vc{\tf}^{\ddagger}(\Vc{u})}{\partial \Vc{u}}\right]^{\Transpose}\left.\frac{\partial \Dissp^{*}_{\Vc{x}}(\Vc{\tf})}{\partial \Vc{\tf}}\right|_{\Vc{\tf}=\Vc{\tf}^{\ddagger}(\Vc{u})}=\left[\frac{\partial \Vc{\tf}^{\ddagger}(\Vc{u})}{\partial \Vc{u}}\right]^{\Transpose}\Vc{\flux}^{\ddagger}(\Vc{x},\Vc{u})\\
    &=-\HIncMatrix\Vc{\flux}^{\ddagger}(\Vc{x},\Vc{u})=\Vc{v},
\end{align}
where we used $\left[\frac{\partial \Vc{\flux}^{\dagger}(\Vc{x},\Vc{v})}{\partial \Vc{v}}\right]^{\Transpose}\HIncMatrix^{\Transpose}=-\identityM$ from $\frac{\partial}{\partial \Vc{v}}[\Vc{v}+\HIncMatrix \Vc{\flux}^{\dagger}(\Vc{x},\Vc{v})]=\identityM + \HIncMatrix \frac{\partial \Vc{\flux}^{\dagger}(\Vc{x},\Vc{v})}{\partial \Vc{v}}=0$ and $\frac{\partial \Vc{\tf}^{\ddagger}(\Vc{u})}{\partial \Vc{u}} =-\HIncMatrix^{\Transpose}$.
They are dissipation functions; strict convexity and $1$-coercivity follow from those of the original dissipation functions.
Also, we have
\begin{align}
\mbox{Symmetry}& & \tilde{\Dissp}_{\Vc{x}}(-\Vc{v}) & = \Dissp_{\Vc{x}}(\Vc{\flux}^{\dagger}(\Vc{x},-\Vc{v}))=\Dissp_{\Vc{x}}(-\Vc{\flux}^{\dagger}(\Vc{x},\Vc{v}))=\tilde{\Dissp}_{\Vc{x}}(\Vc{v})\\
\mbox{Bounded by $0$ at $\Vc{0}$}& &     \tilde{\Dissp}_{\Vc{x}}(\Vc{v}=\Vc{0})&=\Dissp_{\Vc{x}}(\Vc{\flux}^{\dagger}(\Vc{x},\Vc{0}))=\Dissp_{\Vc{x}}(\Vc{0})=0.
\end{align}
\end{proof}
Using the induced dissipation functions, we define the Bregman divergence on $(\tilde{\Tan}_{\Vc{x}}\X, \tilde{\Tan}^{*}_{\Vc{x}}\X)$, which is associated with the Bregman divergence on $(\Jspace_{\Vc{x}},\Fspace_{\Vc{x}})$:
\begin{align}
    \BD^{\X,\Y}_{\tilde{\Dissp}_{\Vc{x}}}[\Vc{v}\|\Vc{u}']&\defeq \tilde{\Dissp}_{\Vc{x}}(\Vc{v}) + \tilde{\Dissp}_{\Vc{x}}^{*}(\Vc{u}')-\langle \Vc{v}, \Vc{u}'\rangle\\
    &=\Dissp_{\Vc{x}}(\Vc{\flux}^{\dagger}) + \Dissp_{\Vc{x}}^{*}(\Vc{\tf}'^{\ddagger})-\langle \Vc{\flux}^{\dagger}, \Vc{\tf}'^{\ddagger}\rangle=\BD^{\Jspace,\Fspace}_{\Vc{x}}[\Vc{\flux}^{\dagger}\|\Vc{\tf}'^{\ddagger}],
\end{align}
where $\Vc{\flux}^{\dagger}=\Vc{\flux}^{\dagger}(\Vc{x},\Vc{v})$ and $\Vc{\tf}'^{\ddagger}=\Vc{\tf}^{\ddagger}(\Vc{x},\Vc{u}')$.
Therefore, we have the induced dually flat structure on $(\tilde{\Tan}_{\Vc{x}}\X, \tilde{\Tan}_{\Vc{x}}^{*}\X)$.
This induced structure can be regarded as an extension to discrete manifolds of the Otto structure \cite{otto2001Commun.PartialDiffer.Equ.,villani2003}: the formal Riemannian structure induced by the $L^{2}$ -Wasserstein distance.
This is also related to Pistone's infinite-dimensional information geometry\cite{pistone2018Inf.Geom.ItsAppl.,pistone2021ProgressinInformationGeometry:TheoryandApplications}.

\subsection{Fisher information, natural gradient, mirror descent, evolutionary computation, and optimal transport}
In information geometry, it is conventional to use the Fisher information matrices, i.e., the Hessian matrices $\FM_{\Vc{x}}$ and $\FM^{*}_{\Vc{y}}$ (\eqnref{eq:Hessian_X_Y}) as the metric tensor (Fisher--Rao metric) on  $(\Tan_{\Vc{x}}\X, \Tan_{\Vc{x}}^{*}\X)$ or equivalently on $(\Tan_{\Vc{y}}\Y, \Tan_{\Vc{y}}^{*}\Y)$  (\fgref{fig:Induced_Duality}). 
Gradient systems have been defined information-geometrically\cite{fujiwara1995PhysicaD:NonlinearPhenomena} as a Riemannian gradient flow using the Bregman divergence and the Fisher information matrix of $\Pfunc(\Vc{x})$ as the gradient function and the metric tensor, respectively: $\dot{\Vc{x}}=-\FM_{\Vc{x}}^{-1} \partial \BD^{\X}_{\Pfunc}[\Vc{x}\|\tilde{\Vc{x}}]$.
Because both $\FM_{\Vc{x}}$ and $\BD^{\X}_{\Pfunc}$ are derived from $\Pfunc(\Vc{x})$, this gradient flow becomes a geodesic in $\Y$ space: $\dot{\Vc{y}}=- (\Vc{y}-\tilde{\Vc{y}})$.
In natural gradient descent \cite{amari1996Proc.9thInt.Conf.NeuralInf.Process.Syst.,amari1998NeuralComputation,raskutti2015IEEETrans.Inf.Theory}, the Fisher information matrix is used to find the steepest descent gradient of a function $\FEnergy(\Vc{\theta})$ on a parameter space $\Theta$ as $\dot{\Vc{\theta}}=-\FM_{\Vc{\theta}}^{-1} \partial \FEnergy(\Vc{\theta})$, where $\FM_{\Vc{\theta}}$ is determined independently of $\FEnergy(\Vc{\theta})$ by considering the underlying model parameter space.
In optimization, the natural gradient is fundamental in information-geometric optimization algorithms, which contain various evolutionary optimization schemes\cite{ollivier2017J.Mach.Learn.Res.}.
In relation to machine learning, the mirror descent is identified with the natural gradient descent by a naive continuous limit\cite{raskutti2015IEEETrans.Inf.Theory,gunasekar2021}. 
Furthermore, optimal transport has recently been employed to replace or integrate the Fisher-Rao metric with the Wasserstein metrics\cite{li2018Info.Geo.,amari2018Info.Geo.}.
Because the Wasserstein metric can take the information of the base manifold into account, their integration may provide more amenable ways to accommodate various prior and structural information.

The doubly dual flat structure introduced in this work actually provides a solution to generalize those results and the associated problems. 
The base space $\X$ with the dually flat structure and the associated Fisher information matrix accommodates the conventional natural gradient.
The graph or hypergraph structure endows the additional topological relation to the base space of $\X$. 
The dissipation functions on the edge spaces or their induced versions bestow a more flexible way than the Fisher-Rao metric to represent the loss of the potential function, i.e., the dissipation, at each point in the state space. 
Upon necessity, we may combine both of them (\fgref{fig:Induced_Duality}), for example, as $\dot{\Vc{x}}=-\FM_{\Vc{x}}^{-1} \partial \FEnergy^{(1)}(\Vc{x})-\Div_{\HIncMatrix} \partial \Dissp^{*}_{\Vc{x}}[\Grad_{\HIncMatrix}\partial\FEnergy^{(2)}(\Vc{x})]$ where $\FEnergy^{(1)}(\Vc{x})$ and $\FEnergy^{(2)}(\Vc{x})$ could be different.
This flexibility may contribute to the design of new algorithms for machine learning.
Actually, this integrated representation is quite relevant to the filtering equations\cite{bain2008} in sequential inference where the first term, i.e., $-\FM_{\Vc{x}}^{-1} \partial \FEnergy^{(1)}(\Vc{x})$, can usually be associated with the update of posterior probability by observation and the second term, $-\Div_{\HIncMatrix} \partial \Dissp^{*}_{\Vc{x}}[\Grad_{\HIncMatrix}\partial\FEnergy^{(2)}(\Vc{x})]$ can represent the prediction by the prior information on the dynamics.
Our framework may provide a unified information-geometric perspective to various information-geometric analyses and extensions of filtering, e.g., projection-filters\cite{brigo1998IEEETrans.Autom.Control}, information-geometric nonlinear filtering\cite{li2017201736thChin.ControlConf.CCC}, and information geometric optimization\cite{ollivier2017J.Mach.Learn.Res.}
Furthermore, the generalized gradient flow can be regarded as a continuous time limit of the mirror descent where the nonlinear Legendre duality between primal and dual spaces is preserved at the limit.
This fact may be employed to design new gradient-based algorithms based on the doubly dual flat structure.

\section{Information-geometric Properties of Generalized Nonequilibrium Flow}\label{sec:InfoGeoNonEquilibrium}
In this section, we consider the nonequilibrium flow defined by \eqnref{eq:gNGF}, i.e.,
\begin{align}
    \dot{\Vc{x}}=-\Div_{\HIncMatrix} \partial \Dissp^{*}_{\Vc{x}}\left[\left[\Grad_{\HIncMatrix}\partial \BD^{\X}_{\Pfunc}[\Vc{x}\|\tilde{\Vc{x}}]\right]+\Vc{\tf}_{NE}\right], \label{eq:gNGF2}
\end{align}
with $\Vc{\tf}_{NE}\not\in \Img \HIncMatrix^{\Transpose}$, and show how information geometry can be employed to analyze such dynamics.
While we can obtain several properties of equilibrium flow independently of the detail of the thermodynamic function and the dissipation function, these functions should be related so as to obtain nice properties for the nonequilibrium flow. We will observe that the thermodynamic function and the dissipation function of LMA kinetics actually have such a relation.

\subsection{Gradient-flow-like property and Lyapunov function of nonequilibrium flow}
For the equilibrium flow (\eqnref{eq:gGF}), the Bregman divergence $\BD^{\X}_{\Pfunc}[\Vc{x}\|\tilde{\Vc{x}}]$ is a Lyapunov function.
The Bregman divergence can still be a Lyapunov function even for the nonequilibrium flow (\eqnref{eq:gNGF2}) under the following conditions:
\begin{lmm}\label{lmm:orth_decomp}
Suppose that, for all $\Vc{x}\in \X$, the force $\Vc{\tf}(\Vc{x})=\Grad_{\HIncMatrix}\partial \BD^{\X}_{\Pfunc}[\Vc{x}\|\tilde{\Vc{x}}]+\Vc{\tf}_{NE}$ is orthogonally decomposed as $\Vc{\tf}(\Vc{x})=\Vc{\tf}_{S}(\Vc{x}) + \Vc{\tf}_{A}(\Vc{x})$ where $\Vc{\tf}_{S}(\Vc{x})\defeq \Grad_{\HIncMatrix}\partial_{\Vc{x}}\BD^{\X}_{\Pfunc}[\Vc{x}\|\tilde{\Vc{x}}_{CB}]$ and $\Vc{\tf}_{A}(\Vc{x})\in \Fspace_{\Vc{x}}$ satisfy the pseudo-Hilbert-isosceles orthogonality $\Vc{\tf}_{S}(\Vc{x}) \perp_{H} \Vc{\tf}_{A}(\Vc{x})$.
Then $\frac{\dd}{\dt}\BD^{\X}_{\Pfunc}[\Vc{x}_{t}\|\tilde{\Vc{x}}_{CB}]\le 0$ holds. 
In addition, $\Vc{x}_{CB}=\Polytope^{sc}(\Vc{x}_{0}) \cap \Variety^{eq}(\tilde{\Vc{x}}_{CB})$ is the unique steady state of \eqnref{eq:gNGF2} with the initial state $\Vc{x}_{0}$ that attains $\frac{\dd}{\dt}\BD^{\X}_{\Pfunc}[\Vc{x}_{t}\|\tilde{\Vc{x}}_{CB}]=0$ .
Thus, $\Vc{x}_{CB}$ is locally and asymptotically stable\footnote{To have global stability, we have to consider the boundary of $\X$.}.
\end{lmm}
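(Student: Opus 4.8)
The plan is to differentiate $\BD^{\X}_{\Pfunc}[\Vc{x}_{t}\|\tilde{\Vc{x}}_{CB}]$ along the trajectory of \eqnref{eq:gNGF2} and show the derivative is nonpositive, vanishing only at the asserted steady state. First I would use $\partial_{\Vc{x}}\BD^{\X}_{\Pfunc}[\Vc{x}\|\tilde{\Vc{x}}_{CB}]=\partial\Pfunc(\Vc{x})-\tilde{\Vc{y}}_{CB}=\Vc{y}-\tilde{\Vc{y}}_{CB}$, the flow equation $\dot{\Vc{x}}=-\HIncMatrix\Vc{\flux}(\Vc{x})$ with $\Vc{\flux}(\Vc{x})=\partial\Dissp^{*}_{\Vc{x}}[\Vc{\tf}(\Vc{x})]$, and the adjointness $\Div_{\HIncMatrix}=\HIncMatrix$, $\Grad_{\HIncMatrix}=\HIncMatrix^{\Transpose}$, to obtain
\begin{align}
\frac{\dd}{\dt}\BD^{\X}_{\Pfunc}[\Vc{x}_{t}\|\tilde{\Vc{x}}_{CB}]
&= \langle \dot{\Vc{x}},\,\Vc{y}-\tilde{\Vc{y}}_{CB}\rangle
= -\langle \HIncMatrix\Vc{\flux}(\Vc{x}),\,\Vc{y}-\tilde{\Vc{y}}_{CB}\rangle \notag\\
&= -\langle \Vc{\flux}(\Vc{x}),\,\HIncMatrix^{\Transpose}(\Vc{y}-\tilde{\Vc{y}}_{CB})\rangle
= -\langle \Vc{\flux}(\Vc{x}),\,\Vc{\tf}_{S}(\Vc{x})\rangle , \notag
\end{align}
where the last equality is the definition of $\Vc{\tf}_{S}(\Vc{x})=\Grad_{\HIncMatrix}\partial_{\Vc{x}}\BD^{\X}_{\Pfunc}[\Vc{x}\|\tilde{\Vc{x}}_{CB}]$. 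Hence the problem reduces to bounding the bilinear pairing $\langle \Vc{\flux}(\Vc{x}),\Vc{\tf}_{S}(\Vc{x})\rangle$ from below.

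The key tool for this is the positive decomposition via pseudo-Hilbert-isosceles orthogonality stated earlier. By \defref{dfn:PHorthogonality}, the hypothesis $\Vc{\tf}_{S}(\Vc{x})\perp_{H}\Vc{\tf}_{A}(\Vc{x})$ says that $\Vc{\tf}''\defeq \Vc{\tf}_{S}(\Vc{x})-\Vc{\tf}_{A}(\Vc{x})$ lies on the same central affine manifold as $\Vc{\tf}(\Vc{x})=\Vc{\tf}_{S}(\Vc{x})+\Vc{\tf}_{A}(\Vc{x})$, i.e.\ $\Vc{\tf}''\in\cManifold_{\Vc{x}}^{\Dissp^{*}}(\Vc{\tf}(\Vc{x}))$. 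Applying the positive-decomposition lemma with $\Vc{\flux}=\Vc{\flux}(\Vc{x})=\partial\Dissp^{*}_{\Vc{x}}[\Vc{\tf}(\Vc{x})]$ then yields $\langle \Vc{\flux}(\Vc{x}),\Vc{\tf}_{S}(\Vc{x})\rangle=\tfrac12\BD^{\Jspace,\Fspace}_{\Vc{x}}[\Vc{\flux}(\Vc{x});-\Vc{\tf}'']\ge0$, so $\frac{\dd}{\dt}\BD^{\X}_{\Pfunc}[\Vc{x}_{t}\|\tilde{\Vc{x}}_{CB}]\le0$. For the equality case I would use nondegeneracy of the Bregman divergence: $\BD^{\Jspace,\Fspace}_{\Vc{x}}[\Vc{\flux}(\Vc{x});-\Vc{\tf}'']=0$ forces $\Vc{\flux}(\Vc{x})=\partial\Dissp^{*}_{\Vc{x}}(-\Vc{\tf}'')$, and combining this with $\Vc{\flux}(\Vc{x})=\partial\Dissp^{*}_{\Vc{x}}(\Vc{\tf}(\Vc{x}))$ and injectivity of $\partial\Dissp^{*}_{\Vc{x}}$ gives $\Vc{\tf}(\Vc{x})=-\Vc{\tf}''$, i.e.\ $\Vc{\tf}_{S}(\Vc{x})=\Vc{0}$. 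Since $\Vc{\tf}_{S}(\Vc{x})=\HIncMatrix^{\Transpose}(\Vc{y}-\tilde{\Vc{y}}_{CB})$, this is equivalent to $\Vc{y}\in\Polytope^{eq}(\tilde{\Vc{y}}_{CB})$, hence to $\Vc{x}\in\Variety^{eq}(\tilde{\Vc{x}}_{CB})$ by \eqnref{eq:Peq_param}.

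The remaining step is a LaSalle/Lyapunov argument. Since $\dot{\Vc{x}}\in\Img\HIncMatrix$, the trajectory stays in $\Polytope^{sc}(\Vc{x}_{0})$, and on that polyhedron the zero set of the derivative is $\Polytope^{sc}(\Vc{x}_{0})\cap\Variety^{eq}(\tilde{\Vc{x}}_{CB})$, which by \lmmref{lmm:DualFoliationVertex} is the single transversal intersection point $\Vc{x}_{CB}$. Because $\BD^{\X}_{\Pfunc}[\Vc{x}_{t}\|\tilde{\Vc{x}}_{CB}]$ is bounded below, nonincreasing, and strictly decreasing off $\Vc{x}_{CB}$, the $\omega$-limit set of a nearby trajectory is $\{\Vc{x}_{CB}\}$; as the flow is autonomous with continuous right-hand side, this limit point is a steady state, and it is the only steady state in $\Polytope^{sc}(\Vc{x}_{0})$ at which the derivative vanishes. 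Local asymptotic stability then follows from Lyapunov's theorem using that $\Vc{x}_{CB}=\arg\min_{\Vc{x}\in\Polytope^{sc}(\Vc{x}_{0})}\BD^{\X}_{\Pfunc}[\Vc{x}\|\tilde{\Vc{x}}_{CB}]$ (the optimization in the proof of \lmmref{lmm:DualFoliationVertex}), so that $\BD^{\X}_{\Pfunc}[\cdot\|\tilde{\Vc{x}}_{CB}]$ minus its minimum is a strict local Lyapunov function at $\Vc{x}_{CB}$.

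The hard part will be the last step's handling of the boundary $\partial\X$: LaSalle's principle needs a compact forward-invariant neighborhood, so one must restrict to a sublevel set of $\BD^{\X}_{\Pfunc}[\cdot\|\tilde{\Vc{x}}_{CB}]$ small enough to avoid $\partial\X$ (giving exactly the local statement), or separately exclude escape to the boundary for the global version — which is why only local asymptotic stability is claimed. A lesser point needing care is the identification $\langle\Vc{\flux}(\Vc{x}),\Vc{\tf}_{S}(\Vc{x})\rangle=\tfrac12\BD^{\Jspace,\Fspace}_{\Vc{x}}[\Vc{\flux}(\Vc{x});-\Vc{\tf}'']$ when $\Vc{\tf}_{S},\Vc{\tf}_{A}$ are not themselves on the central affine manifold; this is precisely the content of the positive-decomposition lemma applied with $\Vc{\tf}''=\Vc{\tf}_{S}-\Vc{\tf}_{A}$, so it should go through without extra work.
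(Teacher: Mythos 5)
Your proposal is correct and follows essentially the same route as the paper's proof: differentiate the Bregman divergence along the flow, reduce to $-\langle\Vc{\flux}(\Vc{x}),\Vc{\tf}_{S}(\Vc{x})\rangle$, apply the positive decomposition \eqnref{eq:pHilbert2} with $\Vc{\tf}''=\Vc{\tf}_{S}-\Vc{\tf}_{A}$, and characterize the equality case as $\Vc{\tf}_{S}=\Vc{0}\Leftrightarrow\Vc{x}\in\Variety^{eq}(\tilde{\Vc{x}}_{CB})$, which intersected with $\Polytope^{sc}(\Vc{x}_{0})$ gives the unique point $\Vc{x}_{CB}$. Your added LaSalle/Lyapunov discussion and the careful treatment of sublevel sets near $\partial\X$ only make explicit what the paper leaves to its footnote on local versus global stability.
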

\begin{proof}
We can directly verify $\frac{\dd}{\dt}\BD^{\X}_{\Pfunc}[\Vc{x}_{t}\|\tilde{\Vc{x}}_{CB}]\le 0$ as follows:
\begin{align}
    \frac{\dd}{\dt}\BD^{\X}_{\Pfunc}[\Vc{x}_{t}\|\tilde{\Vc{x}}_{CB}]&=\langle\dot{\Vc{x}}, \partial_{\Vc{x}}\BD^{\X}_{\Pfunc}[\Vc{x}_{t}\|\tilde{\Vc{x}}_{CB}] \rangle=-\langle \Div_{\HIncMatrix}\Vc{\flux}(\Vc{x}), \partial_{\Vc{x}}\BD^{\X}_{\Pfunc}[\Vc{x}_{t}\|\tilde{\Vc{x}}_{CB}] \rangle\\
    &=-\langle \Vc{\flux}(\Vc{x}), \Grad_{\HIncMatrix}\partial_{\Vc{x}}\BD^{\X}_{\Pfunc}[\Vc{x}_{t}\|\tilde{\Vc{x}}_{CB}] \rangle\notag\\
    &=-\langle\Vc{\flux}(\Vc{x}), \Vc{\tf}_{S}(\Vc{x}) \rangle=-\frac{1}{2}\BD^{\Jspace,\Fspace}_{\Vc{x}}[\Vc{\flux}(\Vc{x})\|-\Vc{\tf}''(\Vc{x})] \le 0
\end{align}
where we used \eqnref{eq:pHilbert2} and $\Vc{\tf}''(\Vc{x})\defeq\Vc{\tf}_{S}(\Vc{x})-\Vc{\tf}_{A}(\Vc{x})$. 
The equality holds if and only if $\Vc{\tf}(\Vc{x})=-\Vc{\tf}''(\Vc{x})$, which means that
\begin{align}
    \Vc{\tf}(\Vc{x})=-\Vc{\tf}''(\Vc{x}) \Longleftrightarrow \Vc{\tf}_{S}(\Vc{x})=0 \Longleftrightarrow \Grad_{\HIncMatrix}\partial_{\Vc{x}}\BD^{\X}_{\Pfunc}[\Vc{x}\|\tilde{\Vc{x}}_{CB}]=0 \Longleftrightarrow \Vc{x} \in \Variety^{eq}(\tilde{\Vc{x}}_{CB}).
\end{align}
Because $\Vc{x}_{t}\in \Polytope^{sc}(\Vc{x}_{0})$,  $\Vc{x}_{CB}=\Polytope^{sc}(\Vc{x}_{0}) \cap \Variety^{eq}(\tilde{\Vc{x}}_{CB})$ holds.
\end{proof}
Thus, if the pseudo-Hilbert-isosceles orthogonal decomposition exists, then the nonequilibrium flow behaves like the equilibrium flow.

\subsection{Complex-balanced state and pseudo-Hilbert-isosceles orthogonality}
More specific conditions or situations under which the orthogonal decomposition in \lmmref{lmm:orth_decomp} exists is still an open problem. 
However, for CRN with LMA kinetics, the decomposition holds if a complex-balanced steady state exists.
\begin{prop}[Complex-balanced steady state and orthogonal decomposition for CRN with LMA kinetics\cite{kaiser2018JStatPhys,renger2021DiscreteContin.Dyn.Syst.-S,patterson2021ArXiv210314384Math-Ph,kobayashi2022Phys.Rev.Researcha}]
Suppose that a complex balanced steady state exists, i.e., $\Manifold^{\mathrm{CB}}\neq \emptyset$ for CRN with LMA kinetics (\eqnref{eq:CRN_rate}).
Using any $\tilde{\Vc{x}}_{CB} \in \Manifold^{\mathrm{CB}}$, consider a decomposition of the force $\Vc{\tf}_{MA}(\Vc{x})$ as $\Vc{\tf}_{MA}(\Vc{x})=\Vc{\tf}_{S}(\Vc{x})+\Vc{\tf}_{A}$ where $\Vc{\tf}_{S}(\Vc{x})=\Grad_{\HIncMatrix}\partial_{\Vc{x}}\BD^{\X}_{\Pfunc}[\Vc{x}\|\tilde{\Vc{x}}_{CB}]$, $\Vc{\tf}_{A}=\ln \Vc{K}+\HIncMatrix^{\Transpose}\ln \tilde{\Vc{x}}_{CB}$, and $\Pfunc(\Vc{x})$ is as in \eqnref{eq:potentialfunction_X_MAK}.
Then, for the dissipation functions in \eqnref{eq:CRN_dissip}, the pseudo-Hilbert isosceles orthogonality $\Vc{\tf}_{S}(\Vc{x})\perp_{H} \Vc{\tf}_{A}$ holds for all $\Vc{x} \in \X$.
In addition, $\Manifold^{eq}(\tilde{\Vc{x}}_{CB})=\Manifold^{\mathrm{CB}}$ holds.
\end{prop}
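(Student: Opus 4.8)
The strategy is to turn the pseudo-Hilbert-isosceles orthogonality into a polynomial identity in $\Vc{x}$ and recognize the complex-balanced condition inside it. Since the dual dissipation function is separable, $\Dissp^{*}_{\Vc{x}}(\Vc{\tf})=\sum_{e}\frenecy_{e}(\Vc{x})\,2\!\left[\cosh(\tf_{e}/2)-1\right]$ with $\frenecy_{e}(\Vc{x})>0$, and $\Vc{\tf}_{A}$ does not depend on $\Vc{x}$, the identity defining $\Vc{\tf}_{S}(\Vc{x})\perp_{H}\Vc{\tf}_{A}$, namely $\Dissp^{*}_{\Vc{x}}(\Vc{\tf}_{S}(\Vc{x})+\Vc{\tf}_{A})=\Dissp^{*}_{\Vc{x}}(\Vc{\tf}_{S}(\Vc{x})-\Vc{\tf}_{A})$, becomes, using $\cosh(a+b)-\cosh(a-b)=2\sinh a\,\sinh b$ with $a=\tf_{S,e}/2$ and $b=\tf_{A,e}/2$, equivalent to
\begin{align}
\sum_{e=1}^{N_{\edge}}\frenecy_{e}(\Vc{x})\,\sinh\!\big(\tf_{S,e}(\Vc{x})/2\big)\,\sinh\!\big(\tf_{A,e}/2\big)=0\qquad\text{for all }\Vc{x}\in\X. \tag{$\ast$}
\end{align}
The first task is therefore to establish $(\ast)$.

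Setting $\Vc{a}\defeq\Vc{x}/\tilde{\Vc{x}}_{CB}$ componentwise and writing $\Vc{\cmvector}^{\pm}_{e}$ for the reactant/product composition vectors of $\edge_{e}$ (so $\Vc{\stoiVector}_{e}=\Vc{\cmvector}^{+}_{e}-\Vc{\cmvector}^{-}_{e}$), the monomial form of mass-action kinetics gives $\flux^{\pm}_{e}(\Vc{x})=\flux^{\pm}_{e}(\tilde{\Vc{x}}_{CB})\,\Vc{a}^{\Vc{\cmvector}^{\pm}_{e}}$ (the rate constants cancel), hence $\frenecy_{e}(\Vc{x})=\frenecy_{e}(\tilde{\Vc{x}}_{CB})\,\Vc{a}^{(\Vc{\cmvector}^{+}_{e}+\Vc{\cmvector}^{-}_{e})/2}$; moreover $\tf_{S,e}(\Vc{x})=\Vc{\stoiVector}_{e}^{\Transpose}\ln\Vc{a}$ and, since $\Vc{\tf}_{A}=\Vc{\tf}_{\mathrm{MA}}(\tilde{\Vc{x}}_{CB})$, $\sinh(\tf_{A,e}/2)=\flux_{e}(\tilde{\Vc{x}}_{CB})/\frenecy_{e}(\tilde{\Vc{x}}_{CB})$. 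Multiplying these out, the $e$-th summand of $(\ast)$ collapses to $\tfrac12\flux_{e}(\tilde{\Vc{x}}_{CB})\big(\Vc{a}^{\Vc{\cmvector}^{+}_{e}}-\Vc{a}^{\Vc{\cmvector}^{-}_{e}}\big)$, and regrouping the edge sum by head and tail hypervertices yields
\begin{align}
\sum_{e=1}^{N_{\edge}}\frenecy_{e}(\Vc{x})\,\sinh\!\big(\tf_{S,e}/2\big)\sinh\!\big(\tf_{A,e}/2\big)=\tfrac12\sum_{\ell}\big(\IncMatrix\,\Vc{\flux}_{\mathrm{MA}}(\tilde{\Vc{x}}_{CB})\big)_{\ell}\,\Vc{a}^{\Vc{\cmvector}_{\ell}}.
\end{align}
Distinct hypervertices have distinct composition vectors, so the monomials $\Vc{a}\mapsto\Vc{a}^{\Vc{\cmvector}_{\ell}}$ are linearly independent on $\Real^{N_{\molX}}_{>0}$; hence the right-hand side vanishes identically in $\Vc{x}$ if and only if $\IncMatrix\,\Vc{\flux}_{\mathrm{MA}}(\tilde{\Vc{x}}_{CB})=0$, which is exactly $\tilde{\Vc{x}}_{CB}\in\Manifold^{\mathrm{CB}}$. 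This proves $(\ast)$, and therefore $\Vc{\tf}_{S}(\Vc{x})\perp_{H}\Vc{\tf}_{A}$ for all $\Vc{x}\in\X$.

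For $\Manifold^{eq}(\tilde{\Vc{x}}_{CB})=\Manifold^{\mathrm{CB}}$, note first that $\Manifold^{eq}(\tilde{\Vc{x}}_{CB})=\{\Vc{x}\in\X:\HIncMatrix^{\Transpose}\ln\Vc{a}=0\}=\{\Vc{x}:\Vc{a}^{\Vc{\stoiVector}_{e}}=1\ \text{for all }e\}$; this makes $\Vc{a}^{\Vc{\cmvector}_{\ell}}$ constant along every edge, hence constant on each connected component of the hypervertex graph (linkage class), and the same head/tail regrouping as above then gives $\IncMatrix\Vc{\flux}_{\mathrm{MA}}(\Vc{x})=0$, i.e. $\Manifold^{eq}(\tilde{\Vc{x}}_{CB})\subseteq\Manifold^{\mathrm{CB}}$. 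For the converse I would use $\IncMatrix\Vc{\flux}_{\mathrm{MA}}(\Vc{x})=\gLap_{\Vc{\theta}}\,\Vc{x}^{\cmMatrix^{\Transpose}}$: for $\Vc{x}\in\Manifold^{\mathrm{CB}}$ both strictly positive vectors $\Vc{x}^{\cmMatrix^{\Transpose}}$ and $\tilde{\Vc{x}}_{CB}^{\cmMatrix^{\Transpose}}$ lie in $\Ker\gLap_{\Vc{\theta}}$, and since every rate $\kcoef^{\pm}_{e}$ is positive, $\gLap_{\Vc{\theta}}$ is block-diagonal over the (automatically strongly connected) linkage classes with a one-dimensional, positively spanned kernel on each block by the matrix–tree/Perron–Frobenius theorem; hence the two vectors are proportional on each linkage class, so $\Vc{a}^{\Vc{\cmvector}_{\ell}}$ is constant per linkage class, which forces $\Vc{a}^{\Vc{\stoiVector}_{e}}=1$ for all $e$, i.e. $\Vc{x}\in\Manifold^{eq}(\tilde{\Vc{x}}_{CB})$. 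Alternatively, this inclusion is precisely the Horn–Jackson theorem and may simply be cited.

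The main obstacle is exactly this last inclusion $\Manifold^{\mathrm{CB}}\subseteq\Manifold^{eq}(\tilde{\Vc{x}}_{CB})$: it is the genuinely nontrivial content of the classical Horn–Jackson result, and it is the only step that needs a structural input beyond monomial bookkeeping, namely that an irreducible weighted graph Laplacian carrying a strictly positive kernel vector has a one-dimensional kernel. Everything else — the reduction to $(\ast)$, the collapse of each summand, and the forward inclusion — is a routine but careful computation once the one-way mass-action fluxes are factored relative to $\tilde{\Vc{x}}_{CB}$ and distinct monomials are known to be linearly independent on the positive orthant.
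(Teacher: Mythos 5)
Your proof is correct, and for the orthogonality claim it is essentially the computation the paper performs: the paper expands $\Dissp^{*}_{\Vc{x}}(\Vc{\tf}_{S}\pm\Vc{\tf}_{A})$ directly in terms of the one-way fluxes and arrives at the same vanishing condition $\bigl\langle \IncMatrix\Vc{\flux}_{MA}(\tilde{\Vc{x}}_{CB}), (\Vc{x}/\tilde{\Vc{x}}_{CB})^{\cmMatrix^{\Transpose}}\bigr\rangle=0$, whereas you reach it through $\cosh(a+b)-\cosh(a-b)=2\sinh a\sinh b$ and the factorization $\flux^{\pm}_{e}(\Vc{x})=\flux^{\pm}_{e}(\tilde{\Vc{x}}_{CB})\,\Vc{a}^{\Vc{\cmvector}^{\pm}_{e}}$; this is a tidier bookkeeping of the same cancellation, and your linear-independence remark additionally shows that complex balancing is \emph{necessary} for the orthogonality, which the paper does not record. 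The genuine divergence is in the second claim: the paper does not prove $\Manifold^{eq}(\tilde{\Vc{x}}_{CB})=\Manifold^{\mathrm{CB}}$ but defers the binomial parametrization of $\Manifold^{\mathrm{CB}}$ to the cited literature (Craciun et al.), while you supply a self-contained argument --- the easy inclusion from constancy of $\Vc{a}^{\Vc{\cmvector}_{\ell}}$ on linkage classes, and the hard inclusion via $\IncMatrix\Vc{\flux}_{MA}(\Vc{x})=\gLap_{\Vc{\theta}}\Vc{x}^{\cmMatrix^{\Transpose}}$ together with the one-dimensionality of $\Ker\gLap_{\Vc{\theta}}$ on each linkage class, which is legitimate here because reversibility ($\kcoef^{\pm}_{e}>0$) makes each class strongly connected so Perron--Frobenius applies. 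Your route buys a complete elementary proof of the Horn--Jackson-type equality where the paper only cites one; the citation option you also mention is exactly what the paper does.
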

\begin{proof}
We can prove the orthogonality by direct computation. The orthogonality condition is
\begin{align}
\Dissp^{*}_{\Vc{x}}[\Vc{\tf}_{S}(\Vc{x})+\Vc{\tf}_{A}]&=\Dissp^{*}_{\Vc{x}}[\Vc{\tf}_{S}(\Vc{x})-\Vc{\tf}_{A}] \label{eq:orth_cb_cond1}\\
&\Leftrightarrow   \left\langle\Vc{\flux}^{+}_{MA}(\Vc{x}),\left(\frac{\Vc{x}}{\tilde{\Vc{x}}_{CB}}\right)^{-\HIncMatrix^{\Transpose}}-\Vc{1}\right\rangle+\left\langle\Vc{\flux}^{-}_{MA}(\Vc{x}),\left(\frac{\Vc{x}}{\tilde{\Vc{x}}_{CB}}\right)^{\HIncMatrix^{\Transpose}}-\Vc{1}\right\rangle=0.\label{eq:orth_cb_cond2}
\end{align}
Consider the following equality:
\begin{align}
    \left\langle\Vc{\flux}^{\pm}_{MA}(\Vc{x}),\left(\frac{\Vc{x}}{\tilde{\Vc{x}}_{CB}}\right)^{\mp\HIncMatrix^{\Transpose}}-\Vc{1}\right\rangle &=\sum_{e=1}^{N_{\edge}}\kcoef_{e}^{\pm}\tilde{\Vc{x}}_{CB}^{\Vc{\gamma}_{e}^{\pm}}\left[\left(\frac{\Vc{x}}{\tilde{\Vc{x}}_{CB}}\right)^{\Vc{\gamma}_{e}^{\mp}}-\left(\frac{\Vc{x}}{\tilde{\Vc{x}}_{CB}}\right)^{\Vc{\gamma}_{e}^{\pm}}\right],
\end{align}
where $\Vc{\gamma}_{e}^{\pm}\defeq \cmMatrix \Vc{b}_{e}^{\pm}$.
By using this, we have the following:
\begin{align}
    \mbox{\eqnref{eq:orth_cb_cond2}}&=\sum_{e=1}^{N_{\edge}}\left(\kcoef_{e}^{+}\tilde{\Vc{x}}_{CB}^{\gamma_{e}^{+}}-\kcoef_{e}^{-}\tilde{\Vc{x}}_{CB}^{\gamma_{e}^{-}}\right)\left[\left(\frac{\Vc{x}}{\tilde{\Vc{x}}_{CB}}\right)^{\gamma_{e}^{-}}-\left(\frac{\Vc{x}}{\tilde{\Vc{x}}_{CB}}\right)^{\gamma_{e}^{+}}\right]\\
    &=\left\langle (\Vc{\flux}^{+}_{MA}(\tilde{\Vc{x}}_{CB})-\Vc{\flux}^{-}_{MA}(\tilde{\Vc{x}}_{CB})), \left[\left(\frac{\Vc{x}}{\tilde{\Vc{x}}_{CB}}\right)^{(\cmMatrix \IncMatrix^{-})^{\Transpose}}-\left(\frac{\Vc{x}}{\tilde{\Vc{x}}_{CB}}\right)^{(\cmMatrix \IncMatrix^{+})^{\Transpose}}\right] \right\rangle \\
    &=\left\langle \Vc{\flux}_{MA}(\tilde{\Vc{x}}_{CB}), \IncMatrix^{\Transpose}_{-}\left(\frac{\Vc{x}}{\tilde{\Vc{x}}_{CB}}\right)^{\cmMatrix ^{\Transpose}}-\IncMatrix^{\Transpose}_{+}\left(\frac{\Vc{x}}{\tilde{\Vc{x}}_{CB}}\right)^{\cmMatrix^{\Transpose}}\right\rangle=\left\langle \IncMatrix\Vc{\flux}_{MA}(\tilde{\Vc{x}}_{CB}), \left(\frac{\Vc{x}}{\tilde{\Vc{x}}_{CB}}\right)^{\cmMatrix^{\Transpose}}\right\rangle.
\end{align}
Thus, \eqnref{eq:orth_cb_cond1} holds for $\Vc{x}\in \X$ if $\IncMatrix\Vc{\flux}_{MA}(\tilde{\Vc{x}}_{CB})=\Vc{0}$ holds\footnote{The transformation of \eqnref{eq:orth_cb_cond2} here is strongly dependent on the specific functional form of $\Vc{\flux}_{MA}(\Vc{x})$. }. 
$\Manifold^{eq}(\tilde{\Vc{x}}_{CB})=\Manifold^{\mathrm{CB}}$ can be proved by obtaining the parametric representation of $\Manifold^{\mathrm{CB}}$ as $\Manifold^{\mathrm{CB}}=\{\Vc{x}\in \X| \ln \Vc{x}-\ln \tilde{\Vc{x}}_{CB} \in \Ker \HIncMatrix^{\Transpose} \}$ via solving $\Vc{\flux}_{MA}(\Vc{x})=\Vc{0}$\footnote{We skip the derivation because it is involved. See the original derivation\cite{craciun2009JournalofSymbolicComputation} or our rephrased version\cite{kobayashi2022Phys.Rev.Research}}.  This representation is identical to that of $\Manifold^{eq}(\tilde{\Vc{x}}_{CB})$ (\eqnref{eq:Manifold_eq}).
\end{proof}
\begin{rmk}[Algebraic structure of detailed balanced and complex balanced manifolds]
We here mention about the underlying algebraic source of why $\Manifold^{eq}(\tilde{\Vc{x}}_{CB})=\Manifold^{\mathrm{CB}}$ holds.
First, we already showed that $\Variety^{DB}=\Manifold^{eq}(\tilde{\Vc{x}})$ holds generally if $\Manifold^{\mathrm{DB}} \neq \emptyset$. 
Under LMA kinetics (\eqnref{eq:CRN_rate}), the DB condition $\Vc{\flux}_{MA}(\Vc{x})=\Vc{0}$ is nothing but the binomial equations because $\Vc{\flux}^{\pm}_{MA}(\Vc{x})$ are vectors of monominals of $\Vc{x}$.
Owing to this, $\Variety^{DB}$ becomes a toric variety\footnote{The real variety generated by a toric ideal, i.e., a binomial and prime ideal\cite{craciun2009JournalofSymbolicComputation}.}. 
In contrast, the CB condition $\IncMatrix\Vc{\flux}_{MA}(\Vc{x}_{CB})=\Vc{0}$ is a set of polynomial equations for LMA kinetics. 
Nonetheless, it was shown that $\Variety^{\mathrm{CB}}$ is binomially generated and has the same structural matrix $\HIncMatrix^{\Transpose}$ as the equilibrium manifold\cite{craciun2009JournalofSymbolicComputation} .
Because of that, they become equivalent as manifolds.
\end{rmk}

Because rLDG (\eqnref{eq:LDM2}) is a subclass of CRN where $\cmMatrix=\identityM$ and thus $\HIncMatrix=\IncMatrix$ holds, the complex-balanced condition is always satisfied for rLDG.
\begin{cor}[rLDG is unconditionally complex-balanced\cite{feinberg2019}]
All the steady states of rLDG are complex-balanced states, i.e., $\Variety^{\mathrm{ST}} = \Variety^{\mathrm{CB}}$ independently of the parameter values $\Vc{\kcoef}^{\pm}$ of the flux (\eqnref{eq:LDM2}) \footnote{Such a situation is called unconditionally complex balanced.}. 
Thus, KL divergence (\eqnref{eq:KL}) always works as a Lyapunov function of rLDG\footnote{More generally, if $\mathrm{Rank}[\HIncMatrix]=\mathrm{Rank}[\IncMatrix]$, CRN is unconditionally complex-balanced. This condition is called the deficiency zero condition\cite{feinberg2019}.}.
\end{cor}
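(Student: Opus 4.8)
The plan is to specialize the complex-balanced result of the preceding proposition to the case $\cmMatrix=\identityM$, where the CRN structure collapses to the rLDG structure. First I would recall from \defref{dfn:rCRNhyperG} that setting $\cmMatrix=\identityM$ forces $\HIncMatrix=\cmMatrix\IncMatrix=\IncMatrix$, so that the hypergraph $\HGraph$ reduces to the graph $\Graph$ and every hypervertex is a single vertex, i.e., $N_{\hat{\node}}=N_{\node}$ and the hypervertex matrix is the identity. Consequently the distinction between the hypergraph divergence $\Div_{\HIncMatrix}$ and the graph divergence $\Div_{\IncMatrix}$ disappears, and likewise between the complex-balanced condition $\IncMatrix\Vc{\flux}(\Vc{x})=0$ and the steady-state condition $\HIncMatrix\Vc{\flux}(\Vc{x})=0$. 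This immediately gives $\Manifold^{\mathrm{ST}}=\Manifold^{\mathrm{CB}}$ for rLDG, which is the first claim.

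Next I would address why $\Manifold^{\mathrm{CB}}\neq\emptyset$ always holds for rLDG regardless of the weights $\Vc{\kcoef}^{\pm}$. Here the key observation is that for rLDG the force takes the form $\Vc{\tf}(\Vc{x})=\ln\Vc{K}+\HIncMatrix^{\Transpose}\ln\Vc{x}=\ln\Vc{K}+\IncMatrix^{\Transpose}\ln\Vc{x}$ from \eqnref{eq:jffMAK}, and since $\cmMatrix=\identityM$ the DB-manifold equation $\Vc{\flux}(\Vc{x})=\Vc{0}$, i.e., $-\IncMatrix^{\Transpose}\ln\Vc{x}=\ln\Vc{K}$, is now consistent automatically: because each column of $\IncMatrix$ has exactly one $+1$ and one $-1$, the identity $\ln\Vc{K}=\ln\Vc{\kcoef}^{+}-\ln\Vc{\kcoef}^{-}$ always lies in $\Img\IncMatrix^{\Transpose}$ when there are no independent cycle constraints obstructing it — in fact for rLDG on a connected graph Wegscheider's condition $\ln\Vc{K}\in\Img\IncMatrix^{\Transpose}$ need not hold in general, so the cleaner route is to invoke the existence result directly: the steady state exists (this was promised via \lmmref{lmm:DualFoliationVertex}, which shows $\Polytope^{sc}(\Vc{x}_0)\cap\Variety^{eq}(\tilde{\Vc{x}})$ is nonempty), and since $\Manifold^{\mathrm{ST}}=\Manifold^{\mathrm{CB}}$ we conclude $\Manifold^{\mathrm{CB}}\neq\emptyset$. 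With $\Manifold^{\mathrm{CB}}\neq\emptyset$ in hand, the preceding proposition on complex-balanced CRN applies verbatim, yielding the pseudo-Hilbert-isosceles orthogonal decomposition, so that \lmmref{lmm:orth_decomp} gives the Lyapunov property of $\BD^{\X}_{\Pfunc}[\Vc{x}_t\|\tilde{\Vc{x}}_{CB}]$; specializing $\Pfunc$ to \eqnref{eq:potentialfunction_X_MAK} makes this divergence the generalized KL divergence \eqnref{eq:KL}.

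For the footnoted generalization — that $\Rank[\HIncMatrix]=\Rank[\IncMatrix]$ (deficiency zero) suffices for unconditional complex balance — I would sketch that the deficiency $\delta\defeq N_{\hat{\node}}-\ell-\Rank[\HIncMatrix]$, where $\ell$ is the number of linkage classes, vanishes exactly when the map induced by $\IncMatrix$ and the map induced by $\HIncMatrix=\cmMatrix\IncMatrix$ have kernels of the same dimension on the relevant subspace, so that $\IncMatrix\Vc{\flux}(\Vc{x})=0$ and $\HIncMatrix\Vc{\flux}(\Vc{x})=0$ cut out the same variety; I would cite \cite{feinberg2019} rather than reprove the deficiency-zero theorem. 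The main obstacle is not any single computation but making sure the reduction $\cmMatrix=\identityM\Rightarrow\HIncMatrix=\IncMatrix$ is applied consistently across all the objects ($\Div$, $\Grad$, the force formula, the manifolds $\Manifold^{\mathrm{ST}},\Manifold^{\mathrm{CB}},\Manifold^{\mathrm{DB}}$) so that the cited proposition genuinely applies without a gap; once that bookkeeping is done the corollary is essentially immediate.
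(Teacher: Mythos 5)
Your core reduction is exactly the intended one and is correct: for rLDG one has $\cmMatrix=\identityM$, hence $\HIncMatrix=\cmMatrix\IncMatrix=\IncMatrix$, so the defining conditions $\HIncMatrix\Vc{\flux}(\Vc{x})=0$ and $\IncMatrix\Vc{\flux}(\Vc{x})=0$ of $\Manifold^{\mathrm{ST}}$ and $\Manifold^{\mathrm{CB}}$ coincide, and the Lyapunov claim then follows by feeding any $\tilde{\Vc{x}}_{CB}\in\Manifold^{\mathrm{CB}}$ into the preceding proposition together with \lmmref{lmm:orth_decomp}. The paper gives no separate proof of this corollary, so up to this point you are doing what it implicitly does.

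There is, however, a genuine gap in how you establish $\Manifold^{\mathrm{CB}}\neq\emptyset$, which the Lyapunov statement does require (without a complex-balanced point there is no $\tilde{\Vc{x}}_{CB}$ to build $\BD^{\X}_{\Pfunc}[\,\cdot\,\|\tilde{\Vc{x}}_{CB}]$ around). You correctly abandon the attempt to show $\ln\Vc{K}\in\Img\IncMatrix^{\Transpose}$ --- that is Wegscheider's condition and it genuinely fails for, e.g., a biased three-state cycle --- but the replacement you invoke does not work either. \lmmref{lmm:DualFoliationVertex} shows that $\Polytope^{sc}(\Vc{x}_{0})\cap\Variety^{eq}(\tilde{\Vc{x}})$ is nonempty for \emph{any} reference point $\tilde{\Vc{x}}$, but that intersection is the steady-state set only when the flow is an equilibrium flow, or once $\tilde{\Vc{x}}$ is already known to be a complex-balanced state (so that $\Variety^{eq}(\tilde{\Vc{x}})=\Manifold^{\mathrm{CB}}$). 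For a nonequilibrium rLDG the lemma tells you nothing about $\Manifold^{\mathrm{ST}}$; you are assuming the existence of the very point you are trying to produce. The missing ingredient is precisely the one the paper names in the sentence following the corollary: for a reversible (hence componentwise irreducible) linear dynamics on a finite graph, the Perron--Frobenius theorem applied to $-\gLap_{\Vc{\theta}}$ yields a strictly positive steady state in each stoichiometric class, and only then does $\Manifold^{\mathrm{ST}}=\Manifold^{\mathrm{CB}}\neq\emptyset$ follow and the orthogonal-decomposition proposition apply. Your treatment of the deficiency-zero footnote by citation is fine and matches the paper's own practice.
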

The properties described in this corollary are well-known for rMJP and are usually obtained by using the Perron-Frobenius theorem for linear operators. 
The framework of the generalized flow enables us to extend them to the nonlinear regime.

\subsection{Effective fluw of the nonequilibrium flow by the primal information-geometric projection}
In general, the nonequilibrium force or flux has redundant degrees of freedom in terms of generating a specific vector field or trajectory $\{\Vc{x}_{t}\}$ on the density space.
By using the extended HHK projective decomposition (\thmref{thm:HHK}), we can carve out the effective part of the flux for the trajectory  $\{\Vc{x}_{t}\}$. 
In addition, we can obtain an effective time-dependent equilibrium flux that mimics the trajectory $\{\Vc{x}_{t}\}$:
\begin{lmm}[\TJK{Effective time-dependent equilibrium flux}]\label{lmn:effectiveflux}
Suppose that $\Vc{\flux}(\Vc{x})$ is the flux of a generalized flow (\eqnref{eq:gFlow}), and define the corresponding effective equilibrium flux by $\Vc{\flux}_{eq}(\Vc{x})=\partial \Dissp^{*}_{\Vc{x}}[\HIncMatrix^{\Transpose}\Vc{u}_{eq}(\Vc{x})]$ where $\Vc{u}_{eq}(\Vc{x})\defeq \partial \tilde{\Dissp}_{\Vc{x}}[-\HIncMatrix \Vc{\flux}(\Vc{x})]$.
Then, $\Vc{\flux}_{eq}(\Vc{x})$ induces the same velocity as $\Vc{\flux}(\Vc{x})$ does\footnote{$\Vc{\flux}_{eq}(\Vc{x})$ may not be equilibrium flux because $\Vc{u}_{eq}(\Vc{x})$ is not necessarily represented by the gradient of a certain function $\FEnergy(\Vc{x})$ as $\Vc{u}_{eq}(\Vc{x})=\partial_{\Vc{x}} \FEnergy(\Vc{x})$.}.
Furthermore, for a given trajectory $\{\Vc{x}_{t}\}$ of $\Vc{\flux}(\Vc{x})$, we can construct a time-dependent equilibrium flux $\Vc{\flux}_{eq}(t,\Vc{x})$ that can generate the same $\{\Vc{x}_{t}\}$ as follows
\begin{align}
    \Vc{\flux}_{eq}(t,\Vc{x})&\defeq \partial \Dissp^{*}_{\Vc{x}}\left[\Grad_{\HIncMatrix}\partial \BD^{\X}_{\Pfunc}[\Vc{x}\|\tilde{\Vc{x}_{t}}]\right], &\tilde{\Vc{x}}_{t} \defeq  \partial \Pfunc^{*}[\partial \Pfunc(\Vc{x}_{t})+\Vc{u}_{eq}(\Vc{x}_{t})]. \label{eq:ef_time_dep_flux}
\end{align}
\end{lmm}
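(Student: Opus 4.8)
The plan is to verify the two claims of the lemma directly from the induced dually flat structure established in \thmref{thm:inducedHessiangeometry1}. The key observation is that the map $\Vc{v}\mapsto\Vc{u}^{\dagger}(\Vc{x},\Vc{v})=\partial\tilde{\Dissp}_{\Vc{x}}(\Vc{v})$ and its inverse $\Vc{u}\mapsto\Vc{v}^{\ddagger}(\Vc{x},\Vc{u})=\partial\tilde{\Dissp}^{*}_{\Vc{x}}(\Vc{u})$ are Legendre-conjugate bijections between $\tilde{\Tan}_{\Vc{x}}\X$ and $\tilde{\Tan}^{*}_{\Vc{x}}\X$, and that by construction (see \eqnref{eq:v_pairing}) the edge flux recovered from $\Vc{u}\in\tilde{\Tan}^{*}_{\Vc{x}}\X$ via $\Vc{\flux}^{\ddagger}(\Vc{x},\Vc{u})=\partial\Dissp^{*}_{\Vc{x}}[-\HIncMatrix^{\Transpose}\Vc{u}]$ satisfies $-\HIncMatrix\Vc{\flux}^{\ddagger}(\Vc{x},\Vc{u})=\Vc{v}^{\ddagger}(\Vc{x},\Vc{u})$. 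First I would set $\Vc{v}:=-\HIncMatrix\Vc{\flux}(\Vc{x})=\dot{\Vc{x}}$, which lies in $\tilde{\Tan}_{\Vc{x}}\X=\Img\HIncMatrix$, so that $\Vc{u}_{eq}(\Vc{x})=\partial\tilde{\Dissp}_{\Vc{x}}[\Vc{v}]=\Vc{u}^{\dagger}(\Vc{x},\Vc{v})$ is well-defined. Then $\Vc{\flux}_{eq}(\Vc{x})=\partial\Dissp^{*}_{\Vc{x}}[\HIncMatrix^{\Transpose}\Vc{u}_{eq}(\Vc{x})]$—note the sign: with $\Vc{\tf}^{\ddagger}(\Vc{u})=-\HIncMatrix^{\Transpose}\Vc{u}$ the definition in the lemma statement should read $\Vc{\flux}_{eq}(\Vc{x})=\partial\Dissp^{*}_{\Vc{x}}[-\HIncMatrix^{\Transpose}\Vc{u}_{eq}(\Vc{x})]=\Vc{\flux}^{\ddagger}(\Vc{x},\Vc{u}_{eq}(\Vc{x}))$—hence by \eqnref{eq:u_pairing}, $-\HIncMatrix\Vc{\flux}_{eq}(\Vc{x})=\Vc{v}^{\ddagger}(\Vc{x},\Vc{u}_{eq}(\Vc{x}))=\partial\tilde{\Dissp}^{*}_{\Vc{x}}[\Vc{u}_{eq}(\Vc{x})]=\Vc{v}=-\HIncMatrix\Vc{\flux}(\Vc{x})$, where the penultimate equality uses $\partial\tilde{\Dissp}^{*}_{\Vc{x}}\circ\partial\tilde{\Dissp}_{\Vc{x}}=\mathrm{id}$. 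This gives $\Div_{\HIncMatrix}\Vc{\flux}_{eq}(\Vc{x})=\Div_{\HIncMatrix}\Vc{\flux}(\Vc{x})$, i.e. the same velocity.

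For the second claim, the strategy is to exhibit, along the given trajectory $\{\Vc{x}_{t}\}$, a choice of the parameter $\tilde{\Vc{x}}_{t}$ of an equilibrium force such that the resulting gradient force $\Grad_{\HIncMatrix}\partial\BD^{\X}_{\Pfunc}[\Vc{x}\|\tilde{\Vc{x}}_{t}]$ evaluated at $\Vc{x}=\Vc{x}_{t}$ coincides with $-\HIncMatrix^{\Transpose}\Vc{u}_{eq}(\Vc{x}_{t})=\Vc{\tf}^{\ddagger}(\Vc{x}_{t},\Vc{u}_{eq}(\Vc{x}_{t}))$. Using $\partial\BD^{\X}_{\Pfunc}[\Vc{x}\|\tilde{\Vc{x}}]=\partial\Pfunc(\Vc{x})-\partial\Pfunc(\tilde{\Vc{x}})$, I would compute $\Grad_{\HIncMatrix}\partial\BD^{\X}_{\Pfunc}[\Vc{x}_{t}\|\tilde{\Vc{x}}_{t}]=\HIncMatrix^{\Transpose}[\partial\Pfunc(\Vc{x}_{t})-\partial\Pfunc(\tilde{\Vc{x}}_{t})]$; with the prescribed $\tilde{\Vc{x}}_{t}=\partial\Pfunc^{*}[\partial\Pfunc(\Vc{x}_{t})+\Vc{u}_{eq}(\Vc{x}_{t})]$, the bracket becomes $\partial\Pfunc(\Vc{x}_{t})-[\partial\Pfunc(\Vc{x}_{t})+\Vc{u}_{eq}(\Vc{x}_{t})]=-\Vc{u}_{eq}(\Vc{x}_{t})$, since $\partial\Pfunc\circ\partial\Pfunc^{*}=\mathrm{id}$. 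Therefore $\Vc{\flux}_{eq}(t,\Vc{x}_{t})=\partial\Dissp^{*}_{\Vc{x}_{t}}[-\HIncMatrix^{\Transpose}\Vc{u}_{eq}(\Vc{x}_{t})]=\Vc{\flux}_{eq}(\Vc{x}_{t})$, which by the first part induces $\dot{\Vc{x}}_{t}$; integrating, this time-dependent equilibrium flux regenerates the trajectory $\{\Vc{x}_{t}\}$.

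The main obstacle I anticipate is bookkeeping of signs and of the quotient structure rather than anything deep: one must be careful that $\Vc{u}_{eq}(\Vc{x})$ is a genuine element of $\tilde{\Tan}^{*}_{\Vc{x}}\X=\Tan^{*}_{\Vc{x}}\X/\Ker\HIncMatrix^{\Transpose}$ so that $\HIncMatrix^{\Transpose}\Vc{u}_{eq}(\Vc{x})$—equivalently $-\HIncMatrix^{\Transpose}\Vc{u}_{eq}(\Vc{x})$, the element of $\Polytope^{fr}(\Vc{0})$—is well-defined independently of the representative, which is exactly the content of the footnote in \thmref{thm:inducedHessiangeometry1}. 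A secondary subtlety is that $\tilde{\Vc{x}}_{t}$ as defined lands in $\X$: this follows because $\partial\Pfunc(\Vc{x}_{t})+\Vc{u}_{eq}(\Vc{x}_{t})\in\Real^{N_{\molX}}=\Y$ and $\partial\Pfunc^{*}$ maps $\Y$ bijectively onto $\X$ (\defref{dfn:dualVspace}), so no boundary issue arises. Finally, I would note, as the lemma's own footnote already does, that $\Vc{\flux}_{eq}(\Vc{x})$ need not be an equilibrium flux in the strict sense because $\Vc{u}_{eq}(\Vc{x})$ need not be a gradient field on $\X$; the content of the lemma is only the velocity-matching and trajectory-matching statements, both of which are pure consequences of the Legendre bijections above.
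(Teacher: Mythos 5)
Your proof is correct and takes essentially the same route as the paper's: both identify $\Vc{u}_{eq}(\Vc{x})$ with the potential $\Vc{u}^{\dagger}(\Vc{x},-\HIncMatrix\Vc{\flux}(\Vc{x}))$ of the induced dually flat structure of \thmref{thm:inducedHessiangeometry1} (equivalently, with the HHK equilibrium component of \thmref{thm:HHK}), deduce velocity-matching from the Legendre bijection $\partial\tilde{\Dissp}^{*}_{\Vc{x}}\circ\partial\tilde{\Dissp}_{\Vc{x}}=\mathrm{id}$, and obtain $\tilde{\Vc{x}}_{t}$ by solving $\partial\Pfunc(\tilde{\Vc{x}}_{t})=\partial\Pfunc(\Vc{x}_{t})+\Vc{u}_{eq}(\Vc{x}_{t})$. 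Your observation that the displayed definition should carry a minus sign, $\Vc{\flux}_{eq}(\Vc{x})=\partial\Dissp^{*}_{\Vc{x}}[-\HIncMatrix^{\Transpose}\Vc{u}_{eq}(\Vc{x})]$, is consistent with the paper's own proof, which uses $-\HIncMatrix^{\Transpose}\Vc{u}_{eq}(\Vc{x})=\Vc{\tf}_{eq}(\Vc{x})$.
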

\begin{proof}
From \thmref{thm:HHK}, $\Vc{\flux}(\Vc{x})$ can be decomposed as $\Vc{\flux}(\Vc{x})=\Vc{\flux}_{eq}(\Vc{x})+(\Vc{\flux}(\Vc{x})-\Vc{\flux}_{eq}(\Vc{x}))$. Because $\Vc{\tf}_{eq}(\Vc{x})=\partial \Dissp_{\Vc{x}}^{*}[\Vc{\flux}_{eq}(\Vc{x})] \in \Polytope^{fr}(\Vc{0})$, there exists $\Vc{u}_{eq}(\Vc{x})$ satisfying $-\HIncMatrix^{\Transpose}\Vc{u}_{eq}(\Vc{x})=\Vc{\tf}_{eq}(\Vc{x})$. By employing the duality introduced in \thmref{thm:inducedHessiangeometry1}, $\Vc{u}_{eq}(\Vc{x})$ can be represented as $\Vc{u}_{eq}(\Vc{x})=\partial \tilde{\Dissp}_{\Vc{x}}[\Vc{v}(\Vc{x})]$ where $\Vc{v}(\Vc{x})=-\HIncMatrix \Vc{\flux}(\Vc{x})$. Because $-\HIncMatrix \Vc{\flux}_{eq}(\Vc{x})=\Vc{v}(\Vc{x})$, $\Vc{\flux}_{eq}(\Vc{x})$ generates the same dynamics or vector field as $\Vc{\flux}(\Vc{x})$ does. By solving $-\Vc{u}_{eq}(\Vc{x}_{t})=\partial \Pfunc(\Vc{x}_{t})-\partial \Pfunc(\tilde{\Vc{x}}_{t})$, we have \eqnref{eq:ef_time_dep_flux}.
\end{proof}
The effective time-dependent equilibrium flux is obtained more explicitly for CRN with LMA kinetics:
\begin{cor}[\TJK{Effective equilibrium force and flux of LMA kinetics}]\label{cor:effective_flux_LMA}
Consider the following quantities of CRN with LMA kinetics:
\begin{align}
\mbox{Flux defined in \eqnref{eq:CRN_rate}} & & \Vc{\flux}_{MA}(\Vc{x};\Vc{\kcoef}^{\pm})\\
\mbox{Force defined in \eqnref{eq:jffMAK}} & & \Vc{\tf}_{\mathrm{MA}}(\Vc{x};\Vc{K})\\
\mbox{Thermodynamic function defined in \eqnref{eq:potentialfunction_X_MAK}} & & \Pfunc(\Vc{x})\\
\mbox{Dissipation functions defined in \eqnref{eq:CRN_dissip} with \eqnref{eq:jffMAK}} & & \Dissp^{*}_{\Vc{x},\Vc{\kappa}}[\Vc{\tf}] = \Dissp^{*}_{\Vc{\frenecy}_{\mathrm{MA}}(\Vc{x};\Vc{\kappa})}[\Vc{\tf}],
\end{align}
where $\Vc{\kcoef}^{\pm}=\Vc{\kappa}\circ \Vc{K}^{\pm 1/2}$ holds. For a trajectory $\{\Vc{x}_{t}\}$ generated by $\Vc{\flux}_{MA}(\Vc{x};\Vc{\kcoef}^{\pm})$, the effective time-dependent equilibrium force $\Vc{\tf}_{eq}(t,\Vc{x})$ can be described as $\Vc{\tf}_{MA}(\Vc{x}; \Vc{K}_{eq}(t))$ where $\Vc{K}_{eq}(t)$ is determined by 
\begin{align}
\Vc{K}_{eq}(t) &\defeq \exp\left[-\HIncMatrix^{\Transpose}\left(\Vc{u}_{eq}(\Vc{x}_{t};\Vc{\kappa})+\partial \Pfunc(\Vc{x}_{t}) \right)\right], & \Vc{u}_{eq}(\Vc{x}_{t};\Vc{\kappa}) &\defeq \partial \tilde{\Dissp}_{\Vc{x}_{t},\Vc{\kappa}}[-\HIncMatrix \Vc{\flux}_{MA}(\Vc{x}_{t})] \label{eq:time_dependent_Keq}
\end{align}
Thus, the effective time-dependent equilibrium flux $\Vc{\flux}_{eq}(t,\Vc{x})$ is represented as $\Vc{\flux}_{eq}(t,\Vc{x})=\Vc{\flux}_{MA}(\Vc{x}; \Vc{\kcoef}^{\pm}_{eq}(t))$ where $\Vc{\kcoef}^{\pm}_{eq}(t)=\Vc{\kappa}\circ \Vc{K}^{\pm 1/2}_{eq}(t)$.
\end{cor}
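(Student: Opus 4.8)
The plan is to obtain this corollary by specializing \lmmref{lmn:effectiveflux} to the concrete thermodynamic and dissipation functions of LMA kinetics, so that the abstract objects $\Vc{u}_{eq}(\Vc{x})$, $\tilde{\Vc{x}}_{t}$, and $\Vc{\flux}_{eq}(t,\Vc{x})$ acquire their explicit parametric forms in terms of $\Vc{K}$ and $\Vc{\kappa}$. First I would note that, by \lmmref{lmn:effectiveflux}, for the trajectory $\{\Vc{x}_{t}\}$ generated by $\Vc{\flux}_{MA}(\Vc{x};\Vc{\kcoef}^{\pm})$ we have the effective equilibrium flux $\Vc{\flux}_{eq}(\Vc{x}_{t})=\partial\Dissp^{*}_{\Vc{x}_{t}}[\HIncMatrix^{\Transpose}\Vc{u}_{eq}(\Vc{x}_{t})]$ with $\Vc{u}_{eq}(\Vc{x}_{t})=\partial\tilde{\Dissp}_{\Vc{x}_{t}}[-\HIncMatrix\Vc{\flux}_{MA}(\Vc{x}_{t})]$; specializing $\Dissp^{*}_{\Vc{x}}$ to $\Dissp^{*}_{\Vc{x},\Vc{\kappa}}=\Dissp^{*}_{\Vc{\frenecy}_{\mathrm{MA}}(\Vc{x};\Vc{\kappa})}$ fixes the induced function $\tilde{\Dissp}_{\Vc{x},\Vc{\kappa}}$ used to define $\Vc{u}_{eq}(\Vc{x}_{t};\Vc{\kappa})$, which is exactly the second equation in \eqnref{eq:time_dependent_Keq}.

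The central step is to identify the effective time-dependent equilibrium \emph{force} with a mass-action force having a shifted equilibrium constant. From \eqnref{eq:ef_time_dep_flux} the effective flux is $\Vc{\flux}_{eq}(t,\Vc{x})=\partial\Dissp^{*}_{\Vc{x}}[\Grad_{\HIncMatrix}\partial\BD^{\X}_{\Pfunc}[\Vc{x}\|\tilde{\Vc{x}}_{t}]]$ with $\tilde{\Vc{x}}_{t}=\partial\Pfunc^{*}[\partial\Pfunc(\Vc{x}_{t})+\Vc{u}_{eq}(\Vc{x}_{t};\Vc{\kappa})]$. I would plug in the explicit thermodynamic functions \eqnref{eq:potentialfunction_X_MAK}, so that $\partial\Pfunc(\Vc{x})=\ln\Vc{x}-\ln\Vc{x}^{o}$ and $\partial\BD^{\X}_{\Pfunc}[\Vc{x}\|\tilde{\Vc{x}}_{t}]=\partial\Pfunc(\Vc{x})-\partial\Pfunc(\tilde{\Vc{x}}_{t})$. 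Then the gradient force becomes $\Grad_{\HIncMatrix}\partial\BD^{\X}_{\Pfunc}[\Vc{x}\|\tilde{\Vc{x}}_{t}]=\HIncMatrix^{\Transpose}(\ln\Vc{x}-\ln\tilde{\Vc{x}}_{t})$, and comparing with the LMA force $\Vc{\tf}_{\mathrm{MA}}(\Vc{x};\Vc{K})=\ln\Vc{K}+\HIncMatrix^{\Transpose}\ln\Vc{x}$ from \eqnref{eq:jffMAK} I would read off $\ln\Vc{K}_{eq}(t)=-\HIncMatrix^{\Transpose}\ln\tilde{\Vc{x}}_{t}$. Substituting $\ln\tilde{\Vc{x}}_{t}=\partial\Pfunc(\Vc{x}_{t})+\Vc{u}_{eq}(\Vc{x}_{t};\Vc{\kappa})$ (using $\partial\Pfunc^{*}=(\partial\Pfunc)^{-1}$) yields precisely the first equation of \eqnref{eq:time_dependent_Keq}. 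Since the activity part $\Vc{\frenecy}_{\mathrm{MA}}(\Vc{x};\Vc{\kappa})$ depends only on $\Vc{\kappa}$ (not on $\Vc{K}$), as noted below \eqnref{eq:jffMAK}, the dissipation function $\Dissp^{*}_{\Vc{x},\Vc{\kappa}}$ is unchanged under the replacement $\Vc{K}\mapsto\Vc{K}_{eq}(t)$; therefore $\Vc{\flux}_{eq}(t,\Vc{x})=\partial\Dissp^{*}_{\Vc{x},\Vc{\kappa}}[\Vc{\tf}_{\mathrm{MA}}(\Vc{x};\Vc{K}_{eq}(t))]=\Vc{\flux}_{MA}(\Vc{x};\Vc{\kcoef}^{\pm}_{eq}(t))$ with $\Vc{\kcoef}^{\pm}_{eq}(t)=\Vc{\kappa}\circ\Vc{K}^{\pm1/2}_{eq}(t)$, using the parametrization $\Vc{\kcoef}^{\pm}=\Vc{\kappa}\circ\Vc{K}^{\pm1/2}$.

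The main obstacle I anticipate is bookkeeping rather than conceptual: one must be careful that $\Vc{K}_{eq}(t)$ as defined need not itself lie in $\Img\HIncMatrix^{\Transpose}$ (i.e. need not satisfy Wegscheider's condition), so $\Vc{\flux}_{eq}(t,\Vc{x})$ is only an \emph{effective} equilibrium flux in the sense of \lmmref{lmn:effectiveflux} — it reproduces the velocity field along $\{\Vc{x}_{t}\}$ but $\Vc{u}_{eq}(\Vc{x}_{t})$ is not a genuine gradient, as the footnote in \lmmref{lmn:effectiveflux} already warns. I would make explicit that the identification $\ln\Vc{K}_{eq}(t)=-\HIncMatrix^{\Transpose}\ln\tilde{\Vc{x}}_{t}$ is what guarantees $\Vc{\tf}_{eq}(t,\Vc{x})=\Vc{\tf}_{\mathrm{MA}}(\Vc{x};\Vc{K}_{eq}(t))$ has the correct image so that the induced-duality bijection of \thmref{thm:inducedHessiangeometry1} applies, and that the equality of velocities $-\HIncMatrix\Vc{\flux}_{eq}(t,\Vc{x}_{t})=-\HIncMatrix\Vc{\flux}_{MA}(\Vc{x}_{t})$ is inherited directly from \lmmref{lmn:effectiveflux}. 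The remaining verification — that no $\Vc{x}^{o}$- or sign-convention errors creep into the passage between $\partial\Pfunc$, $\partial\Pfunc^{*}$, and the gradient operator $\Grad_{\HIncMatrix}=\HIncMatrix^{\Transpose}$ — is routine and I would not belabor it.
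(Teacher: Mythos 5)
Your proposal is correct and follows exactly the route the paper intends: the corollary is obtained by specializing \lmmref{lmn:effectiveflux} to the explicit LMA thermodynamic and dissipation functions, with the identification $\ln\Vc{K}_{eq}(t)=-\HIncMatrix^{\Transpose}\ln\tilde{\Vc{x}}_{t}$ and the separation of the kinetic parameters into the force part $\Vc{K}$ and the activity part $\Vc{\kappa}$ doing the rest. The only point worth pinning down is the one you already flag yourself: the displayed formula for $\Vc{K}_{eq}(t)$ agrees with the direct computation up to the constant shift $\HIncMatrix^{\Transpose}\ln\Vc{x}^{o}$, which vanishes under the convention $\Vc{x}^{o}=\Vc{1}$.
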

\TJK{This corollary means that the effective time-dependent flux of LMA kinetics is always obtained by a time-dependent modulation of the kinetic parameters $\Vc{\kcoef}^{\pm}$. More specifically, the modulation of force part  $\Vc{K}_{eq}(t)$ is sufficient while the activity part $\Vc{\kappa}$ is kept constant\footnote{While this result may sound not so significant mathematically, for physics and chemistry, the result means that the effective flux is physically realizable and tested.}.}
\begin{figure}[t]
\includegraphics[bb=0 0 1424 768, width= \linewidth]{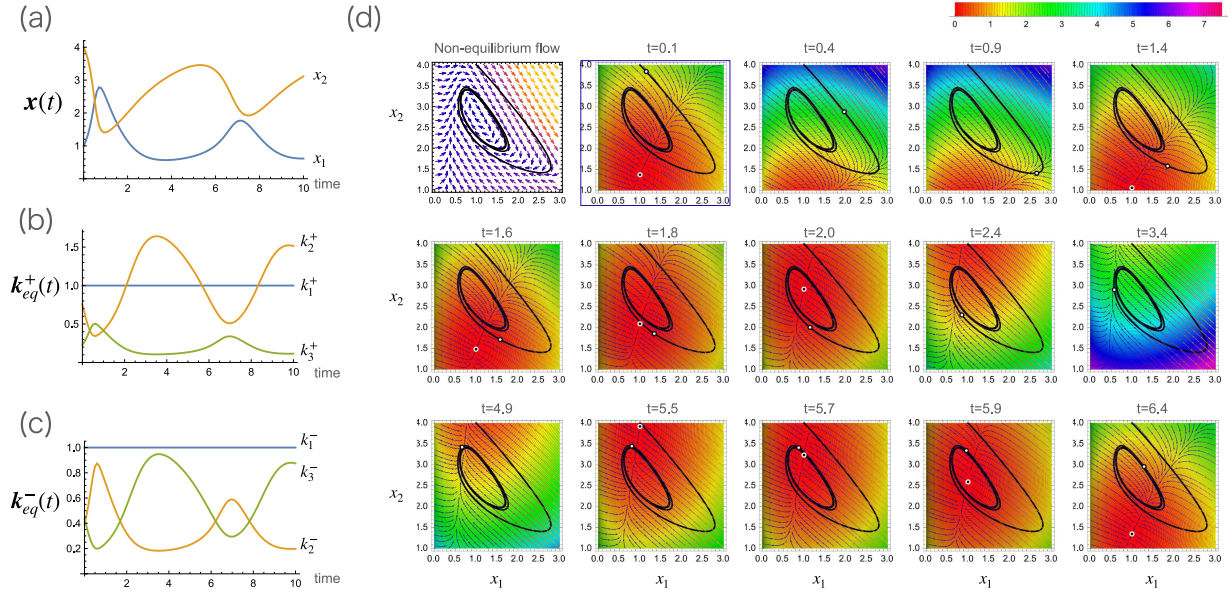}
\caption{\TJK{A nonequilibrium trajectory of the Brusselator CRN (\exref{ex:BrusselatorEq}) and the associated time-dependent effective equilibrium flux $\Vc{\flux}_{eq}(t,\Vc{x})=\Vc{\flux}_{MA}(\Vc{x}; \Vc{\kcoef}^{\pm}_{eq}(t))$. 
(a) A nonequilibrium trajectory $\{\Vc{x}_{t}\}$ of the Brusselator CRN for a set of the kinetic parameters, $\kcoef_{1}^{+}=1.0$, $\kcoef_{1}^{-}=1.0$, $\kcoef_{2}^{+}=3.0$, $\kcoef_{2}^{-}=0.1$, $\kcoef_{3}^{+}=1.0$, $\kcoef_{3}^{-}=0.1$, and the initial state $(\Vc{x}_{1}(0),\Vc{x}_{2}(0))=(1.0, 4.0)$.
(b,c) The time-dependent kinetic parameter set $\Vc{\kcoef}^{\pm}_{eq}(t)$, which generates the time-dependent equilibrium flux $\Vc{\flux}_{eq}(t,\Vc{x})$. 
(d) The top left panel shows the nonequilibrium trajectory $\{\Vc{x}_{t}\}$ (black curve) and the vector field $\Vc{v}(\Vc{x})= - \HIncMatrix \Vc{\flux}(\Vc{x})$  (arrows). The other panels show the time-dependent vector field $\Vc{v}_{eq}(t, \Vc{x})$ induced by $\Vc{\flux}_{eq}(t,\Vc{x})$: $\Vc{v}_{eq}(t, \Vc{x})= - \HIncMatrix \Vc{\flux}_{eq}(t,\Vc{x})$. The nonequilibrium trajectory $\{\Vc{x}_{t}\}$ is also depicted as a reference (the black curve). The color indicates the value of $\BD^{\X}_{\Pfunc}[\Vc{x}\|\tilde{\Vc{x}_{t}}]$. 
In each panel, the white circle on the trajectory is $\Vc{x}_{t}$ at which $\Vc{\flux}_{eq}(t,\Vc{x})$ is computed, and the black circle with a white border corresponds to $\tilde{\Vc{x}}_{t}$ at which $\BD^{\X}_{\Pfunc}[\Vc{x}\|\tilde{\Vc{x}_{t}}]$ is $0$.}
}
\label{fig:Projection}
\end{figure}

\TJK{\begin{ex}[Simplified Brusselator CRN\cite{feinberg2019,yoshimura2022} (continued)]\label{ex:BrusselatorProjection}
By using the Brusselator CRN (\exref{ex:BrusselatorEq}), we numerically obtained a nonequilibrium trajectory $\{\Vc{x}_{t}\}$ (\fgref{fig:Projection} (a)  and \fgref{fig:Projection} (d) top left panel).
By using \corref{cor:effective_flux_LMA}, we also computed the corresponding time-dependent kinetic parameter set $\Vc{\kcoef}^{\pm}_{eq}(t)$ that generates the time-dependent equilibrium flux $\Vc{\flux}_{eq}(t,\Vc{x})=\Vc{\flux}_{MA}(t,\Vc{x}; \Vc{\kcoef}^{\pm}_{eq}(t))$ (\fgref{fig:Projection} (b,c)). 
Figure \ref{fig:Projection} (d) shows the vector field $\Vc{v}_{eq}(t, \Vc{x})=-\HIncMatrix \Vc{\flux}_{eq}(t,\Vc{x})$ induced by the time-dependent equilibrium flux $\Vc{\flux}_{eq}(t,\Vc{x})$ and the contours of $\BD^{\X}_{\Pfunc}[\Vc{x}\|\tilde{\Vc{x}_{t}}]$ where $\tilde{\Vc{x}}_{t}$ follows from \eqnref{eq:ef_time_dep_flux}.
From \fgref{fig:Projection}, we can see that the trajectory $\{\Vc{x}_{t}\}$ originally generated by the nonequilibrium flux $\Vc{\flux}(\Vc{x})$ can be traced by the time-dependent equilibrium flux $\Vc{\flux}_{eq}(t,\Vc{x})$ and also that $\Vc{\flux}_{eq}(t,\Vc{x})$ can be physically realized by the modulation of the kinetic parameters $\Vc{\kcoef}^{\pm}_{eq}(t)$.
\end{ex}
}

\subsection{Characterization of the nonequilibrium flow by the dual information geometric projection}
The nonequilibrium flow is redundant in terms of generating a specific vector field or trajectory $\{\Vc{x}_{t}\}$. 
Such redundancy is crucial to characterize the extent of nonequilibrium.
One approach for the characterization is to investigate the cycle force or flux, which has been employed in the linear theory of dynamics on graphs and also in graph-theoretic approaches to nonequilibrium phenomena\cite{hill2005,schnakenberg1976Rev.Mod.Phys.,altaner2012Phys.Rev.Ea,polettini2014J.Chem.Phys.,strang2020}.
To extract such cyclic components, we can use $\cycMatrix^{\Transpose}=\Curl_{\cycMatrix}$, its adjoint $\cycMatrix=\Curl^{*}_{\cycMatrix}$, the associated cycle subspaces $C^{2}(\HGraph)$ and $C_{2}(\HGraph)$, and also the generalized HKK decomposition (\thmref{thm:HHK}).

\begin{dfn}[Cycle spaces]\label{dfn:cycle_spaces}
The cycle spaces at $\Vc{x}\in \X$ are defined as $\Z_{\Vc{x}} = \chain_{2}(\HGraph)=\Ker[\cycMatrix]$ and $\Zeta_{\Vc{x}} = \chain^{2}(\HGraph)=\Img[\cycMatrix^{\Transpose}]$.
\end{dfn}
For a given nonequilibrium force $\Vc{\tf}(\Vc{x})$, we can obtain its cycle component $\Vc{\zeta} = \Curl_{\cycMatrix}\Vc{\tf}(\Vc{x})=\cycMatrix^{\Transpose}\Vc{\tf}(\Vc{x}) \in \Zeta_{\Vc{x}}$.
$\Vc{\zeta}$ contains the information to categorize the force because $\Polytope^{fr}(\Vc{\tf})=\Polytope^{fr}(\Vc{\zeta})$ is the quotient space of force by the equilibrium force.
For each $\zeta \in \Zeta_{\Vc{x}}$, we obtain the representative force $\Vc{\tf}^{\lozenge}(\Vc{x},\Vc{\zeta})$ via the following variational problem:
\begin{lmm}[Steady (zero-velocity) force as the minimizer of the dual dissipation function]
For a given $\Vc{\zeta}\in \Zeta_{\Vc{x}}$, we define the force $\Vc{\tf}^{\lozenge}(\Vc{x},\Vc{\zeta})$ minimizing the dual dissipation function:
\begin{align}
    \Vc{\tf}^{\lozenge}(\Vc{x},\Vc{\zeta})\defeq \arg \min_{\Vc{\tf}}\Dissp_{\Vc{x}}^{*}[\Vc{\tf}],\quad \mbox{s.t.   $\Curl_{\cycMatrix} \Vc{\tf}=\Vc{\zeta}$}. \label{eq:min_Dissip_F}
\end{align}
Then, $\Vc{\tf}^{\lozenge}(\Vc{x},\Vc{\zeta})$ is the steady (zero-velocity) force, i.e., $\Vc{\flux}^{\lozenge}=\partial \Dissp_{\Vc{x}}^{*}[\Vc{\tf}^{\lozenge}] \in \Polytope^{vl}(\Vc{0})$. 
\end{lmm}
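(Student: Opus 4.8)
The plan is to treat this as the force-side mirror of the $\Vc{\tf}_{st}$ part of \thmref{thm:HHK} (and of the companion lemma characterizing the equilibrium force as the minimizer of the primal dissipation function), reusing the dual-foliation machinery on the edge spaces $(\Jspace_{\Vc{x}},\Fspace_{\Vc{x}})$. First I would rewrite the feasible set: since $\Curl_{\cycMatrix}=\boundary^{1}=\cycMatrix^{\Transpose}$, the constraint $\Curl_{\cycMatrix}\Vc{\tf}=\cycMatrix^{\Transpose}\Vc{\tf}=\Vc{\zeta}$ is exactly the parametric description of $\Polytope^{fr}(\Vc{\zeta})$, an affine subspace of $\Fspace_{\Vc{x}}$ whose direction space is $\Ker\cycMatrix^{\Transpose}$. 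By exactness of the cochain complex (\defref{dfn:exactchainhypergraph}), $\Ker\cycMatrix^{\Transpose}=\Img\HIncMatrix^{\Transpose}=\Polytope^{fr}(\Vc{0})$, and this identification is the single structural fact the proof hinges on. The assumption $\Vc{\zeta}\in\Zeta_{\Vc{x}}=\Img\cycMatrix^{\Transpose}$ guarantees that $\Polytope^{fr}(\Vc{\zeta})\neq\emptyset$.

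Next I would settle existence and uniqueness of $\Vc{\tf}^{\lozenge}(\Vc{x},\Vc{\zeta})$. By \defref{dfn:DissipationFunc} the dual dissipation function $\Dissp^{*}_{\Vc{x}}$ is strictly convex, continuously differentiable, and $1$-coercive; restricted to the nonempty closed affine subspace $\Polytope^{fr}(\Vc{\zeta})$ it remains strictly convex and coercive, so it attains a unique minimizer. In contrast to the vertex-space optimization of \lmmref{lmm:DualFoliationVertex}, here no boundary can interfere because $\Fspace_{\Vc{x}}$ is a full vector space; this is the only place where a subtlety could in principle arise, and it does not.

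Finally, the characterization. Because $\Dissp^{*}_{\Vc{x}}$ is differentiable and $\Polytope^{fr}(\Vc{\zeta})$ is affine, $\Vc{\tf}^{\lozenge}$ is the minimizer iff $\partial\Dissp^{*}_{\Vc{x}}[\Vc{\tf}^{\lozenge}]$ annihilates the direction space of the constraint, i.e.\ $\langle\partial\Dissp^{*}_{\Vc{x}}[\Vc{\tf}^{\lozenge}],\Vc{\tf}_{0}\rangle=0$ for all $\Vc{\tf}_{0}\in\Polytope^{fr}(\Vc{0})=\Img\HIncMatrix^{\Transpose}$. Writing $\Vc{\flux}^{\lozenge}=\partial\Dissp^{*}_{\Vc{x}}[\Vc{\tf}^{\lozenge}]$ and $\Vc{\tf}_{0}=\HIncMatrix^{\Transpose}\Vc{u}$, this reads $\langle\Vc{\flux}^{\lozenge},\HIncMatrix^{\Transpose}\Vc{u}\rangle=\langle\HIncMatrix\Vc{\flux}^{\lozenge},\Vc{u}\rangle=0$ for every $\Vc{u}$, hence $\HIncMatrix\Vc{\flux}^{\lozenge}=\Vc{0}$, i.e.\ $\Vc{\flux}^{\lozenge}\in\Ker\HIncMatrix=\Polytope^{vl}(\Vc{0})$, which is the claim. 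Equivalently, I can phrase the entire argument through the edge-space dual foliation: $\Vc{\tf}^{\lozenge}=\Variety^{vl}_{\Vc{x}}(\Vc{0})\cap\Polytope^{fr}(\Vc{\zeta})$, so $\Vc{\flux}^{\lozenge}=\partial\Dissp^{*}_{\Vc{x}}[\Vc{\tf}^{\lozenge}]\in\Polytope^{vl}(\Vc{0})$ by the definition of $\Variety^{vl}_{\Vc{x}}$, and the generalized Pythagorean relation on $(\Jspace_{\Vc{x}},\Fspace_{\Vc{x}})$ gives, for all $\Vc{\tf}''\in\Polytope^{fr}(\Vc{\zeta})$,
\begin{align}
\Dissp^{*}_{\Vc{x}}(\Vc{\tf}'')=\BD^{\Fspace}_{\Vc{x}}[\Vc{\tf}''\|\Vc{\tf}^{\lozenge}]+\Dissp^{*}_{\Vc{x}}(\Vc{\tf}^{\lozenge}),
\end{align}
identifying $\Vc{\tf}^{\lozenge}$ as the minimizer by nonnegativity of the Bregman divergence. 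I do not expect a genuine obstacle here: the content is entirely the edge-space analogue of results already established, and the only care needed is the bookkeeping of the several affine subspaces and the correct invocation of the orthogonality $\Ker\cycMatrix^{\Transpose}=\Img\HIncMatrix^{\Transpose}$.
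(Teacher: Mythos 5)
Your argument is correct and follows essentially the same route as the paper: the paper's proof is a one-liner invoking the geometric/variational characterization of $\Vc{\tf}_{st}$ in \thmref{thm:HHK}, namely that the minimizer of $\Dissp^{*}_{\Vc{x}}$ over $\Polytope^{fr}(\Vc{\zeta})$ is the intersection $\Polytope^{fr}(\Vc{\zeta})\cap\Variety^{vl}_{\Vc{x}}(\Vc{0})$, whose Legendre dual lies in $\Polytope^{vl}(\Vc{0})$ — exactly your "equivalently" paragraph. Your additional first-order-condition derivation ($\langle\Vc{\flux}^{\lozenge},\HIncMatrix^{\Transpose}\Vc{u}\rangle=0$ for all $\Vc{u}$ implies $\HIncMatrix\Vc{\flux}^{\lozenge}=\Vc{0}$) is a sound, more self-contained unpacking of the same content and the existence/uniqueness remarks are correct.
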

\begin{proof}
From \eqnref{eq:j_eq_f_st_intersection} in \thmref{thm:HHK}, $\Vc{\tf}^{\lozenge}(\Vc{x},\Vc{\zeta}) \in \Polytope^{fr}(\Vc{\zeta})\cap \Manifold^{vl}(\Vc{0})$. Thus, $\Vc{\flux}^{\lozenge}\in \Polytope^{vl}(\Vc{0})$. 
\end{proof}
Among various $\Vc{\tf}$ that has the same cyclic component $\Vc{\zeta}$, the force $\Vc{\tf}^{\lozenge}(\Vc{x},\Vc{\zeta})$ is the one that induces no dynamics of $\Vc{x}$. Because any dynamics of $\Vc{x}$ can be represented by the effective equilibrium flux as in \lmmref{lmn:effectiveflux}, $\Vc{\tf}^{\lozenge}(\Vc{x},\Vc{\zeta})$ can be regarded as the force being purely relevant to the cycle.
\begin{figure}[h]
\includegraphics[bb=0 0 1024 468, width=\linewidth]{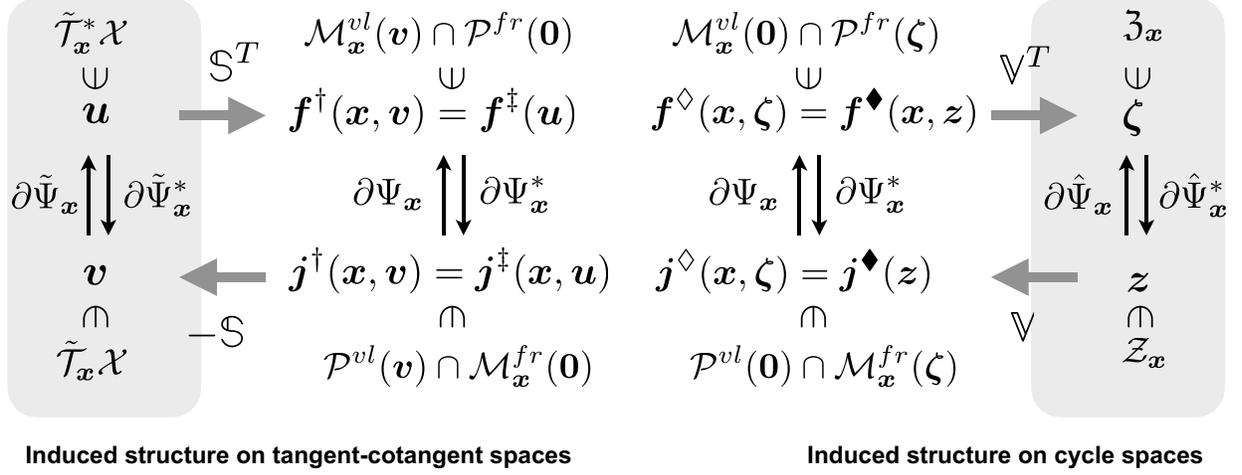}
\caption{\TJK{The induced dually flat structure on the cycle spaces (right gray region) in comparison with the induced structure on the restricted tangent and cotangent spaces(left gray region). }}
\label{fig:Induced_Duality2}
\end{figure}

Using $\Vc{\tf}^{\lozenge}(\Vc{x},\Vc{\zeta})$, we can establish the induced duality between $\Z_{\Vc{x}}$ and $\Zeta_{\Vc{x}}$ spaces:
\begin{thm}[Induced dually flat structure on cycle spaces]\label{thm:inducedHessiangeometry2}
On the cycle spaces, $\Z_{\Vc{x}}$ and $\Zeta_{\Vc{x}}$, we have the Legendre conjugate dissipation functions $\hat{\Dissp}_{\Vc{x}}: \Z_{\Vc{x}} \to \Real$ and $\hat{\Dissp}_{\Vc{x}}^{*}: \Zeta_{\Vc{x}} \to \Real$ induced by the dissipation functions on the edge spaces  (\fgref{fig:Induced_Duality2}).
\end{thm}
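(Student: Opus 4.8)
The strategy is to mimic exactly the construction used in \thmref{thm:inducedHessiangeometry1}, replacing the pair $(\HIncMatrix,\HIncMatrix^{\Transpose})$ acting between vertex and edge spaces with the pair $(\cycMatrix,\cycMatrix^{\Transpose})=(\Curl^{*}_{\cycMatrix},\Curl_{\cycMatrix})$ acting between cycle and edge spaces. The key object is the map already singled out in the preceding lemma: for each $\Vc{\zeta}\in\Zeta_{\Vc{x}}$ we have the unique steady (zero-velocity) force $\Vc{\tf}^{\lozenge}(\Vc{x},\Vc{\zeta})=\Polytope^{fr}(\Vc{\zeta})\cap\Variety^{vl}_{\Vc{x}}(\Vc{0})$, and its Legendre dual flux $\Vc{\flux}^{\lozenge}=\partial\Dissp^{*}_{\Vc{x}}[\Vc{\tf}^{\lozenge}]\in\Polytope^{vl}(\Vc{0})=\Img[\cycMatrix]=\Ker[\HIncMatrix]$. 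I would define the induced functions by
\begin{align}
\hat{\Dissp}^{*}_{\Vc{x}}(\Vc{\zeta})&\defeq \Dissp^{*}_{\Vc{x}}(\Vc{\tf}^{\lozenge}(\Vc{x},\Vc{\zeta})), & \hat{\Dissp}_{\Vc{x}}(\Vc{\cycle})&\defeq \Dissp_{\Vc{x}}(\cycMatrix\,\Vc{\cycle}), \label{eq:plan_induced}
\end{align}
the first living on $\Zeta_{\Vc{x}}$, the second on $\Z_{\Vc{x}}$ (note $\cycMatrix$ is injective on $\Z_{\Vc{x}}\cong\chain_{2}(\HGraph)$, so $\Vc{\cycle}\mapsto\cycMatrix\Vc{\cycle}$ identifies $\Z_{\Vc{x}}$ with $\Img[\cycMatrix]\subset\Jspace_{\Vc{x}}$).

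\textbf{Key steps.} First I would verify that $\hat{\Dissp}_{\Vc{x}}$ and $\hat{\Dissp}^{*}_{\Vc{x}}$ are genuine dissipation functions on their respective spaces: strict convexity and $1$-coercivity descend from those of $\Dissp_{\Vc{x}},\Dissp^{*}_{\Vc{x}}$ because $\cycMatrix$ has full column rank and $\Vc{\zeta}\mapsto\Vc{\tf}^{\lozenge}$ is affine-linear in $\Vc{\zeta}$ on the affine slice; symmetry follows from $\Vc{\tf}^{\lozenge}(\Vc{x},-\Vc{\zeta})=-\Vc{\tf}^{\lozenge}(\Vc{x},\Vc{\zeta})$ (which holds because $-\Vc{\tf}^{\lozenge}(\Vc{x},\Vc{\zeta})\in\Polytope^{fr}(-\Vc{\zeta})\cap\Variety^{vl}_{\Vc{x}}(\Vc{0})$ by the symmetry of the dual foliation established in \thmref{thm:HHK}); and vanishing at $\Vc{0}$ follows from $\Vc{\tf}^{\lozenge}(\Vc{x},\Vc{0})=\Vc{0}$. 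Second, the core computation: show $\hat{\Dissp}_{\Vc{x}}$ and $\hat{\Dissp}^{*}_{\Vc{x}}$ are Legendre--Fenchel conjugate under the bilinear pairing $\langle\cdot,\cdot\rangle:\Z_{\Vc{x}}\times\Zeta_{\Vc{x}}\to\Real$. This is the same chain of equalities as in \thmref{thm:inducedHessiangeometry1}: starting from $\max_{\Vc{\zeta}'}[\langle\Vc{\cycle},\Vc{\zeta}'\rangle-\hat{\Dissp}^{*}_{\Vc{x}}(\Vc{\zeta}')]$, rewrite $\langle\Vc{\cycle},\Vc{\zeta}'\rangle=\langle\Vc{\cycle},\cycMatrix^{\Transpose}\Vc{\tf}'\rangle=\langle\cycMatrix\Vc{\cycle},\Vc{\tf}'\rangle$ with $\Vc{\tf}'=\Vc{\tf}^{\lozenge}(\Vc{x},\Vc{\zeta}')$, note that the term $\langle\Vc{\flux}-\cycMatrix\Vc{\cycle},\Vc{\tf}'\rangle$ with $\Vc{\flux}\in\Polytope^{fr}(\ldots)$ vanishes because $\cycMatrix\Vc{\cycle}\in\Img[\cycMatrix]=\Ker[\HIncMatrix]$ while $\Vc{\tf}'\in\Img[\HIncMatrix^{\Transpose}]$ once the extra $\Img[\cycMatrix^{\Transpose}]$ direction is absorbed into the maximization, use that $\{\Vc{\tf}^{\lozenge}(\Vc{x},\Vc{\zeta}')\}_{\Vc{\zeta}'}$ sweeps out $\Variety^{vl}_{\Vc{x}}(\Vc{0})$, and collapse to $\max_{\Vc{\tf}'\in\Variety^{vl}_{\Vc{x}}(\Vc{0})}[\langle\cycMatrix\Vc{\cycle},\Vc{\tf}'\rangle-\Dissp^{*}_{\Vc{x}}(\Vc{\tf}')]=\Dissp_{\Vc{x}}(\cycMatrix\Vc{\cycle})=\hat{\Dissp}_{\Vc{x}}(\Vc{\cycle})$, since $\cycMatrix\Vc{\cycle}\in\Polytope^{vl}(\Vc{0})$ and the unconstrained Legendre transform of $\Dissp^{*}_{\Vc{x}}$ restricted to the complementary affine slice gives back $\Dissp_{\Vc{x}}$ on $\Img[\cycMatrix]$. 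The reverse direction is analogous. Third, I would record the Legendre transformation formulas $\partial_{\Vc{\cycle}}\hat{\Dissp}_{\Vc{x}}(\Vc{\cycle})=\cycMatrix^{\Transpose}\partial\Dissp_{\Vc{x}}(\cycMatrix\Vc{\cycle})$ and its inverse, confirming the bijection $\Z_{\Vc{x}}\leftrightarrow\Zeta_{\Vc{x}}$ realized by these gradients, exactly paralleling \eqnref{eq:v_pairing}--\eqnref{eq:u_pairing}.

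\textbf{Main obstacle.} The delicate point is bookkeeping which subspace each object lives in and verifying the orthogonality annihilations that make the cross terms vanish — in particular that the dual of $\Polytope^{fr}(\Vc{0})$ "mod cycles" is genuinely $\Img[\cycMatrix]$ and that $\Vc{\tf}^{\lozenge}$ depends affinely on $\Vc{\zeta}$ so that convexity and coercivity transfer cleanly. Unlike the vertex-space case where $\tilde{\Tan}^{*}_{\Vc{x}}\X$ is a quotient, here $\Zeta_{\Vc{x}}$ is naturally the image $\Img[\cycMatrix^{\Transpose}]$, so I must be careful that the pairing $\langle\Vc{\cycle},\Vc{\zeta}\rangle$ is the one inherited from $\chain_{2}(\HGraph)\times\chain^{2}(\HGraph)$ and is non-degenerate on these spaces; the exactness of the chain/cochain sequence in \defref{dfn:exactchainhypergraph} (specifically $\Img[\boundary_{2}]=\Ker[\boundary_{1}]$ and its adjoint) is what guarantees this. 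Once the subspace identifications are pinned down, the rest is a verbatim transcription of the proof of \thmref{thm:inducedHessiangeometry1} with $(\HIncMatrix,\HIncMatrix^{\Transpose})\mapsto(\cycMatrix,\cycMatrix^{\Transpose})$, so no genuinely new analytic input is required.
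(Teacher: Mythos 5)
Your proposal is correct and follows essentially the same route as the paper: the same induced functions $\hat{\Dissp}_{\Vc{x}}(\Vc{z})=\Dissp_{\Vc{x}}(\cycMatrix\Vc{z})$ and $\hat{\Dissp}^{*}_{\Vc{x}}(\Vc{\zeta})=\Dissp^{*}_{\Vc{x}}(\Vc{\tf}^{\lozenge}(\Vc{x},\Vc{\zeta}))$, the same conjugacy computation exploiting $\langle\Vc{\flux}',\Vc{\tf}-\Vc{\tf}^{\lozenge}\rangle=0$ for $\Vc{\flux}'\in\Img\cycMatrix=\Ker\HIncMatrix$ and $\Vc{\tf}-\Vc{\tf}^{\lozenge}\in\Ker\cycMatrix^{\Transpose}$, and the same verification of the dissipation-function axioms. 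One small inaccuracy: the map $\Vc{\zeta}\mapsto\Vc{\tf}^{\lozenge}(\Vc{x},\Vc{\zeta})$ is \emph{not} affine for non-quadratic dissipation functions (it is only for quadratic ones), so convexity and $1$-coercivity of $\hat{\Dissp}^{*}_{\Vc{x}}$ should instead be obtained either from its representation as the partial minimization $\hat{\Dissp}^{*}_{\Vc{x}}(\Vc{\zeta})=\min\{\Dissp^{*}_{\Vc{x}}(\Vc{\tf})\,|\,\cycMatrix^{\Transpose}\Vc{\tf}=\Vc{\zeta}\}$, or simply from the fact that it is the Legendre conjugate of the strictly convex, $1$-coercive $\hat{\Dissp}_{\Vc{x}}$ — which your own conjugacy computation already delivers.
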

\begin{proof}
For each $\Vc{\zeta}\in \Zeta_{\Vc{x}}$, we can uniquely determine $\Vc{\tf}^{\lozenge}(\Vc{x},\Vc{\zeta}) \in \Fspace_{\Vc{x}}$ and $\Vc{\flux}^{\lozenge}(\Vc{x},\Vc{\zeta})\in \Jspace_{\Vc{x}}$ as
\begin{align}
\Vc{\tf}^{\lozenge}(\Vc{x},\Vc{\zeta}) &\defeq \Polytope^{fr}(\Vc{\zeta})\cap\Manifold^{vl}_{\Vc{x}}(\Vc{0})=\arg\min_{\Vc{\tf}\in \Polytope^{fr}(\Vc{\zeta})} \Dissp_{\Vc{x}}^{*}[\Vc{\tf}],\\
\Vc{\flux}^{\lozenge}(\Vc{x},\Vc{\zeta}) &\defeq \partial \Dissp^{*}_{\Vc{x}}[\Vc{\tf}^{\lozenge}(\Vc{x},\Vc{\zeta})] \in \Polytope^{vl}(\Vc{0}).
\end{align}
In addition, $\Vc{z}^{\lozenge}(\Vc{x},\Vc{\zeta}) \in \Z_{\Vc{x}}$ satisfying $\cycMatrix\Vc{z}^{\lozenge}(\Vc{x},\Vc{\zeta})  = \Vc{\flux}^{\lozenge}(\Vc{x},\Vc{\zeta})$ is also uniquely determined because $\cycMatrix: \Z_{\Vc{x}} \to \Jspace_{\Vc{x}}$, $\Vc{\flux}^{\lozenge}(\Vc{x},\Vc{\zeta}) \in \Polytope^{vl}(\Vc{0})=\Img[\cycMatrix]$ and $\Ker \cycMatrix =\{\Vc{0}\}$. 
For these quantities,
\begin{align}
    \Vc{\zeta} & = \cycMatrix^{\Transpose} \Vc{\tf}^{\lozenge}(\Vc{x},\Vc{\zeta}), & \Vc{\tf}^{\lozenge}(\Vc{x},\Vc{\zeta}) & = \partial \Dissp_{\Vc{x}}[\Vc{\flux}^{\lozenge}(\Vc{x},\Vc{\zeta})], & \Vc{\flux}^{\lozenge}(\Vc{x},\Vc{\zeta}) & =  \cycMatrix \Vc{z}^{\lozenge}(\Vc{x},\Vc{\zeta}), \label{eq:zeta_pairing}
\end{align}
hold.
Conversely, for a given $\Vc{z}\in \Z_{\Vc{x}}$, we have $\Vc{\flux}^{\blacklozenge}(\Vc{x},\Vc{u})$, $\Vc{\tf}^{\blacklozenge}(\Vc{u})$, and $\Vc{\zeta}^{\blacklozenge}(\Vc{x},\Vc{u})$ as follows:
\begin{align}
    \Vc{\zeta}^{\blacklozenge}(\Vc{x},\Vc{z}) & =  \cycMatrix^{\Transpose} \Vc{\tf}^{\blacklozenge}(\Vc{x},\Vc{z}), & \Vc{\tf}^{\blacklozenge}(\Vc{x},\Vc{z}) & = \partial \Dissp_{\Vc{x}}[\Vc{\flux}^{\blacklozenge}(\Vc{z})], & \Vc{\flux}^{\blacklozenge}(\Vc{z}) & = \cycMatrix \Vc{z}.\label{eq:z_pairing}
\end{align}
For a pair of $(\Vc{z},\Vc{\zeta})_{\Vc{x}}$ that satisfies $\Vc{z}=\Vc{z}^{\lozenge}(\Vc{x},\Vc{\zeta})$, then $\Vc{\zeta}=\Vc{\zeta}^{\blacklozenge}(\Vc{x},\Vc{z})$, $\Vc{\tf}^{\lozenge}(\Vc{x},\Vc{\zeta})=\Vc{\tf}^{\blacklozenge}(\Vc{x},\Vc{z})$, and $\Vc{\flux}^{\lozenge}(\Vc{x},\Vc{\zeta})=\Vc{\flux}^{\blacklozenge}(\Vc{z})$ hold.
This pairing establishes a bijection between $\Z_{\Vc{x}}$ and $\Zeta_{\Vc{x}}$.
This bijection is realized by the Legendre transformations of the following induced dissipation functions on $\Z_{\Vc{x}}$ and $\Zeta_{\Vc{x}}$:
\begin{align}
\hat{\Dissp}_{\Vc{x}}(\Vc{z}) & \defeq \Dissp_{\Vc{x}}(\Vc{\flux}^{\blacklozenge}(\Vc{z})), & 
\hat{\Dissp}_{\Vc{x}}^{*}(\Vc{\zeta}) & \defeq \Dissp_{\Vc{x}}^{*}(\Vc{\tf}^{\lozenge}(\Vc{x},\Vc{\zeta})). \label{eq:induced_dissp_func_cycle}
\end{align}
These functions are Legendre conjugate as follows:
\begin{align*}
 \max_{\Vc{z}'\in\Z_{\Vc{x}}}&\left[\langle\Vc{z}', \Vc{\zeta}\rangle-\hat{\Dissp}_{\Vc{x}}(\Vc{z}') \right]
 =\max_{\substack{\Vc{z}'\in\Z_{\Vc{x}}\\ \Vc{\tf}\in \Polytope^{fr}(\Vc{\zeta})}}\left[\langle\Vc{z}', \cycMatrix^{\Transpose}\Vc{\tf}\rangle-\hat{\Dissp}_{\Vc{x}}(\Vc{z}') \right]
 =\max_{\substack{\Vc{z}'\in\Z_{\Vc{x}}\\ \Vc{\tf}\in \Polytope^{fr}(\Vc{\zeta})}}\left[\langle\cycMatrix\Vc{z}', \Vc{\tf}\rangle-\Dissp_{\Vc{x}}(\cycMatrix\Vc{z}') \right]\\
&=\max_{\substack{\Vc{\flux}'\in \Polytope^{vl}(\Vc{0})\\ \Vc{\tf}\in \Polytope^{fr}(\Vc{\zeta})}}\left[\langle\Vc{\flux}', \Vc{\tf}\rangle-\Dissp_{\Vc{x}}(\Vc{\flux}') \right]
=\max_{\substack{\Vc{\flux}'\in \Polytope^{vl}(\Vc{0})\\ \Vc{\tf}\in \Polytope^{fr}(\Vc{\zeta})}}\left[\langle\Vc{\flux}', \Vc{\tf}^{\lozenge}(\Vc{x},\Vc{\zeta})\rangle-\Dissp_{\Vc{x}}(\Vc{\flux}')  + \langle\Vc{\flux}', (\Vc{\tf}-\Vc{\tf}^{\lozenge}(\Vc{x},\Vc{\zeta}))\rangle\right]\\
&=\max_{\substack{\Vc{\flux}'\in \Polytope^{vl}(\Vc{0})}}\left[\langle\Vc{\flux}', \Vc{\tf}^{\lozenge}(\Vc{x},\Vc{\zeta})\rangle-\Dissp_{\Vc{x}}(\Vc{\flux}')  \right]=\Dissp_{\Vc{x}}^{*}(\Vc{\tf}^{\lozenge}(\Vc{x},\Vc{\zeta}))=\hat{\Dissp}_{\Vc{x}}(\Vc{\zeta}),
\end{align*}
where we used $\langle\Vc{\flux}', (\Vc{\tf}-\Vc{\tf}^{\lozenge}(\Vc{x},\Vc{\zeta}))\rangle=0$ because $\Vc{\flux}' \in \Polytope^{vl}(\Vc{0})=\Img \cycMatrix$ and $\Vc{\tf}-\Vc{\tf}^{\lozenge}(\Vc{x},\Vc{\zeta}) \in \Polytope^{fr}(\Vc{0})=\Ker \cycMatrix^{\Transpose}$.
The inverse is also shown:
\begin{align*}
 \max_{\Vc{\zeta}'\in\Zeta_{\Vc{x}}}&\left[\langle\Vc{z}, \Vc{\zeta}'\rangle-\hat{\Dissp}_{\Vc{x}}^{*}(\Vc{\zeta}') \right]
 =\max_{\substack{\Vc{\zeta}'\in\Zeta_{\Vc{x}}}}\left[\langle\Vc{z}, \cycMatrix^{\Transpose}\Vc{\tf}^{\lozenge}(\Vc{x},\Vc{\zeta}')\rangle-\Dissp_{\Vc{x}}^{*}(\Vc{\tf}^{\lozenge}(\Vc{x},\Vc{\zeta}')) \right]\\
 &=\max_{\substack{\Vc{\tf}^{\lozenge}\in  \Variety^{vl}_{\Vc{x}}(\Vc{0})}}\left[\langle\cycMatrix\Vc{z}, \Vc{\tf}^{\lozenge}\rangle-\Dissp_{\Vc{x}}^{*}(\Vc{\tf}^{\lozenge}) \right]
 =\max_{ \Vc{\tf}^{\lozenge}\in  \Variety^{vl}_{\Vc{x}}(\Vc{0})}\left[\langle\Vc{\flux}^{\blacklozenge}(\Vc{z}), \Vc{\tf}^{\lozenge}\rangle-\Dissp_{\Vc{x}}^{*}(\Vc{\tf}^{\lozenge}) \right]\\
 &=\Dissp_{\Vc{x}}(\Vc{\flux}^{\blacklozenge}(\Vc{z}))=\hat{\Dissp}_{\Vc{x}}(\Vc{z})
\end{align*}
The pair $(\Vc{z}$, $\Vc{\zeta})_{\Vc{x}}$ is Legendre dual with respect to these functions:
\begin{align}
    \partial_{\Vc{\zeta}} \hat{\Dissp}_{\Vc{x}}^{*}(\Vc{\zeta})&= \left[\frac{\partial \Vc{\tf}^{\lozenge}(\Vc{x},\Vc{\zeta})}{\partial \Vc{\zeta}}\right]^{\Transpose}\left.\frac{\partial \Dissp_{\Vc{x}}^{*}(\Vc{\tf})}{\partial \Vc{\tf}}\right|_{\Vc{\tf}=\Vc{\tf}^{\lozenge}(\Vc{x},\Vc{\zeta})}=\left[\frac{\partial \Vc{\tf}^{\lozenge}(\Vc{x},\Vc{\zeta})}{\partial \Vc{\zeta}}\right]^{\Transpose}\Vc{\flux}^{\lozenge}(\Vc{x},\Vc{\zeta})\\
    &=\left[\frac{\partial \Vc{\tf}^{\lozenge}(\Vc{x},\Vc{\zeta})}{\partial \Vc{\zeta}}\right]^{\Transpose}\cycMatrix\Vc{z}^{\lozenge}(\Vc{x},\Vc{\zeta})=\Vc{z},\\
    \partial_{\Vc{z}} \hat{\Dissp}_{\Vc{x}}(\Vc{z}) &= \left[\frac{\partial \Vc{\flux}^{\blacklozenge}(\Vc{z})}{\partial \Vc{z}}\right]^{\Transpose}\left.\frac{\partial \Dissp_{\Vc{x}}(\Vc{\flux})}{\partial \Vc{\flux}}\right|_{\Vc{\flux}=\Vc{\flux}^{\blacklozenge}(\Vc{z})}
    =\left[\frac{\partial \Vc{\flux}^{\blacklozenge}(\Vc{z})}{\partial \Vc{z}}\right]^{\Transpose}\Vc{\tf}^{\blacklozenge}(\Vc{x},\Vc{z})=\cycMatrix^{\Transpose}\Vc{\tf}^{\blacklozenge}(\Vc{x},\Vc{z})=\Vc{\zeta},
\end{align}
where we used $\left[\frac{\partial \Vc{\tf}^{\lozenge}(\Vc{x},\Vc{\zeta})}{\partial \Vc{\zeta}}\right]^{\Transpose}\cycMatrix=\identityM$ from $\frac{\partial}{\partial \Vc{\zeta}}[\Vc{\zeta}-\cycMatrix^{\Transpose} \Vc{\tf}^{\lozenge}(\Vc{x},\Vc{\zeta})]=\identityM - \cycMatrix^{\Transpose} \frac{\partial \Vc{\tf}^{\lozenge}(\Vc{x},\Vc{\zeta})}{\partial \Vc{\zeta}}=0$ and $\frac{\partial \Vc{\flux}^{\blacklozenge}(\Vc{x},\Vc{z})}{\partial \Vc{z}} =\cycMatrix$.
They are dissipation functions: strict convexity and $1$-coercivity follow from those of the original dissipation functions.
Also, we have
\begin{align}
\mbox{Symmetry:}& & \hat{\Dissp}_{\Vc{x}}(-\Vc{z}) & = \Dissp_{\Vc{x}}(\Vc{\flux}^{\blacklozenge}(-\Vc{z}))=\Dissp_{\Vc{x}}(-\Vc{\flux}^{\blacklozenge}(\Vc{z}))=\hat{\Dissp}_{\Vc{x}}(\Vc{z})\\
\mbox{Bounded by $0$ at $\Vc{0}$:}& &     \hat{\Dissp}_{\Vc{x}}(\Vc{z}=\Vc{0})&=\Dissp_{\Vc{x}}(\Vc{\flux}^{\blacklozenge}(\Vc{0}))=\Dissp_{\Vc{x}}(\Vc{0})=0.
\end{align}
\end{proof}
For a given force $\Vc{\tf}(\Vc{x})$ of the generalized flow (\eqnref{eq:gFlow}), $\Vc{\tf}_{st}(\Vc{x})=\Vc{\tf}^{\lozenge}(\Vc{x}, \cycMatrix^{\Transpose}\Vc{\tf}(\Vc{x}))$ works as the effective cycle force for each $\Vc{x} \in \X$.
Similarly to \corref{cor:effective_flux_LMA}, we obtain the effective cycle force and the flux for LMA kinetics by parametric modulation:
\begin{cor}[\TJK{Effective cycle force and flux for LMA kinetics}]\label{cor:effective_cycle_flux_LMA}
Consider CRN with LMA kinetics as in \corref{cor:effective_flux_LMA}.
For each $\Vc{x}\in \X$, the effective cycle force $\Vc{\tf}_{st}(\Vc{x})$ associated with $\Vc{\flux}_{MA}(\Vc{x};\Vc{\kcoef}^{\pm})$ can be described as $\Vc{\tf}_{st}(\Vc{x})=\Vc{\tf}_{MA}(\Vc{x}; \Vc{K}_{st}(\Vc{x}))$ where $\Vc{K}_{st}(\Vc{x})$ is determined by 
\begin{align}
\Vc{K}_{st}(\Vc{x}) &\defeq \exp\left[\Vc{\tf}^{\lozenge}(\Vc{x};\cycMatrix^{\Transpose}\Vc{\tf}_{MA}(\Vc{x};\Vc{K}))-\HIncMatrix^{\Transpose}\partial \Pfunc(\Vc{x}) \right]. \label{eq:Kst}
\end{align}
Thus, the effective cycle flux $\Vc{\flux}_{st}(\Vc{x})$ is represented as $\Vc{\flux}_{st}(\Vc{x})=\Vc{\flux}_{MA}(\Vc{x}; \Vc{\kcoef}^{\pm}_{st}(\Vc{x}))$ where $\Vc{\kcoef}^{\pm}_{st}(\Vc{x})=\Vc{\kappa}\circ \Vc{K}^{\pm 1/2}_{st}(\Vc{x})$.
For a given trajectory $\{\Vc{x}_{t}\}$, which is generated by a generalized flow, we have the effective time-dependent cycle flux $\Vc{\flux}_{st}(t, \Vc{x})$ as $\Vc{\flux}_{st}(t, \Vc{x})=\Vc{\flux}_{MA}(\Vc{x}; \Vc{\kcoef}^{\pm}_{st}(\Vc{x}_{t}))$. From the construction, this time-dependent cycle flux makes $\Vc{x}_{t}$ a steady state for each $t$, i.e., $\HIncMatrix \Vc{\flux}_{st}(t, \Vc{x}_{t})=\Vc{0}$ holds for all $t$.
\end{cor}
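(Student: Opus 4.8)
The plan is to run the same argument as in \corref{cor:effective_flux_LMA}, but using the zero-velocity (steady) component of the information-geometric Helmholtz--Hodge--Kodaira decomposition (\thmref{thm:HHK}) and the induced dually flat structure on the cycle spaces (\thmref{thm:inducedHessiangeometry2}) in place of the equilibrium component and the tangent-space structure. The only feature of LMA kinetics that is needed is the affine form of the force in \eqnref{eq:jffMAK}, namely $\Vc{\tf}_{\mathrm{MA}}(\Vc{x};\Vc{K}') = \ln \Vc{K}' + \HIncMatrix^{\Transpose}\ln \Vc{x}$, together with the fact that the frenetic activity $\Vc{\frenecy}_{\mathrm{MA}}(\Vc{x};\Vc{\kappa})$ appearing in the dissipation functions \eqnref{eq:CRN_dissip} depends on the rate constants only through $\Vc{\kappa}=\sqrt{\Vc{\kcoef}^{+}\circ\Vc{\kcoef}^{-}}$ and not through $\Vc{K}=\Vc{\kcoef}^{+}/\Vc{\kcoef}^{-}$.

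First I would recall that the effective cycle force of the generalized flow driven by $\Vc{\tf}_{\mathrm{MA}}(\Vc{x};\Vc{K})$ is $\Vc{\tf}_{st}(\Vc{x})=\Vc{\tf}^{\lozenge}\bigl(\Vc{x};\cycMatrix^{\Transpose}\Vc{\tf}_{\mathrm{MA}}(\Vc{x};\Vc{K})\bigr)$, which exists and is unique by the lemma characterizing the steady (zero-velocity) force as the minimizer of $\Dissp^{*}_{\Vc{x}}$ over $\Polytope^{fr}(\Vc{\zeta})$ (equivalently, the intersection in \thmref{thm:HHK}). For each fixed $\Vc{x}$, the map $\Vc{K}'\mapsto\Vc{\tf}_{\mathrm{MA}}(\Vc{x};\Vc{K}')$ is a bijection of $\Real_{>0}^{N_{\edge}}$ onto $\Fspace_{\Vc{x}}=\Real^{N_{\edge}}$, since $\ln\Vc{K}'$ sweeps all of $\Real^{N_{\edge}}$ while $\HIncMatrix^{\Transpose}\ln\Vc{x}$ is a fixed offset; hence there is a unique $\Vc{K}_{st}(\Vc{x})$ with $\Vc{\tf}_{\mathrm{MA}}(\Vc{x};\Vc{K}_{st}(\Vc{x}))=\Vc{\tf}_{st}(\Vc{x})$, and solving gives $\ln\Vc{K}_{st}(\Vc{x})=\Vc{\tf}_{st}(\Vc{x})-\HIncMatrix^{\Transpose}\ln\Vc{x}$. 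Substituting $\ln\Vc{x}=\partial\Pfunc(\Vc{x})+\ln\Vc{x}^{o}$ for the thermodynamic function of \eqnref{eq:potentialfunction_X_MAK} reproduces \eqnref{eq:Kst}, the constant $\HIncMatrix^{\Transpose}\ln\Vc{x}^{o}\in\Img\HIncMatrix^{\Transpose}=\Ker\cycMatrix^{\Transpose}$ being immaterial for the cycle component (absorbed into the reference of $\Vc{K}$, or vanishing under the normalization $\Vc{x}^{o}=\Vc{1}$).

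For the flux, I would use the Legendre transformation \eqnref{eq:LT_flux_force}: $\Vc{\flux}_{\mathrm{MA}}(\Vc{x};\Vc{\kcoef}^{\pm})=\partial\Dissp^{*}_{\Vc{\frenecy}_{\mathrm{MA}}(\Vc{x};\Vc{\kappa})}\bigl[\Vc{\tf}_{\mathrm{MA}}(\Vc{x};\Vc{K})\bigr]$. Since modulating $\Vc{K}$ alone leaves $\Vc{\frenecy}_{\mathrm{MA}}(\Vc{x};\Vc{\kappa})$ and hence the operator $\partial\Dissp^{*}_{\Vc{\frenecy}_{\mathrm{MA}}(\Vc{x};\Vc{\kappa})}$ untouched, applying it to $\Vc{\tf}_{st}(\Vc{x})=\Vc{\tf}_{\mathrm{MA}}(\Vc{x};\Vc{K}_{st}(\Vc{x}))$ yields exactly $\Vc{\flux}_{\mathrm{MA}}(\Vc{x};\Vc{\kcoef}^{\pm}_{st}(\Vc{x}))$ with $\Vc{\kcoef}^{\pm}_{st}(\Vc{x})=\Vc{\kappa}\circ\Vc{K}^{\pm1/2}_{st}(\Vc{x})$, which by definition is the effective cycle flux $\Vc{\flux}_{st}(\Vc{x})=\partial\Dissp^{*}_{\Vc{x}}[\Vc{\tf}_{st}(\Vc{x})]$. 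The trajectory statement then follows by freezing $\Vc{x}=\Vc{x}_{t}$: by the lemma characterizing $\Vc{\tf}^{\lozenge}$, $\Vc{\flux}^{\lozenge}(\Vc{x},\Vc{\zeta})=\partial\Dissp^{*}_{\Vc{x}}[\Vc{\tf}^{\lozenge}(\Vc{x},\Vc{\zeta})]\in\Polytope^{vl}(\Vc{0})=\Ker\HIncMatrix$, so $\HIncMatrix\Vc{\flux}_{st}(t,\Vc{x}_{t})=\HIncMatrix\Vc{\flux}_{st}(\Vc{x}_{t})=\Vc{0}$ for all $t$.

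I do not expect a genuine analytic obstacle: uniqueness of every intersection and minimizer used is supplied by the strict convexity and $1$-coercivity of the dissipation functions (already invoked in \thmref{thm:HHK} and \thmref{thm:inducedHessiangeometry2}), and the rest is bookkeeping. The one point demanding care is the compatibility of the $(\Vc{\kappa},\Vc{K})$ split of the kinetic parameters with the Legendre duality --- that adjusting $\Vc{K}$ translates $\Vc{\tf}_{\mathrm{MA}}$ while leaving the dissipation geometry $\Vc{\frenecy}_{\mathrm{MA}}$ fixed --- which is exactly the separation exhibited in \eqnref{eq:jffMAK}, plus the harmless $\HIncMatrix^{\Transpose}\ln\Vc{x}^{o}$ accounting noted above.
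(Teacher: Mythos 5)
Your proposal is correct and is exactly the argument the paper intends (the corollary is stated without an explicit proof, as an immediate consequence of the lemma characterizing $\Vc{\tf}^{\lozenge}$, \thmref{thm:HHK}, and the $(\Vc{\kappa},\Vc{K})$ separation in \eqnref{eq:jffMAK}): invert the affine parametrization $\Vc{K}'\mapsto\ln\Vc{K}'+\HIncMatrix^{\Transpose}\ln\Vc{x}$ to hit $\Vc{\tf}_{st}$, and note that modulating $\Vc{K}$ alone leaves $\Vc{\frenecy}_{\mathrm{MA}}$, hence the Legendre map $\partial\Dissp^{*}_{\Vc{x}}$, unchanged. Your observation about the $\HIncMatrix^{\Transpose}\ln\Vc{x}^{o}$ offset is also right --- \eqnref{eq:Kst} matches your derivation exactly under the normalization $\Vc{x}^{o}=\Vc{1}$ implicitly used in both corollaries.
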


\begin{figure}[t]
\includegraphics[bb=0 0 1424 768, width= \linewidth]{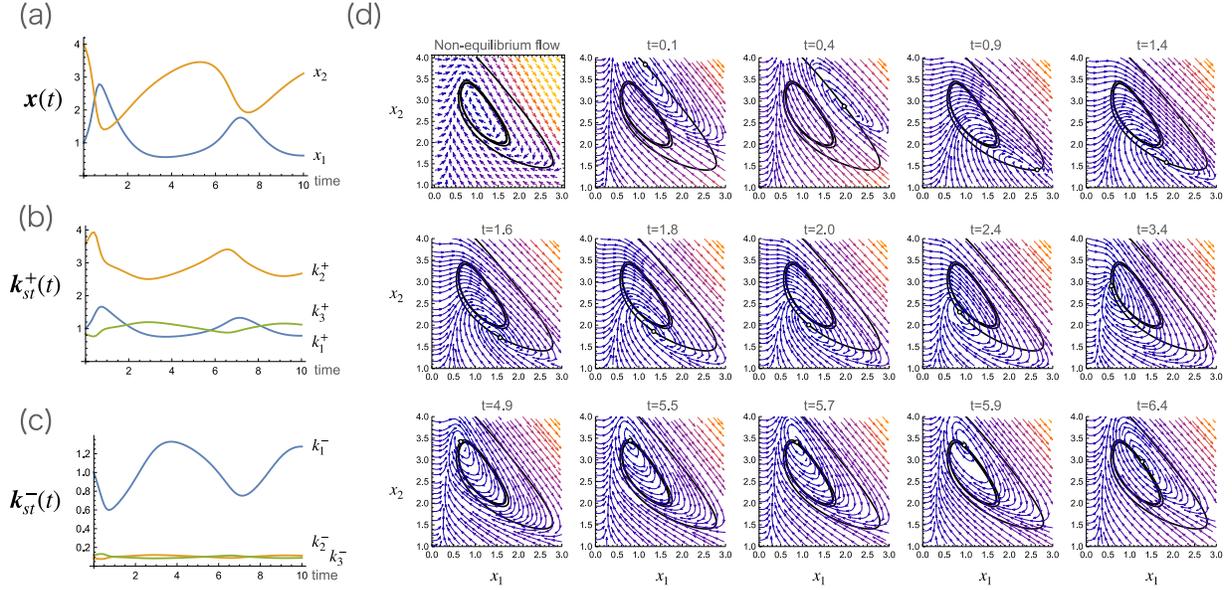}
\caption{\TJK{A nonequilibrium trajectory of the Brusselator CRN (\exref{ex:BrusselatorEq}) and the associated time-dependent cycle flux $\Vc{\flux}_{st}(t,\Vc{x})=\Vc{\flux}_{MA}(\Vc{x}; \Vc{\kcoef}^{\pm}_{st}(t))$. 
(a) A nonequilibrium trajectory $\{\Vc{x}_{t}\}$ of the Brusselator CRN obtained with the same parameter values in \fgref{fig:Projection} (a) .
(b,c) The time-dependent kinetic parameter set $\Vc{\kcoef}^{\pm}_{st}(t)$, which generates the time-dependent cycle flux $\Vc{\flux}_{st}(t,\Vc{x})$. 
(d) The top left panel shows the nonequilibrium trajectory $\{\Vc{x}_{t}\}$ (the black curve) and vector field $\Vc{v}(\Vc{x})= - \HIncMatrix \Vc{\flux}(\Vc{x})$  (arrows). The other panels show the time-dependent vector field $\Vc{v}_{st}(t, \Vc{x})$ induced by $\Vc{\flux}_{st}(t,\Vc{x})$: $\Vc{v}_{st}(t, \Vc{x})= - \HIncMatrix \Vc{\flux}_{st}(t,\Vc{x})$. The nonequilibrium trajectory $\{\Vc{x}_{t}\}$ is also depicted as for a reference (the black curve). In each panel, the white circle on the trajectory is $\Vc{x}_{t}$ at which $\Vc{\flux}_{st}(t,\Vc{x})$ is computed.}
}
\label{fig:Projection_ST}
\end{figure}

\TJK{\begin{ex}[Simplified Brusselator CRN\cite{feinberg2019,yoshimura2022} (continued)]\label{ex:BrusselatorProjection}
For the nonequilibrium trajectory of the Brusselator CRN in \fgref{fig:Projection_ST} (a), we numerically obtained the effective cycle flux $\Vc{\flux}_{st}(t,\Vc{x})$ and the corresponding time-dependent kinetic parameter set $\Vc{\kcoef}_{st}^{\pm}(t)$ (\fgref{fig:Projection_ST} (b,c)).
Figure \ref{fig:Projection_ST} (d) shows the vector field $\Vc{v}_{st}(t,\Vc{x})=-\HIncMatrix \Vc{\flux}_{st}(t,\Vc{x})$ induced by the effective cycle flux $\Vc{\flux}_{st}(t,\Vc{x})$.
From \fgref{fig:Projection_ST}, we can see that any point on the trajectory $\{\Vc{x}_{t}\}$ originally generated by the nonequilibrium flux $\Vc{\flux}(\Vc{x}_{t})$ can be kept steady with the time-dependent cycle flux $\Vc{\flux}_{st}(t,\Vc{x})$ realized by the modulation of the kinetic parameter $\Vc{\kcoef}^{\pm}_{st}(t)$.
\end{ex}
}

\TJK{In modern nonequilibrium thermodynamics, it has been a great challenge to establish thermodynamic characterizations for nonequilibrium phenomena. To this end, the dissection of dynamics and the corresponding flux and force has been attempted\cite{oono1998ProgressofTheoreticalPhysicsSupplement,komatsu2008Phys.Rev.Lett.,maes2014JStatPhys,dechant2022ArXiv210912817Cond-Mat,yoshimura2022}. For a given trajectory $\{\Vc{x}_{t}\}$, 
the effective time-dependent equilibrium flux $\Vc{\flux}_{eq}(t, \Vc{x})$ generates exactly the same trajectory, which dissects and mimics the dynamic aspect of the trajectory. On the other hand, the effective time-dependent cycle flux $\Vc{\flux}_{st}(t, \Vc{x})$ makes each point on the trajectory steady, which can be recognized as the nonequilibrium aspect of the trajectory.
Moreover, these two types of fluxes can be realized by appropriately modulating the kinetic parameter set $\Vc{\kcoef}^{\pm}$, which makes the dissected fluxes physically meaningful and accessible. 
More specifically, the modulation of the force part $\Vc{K}$ of the kinetic parameter is sufficient for realization (\eqnref{eq:time_dependent_Keq} and \eqnref{eq:Kst}) while the activity part $\Vc{\kappa}$ is kept constant. 
In the case of CRN, the former is linked to the free energy difference between reactants and products of each reaction, and the other is associated with the height of the energy barrier between them. This clear separation of different physical parameters in our framework is advantageous to further investigate physical aspects of dynamics on graphs and hypergraphs. Thus, the dually flat structure on the edge space and the HHK decomposition provides a new and promising way to characterize the nonequilibrium flow\footnote{It should be noted that various definitions of housekeeping and excess EPRs have been proposed in the research field of nonequilibrium thermodynamics. The definition here is just one of them. It is an open problem how to define the notion of effective entropy for nonequilibrium dynamics.}. }

\section{Summary and Discussion}\label{sec:summary}
In this work, we have shown that the doubly dual flat structure of the vertex and edge spaces on graphs and hypergraphs provides the information-geometric basis for the dynamics on graphs and hypergraphs.
Two notions of orthogonality, pseudo-Hilbert isosceles orthogonality and information-geometric orthogonality, have been introduced and shown to dissect the equilibrium and nonequilibrium aspects of the dynamics into the induced structures on the tangent and cotangent spaces and the cycle spaces.
The doubly dual flat structure naturally connects the topological information of underlying discrete manifolds, i.e., graphs and hypergraphs, with the dynamics on them and thus endows more flexibility and representation power to the information-geometric modeling of dynamics.
Furthermore, the generalized equilibrium and nonequilibrium flow, as well as the generalized flow, accommodate a sufficiently wide range of models, which include the reversible Markov jump processes on finite graphs and CRN with LMA kinetics (a class of PDS).
These results could substantially extend the applicability of information geometry to dynamical problems.

\subsection{Extension of other relations involving information measures}
While we demonstrated that the generalized flow and the doubly dual flat structure can extend several results known for FPE and diffusion processes, we still have potentially relevant results and problems that could be explained and extended in our framework.
For example, for FPE and diffusion processes, the Fisher information number $\FIN$ was extended to the relative Fisher information (also known as Hyv\"arinen divergence\cite{hyvarinen2005J.Mach.Learn.Res.,lods2015Entropy}).
The relative Fisher information of two trajectories $p^{(1)}_{t}(\boldsymbol{r})$ and  $p^{(2)}_{t}(\boldsymbol{r})$ are known to satisfy information-theoretic relations such as the De Brujin identity\cite{stam1959InformationandControl} and its extensions\cite{yamano2013Eur.Phys.J.B,wibisono20172017IEEEInt.Symp.Inf.TheoryISIT}.
In addition, the logarithmic Sobolev inequality also constitutes a relationship between the Fisher information number and the KL divergence (or Shannon information)\cite{gross1975Am.J.Math.,gross1975DukeMath.J.}.
It would be an important future problem to associate these results with the doubly dual flat structure.

Moreover, several relations potentially being related to De Giorgi's formulation (\eqnref{eq:DeGiorgi}) have been known for mutual information in filtering and control theories. 
For example, Guo, Shamai, and Verdu found a relation between mutual information and the minimum mean square error (MMSE) in Gaussian channels \cite{guo2005IEEETrans.Inf.Theory}.
Relations similar to these have also been reported by Mayer-Wolf and Zakai\cite{mayer-wolf1984Filter.ControlRandomProcess.,mayer-wolf2007Ann.Appl.Probab.}.
Our framework may offer a unified perspective behind these different types of relation involving information measures.

\subsection{Extensions of the doubly dual flat structure}
There is also room for extensions of the doubly dual flat structure.
While we consider only strictly convex thermodynamic functions and dissipation functions, the strict convexity is not necessarily required, at least for defining the generalized flow and the equilibrium and nonequilibrium flows\footnote{The strict convexity of the dissipation functions is assumed to work on the projections in the edge spaces where the bijection between $\Jspace_{\Vc{x}}$ and $\Fspace_{\Vc{x}}$ are important. In addition, $\Y=\Real^{N_{\node}}$ is assumed to make the induced dually flat structure well-defined for all given $\Vc{v}$.
Only to define the generalized flow and equilibrium and nonequilibrium flow, injectivity of $\partial \Dissp^{*}_{\Vc{x}}$ and $\partial\Pfunc$ is sufficient and  $\Y=\Real^{N_{\node}}$ is not required.}.
Actually, in terms of thermodynamics, the thermodynamic function can be non-strictly convex when a phase transition of the system occurs \cite{callen1985}. 
The loss of bijectivity via the loss of the strict convexity can happen in complicated and degenerate statistical models\cite{amari2006NeuralComputation}. 
Techniques from algebraic geometry could be employed to address such situations\cite{watanabe2009}.

Moreover, the structure introduced for rLDG and CRN may be extended to irreversible cases, where some edges have only either forward or reverse jumps or reactions.
For this purpose, we may take advantage of several results about the CB states obtained in CRN theory \cite{feinberg2019} where the reversibility is not necessarily assumed and those in stochastic thermodynamics for absolute irreversible processes \cite{murashita2014Phys.Rev.E}.

While the nonequilibrium flow is general enough to cover at least all reversible CRN with LMA kinetics, the classes of nonlinear dynamics other than CRN are much wider in general.
To further extend the range of models that can be covered, GENERIC (General Equation for Non-Equilibrium Reversible-Irreversible Coupling) would be a good candidate \cite{ottinger2005}.
GENERIC is a theoretical framework to integrate dissipative dynamics (gradient flow dynamics) and conservative dynamics (Hamiltonian dynamics).  
The extension of the generalized flow to GENERIC has already been attempted but is still ongoing \cite{renger2018Entropy,patterson2021ArXiv210314384Math-Ph}.
\TJK{One might also consider Hamiltonian-type dynamics, which differs from the GENERIC structure mentioned above. In the doubly dual flat structure, dual spaces are statically coupled by the Legendre duality. However, we could consider the coupling of two dynamics, each of which is defined on the primal and the dual spaces. Such coupling has been investigated in relation to accelerations of gradient flows\cite{wang2021JSciComput}, optimal control problems\cite{li2023J.Comput.Phys.}, and also mean field game problems\cite{gao2022}. It would be quite an interesting problem to formulate this dynamic coupling in relation to our results and also the results of GENERIC.}
The information geometry could offer new insights and techniques to achieve these missions.

\subsection{Homological algebra and differential geometric formulations}
From the viewpoint of the standard homological algebra, the doubly dual flat structure that we introduced is an extension of chain and cochain complexes with inner product structure.
Because the homological algebra used here is an abstraction of the differential form, the doubly dual flat structure can also be viewed as an extension of the differential form and might be called \textit{dually flat form}. 
It would be an interesting mission to characterize this stricture under a more rigorous mathematical formulation and to investigate if the Legendre duality can be consistently introduced for chains and cochains higher than those of the edge space.
From the viewpoint of differential geometry, the dual flat structure can be defined independently of the specific coordinate by the Hessian geometry\cite{shima2007}. 
While we stick to the standard basis of the graph or hypergraph on which the convex thermodynamic function is defined, we can formulate it more generally.
It would be an important future work to clarify how the doubly dual flat structure can be formulated from a differential geometric perspective.

\subsection{Consistency and Persistence}
Finally, we would like to mention the problem of consistency and persistence.
In this work, we presume that the flux $\Vc{\flux}(\Vc{x})$ is consistent with $\HGraph$ and that the trajectory is persistent.
The explicit conditions when these are satisfied are still elusive.
Actually, the condition for consistency is intricate, even for the separable cases.
For an illustrative example, suppose that the $i$th molecule is involved in the $e$th reaction of a CRN with LMA kinetics. 
For $x_{i} \to 0$, $j_{e}(\Vc{x}) \to 0$ holds. However, their Legendre duals diverge as $y_{i} \to -\infty$ and $f_{e}(\Vc{x}) \to -\infty$. 
The flux $j_{e}(\Vc{x})$ goes to $0$ because $\frenecy_{e}(\Vc{x})\to 0$ holds.
This example suggests that we have to consider a certain limit of relevant quantities to appropriately address the consistency condition.

The persistence of the nonequilibrium flow would be a much harder problem.
While persistence has been approached for CB CRN with LMA kinetics using techniques from algebraic geometry\cite{craciun2016}, its connection to information geometry has yet to be clarified.
Moreover, from an information-geometric viewpoint, the loss of persistence means a change in the support of probability or positive density, which is effectively accompanied by a change in the topology of the underlying graph or hypergraph.
To resolve the problem, we may need a deeper understanding of the interrelationship among dynamics, information-geometric structure, and the underlying topology.

\section{Acknowledgement}
This research is supported by JST (JPMJCR2011,JPMJCR1927) and JSPS (19H05799). The authors thank Hideyuki Miyahara and Tomonari Sei for their critical comments. We also express our sincere thanks to the anonymous reviewer who carefully read our manuscript and provided valuable comments. 
\section{Conflict of interest}
The authors have no conflict of interest, financial or otherwise.


%

\appendix
\section{Symbols, Notations, and Abbreviations}
\begin{table}[h]
  \centering
  \begin{tabular}{|c|l|l|}
\hline
$\Graph$, $\Graph_{\Vc{\kcoef}^{\pm}}$ & finite graph and edge-weighted finite graph. & \defref{dfn:graph} \\
$\node$, $\edge$ & vertex and edges of graph$\Graph$. & \defref{dfn:graph} \\
$\IncMatrix$ & incidence matrix& \defref{dfn:graph} \\
$\HGraph$, $\HGraph_{\Vc{\kcoef}^{\pm}}$ & Reversible CRN hypergraph and edge-weighted hypergraph.& \defref{dfn:rCRNhyperG}\\
$\molX$, $\hat{\node}$, $\edge$ & vertex,  hypervertex, and hyperedges of CRN hypergraph $\HGraph$. & \defref{dfn:rCRNhyperG}\\
$\HIncMatrix$ & Hypergraph incidence matrix. & \defref{dfn:rCRNhyperG}\\
$\cmMatrix$ & Hypervertex matrix& \defref{dfn:rCRNhyperG}\\
\hline\hline
$\chain_{p}(\Graph)$, $\chain^{p}(\Graph)$& $p$-chain and $p$-cochain of graph with field $\Real$ & \secref{sec:chain_cochain}\\
$\chain_{p}(\HGraph)$, $\chain^{p}(\HGraph)$ & $p$-chain and $p$-cochain of hypergraph with field $\Real$ & \defref{dfn:exactchainhypergraph}\\
$\boundary_{p}$, $\boundary^{p}$ & $p$-discrete differentials & \secref{sec:chain_cochain}, \defref{dfn:exactchainhypergraph}\\
$\Grad_{\HIncMatrix}\defeq \boundary^{0}=\HIncMatrix^{\Transpose}$ & Discrete gradient & \defref{dfn:discreteOperators}\\
$\Div_{\HIncMatrix}=\Grad^{*}_{\HIncMatrix}\defeq \boundary_{1}=\HIncMatrix$& Discrete divergence (adjoint of discrete gradient) & \defref{dfn:discreteOperators}\\
$\Curl_{\cycMatrix}\defeq \boundary^{1}=\cycMatrix^{\Transpose}$ & Discrete curl& \defref{dfn:discreteOperators}\\
$\Curl^{*}_{\cycMatrix}\defeq \boundary_{2}=\cycMatrix$ & Adjoint of discrete curl& \defref{dfn:discreteOperators}\\
\hline
  \end{tabular}
  \caption{List of symbols and notations related to graph, hypergraph, and homological algebra.}
  \label{tb:symbols1}
\end{table}

\begin{table}[h]
  \centering
  \begin{tabular}{|c|l|l|}
\hline
$\X$, $\Y$ & Primal and dual vertex spaces (density and potential spaces)& \defref{dfn:primalVspace}, \defref{dfn:dualVspace}\\
$\Vc{x}$ & The density defined on vertices of graph or hypergraph&   \defref{dfn:rLDG}, \defref{dfn:rCRNhyperG}\\
$\Vc{y}$ & The potential field defined on vertices of graph or hypergraph&   \defref{dfn:dualVspace}\\
\hline \hline
$\Jspace_{\Vc{x}}$, $\Fspace_{\Vc{x}}$& Primal and dual edge spaces & \defref{dfn:PrimalDualEdgeSpaces}\\
$\Vc{\flux}^{+}$, $\Vc{\flux}^{-}$& The forward and reverse oneway fluxes of graph or hypergraph &   \defref{dfn:rLDG}\\
$\Vc{\flux}$ & The total flux on edges of graph or hypergraph &   \defref{dfn:rLDG}\\
$\Vc{\tf}$ & The force on edges of graph or phyergraph & \eqnref{eq:LegendreTF}\\
\hline \hline
$\flux^{\pm}_{\mathrm{MA}}$, $ \Vc{\flux}^{\pm}_{\mathrm{eMA}}$ & Oneway fluxes of the normal and extended LMA kinetics. & \eqnref{eq:CRN_rate}, \eqnref{eq:CRN_rate_eLMA}\\
$\Vc{\flux}_{\mathrm{FP}}$ & Flux of the FPE & \eqnref{eq:flux_FPE}\\ 
$\Vc{\tf}_{NE}$ & Nonequilibrium force& \eqnref{eq:nonequF}\\
$\Vc{\tf}_{\mathrm{MA}}(\Vc{x};\Vc{K})$, $\Vc{\frenecy}_{\mathrm{MA}}(\Vc{x};\Vc{\kappa})$ & Force and frenetic activity of LMA kinetics & \eqnref{eq:jffMAK}\\
\hline \hline
$\Manifold^{\mathrm{ST}}$& Steady state manifold& \eqnref{eq:DBC}\\
$\Manifold^{\mathrm{CB}}$& Complex-balanced manifold& \eqnref{eq:DBC}\\
$\Manifold^{\mathrm{DB}}$& Detailed-balanced manifold& \eqnref{eq:DBC}\\
$\gLap_{\Vc{\theta}}$ & Weighted asymmetric graph Laplacian& \defref{dfn:wagLaplacian}\\
\hline
  \end{tabular}
  \caption{List of symbols and notations related to the dynamics on graph and hypergraph.}
  \label{tb:symbols2}
\end{table}

\begin{table}[h]
  \centering
  \begin{tabular}{|c|l|l|}
\hline
$\Pfunc$, $\Pfunc^{*}$ & Primal and dual thermodynamic functions & \defref{dfn:thermofunc}, \defref{dfn:dualVspace}\\
$ \BD^{\X}_{\Pfunc}[\Vc{x}\|\Vc{x}']$, $\BD^{\Y}_{\Pfunc^{*}}[\Vc{y}'\|\Vc{y}]$ & Bregman divergences on vertex spaces& \defref{dfn:BregmanDiv}\\
 $\Tan_{\Vc{x}}\X$,  $\Tan^{*}_{\Vc{x}}\X$ & Tangent and cotangent spaces of $\X$  & \defref{dfn:HessianMatrices}\\
 $\Tan_{\Vc{y}}\Y$, $\Tan^{*}_{\Vc{y}}\Y$ & Tangent and cotangent spaces of  $\Y$ & \defref{dfn:HessianMatrices}\\
$\FM_{\Vc{x}}$, $\FM_{\Vc{y}}^{*}$& Hessian matrices on vertex spaces& \defref{dfn:HessianMatrices}\\
\hline\hline
$\Dissp_{\Vc{x}}$, $\Dissp^{*}_{\Vc{x}}$& Primal and dual dissipation functions & \defref{dfn:DissipationFunc}\\
$\BD_{\Vc{x}}^{\Jspace,\Fspace}[\Vc{\flux};\Vc{\tf}']$ & Bregman divergence on the edge space & \eqnref{eq:BD}\\
$\FM_{\Vc{x},\Vc{\flux}}$, $\FM_{\Vc{x},\Vc{\tf}}^{*}$& Hessian matrices on the edge space& \eqnref{eq:HessianMatrixEdge}\\
$\Dissp^{q,*}_{\Vc{x}}(\Vc{\tf})$& Quadratic dissipation function& \eqnref{eq:quadratic_dissp}\\
\hline\hline
$\Polytope^{sc}(\Vc{x}_{0})$ & Stoichiometric subspace& \defref{dfn:stoiSubspace}\\
$\Polytope^{eq}(\tilde{\Vc{y}})$ & Equilibrium subspace & \defref{dfn:EquiSubspace}\\
$\Polytope^{vl}(\hat{\Vc{\flux}})$ & Iso-velocity subspace & \defref{dfn:IsovelSubspace}\\
$\Polytope^{fr}(\Vc{\tf}')$ & Iso-force subspace & \defref{dfn:IsoforceSubspace}\\
$\Variety^{sc}(\Vc{y}_{0})$, $\Variety^{eq}(\tilde{\Vc{x}})$ & Stoichiometric and equilibrium manifolds & \defref{dfn:StoiEquManifolds}\\
$ \Variety^{vl}_{\Vc{x}}(\hat{\Vc{\tf}})$, $\Variety^{fr}_{\Vc{x}}(\Vc{\flux}')$ & Iso-velocity and Iso-force manifolds & \defref{dfn:IsoVelForceManifolds}\\
\hline\hline
$\cManifold_{\Vc{x}}^{\Dissp}(c)$, $\cManifold_{\Vc{x}}^{\Dissp^{*}}(c)$ & Primal central affine manifolds & \defref{dfn:cAffineManifolds}\\
$ \Manifold_{\Vc{x}}^{\Dissp}(c)$, $\Manifold_{\Vc{x}}^{\Dissp^{*}}(c)$ & Dual central affine manifolds & \defref{eq:dual_central_affine_manifolds}\\
$\perp_{H}$ & Pseudo-Hilbert-isosceles orthogonality & \defref{dfn:PHorthogonality}\\
\hline\hline
$\tilde{\Tan}_{\Vc{x}} \X$, $\tilde{\Tan}_{\Vc{x}}^{*} \X$ & Induced tangent and cotangent spaces of $\X$ & \thmref{thm:inducedHessiangeometry1}\\
$\tilde{\Dissp}_{\Vc{x}}(\Vc{v})$, $\tilde{\Dissp}_{\Vc{x}}^{*}(\Vc{u})$ & Induced dissipation functions on tangent and cotangent spaces & \eqnref{eq:induced_dissp_func_tan}\\
$\Z_{\Vc{x}}$, $\Zeta_{\Vc{x}}$ & Cycle spaces & \defref{dfn:cycle_spaces}\\
$\hat{\Dissp}_{\Vc{x}}(\Vc{z})$, $\hat{\Dissp}_{\Vc{x}}^{*}(\Vc{\zeta})$ & Induced dissipation functions on cycle spaces & \eqnref{eq:induced_dissp_func_cycle}\\
\hline
  \end{tabular}
  \caption{List of symbols and notations related to the information-geometric structure.}
  \label{tb:symbols3}
\end{table}

\begin{table}[h]
  \centering
  \begin{tabular}{|c|l|}
\hline
CRN & Chemical reaction network \\
LMA & Law of mass action \\
(r)MJP & (Reversible) Markov jump process\\
(r)LDG & (Reversible) linear dynamics on graph\\
PDS  & Polynomial dynamical systems\\
FPE & Fokker Planck Equation\\
CB  & complex-balanced\\
DB  & detail-balanced\\
LDB  & local detailed balance\\
EPR  & entropy production rate\\
HHK  & Helmholtz-Hodge-Kodaira\\
MCMC  & Markov chain Monte Carlo\\
\hline
  \end{tabular}
  \caption{List of abbreviations.}
  \label{tb:abbreviations}
\end{table}


\end{document}